\newtheorem{theorem}{Theorem}[section]
\newtheorem{fact}{Fact}[section]
\newtheorem{claim}{Claim}[section]
\newtheorem{prop}{Proposition}[section]
\newtheorem{lemma}[theorem]{Lemma}
\newtheorem{conj}{Conjecture}[section]
\newtheorem{obs}{Observation}[section]
\theoremstyle{definition}
\newtheorem{definition}[theorem]{Definition}
\theoremstyle{remark}
\newtheorem{remark}[theorem]{Remark}
\numberwithin{equation}{section}
\newcommand{\abs}[1]{\lvert#1\rvert}
\newcommand{\norm}[1]{\lVert#1\rVert}
\DeclareMathOperator{\re}{Re}
\DeclareMathOperator{\im}{Im}
\newcommand{\ud}{\mathrm{d}}
\newcommand{\vertiii}[1]{{\left\vert\kern-0.25ex\left\vert\kern-0.25ex\left\vert #1 
    \right\vert\kern-0.25ex\right\vert\kern-0.25ex\right\vert}}
\def\be{\begin{equation}}
\def\ee{\end{equation}}
\def\Z{\mathbb{Z}}
\begin{document}

\title[]{Spectral theory of extended Harper's model and a question by Erd\H{o}s and Szekeres}


%
%
\author{ A. Avila}
\address{CNRS, IMJ-PRG, UMR 7586, Univ Paris Diderot, Sorbonne Paris
Cite, Sorbonnes Universit\'es, UPMC Univ Paris 06, F-75013, Paris, France \& IMPA, Estrada Dona Castorina 110, Rio de Janeiro, Brasil}
\email{artur@math.univ-paris-diderot.fr}
\author{S. Jitomirskaya}
\address{University of California, Department of Mathematics, Irvine CA-92717, USA.}
\email{szhitomi@uci.edu}
\author{C. A. Marx}
\address{Department of Mathematics, Oberlin College, Oberlin OH, 44074}
\email{cmarx@oberlin.edu}

\thanks{The work of A.A. has been partially supported by the ERC Starting Grant ``Quasiperiodic,'' by the Balzan project of Jacob Palis, and the grant ANR-15-CE40-0001. S. J. was a 2014-15 Simons Fellow.
This work was partially supported by NSF Grants DMS -1101578 and DMS-1401204.}




\begin{abstract}
The extended Harper's model, proposed by D.J. Thouless in 1983,
generalizes the famous almost Mathieu operator, allowing for a wider range
of lattice geometries (parametrized by three {\it coupling parameters})
by permitting 2D electrons to hop to both nearest and
next nearest neighboring (NNN) lattice sites, while still exhibiting its
characteristic symmetry (Aubry-Andr\'e duality).  Previous understanding of the
spectral theory of this model was restricted to two dual regions of the
parameter space, one of which is characterized by the positivity of the
Lyapunov exponent.  In this paper, we complete the picture with a
description of the spectral measures over the entire
remaining (self-dual) region, for all irrational values of the frequency
parameter (the magnetic flux in the model).  Most notably, we prove that
in the entire interior of this regime, the model
exhibits a collapse from purely ac spectrum to purely sc spectrum when the
NNN interaction becomes symmetric.
In physics literature, extensive numerical analysis had indicated
such ``spectral collapse,'' however so far not even a heuristic
argument for this phenomenon could be provided.  On the other hand,
in the remaining part of the self-dual region, the spectral measures are
singular continuous irrespective of such symmetry.
The analysis requires some rather delicate number theoretic estimates, which
ultimately depend on the solution of a problem posed by
Erd\H{o}s and Szekeres in \cite{es}.
\end{abstract}

\maketitle

\section{Introduction} \label{sec_intro}

One-dimensional quasiperiodic Schr\"odinger operators with analytic potentials have traditionally been studied by perturbative KAM schemes in two distinct regimes: ``large'' and ``small'' potential. Some 15 years ago it has become understood \cite{j,bg} that the ``large'' regime can be described in a non-perturbative way through a purely dynamical property: positivity of the Lyapunov exponent. Recently, a full nonperturbative (and  purely dynamical) characterization of the entire ``small'' regime for the case of one-frequency operators has also been established \cite{Avila_prep_ARC_1,Avila_prep_ARC_2}, thus leading to the division of energies in the spectrum into
\begin{enumerate}
\item {\it supercritical}, characterized by positive Lyapunov exponent (thus non-uniform hyperbolicity)
\item {\it subcritical}, characterized by the Lyapunov exponent vanishing in a strip of complexified phases (leading to almost-reducibility \cite{Avila_prep_ARC_1,Avila_prep_ARC_2})
\item {\it critical}, characterized as being neither of the two above.
\end{enumerate}

The first regime leads to Anderson localization for a.e. frequency, and the second to absolutely continuous spectrum for all frequencies. Various other interesting aspects of the first two regimes, each of which  holds on an open set, are also well 
understood. The third regime, which is the boundary of the first two,  cannot support absolutely continuous spectrum \cite{AvilaFayadKrikorian_2011} but otherwise largely remains a mystery, even  (or especially) in the most well studied case of the critical almost Mathieu operator. Even though measure-theoretically typical operators are acritical (so have no critical energies in the spectrum) \cite{global}, the most interesting/important operators from the point of view of physics turn out to be entirely critical, due to certain underlying symmetries! For example, such are the extended Harper's (and also the original Harper's) model for the - most physically relevant - case of  isotropic interactions.  Indeed, the duality transform, acting on the family of (long-range) quasiperiodic operators, often maps the first two regimes into each other, allowing for duality based conclusions, while mapping the third one into itself, making it self-dual and thus not allowing to use either localization or reducibility methods.  
 
In this paper we provide the first mechanism for exclusion of point spectrum and thus proof of singular continuous spectrum in the {\it critical} regime that works for all frequencies and a.e. phase\footnote{So far the existing arguments for exclusion of point spectrum have had nothing to do with criticality and have been limited to measure zero sets of frequency/phase \cite{Gordon_1976,as,js}.}. This allows us to prove singular continuity of the spectrum of extended Harper's model through its entire critical regime, for all frequencies, describing the spectral theory of the region that has resisted even heuristic explanations in physics literature.

While our argument is specific to extended Harper's model, we believe that certain features of it will be extendable to the general critical case. A simple particular case of the argument proves singular continuity of the spectrum of the critical almost Mathieu operator\footnote{This has been open since the proof in \cite{gjls} has a gap.}. We also are able to describe spectral theory of the extended Harper's model for all other values of the couplings in its three-dimensional parameter space, largely by putting together the facts proved in several other recent papers.      

The {\em{extended Harper's model}}  is a model from solid state physics defined by the following quasi-periodic Jacobi operator acting on $\mathit{l}^2(\mathbb{Z})$,
\begin{eqnarray} \label{eq_hamiltonian}
& (H_{\theta;\lambda, \alpha} \psi)_k := v(\theta + \alpha k) \psi_{k} + c_{\lambda}(\theta + \alpha k) \psi_{k+1} + \overline{c_{\lambda}(\theta + \alpha (k-1))} \psi_{k-1} ~\mbox{.}
\end{eqnarray}
Here, $\alpha$ is a fixed irrational, $\theta$ varies in $\mathbb{T}:=\mathbb{R}/\mathbb{Z}$, and
\begin{equation} \label{eq_hamiltonian1}
c_{\lambda}(\theta) := \lambda_{1} \mathrm{e}^{-2\pi i (\theta+\frac{\alpha}{2})} + \lambda_{2} + \lambda_{3} \mathrm{e}^{2 \pi i (\theta+\frac{\alpha}{2})} ~\mbox{,}
~ v(\theta)  := 2 \cos(2 \pi \theta) ~\mbox{.}
\end{equation}
We will generally understand $\mathbb{T}$ to be equipped with its Haar probability measure, denoted by $\mu$.

Physically, extended Harper's model describes the influence of a transversal magnetic field of flux $\alpha$ on a single tight-binding electron in a 2-dimensional crystal layer. In this context, the coupling triple $\lambda:=(\lambda_1, ~\lambda_2, ~\lambda_3) \in \mathbb{R}^3$ allows for {\em{nearest}} (expressed through $\lambda_2$) and {\em{next-nearest neighbor}} (NNN) interaction between lattice sites (expressed through $\lambda_1$ and $\lambda_3$). Without loss of generality, one may assume $0\leq \lambda_{2} ~\mbox{,} ~0 \leq \lambda_{1} + \lambda_{3}$ and at least one of $\lambda_{1} \mbox{,} ~\lambda_{2} \mbox{,} ~\lambda_{3}$ to be positive. 

Assuming a Bloch wave in one direction of the lattice plane with quasi momentum $\theta$, the conductivity properties in the transversal direction are governed by (\ref{eq_hamiltonian}). As common, we will refer to $\alpha$ as the {\em{frequency}} and $\theta$ as the {\em{phase}}. Proposed by D. J. Thouless in 1983 in context with the integer quantum Hall effect \cite{Thouless_1983}, extended Harper's model attracted significant attention in physics literature, and has been studied rigorously by Bellissard (e.g. \cite{bel}), Helffer et al (e.g. \cite{hel}), Shubin \cite{shu}, and others. It unifies various interesting special cases. We mention especially the triangular lattice, obtained by letting one of $\lambda_1$, $\lambda_3$ equal zero and {\em{Harper's model}}, in mathematics better known as the {\em{almost Mathieu operator}}, which arises when switching off NNN interactions, i.e. letting $\lambda_1 = \lambda_3 = 0$.

In this article, we provide a complete spectral analysis of extended Harper's model, valid for all values of $\lambda$ and a full measure set of irrational frequencies  (see Theorem \ref{thm_ehmspectral}, below), which, so far, has escaped rigorous mathematical treatment. Even in physics literature, despite extensive, mostly numerical studies of its spectral properties \cite{LiuGhoshChong_2015, ChangIkezawaKohmoto_1997, DreseHolthaus_1997, GongTong_2008,  HanThoulessHiramotoKohmoto_1994, HatsugaiKohmoto_1990,  KetojaSatija_1995, KetojaSatija_1995_2, Thouless_1983}, a fully analytical treatment  of extended Harper's model covering the full range of $\lambda$ has so far been missing. In view of Theorem \ref{thm_ehmspectral}, we mention however \cite{KetojaSatija_1997}, one of the very few heuristic treatments of extended Harper's model whose results indicate a difference in the spectral properties between isotropic ($\lambda_1 = \lambda_3$) and anisotropic ($\lambda_1 \neq \lambda_3$) NNN interactions. 

Our analysis relies on earlier work \cite{JitomirskayaMarx_2012, JitomirskayaMarx_2013_erratum}, in which a formula for the {\em{complexified Lyapunov exponent}} of extended Harper's model was proven, valid for all $\lambda$ and all irrational $\alpha$. In particular, underlying this formula is a partitioning of the parameter space into the following three regions
\begin{description}
\item[Region I] $0 \leq \lambda_{1}+\lambda_{3} \leq 1, ~0 < \lambda_{2} \leq 1$ ~\mbox{,}
\item[Region II] $0 \leq \lambda_{1}+\lambda_{3} \leq \lambda_{2}, ~1 \leq \lambda_{2} $ ~\mbox{,}
\item[Region III] $\max\{1,\lambda_{2}\} \leq \lambda_{1}+\lambda_{3}$, $\lambda_2>0$ ~\mbox{,}
\end{description}
which we illustrate pictorially in Fig. \ref{figure_1}. As shown in \cite{JitomirskayaMarx_2012}, this partitioning is a result of the duality transform for extended Harper's model, which for {\em{non-zero}} nearest neighbor coupling $\lambda_2$ is given by the following map acting on the space of coupling parameters\footnote{For completeness, the duality map for case $\lambda_2 = 0$ is given in (\ref{eq_dualityapp1}) and discussed in Appendix \ref{app_zeronn}.} is :
\begin{equation} \label{eq_sigma}
\sigma(\lambda):=\frac{1}{\lambda_{2}}(\lambda_{3},1,\lambda_{1}) \mbox{.}
\end{equation}
The precise action of the duality map is summarized in Observation \ref{obs_dualitymap}, to whose end, we define the line segments (see also Fig. \ref{figure_1}):
\begin{eqnarray}
\mathrm{L}_\mathrm{I} &:=& \{\lambda_1 + \lambda_3=1, 0 < \lambda_2 \leq 1\} \\
\mathrm{L}_{\mathrm{II}} &:=& \{0 \leq \lambda_1 + \lambda_3 \leq 1, \lambda_2 = 1\} \\
\mathrm{L}_{\mathrm{III}} &:=& \{1 \leq \lambda_1 + \lambda_3 = \lambda_2\}
\end{eqnarray}
One then easily verifies the following:
\begin{obs} \label{obs_dualitymap}
$\sigma$ is bijective on $\{\lambda_1 + \lambda_3 \geq 0, ~\lambda_2>0\}$ and one has:
\begin{itemize}
\item[(i)] $\sigma(\mathrm{I}^\circ) = \mathrm{II}^\circ$, $\sigma(\mathrm{III}^\circ) = \sigma(\mathrm{III}^\circ)$
\item[(ii)] $\sigma(\mathrm{L}_\mathrm{I}) = \mathrm{L}_{\mathrm{III}}$ and $\sigma(\mathrm{L}_{\mathrm{II}})= \mathrm{L}_{\mathrm{II}}$
\end{itemize}
\end{obs}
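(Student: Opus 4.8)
The plan is to reduce every claim to the single fact that $\sigma$ is an involution on $D:=\{\lambda_1+\lambda_3\ge 0,\ \lambda_2>0\}$. First I would check $\sigma(D)\subseteq D$: for $\mu:=\sigma(\lambda)=(\lambda_3/\lambda_2,\,1/\lambda_2,\,\lambda_1/\lambda_2)$ one has $\mu_2=1/\lambda_2>0$ and $\mu_1+\mu_3=(\lambda_1+\lambda_3)/\lambda_2\ge 0$, so $\mu\in D$. A one-line substitution then gives $\sigma(\sigma(\lambda))=\lambda$ on $D$, hence $\sigma$ is a bijection of $D$; and for $A,B\subseteq D$ it now suffices to establish the biconditional $\lambda\in A\Leftrightarrow\sigma(\lambda)\in B$, which together with $\sigma^2=\mathrm{id}$ yields both $\sigma(A)=B$ and $\sigma(B)=A$ at once. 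This is the scheme I would run for each of the four identities.

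Each biconditional is then immediate once one notes that the two quantities $\mu_2=1/\lambda_2$ and $s:=\mu_1+\mu_3=(\lambda_1+\lambda_3)/\lambda_2$ convert the defining (in)equalities: since $\lambda_2>0$, an inequality between $\lambda_1+\lambda_3$ and $1$ is equivalent to the same inequality between $s$ and $\mu_2$, one between $\lambda_1+\lambda_3$ and $0$ to one between $s$ and $0$, and one between $\lambda_2$ and $1$ to the reversed inequality between $\mu_2$ and $1$. Thus $\lambda\in\mathrm{I}^\circ$ (i.e.\ $0<\lambda_1+\lambda_3<1$, $0<\lambda_2<1$) is equivalent to $0<s<1/\lambda_2=\mu_2$ together with $\mu_2>1$, i.e.\ to $\mu\in\mathrm{II}^\circ$; and $\lambda\in\mathrm{III}^\circ$ (i.e.\ $\lambda_1+\lambda_3>1$ and $\lambda_1+\lambda_3>\lambda_2>0$) is equivalent to $s>1/\lambda_2=\mu_2$, $s>1$ and $\mu_2>0$, i.e.\ to $\mu\in\mathrm{III}^\circ$ (this is how I read ``$\sigma(\mathrm{III}^\circ)=\sigma(\mathrm{III}^\circ)$'', namely as $\sigma(\mathrm{III}^\circ)=\mathrm{III}^\circ$). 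For (ii): on $\mathrm{L}_\mathrm{I}$ one has $s=1/\lambda_2=\mu_2$ with $\lambda_2\in(0,1]$, so $\mu_2$ sweeps $[1,\infty)$ and $\mu$ sweeps exactly $\{\mu_1+\mu_3=\mu_2\ge 1\}=\mathrm{L}_{\mathrm{III}}$; and on $\mathrm{L}_{\mathrm{II}}$ one has $\mu_2=1$ with $s=\lambda_1+\lambda_3\in[0,1]$, so $\mathrm{L}_{\mathrm{II}}$ is carried onto itself.

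I do not expect a genuine obstacle here: the statement is bookkeeping, and the effort lies in organizing it rather than in any single hard step. The two points that deserve mild care are (a) being consistent about which boundary faces are removed when passing to the interiors $\mathrm{I}^\circ,\mathrm{II}^\circ,\mathrm{III}^\circ$ (note that $\mathrm{L}_\mathrm{I},\mathrm{L}_{\mathrm{III}}\subseteq\partial\mathrm{III}$ while $\mathrm{L}_{\mathrm{II}}\not\subseteq\partial\mathrm{III}$, and that the face $\lambda_1+\lambda_3=0$, which $\sigma$ preserves, plays no special role), so that the strict inequalities above match the intended definitions; and (b) verifying surjectivity onto the target segments, which is precisely the point of tracking the range of $\lambda_2$ in the $\mathrm{L}_\mathrm{I}\mapsto\mathrm{L}_{\mathrm{III}}$ step.
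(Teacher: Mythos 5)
Your proof is correct, and since the paper supplies no argument beyond the remark that the observation is ``easily verified,'' your direct verification is precisely what is intended. The one organizational device you add --- first showing $\sigma^2=\mathrm{id}$ on $\{\lambda_1+\lambda_3\ge 0,\ \lambda_2>0\}$ and then reducing each identity $\sigma(A)=B$ to the single biconditional $\lambda\in A\Leftrightarrow\sigma(\lambda)\in B$ --- is a genuinely cleaner way to bundle injectivity, surjectivity, and the set identities than checking inclusions one at a time, but it is still the same elementary substitution. Your reading of the obvious typo, that $\sigma(\mathrm{III}^\circ)=\sigma(\mathrm{III}^\circ)$ should be $\sigma(\mathrm{III}^\circ)=\mathrm{III}^\circ$, is correct, as is your handling of the interiors (in particular excluding the face $\lambda_1+\lambda_3=0$, which is consistent with how the paper uses $\mathrm{I}^\circ$ and $\mathrm{II}^\circ$ later in Theorem \ref{thm_complexLEEHM}).
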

Observation \ref{obs_dualitymap} identifies the interior of the regions $I$ and $II$ as dual regions. Prior to this work, it had already been known that the Lyapunov exponent is positive in $I^\circ$ accompanied by Anderson localization for a.e. $\theta$ at all Diophantine $\alpha$ \cite{JitomirskayaKosloverSchulteis_2005}. Known duality-based arguments then allow to conclude purely absolutely continuous spectrum for a.e. $\theta$ and all Diophantine $\alpha$ in the dual regime $II^\circ$; see Theorem \ref{thm_dualregime} below. On the other hand the regime of couplings defined by
\begin{equation}
\mathcal{SD}:= \mathrm{III} \cup \mathrm{L}_{\mathrm{II}} ~\mbox{,}
\end{equation}
is characterized throughout by zero Lyapunov exponent \cite{JitomirskayaMarx_2012}, thus escaping traditional duality-based arguments. Since $\sigma$ bijectively maps $\mathcal{SD}$ onto itself, the literature refers to $\mathcal{SD}$ as the {\em{self-dual regime}}. To avoid confusion, we emphasize that the points in $\mathcal{SD}$ are not necessarily fixed points of $\sigma$; in fact, only points along  $\mathrm{L}_{\mathrm{II}}$ are fixed by $\sigma$. As mentioned earlier, the self-dual regime has so far posed the biggest challenge to both heuristic and rigorous treatments.

\begin{figure}[ht] \label{figure_1}
\includegraphics[width=0.5\textwidth]{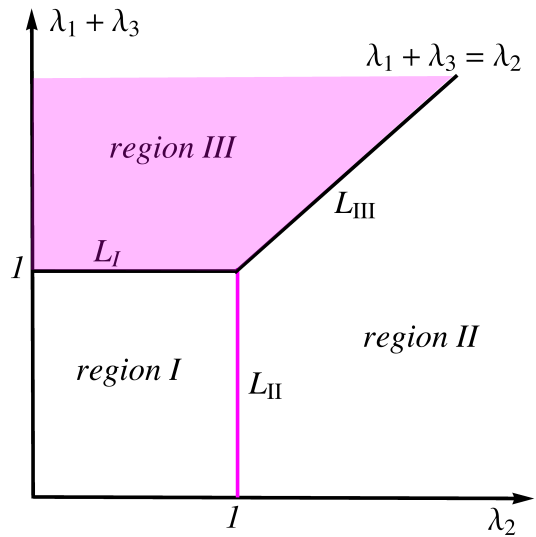} 
\caption{Partitioning of the space of coupling constants $\lambda = (\lambda_1,\lambda_2,\lambda_3)$ for extended Harper's model. The interesting self-dual regime is colored in red.}
\end{figure}

As will be explained, the missing link between \cite{JitomirskayaMarx_2012, JitomirskayaMarx_2013_erratum} and a complete understanding of the spectral properties of extended Harper's model is the following theorem which excludes eigenvalues in the self-dual regime; it constitutes the main result of this paper.
\begin{theorem} \label{thm_point} 
For all irrational $\alpha$ and all $\lambda \in \mathcal{SD}$, $H_{\theta; \lambda, \alpha}$ has empty point spectrum for $\mu$-a.e. $\theta$.
\end{theorem}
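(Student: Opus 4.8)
The plan is to exclude point spectrum by exploiting the self-duality of the regime $\mathcal{SD}$ together with the vanishing of the Lyapunov exponent there. The basic dichotomy I would set up is this: either eigenfunctions, if they exist, are ``fat'' (decaying too slowly to be $\ell^2$), which is impossible, or the duality transform turns a putative $\ell^2$ eigenfunction into an object incompatible with the spectral type of the dual operator. Since $\sigma$ maps $\mathcal{SD}$ into itself, unlike the situation in $\mathrm{I}^\circ \leftrightarrow \mathrm{II}^\circ$, one cannot simply invoke localization on one side to get reducibility on the other; instead the argument must be made to close on itself. Concretely, I would first reduce to $\lambda \in \mathrm{III}$ with $\lambda_1 \neq \lambda_3$ and to $\lambda \in \mathrm{L}_{\mathrm{II}}$ (the self-dual line, which for the almost Mathieu case $\lambda_1=\lambda_3=0$ recovers the critical AMO), handling the symmetric NNN case $\lambda_1 = \lambda_3$ in $\mathrm{III}^\circ$ separately if needed, since there the claimed spectral type is purely singular continuous as well.

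The key technical input I would develop is a quantitative statement about the structure of generalized eigenfunctions forced by the Aubry–André duality. Suppose $H_{\theta;\lambda,\alpha}$ has an eigenvalue $E$ with $\ell^2$ eigenfunction $\psi$ for a positive-measure set of $\theta$. Passing to the dual picture via $\sigma$, one obtains that $\widehat{\psi}$ (the Fourier-type transform implementing duality) is a generalized eigenfunction of $H_{\theta';\sigma(\lambda),\alpha}$ at the dual energy, and the $\ell^2$-summability of $\psi$ translates into analyticity / exponential decay properties of $\widehat\psi$ off the real axis. Because $\lambda$ and $\sigma(\lambda)$ both lie in $\mathcal{SD}$, where the complexified Lyapunov exponent (computed via \cite{JitomirskayaMarx_2012, JitomirskayaMarx_2013_erratum}) vanishes on a full strip, the transfer matrices are subexponentially bounded in a neighborhood of the real axis, and I would use this to control the decay rate of the dual eigenfunction and derive a contradiction with the normalization — essentially, the eigenfunction would have to be both exponentially localized (from the $\ell^2$ side) and non-localized (from the zero-Lyapunov-exponent side). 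Making this quantitative is where the number-theoretic estimates alluded to in the abstract enter: one needs sharp control, uniform in the continued fraction expansion of $\alpha$, on sums of the form $\sum \|q_n \alpha\|^{-1}$ or products along the trajectory, and this is precisely where the resolution of the Erdős–Szekeres problem from \cite{es} is invoked to get the bound in a form that works for \emph{all} irrational $\alpha$ rather than a.e.\ $\alpha$.

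I would organize the proof as: (1) state and prove the duality-transfer lemma, showing an $\ell^2$ eigenfunction of $H_{\theta;\lambda,\alpha}$ yields, for a.e.\ $\theta$, a bounded (indeed decaying) formal solution of the dual equation; (2) prove a non-localization lemma: in $\mathcal{SD}$, where the Lyapunov exponent vanishes, no solution of $H_{\theta';\sigma(\lambda),\alpha}u = Eu$ can decay faster than any exponential on a positive-measure set of $\theta'$, using the cocycle having zero Lyapunov exponent on a strip together with the number-theoretic oscillation estimates; (3) combine (1) and (2) to conclude $\psi \equiv 0$, hence no point spectrum, for a.e.\ $\theta$. The main obstacle I anticipate is step (2) for \emph{all} irrational $\alpha$: the standard Gordon-type or reducibility arguments that exclude point spectrum only handle Liouville or Diophantine $\alpha$ respectively, and bridging the full range forces one to confront exactly the kind of sharp small-divisor/averaging estimate that reduces to the Erdős–Szekeres question; getting this estimate in the required uniform form, and feeding it correctly into the decay analysis of the dual eigenfunction, is the crux of the argument.
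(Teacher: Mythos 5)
Your proposal correctly identifies the high-level ingredients — Aubry-André duality, the zero-Lyapunov-exponent structure of $\mathcal{SD}$, and the role of the Erd\H{o}s-Szekeres estimate in making the argument uniform in $\alpha$ — but the contradiction mechanism you propose does not work, and it is not the one the paper uses. Your step (2), the ``non-localization lemma'' asserting that zero Lyapunov exponent forbids solutions that ``decay faster than any exponential,'' is not implied by $L(E)=0$: vanishing Lyapunov exponent controls the \emph{average} growth of transfer matrices but gives no pointwise lower bound on solutions, and in fact in the subcritical zero-LE regime the cocycle \emph{is} conjugate to a rotation and admits bounded solutions, so no such lemma can hold. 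Symmetrically, your step (1) asks too much of duality: an $\ell^2$ eigenfunction does not translate into analyticity or exponential decay of the dual object — $\ell^2$ on one side gives only an $L^2$-conjugacy on the dual side (Proposition~\ref{prop_det} / Proposition~\ref{coro_bddsol}), and the paper even has to restrict to $\ell^1$ eigenfunctions to get uniform boundedness of dual transfer matrices. So the dichotomy ``exponentially localized vs.\ non-localized'' never actually materializes, and the combination step (3) has nothing to combine.

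The paper's actual mechanism is a Fourier-coefficient comparison, which is more elementary than a decay-rate argument and closes despite the two sides of the conjugacy being a priori compatible. Starting from the $L^2$-conjugacy $B_{\sigma(\lambda)}^{E/\lambda_2}(x) M_\theta(x) = M_\theta(x+\alpha)R_\theta$, one forms
$\Psi_{\sigma(\lambda)}^{(n)}(x) = \mathrm{tr}\bigl\{ d_{\sigma(\lambda)}^{(n)}(x)\bigl( B_{\sigma(\lambda);n}^{E/\lambda_2}(x) - R_\theta^n \bigr)\bigr\}$,
a trigonometric polynomial of degree $n$ (the normalization by the product $d^{(n)}$ of $|c_{\sigma(\lambda)}|$ along the orbit clears poles). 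On one hand, Cauchy--Schwarz applied to the conjugacy shows
$\|\Psi_{\sigma(\lambda)}^{(q_{n_l})}\|_{L^1} = o(1)\cdot e^{q_{n_l} I(\sigma(\lambda))}$
along a subsequence $(q_{n_l})$ supplied by Theorem~\ref{prop_prozero}, and this is exactly where the Erd\H{o}s--Szekeres estimate enters — it controls the Birkhoff sum $\sum_j \log|c_{\sigma(\lambda)}(x+j\alpha)|$ (not a sum of $\|q_n\alpha\|^{-1}$'s) uniformly in $x$ and for every irrational $\alpha$, compensating for zeros of $c_{\sigma(\lambda)}$. On the other hand, the boundary Fourier coefficient $\widehat{\Psi_{\sigma(\lambda)}^{(n)}}(\pm n)$ can be computed in closed form via a tridiagonal determinant recursion (Lemma~\ref{lem_1}), and Proposition~\ref{prop_lb} shows it has $\limsup$ bounded \emph{below} by a positive constant times $e^{n I(\sigma(\lambda))}$ for a.e.\ $\theta$ (with Proposition~\ref{lem_almostunique} needed in the isotropic case). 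Since the $n$-th Fourier coefficient of a degree-$n$ trigonometric polynomial is dominated by its $L^1$ norm, these two facts contradict each other. The idea you are missing is thus not ``eigenfunction decay is incompatible with zero LE,'' but rather ``the explicit algebraic structure of the extended Harper transfer matrix makes a specific Fourier coefficient computable and nonvanishing, while the duality-induced $L^2$-conjugacy forces it to vanish in the limit.''
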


For each $\lambda$, the set of excluded phases $\theta$ can be described precisely through arithmetic conditions, see Sec. \ref{sec_proofofthmpoint} for details. Here, we only mention that for each $\lambda \in \mathcal{SD}$, one contribution to this excluded zero-measure set of phases is given by $\alpha$-rational $\theta$, defined as the following countable set:
\begin{definition} \label{def_alpharational}
$\theta \in \mathbb{T}$ is called {\em{$\alpha$-rational}} if $(\mathbb{Z} \alpha + 2 \theta) \cap \mathbb{Z} \neq \emptyset$ and {\em{non-$\alpha$-rational}}, otherwise.
\end{definition}

In particular, since the critical almost Mathieu operator arises from extended Harper's model by letting $\lambda_1=\lambda_3=0$ and $\lambda_2=1$ (therefore corresponding to $\lambda \in \mathrm{L}_{\mathrm{II}}$), we obtain the following important consequence of Theorem \ref{thm_point}:
\begin{theorem} \label{AM} 
For all irrational $\alpha$, the critical almost Mathieu operator has purely singular continuous spectrum for all non-$\alpha$-rational $\theta$.
\end{theorem}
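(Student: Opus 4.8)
The plan is to derive Theorem \ref{AM} from Theorem \ref{thm_point}, but the formulation of Theorem \ref{thm_point} as stated (a full-measure set of phases) is not enough; I would instead use the sharper conclusion that its proof in Section \ref{sec_proofofthmpoint} actually produces. Recall that ``purely singular continuous'' means the simultaneous absence of eigenvalues and of absolutely continuous spectrum, so there are two things to check. The second is not specific to non-$\alpha$-rational phases and is already available: the critical almost Mathieu operator is the case $\lambda=(0,1,0)\in\mathrm{L}_{\mathrm{II}}\subset\mathcal{SD}$ of (\ref{eq_hamiltonian}); every energy in its spectrum is critical in Avila's trichotomy \cite{global}, and a fully critical operator cannot support absolutely continuous spectrum \cite{AvilaFayadKrikorian_2011}, for all irrational $\alpha$ and all $\theta$. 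Hence the only point left to establish is that $H_{\theta;(0,1,0),\alpha}$ has empty point spectrum for \emph{every} non-$\alpha$-rational $\theta$, not merely for $\mu$-a.e.\ $\theta$.

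For this I would re-enter the proof of Theorem \ref{thm_point}. The exclusion of eigenvalues carried out in Section \ref{sec_proofofthmpoint} is not obtained ``for $\mu$-a.e.\ $\theta$'' by a soft measure-theoretic argument; it is established for every $\theta$ outside an explicit exceptional set cut out by arithmetic relations among $\theta$, $\alpha$ and $\lambda$, and one only remarks afterwards that this set is $\mu$-null. The task is therefore to specialize that arithmetic description to $\lambda=(0,1,0)$ and to check that the exceptional set then collapses exactly to the $\alpha$-rational phases of Definition \ref{def_alpharational}: here the hopping coefficient is $c_\lambda\equiv 1$ and never vanishes on $\mathbb{T}$, and this non-vanishing should be precisely the feature that, in the bookkeeping of Section \ref{sec_proofofthmpoint}, kills the additional contributions to the exceptional set that are present for general $\lambda\in\mathcal{SD}$. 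Granting this, every non-$\alpha$-rational $\theta$ lies in the good set, so $H_{\theta;(0,1,0),\alpha}$ has no eigenvalues; together with the first paragraph this yields purely singular continuous spectrum.

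The real difficulty is thus entirely internal to Section \ref{sec_proofofthmpoint}: one must verify that the delicate number-theoretic input underlying Theorem \ref{thm_point} --- ultimately the resolution of the Erd\H{o}s--Szekeres problem --- forces no exclusion of phases beyond $\alpha$-rationality when $\lambda=(0,1,0)$, \emph{uniformly over all irrational $\alpha$}, including the Liouville ones. It is worth noting why this genuinely new estimate is unavoidable, i.e. why one cannot instead argue by duality. The critical almost Mathieu operator is exactly self-dual, so one is tempted to push a putative $\ell^2$ eigenfunction of $H_{\theta_0}$ through Aubry duality into an ``extended'' object and contradict the absence of absolutely continuous spectrum; but this collapses precisely because the spectrum has zero Lebesgue measure: even a full eigenbasis of $H_{\theta_0}$ would only produce a $\mu$-null set of eigenvalues on the energy axis, which cannot be upgraded to absolutely continuous spectrum at other phases. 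The soft symmetries that are available --- unitary equivalence of $H_\theta$ with $H_{-\theta}$ via $(U\psi)_k=\psi_{-k}$, and shift-covariance $H_{\theta+\alpha}\cong H_\theta$ --- do explain why the $\alpha$-rational phases (those carrying an exact reflection symmetry about a half-integer site) must be excluded, but they cannot exclude eigenvalues at the remaining resonant phases. Closing that gap is exactly what the quantitative estimate of Section \ref{sec_proofofthmpoint} achieves, and what the earlier argument of \cite{gjls} failed to accomplish.
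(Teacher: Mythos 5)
Your outline is correct and is essentially the paper's route, but with a slight difference in where the eigenvalue exclusion is obtained. The paper does not extract the almost Mathieu case from the general machinery of Section \ref{sec_proofofthmpoint}; it gives a dedicated and substantially simpler proof (Theorem \ref{thm_AMO}, Sec.\ \ref{sec_selfdual_criticalamo}) which is then combined with the GT criticality statement exactly as in your first paragraph. The point worth emphasizing is that for $\lambda=(0,1,0)$ the simplification is stronger than ``$c_\lambda$ is nonvanishing so the exceptional set collapses'': since $c_{\sigma(\lambda)}\equiv 1$, the quasi-periodic product $d^{(n)}_{\sigma(\lambda)}\equiv 1$ identically, so Theorem \ref{prop_prozero} (and with it the Erd\H{o}s--Szekeres estimate) is not invoked at all, no passage to a subsequence $(q_{n_l})$ is needed, and Proposition \ref{lem_almostunique} never enters. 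The entire argument reduces to: (i) the $L^2$-conjugacy of Proposition \ref{prop_det}, which is where non-$\alpha$-rationality is used; (ii) the exact Fourier identity $|\widehat{\Psi^{(n)}}(\pm n)|=1$ for every $n$ (not just a subsequence); and (iii) Cauchy--Schwarz, giving $\|\Psi^{(q_n)}\|_{L^1}\to 0$, a contradiction. So phrasing the Erd\H{o}s--Szekeres theorem as something that ``forces no exclusion beyond $\alpha$-rationality'' slightly misattributes its role here: it is simply not needed, and indeed the dedicated AMO proof predates the general one and motivates it. Your observation that duality-based arguments cannot close the gap because the spectrum has zero Lebesgue measure is correct and is part of what made this question open for so long, and your reflection-symmetry heuristic for the origin of the $\alpha$-rational exclusion is consistent with the way Proposition \ref{prop_det} fails exactly at those phases.
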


\begin{remark} \label{rem_criticalAMO_sc}
\begin{enumerate}
\item For pedagogical reasons, we will prove Theorem \ref{thm_point} first for the special case of the critical almost Mathieu operator, which will imply Theorem \ref{AM} directly. This special case of Theorem \ref{thm_point} for the critical almost Mathieu operator is discussed in Sec. \ref{sec_selfdual_criticalamo}, Theorem \ref{thm_AMO} therein.
\item  While the spectrum of the critical almost Mathieu operator is known to have zero Lebesgue measure \cite{last,AvilaKrikorian_2006} (a fact actually not used in the present proof), the absence of eigenvalues (and thus purely singular continuous nature of the spectrum) has been a longstanding open question. Delyon \cite{del} proved that there are no eigenvectors belonging to $\ell^1$ and Chojnacki \cite{choi} established presence of some continuous spectrum for a.e. $\theta.$  A measure theoretic version of Theorem \ref{AM} was the main corollary of \cite{gjls}. However, the corresponding part of the argument in \cite{gjls} has a gap, thus Theorem \ref{AM} has been open, except for certain topologically generic but measure zero sets of $\alpha$ or $\theta$ where more general arguments apply \cite{as,js}. It should also be mentioned that other than for these measure zero sets,  the entire region III for the extended Harper's model has been completely open.

\item This paper incorporates two preprints, \cite{Avila_preprint_2008_2} and \cite{JM}, both of which were not intended for publication. In particular, Theorem \ref{AM} appeared in the preprint \cite{Avila_preprint_2008_2}, and the a.e. $\alpha$ version of Theorems \ref{thm_point} and \ref{thm_ehmspectral} appeared in  the preprint \cite{JM}. 

\item Theorem \ref{AM} excludes only a countable set of phases $\theta.$ The question whether the statement of Theorem \ref{AM} extends to {\em{all}} phases is one of the few open problems from the spectral theory of the almost Mathieu operator. Based on Sec. \ref{sec_Aubry}, this question relates to whether the exclusion of the $\alpha$-rational phases in Proposition \ref{prop_det} is really necessary. While we conjecture that this exclusion in Theorem \ref{AM} is not needed, in the general case of Theorem \ref{thm_point} some phases do lead to some point spectrum, see Proposition \ref{prop_ehmexcludephase}. 
\end{enumerate}
\end{remark}

The gap between Theorem \ref{thm_point} and a complete understanding of the spectral properties of extended Harper's model is bridged by the {\em{global theory of quasi-periodic, analytic Schr\"odinger operators}} developed in \cite{global} and partially extended to the Jacobi case in \cite{JitomirskayaMarx_2012, JitomirskayaMarx_2013_erratum}; subsequently, the global theory will be referred to as GT. The GT relies on an understanding of the complexified Lyapunov exponent, defined in (\ref{eq_defcomplexle}) of Sec. \ref{sec_avilasglobal}. To keep the paper as self-contained as possible, we will summarize some relevant aspects of the GT in Sec \ref{sec_avilasglobal}. For further details we refer the reader to the recent survey article on the dynamics and spectral theory of quasi-periodic Schr\"odinger-type operators in \cite{JitomirskayaMarx_ETDS_2016_review}. 

Based on the GT, Theorem \ref{thm_point} will be shown to imply the spectral resolution of extended Harper's model in the entire regime of zero Lyapunov exponents. 
The contents of Theorem \ref{thm_ehmspectral} are illustrated in Fig. \ref{figure_5}.
\begin{theorem} \label{thm_ehmspectral}
\begin{itemize}
\item[(i)] For all irrational $\alpha$, $\mu$-a.e. $\theta$, and $\lambda_1 \neq \lambda_3$, the spectrum is purely absolutely continuous in $II^\circ \cup III^\circ$ and purely singular continuous on the union of line segments $L_I \cup L_{II} \cup L_{III}$.
\item[(ii)] For all irrational $\alpha$, $\mu$-a.e. $\theta$, and $\lambda_1 = \lambda_3$, the spectrum is purely absolutely continuous in $II^\circ$ and purely singular continuous on $\mathcal{SD}$.
\end{itemize}
\begin{figure}[ht] 
\centering
\subfigure[$\lambda_1 \neq \lambda_3$]{
\includegraphics[width=0.4\textwidth]{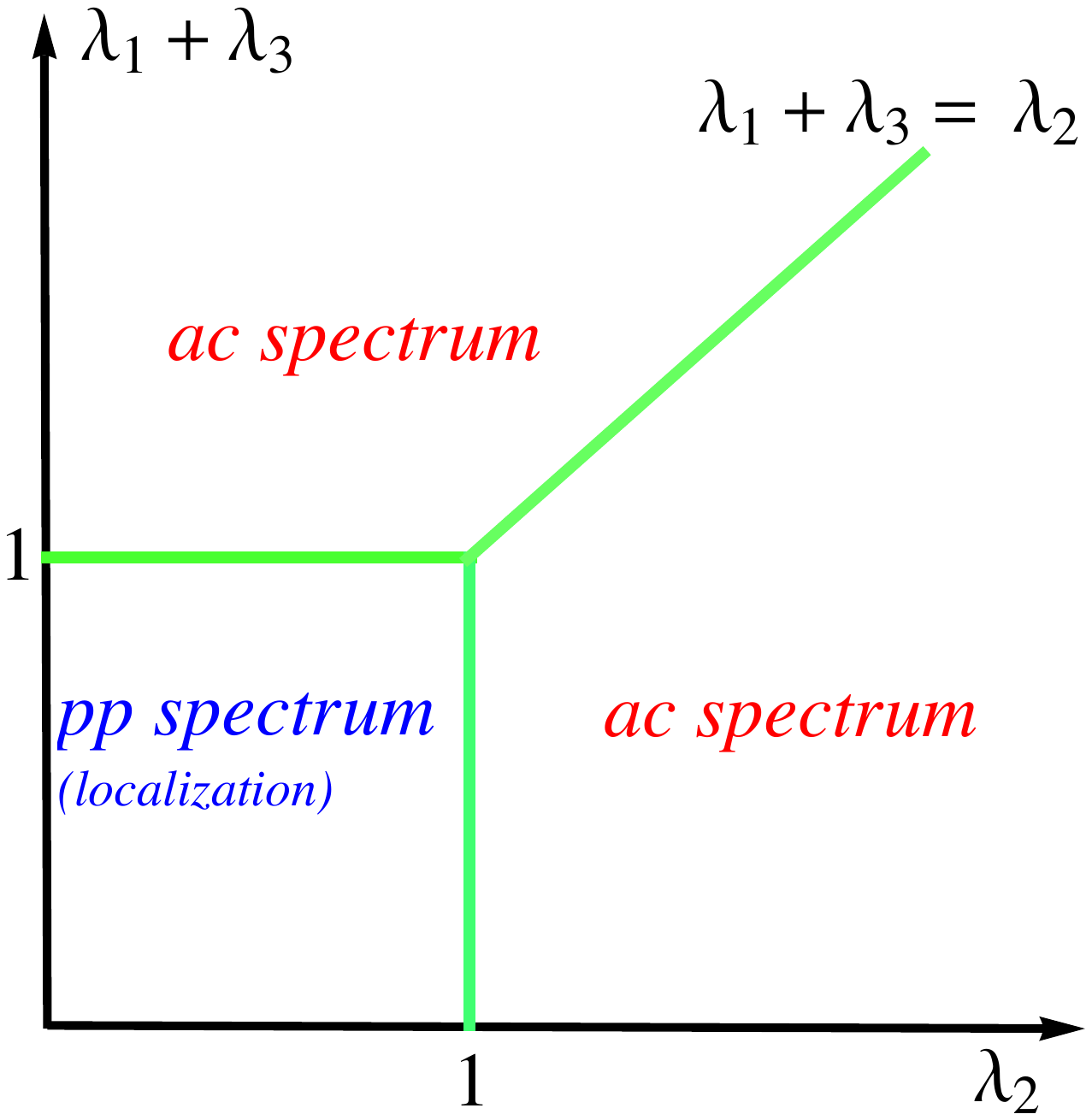} 
}
\subfigure[$\lambda_1 = \lambda_3$]{
\includegraphics[width=0.4\textwidth]{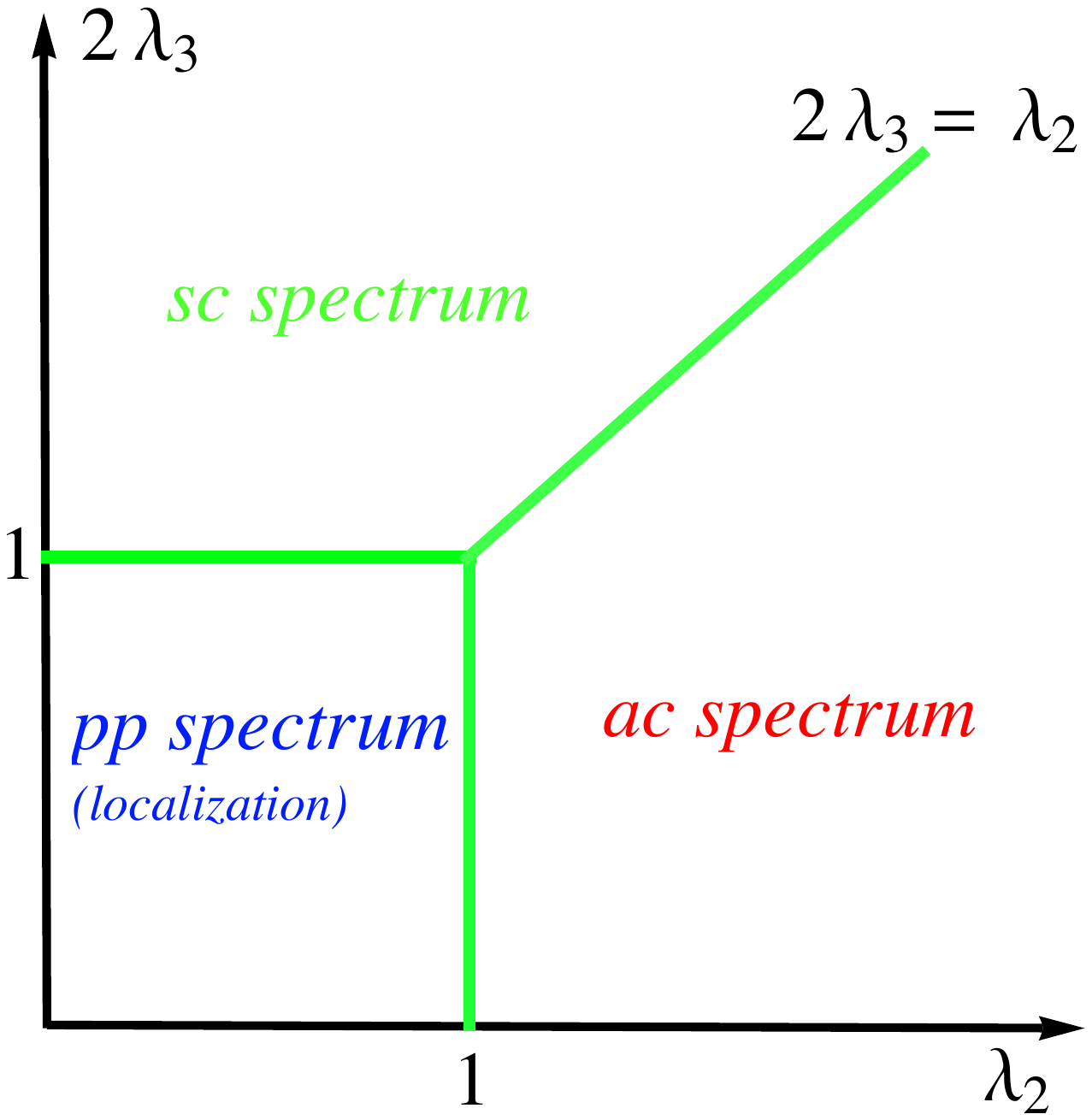} 
} \label{figure_5}
\caption{Spectral theory of extended Harper's model. Green indicates (purely) singular continuous spectrum. The spectral properties of extended Harper's model crucially depend on the symmetry of NNN interaction. Particularly noteworthy is the collapse in the self-dual regime, from purely absolutely continuous spectrum for $\lambda_1 \neq \lambda_3$ to purely singular continuous spectrum once $\lambda_1 = \lambda_3$. Anderson localization in region I had been proven before in \cite{JitomirskayaKosloverSchulteis_2005}.} \label{figure_5}
\end{figure}
\end{theorem}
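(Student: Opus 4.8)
\emph{Proof proposal.} The plan is to derive Theorem~\ref{thm_ehmspectral} from Theorem~\ref{thm_point}, the global theory (GT) of \cite{global} in its Jacobi form (\cite{JitomirskayaMarx_2012, JitomirskayaMarx_2013_erratum}), and the explicit formula for the complexified Lyapunov exponent $L(\cdot+i\varepsilon)$ from \cite{JitomirskayaMarx_2012}. The skeleton is: (a) classify, at every energy $E$ in the spectrum, the cocycle $(\alpha,A^\lambda_E)$ as \emph{subcritical} or \emph{critical} --- the supercritical alternative does not occur in $\mathcal{SD}$, where $L\equiv 0$; (b) on the subcritical part, apply Avila's almost reducibility theorem \cite{Avila_prep_ARC_1, Avila_prep_ARC_2}, which in the Jacobi setting covers the nonsingular cocycles at hand and yields purely absolutely continuous spectrum for all irrational $\alpha$ and $\mu$-a.e.\ $\theta$; (c) on the critical part, use that critical energies carry no absolutely continuous spectrum (\cite{AvilaFayadKrikorian_2011}, through GT); (d) on $\mathcal{SD}$, delete point spectrum for $\mu$-a.e.\ $\theta$ by Theorem~\ref{thm_point}. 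Then (c)+(d) force purely singular continuous spectrum on the critical part of $\mathcal{SD}$, while (a)+(b) yield the ac part; inspecting the partition of $\{\lambda_1+\lambda_3\ge 0,\ \lambda_2>0\}$ gives (i) and (ii).

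The decisive point is (a), and it is here that the dichotomy $\lambda_1\neq\lambda_3$ vs.\ $\lambda_1=\lambda_3$ --- the mechanism behind the ``spectral collapse'' --- enters. Writing $w=\mathrm{e}^{2\pi i(\theta+\alpha/2)}$, the off-diagonal symbol $c_\lambda$ vanishes somewhere on $\mathbb{T}$ exactly when $\lambda_3 w^2+\lambda_2 w+\lambda_1$ has a root of modulus $1$; since its roots multiply to $\lambda_1/\lambda_3$, this occurs precisely for $\lambda_2=\lambda_1+\lambda_3$ (a real root at $w=-1$) or for $\lambda_1=\lambda_3$ with $\lambda_2<\lambda_1+\lambda_3$ (a conjugate pair on the unit circle). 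As $\lambda_2<\lambda_1+\lambda_3$ throughout $III^\circ$, the cocycle is singular there exactly on $\{\lambda_1=\lambda_3\}$; under complexification the zeros of $c_\lambda(\cdot+i\varepsilon)$ lie on the circle of radius $\sqrt{\lambda_1/\lambda_3}\,\mathrm{e}^{-2\pi\varepsilon}$, which touches $\mathbb{T}$ at $\varepsilon=\tfrac{1}{4\pi}\log(\lambda_1/\lambda_3)$. Thus for $\lambda_1\neq\lambda_3$ this value of $\varepsilon$ is nonzero, $c_\lambda(\cdot+i\varepsilon)$ has no zeros on a two-sided strip about $\varepsilon=0$, and the formula for $L(\cdot+i\varepsilon)$ then yields subcriticality at every spectral energy; whereas for $\lambda_1=\lambda_3$ that value collapses to $\varepsilon=0$, so $L(\cdot+i\varepsilon)$ has a corner there with $L(E)=0$ but strictly positive acceleration, and every spectral energy is critical. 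The same inspection gives: every spectral energy of $II^\circ$ is subcritical (dually to the positivity of $L$ in $I^\circ$); and on $L_I\cup L_{II}\cup L_{III}$ --- the boundary between the supercritical region $I^\circ$ and the zero-$L$ region, with $L_{III}$ (and the corner $\lambda_1=\lambda_3=\tfrac12$ of $L_I$) even carrying a singular cocycle --- every spectral energy has $L(E)=0$ with strictly positive acceleration, i.e.\ is critical.

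Assembling: on $II^\circ$, and on $III^\circ$ when $\lambda_1\neq\lambda_3$, every spectral energy is subcritical, so (b) gives purely absolutely continuous spectrum for all irrational $\alpha$ and $\mu$-a.e.\ $\theta$ (for $II^\circ$ with Diophantine $\alpha$ this also follows by dualizing the localization of \cite{JitomirskayaKosloverSchulteis_2005}, cf.\ Theorem~\ref{thm_dualregime}). On $L_I\cup L_{II}\cup L_{III}$ (case (i)) and on all of $\mathcal{SD}$ when $\lambda_1=\lambda_3$ (case (ii)), every spectral energy is critical, so (c) rules out absolutely continuous spectrum while Theorem~\ref{thm_point} rules out point spectrum for $\mu$-a.e.\ $\theta$; hence the spectrum is purely singular continuous for $\mu$-a.e.\ $\theta$ and all irrational $\alpha$. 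This proves (i) and (ii). The substantial work lies not in this assembly but in Theorem~\ref{thm_point} itself, which rests on delicate number-theoretic estimates that ultimately reduce to the Erd\H{o}s--Szekeres problem. Within the present reduction, the points needing care are the functioning of the GT in the Jacobi, possibly-singular, setting: that subcriticality really yields almost reducibility --- hence ac --- for \emph{every} irrational $\alpha$, that criticality really precludes ac, and that the explicit $L(\cdot+i\varepsilon)$ develops its corner at $\varepsilon=0$ exactly along $\{\lambda_1=\lambda_3\}$ and $L_I\cup L_{II}\cup L_{III}$.
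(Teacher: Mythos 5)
Your proposal follows essentially the same route as the paper: reduce Theorem~\ref{thm_ehmspectral} to the classification of Theorem~\ref{thm_complexLEEHM}, the (non-)singularity criterion of Proposition~\ref{obs_cfun}, the spectral consequences of the global theory (Theorems~\ref{thm_subcritical} and~\ref{thm_coroafk}, together with Proposition~\ref{fact_singular} on the singular portion of $\mathcal{SD}$), and then use Theorem~\ref{thm_point} to delete point spectrum. One cosmetic caveat: your heuristic that the complexified zeros of $c_\lambda$ all lie on the circle of radius $\sqrt{\lambda_1/\lambda_3}\,\mathrm{e}^{-2\pi\varepsilon}$ is only valid when $\lambda_2^2\le 4\lambda_1\lambda_3$ (conjugate roots of $\lambda_3 w^2+\lambda_2 w+\lambda_1$); when the discriminant is positive the two roots have distinct moduli, although the conclusion you actually need --- that neither modulus equals $1$ when $\lambda_1\neq\lambda_3$ and $\lambda_2<\lambda_1+\lambda_3$ --- remains true, and in any case the rigorous criticality/subcriticality classification comes from the explicit $L(E;\epsilon)$ formula of \cite{JitomirskayaMarx_2012} that you invoke.
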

\begin{remark}
We note, for completeness, that for $\lambda$ in the complementary region, $I^\circ,$ for all Diophantine $\alpha$ (defined in (\ref{eq_diophcond})) and $\mu$-a.e. $\theta$, the spectrum has been known to be purely point with exponentially decaying eigenfunctions \cite{JitomirskayaKosloverSchulteis_2005}. Thus Theorem \ref{thm_ehmspectral} completes spectral picture of the extended Harper's model for all couplings and a.e. $\alpha,\theta$. Moreover, it was recently shown that in $I^\circ$ there is a sharp arithmetic transition between pure point and singular continuous spectrum for $\mu$-a.e. $\theta$ at $\alpha$ with $\beta(\alpha)$ equal to the Lyapunov exponent, where $\beta (\alpha) =\limsup \frac{\ln q_{n+1}}{q_n}$ (see (\ref{21})) is the upper exponential growth rate of the continued fraction expansion of  $\alpha$ \cite{hj}. The spectral picture in the supercritical region $I^\circ$ remains unclear only for the tiny set of $\alpha,$ the ``second critical line'' where $\beta(\alpha)$ coincides with the Lyapunov exponent. We note that Theorem \ref{thm_ehmspectral} holds for all irrational $\alpha.$
\end{remark}
The most noteworthy conclusion of Theorem \ref{thm_ehmspectral} is that the symmetry of the NNN interaction triggers a collapse in the interior of region III from purely absolutely continuous (ac)  ({\em{anisotropic}} NNN interaction, $\lambda_1 \neq \lambda_3$) to purely singular continuous (sc) spectrum ({\em{isotropic}} NNN interaction, $\lambda_1 = \lambda_3$). Such {\em{spectral collapse}} has not yet been observed for any other known quasi-periodic operator.

Theorem \ref{thm_point} is also interesting from a more general point of view, which we formulate as the {\em{critical energy conjecture}} (CEC) in Conjecture \ref{conj_cec}; we comment more on the context of the CEC in Sec. \ref{sec_avilasglobal}, see in particular Remark \ref{rem_CEC}. In essence, the CEC claims that critical behavior in the sense of the GT is the signature of purely sc spectrum. Establishing the CEC in general would thus provide the long sought-after {\em{direct}} criterion for sc spectrum for quasi-periodic Jacobi operators with analytic coefficients. As detailed in Sec. \ref{sec_avilasglobal}, Theorem \ref{thm_point} verifies the CEC for the special case of extended Harper's model.

Even though some aspects of the proof of Theorem \ref{thm_point} rely on the specifics of extended Harper's model, we believe that the overall strategy should be extendable. Indeed, the method of this paper has already been implemented in establishing the CEC in another important model, a one-dimensional coined quantum walk with $n$-th coin defined by the rotation by the angle $\theta + n\alpha$, dubbed the unitary almost Mathieu operator, in \cite{foz} (which in particular directly uses the main number theoretical estimate of this paper, the solution of a conjecture of Erd\H{o}s-Szekeres, see below). This model appears in physics literature \cite{phys1} as the most natural next step from periodic quantum random walks studied in \cite{phys2}. 

A key ingredient in the proof of Theorem \ref{thm_point} is Theorem \ref{prop_prozero}, an estimate on the {\em{upper bound in the ergodic theorem}} for $\log \vert f(\theta) \vert$ under irrational rotations for complex analytic $f: \mathbb{T} \to \mathbb{C}$. Presence of zeros in the  function $f$ complicates the matter quite substantially. Indeed, the main accomplishment here is to obtain an upper bound without imposing restrictions on the arithmetic properties of the rotational frequency. Aside from its important role for the present paper, we also expect Theorem \ref{prop_prozero} to become crucial when establishing the critical energy conjecture for general quasi-periodic operators\footnote{Indeed, it has already played a crucial role in the above mentioned proof of absence of point spectrum in \cite{foz}.}, and to be of interest in its own right. Theorem \ref{prop_prozero} is proven in Sec. \ref{sec_rate}.  It essentially boils down to answering a question of Erd\H{o}s and Szekeres \cite{es} about certain trigonometric products. The interest in questions of this type has been renewed lately, see e.g. \cite{bc} where some other problems posed in \cite{es} were addressed/answered, but the one which plays a role in our analysis had remained open.  Its solution (Theorem \ref{thm_rate_main}) is the main content of Section \ref{sec_rate}.

The rest of the paper is structured as follows. Sec. \ref{sec_avilasglobal}-\ref{sec_avilasglobal_applicehm} embed Theorem \ref{thm_ehmspectral} into the context of the global theory for quasi-periodic, analytic Jacobi operators. In particular, the spectral consequences of the GT will reduce Theorem \ref{thm_ehmspectral} to our main result, Theorem \ref{thm_point}. The point is, that, {\em{while critical behavior in the sense of the GT already implies singular (sc+pp) spectrum, it does not a priori exclude eigenvalues.}}

Theorem \ref{thm_point} is proven by contradiction in Sec. \ref{sec_Aubry} - \ref{sec_selfdual}. To illustrate the general idea, we start with the special case of the critical almost Mathieu operator (Theorem \ref{thm_AMO}) and prove absence of eigenvalues for all non-$\alpha$-rational, i.e., for all but countably many phases. For all such phases, the latter implies purely sc spectrum (Theorem \ref{AM}). The more complicated form of extended Harper's model, as well as the presence of zeros in $c_\lambda(\theta)$, however, leads to non-trivial changes in the argument, in particular, requiring the results of Sec. \ref{sec_rate}.

We note that even though the original argument for the critical almost Mathieu operator already excludes countably many phases, it is (still) not clear whether this exclusion is indeed necessary. For extended Harper's model, however, it is shown in Proposition \ref{prop_ehmexcludephase} that the zeros in $c_\lambda(\theta)$ {\em{necessitate}} the exclusion of countably many phases in Theorem \ref{thm_point}. It is interesting, though, that for extended Harper's model with isotropic NNN ($\lambda_1 = \lambda_3$) an additional zero measure set of phases has to be excluded in our proof. Origin of this additional zero measure set is a general fact on almost uniqueness of rational approximation, which we prove in Sec. \ref{sec_rationalapprox}. The authors note that it has meanwhile been shown by R. Han in \cite{RHan_IMRN_2017} that exclusion of this additional zero measure set of phases is indeed an artefact of our proof which can be avoided using the simplifications done in \cite{RHan_IMRN_2017}.

The remaining two sections, Sec. \ref{sec_afk} and \ref{sec_almredimpliesac}, establish some ingredients needed for the spectral consequences of the GT, which are currently only available for Schr\"odinger but not for Jacobi operators.

Sec. \ref{sec_afk} is devoted to the proof of Theorem \ref{m5_thm_afk_jacobi}, an extension of the spectral dichotomy expressed in \cite{AvilaFayadKrikorian_2011} to non-singular Jacobi operators: for Lebesgue a.e. $E \in \mathbb{R}$, the Lyapunov exponent of the Jacobi operator is either strictly positive or (the analytically normalized Jacobi cocycle associated with) $E$ is analytically reducible to rotations (in the sense specified in Theorem \ref{m5_thm_afk_jacobi}). In final consequence, Theorem \ref{m5_thm_afk_jacobi} implies that the set of critical energies in the sense of the GT can only support singular (sc+pp) spectrum. Since the main result of \cite{AvilaFayadKrikorian_2011} is not specific to Schr\"odinger operators, Theorem \ref{m5_thm_afk_jacobi} essentially boils down to proving $L^2$-reducibility of the (normalized) Jacobi cocycle. For Schr\"odinger operators, the latter is a well known fact going back to \cite{DeiftSimon_1983}. 

Finally, Sec. \ref{sec_almredimpliesac} shows that almost reducibility implies purely absolutely continuous spectrum for $\mu$-a.e. phase (Theorem \ref{thm_arimpliesac}), which is necessary to draw the spectral theoretic conclusions about the set of subcritical energies. We mention that the proof we present here slightly shortens the argument given for Schr\"odinger operators in \cite{Avila_prep_ARC_1}.

{\bf{Acknowledgement:}} We are grateful to J. Bourgain for pointing out that Theorem \ref{thm_rate_main} in the previous version solves a conjecture from \cite{es} and to M.-C. Chang for sharing with us \cite{es}.

\section{Upper bound for analytic, quasi-periodic products: solution of a problem by Erd\H{o}s and Szekeres} \label{sec_rate}

Given $\alpha \in [0,1)$ irrational, denote by $\frac{p_n}{q_n}$ the $n$th approximant associated with the continued fraction expansion of $\alpha = [0; a_1, a_2, \dots]$, in particular, 
\begin{equation}\label{21}
q_n = a_n q_{n-1} + q_{n-2} ~\mbox{, $n \geq 2$ .} 
\end{equation}
Here, we use the conventions $q_0 = 1$ and $q_{-1} = 0$. 

Following, for $r \in \mathbb{R}$, we set 
\begin{equation}
\vertiii{r}:=\inf_{n \in \mathbb{Z}} \vert r - n \vert ~\mbox{,}
\end{equation}
which induces the usual norm on $\mathbb{T}$. Letting $\Delta_n:= \vert q_n \alpha - p_n \vert$, we recall the basic estimates
\begin{eqnarray} \label{eq_contifracbasic}
\frac{1}{q_n + q_{n+1}} < \Delta_n < \frac{1}{q_{n+1}} ~\mbox{,} \nonumber \\
\vertiii{k \alpha} > \Delta_{n-1} ~\mbox{, if } q_{n-1} + 1 \leq k \leq q_n -1 ~\mbox{.}
\end{eqnarray}

The following  question was asked in a paper by Erd\H{o}s and Szekeres \cite{es}: whether it is true that for all irrational $\alpha$, one has 
\begin{equation}\label{es1}
\liminf_{n \to \infty} \max_{|z|=1}\prod_{k=1}^n|z-e^{2\pi i k\alpha}|<\infty
\end{equation}

It was pointed out in \cite{es} that (\ref{es1}) holds for a.e. $\alpha$ with moreover a subsequence along which the limit is equal to $2$. 

Erd\H{o}s and Szekeres posed several conjectures in \cite{es}, and while there has been a number of partial results on some of those, in particular, on the one on pure product polynomials, e.g. \cite{bc,bdm},  we are not aware of further results towards (\ref{es1}).

Above-mentioned question by Erd\H{o}s and Szekeres will be important for studying {\em{quasi-periodic products}} of the form
\begin{equation} \label{eq_qpproducs}
 \frac{\left(\prod_{j=0}^{q_{n_k}} f(x+ j \alpha)\right)
}{\mathrm{exp}\left( q_{n_k}\int_{\mathbb{T}} \log \vert f(x) \vert \ud \mu(x) \right)} ~\mbox{,} \end{equation} 
for $f$  analytic in a neighborhood of $\mathbb{T}$. Here, the goal will be to obtain a subsequence $(q_{n_k})$ which allows for a {\em{uniform}} upper bound of order $\mathrm{exp} (\mathcal{O}(1/q_{n_k})).$ The main challenge in this endeavor is to allow for zeros of the function $f$ without imposing additional number theoretic conditions on $\alpha$. This section is devoted to the proof of the above conjecture by Erd\H{o}s and Szekeres and some related questions/corollaries.

First, we denote by
\begin{equation}
S(q_n,z) := \sum_{k=0}^{q_n - 1} \log \vert \mathrm{e}^{2 \pi i k \alpha} z - 1 \vert  ~\mbox{,}
\end{equation}
where, here and following, $z \in \mathbb{C}$ is assumed to satisfy $\vert z \vert =1$.  The conjecture  (\ref{es1}) is then established as a consequence of:
\begin{theorem} \label{thm_rate_main}
For each irrational $\alpha$, there exists $C>0$ such that
\begin{equation}\label{bound}
\liminf_{n \to \infty} \sup_{\vert z \vert =1} S(q_n,z) \leq C ~\mbox{.}
\end{equation}
\end{theorem}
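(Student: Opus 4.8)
The plan is to convert the maximum of the product over the unit circle into a Birkhoff sum for the rotation by $\alpha$, and then to control that sum using the self-similar combinatorics that the continued fraction of $\alpha$ imposes on the orbit. First, writing $z=e^{2\pi i x}$ and using $|e^{2\pi i k\alpha}z-1|=2|\sin\pi(x+k\alpha)|$, one has \emph{exactly}
\[
\sup_{|z|=1}S(q_n,z)=\sup_{x\in\mathbb T}\Sigma_n(x),\qquad
\Sigma_n(x):=\sum_{k=0}^{q_n-1}\log\bigl(2|\sin\pi(x+k\alpha)|\bigr).
\]
Since $\int_{\mathbb T}\log(2|\sin\pi x|)\,\ud\mu=0$ we get $\int_{\mathbb T}\Sigma_n\,\ud\mu=0$, so $\sup_x\Sigma_n\ge 0$ and the theorem demands an upper bound that does not deteriorate with $n$ along a subsequence. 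Note that $\Sigma_n$ is real-analytic off the $q_n$ points $\{-k\alpha\bmod 1:0\le k<q_n\}$, where it drops to $-\infty$; by (\ref{eq_contifracbasic}) consecutive such singularities are at least $\Delta_{n-1}$ apart, and being negative they are \emph{harmless} for an upper bound — the whole difficulty is the positive bulk.

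The model case is the exact rational: for $p/q$ in lowest terms, $\{x+kp/q\bmod 1\}_{k=0}^{q-1}=\{x+j/q\}_{j=0}^{q-1}$, whence $\sum_{k=0}^{q-1}\log(2|\sin\pi(x+kp/q)|)=\log(2|\sin\pi qx|)\le\log 2$; applied with $p/q=p_n/q_n$ this is precisely the bound we want \emph{at the rational frequency}. Everything hinges on transporting it to $\alpha$, and a one-shot comparison does not suffice: using $|\Delta\psi|\le|\Delta t|\sup|\psi'|$ with $\psi=\log(2|\sin\pi\cdot|)$ and $|\psi'(t)|=|\pi\cot\pi t|\le \pi/(2\vertiii{t})$ only gives $\Sigma_n(x)\le\log 2+O\!\bigl((q_n/q_{n+1})\log q_n\bigr)$, which is $O(1)$ only when $q_{n+1}$ dominates $q_n\log q_n$ — useless when the partial quotients of $\alpha$ stay bounded. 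So the comparison must be carried out one convergent at a time. Splitting $k=iq_{n-1}+r$ and using $q_{n-1}\alpha\equiv(-1)^{n-1}\Delta_{n-1}\pmod 1$ together with $q_n=a_nq_{n-1}+q_{n-2}$ yields the renormalization identity
\[
\Sigma_n(x)=\sum_{i=0}^{a_n-1}\Sigma_{n-1}\bigl(x+(-1)^{n-1}i\Delta_{n-1}\bigr)+\Sigma_{n-2}\bigl(x+(-1)^{n-1}a_n\Delta_{n-1}\bigr).
\]
Its decisive feature is that the $a_n+1$ translation amounts involved span an interval of length $a_n\Delta_{n-1}=\Delta_{n-2}-\Delta_n<\Delta_{n-2}$, which is smaller than the minimal gap $\Delta_{n-2}$ between consecutive singularities of $\Sigma_{n-1}$ (again by (\ref{eq_contifracbasic})). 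Hence, for each fixed $x$, the orbit segment $\{x+(-1)^{n-1}i\Delta_{n-1}\}_{i=0}^{a_n}$ approaches at most one singular point of $\Sigma_{n-1}$ (and of $\Sigma_{n-2}$).

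The heart of the matter is propagating the bound through this recursion. The naive estimate $\sup\Sigma_n\le a_n\sup\Sigma_{n-1}+\sup\Sigma_{n-2}$ loses a factor $a_n$ at every scale and is useless; one must show that the $a_n$ nearly-coincident translates cannot make $\Sigma_{n-1}$ large simultaneously. In the recursion one isolates the at most one ``resonant'' index $i$ — for which $\Sigma_{n-1}$ is evaluated near a singularity, hence is very negative (and in any case bounded by $\sup\Sigma_{n-1}$) — while for the remaining $a_n-1$ indices $\Sigma_{n-1}$ is evaluated at distance $\ge\Delta_{n-1}/2$ from its singular set, and one needs that those terms are not all nearly as large as $\sup\Sigma_{n-1}$. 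This forces one to track not just $\sup_x\Sigma_n$ but a refinement that also records the logarithmic decay of $\Sigma_n$ near its singular set (equivalently, a concentration statement near its maximum, controlled through $\Sigma_n''$ and hence through the distances of the relevant orbit points to the singularities), proved recursively in tandem with the bound itself; fed back into the identity, this produces a relation admitting a fixed point uniform in $n$, hence in $\alpha$, at least along a subsequence of $n$. Choosing $n$ along that subsequence gives $\sup_{|z|=1}S(q_n,z)\le C$, which is (\ref{bound}); discarding the trivially bounded $k=0$ factor (and, if one wishes, reflecting $\alpha\mapsto-\alpha$) then yields $\liminf_n\max_{|z|=1}\prod_{k=1}^{n}|z-e^{2\pi i k\alpha}|<\infty$, i.e.\ (\ref{es1}).

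I expect the last step to be the genuine obstacle: making the concentration-near-the-maximum estimate self-improving under renormalization, \emph{with no Diophantine hypothesis whatsoever}, so that the argument works uniformly across the scales on which $\alpha$ behaves as Liouville ($q_{n+1}\gg q_n$) and those on which it behaves as Diophantine ($q_{n+1}\asymp q_n$). It is exactly here that the soft methods (Denjoy–Koksma for a regularized $\psi$, or Nikolskii-type inequalities for the polynomial $\prod_{k}(z-e^{2\pi i k\alpha})$) give only $O(\log q_n)$ and must be replaced by the structural analysis above — which is what the solution of the Erdős–Szekeres question in Section~\ref{sec_rate} amounts to.
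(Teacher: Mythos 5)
Your setup is sound and your renormalization identity
\[
\Sigma_n(x)=\sum_{i=0}^{a_n-1}\Sigma_{n-1}\bigl(x+(-1)^{n-1}i\Delta_{n-1}\bigr)+\Sigma_{n-2}\bigl(x+(-1)^{n-1}a_n\Delta_{n-1}\bigr)
\]
is correct and is genuinely different from the paper's decomposition: the paper instead writes $S(q_n,z)\le C(q_n,q_n)+C_1$, where the rational contribution $\log|z^{q_n}-1|\le\log 2$ is peeled off exactly and $C(q_n,q_m)$ (an approximation error) is then bounded by descending recursively in $m$ through the convergents, not by splitting the Birkhoff sum over the orbit as you do. You also correctly identify the obstacle: the naive bound $\sup\Sigma_n\le a_n\sup\Sigma_{n-1}+\sup\Sigma_{n-2}$ loses a factor $a_n$ per scale and must be defeated by cancellation among the $a_n$ nearly-coincident translates.

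The gap is that you never close that loop. The passage beginning ``This forces one to track not just $\sup_x\Sigma_n$ but a refinement that also records the logarithmic decay of $\Sigma_n$ near its singular set\dots fed back into the identity, this produces a relation admitting a fixed point uniform in $n$'' is a description of what a proof would have to accomplish, not a proof; no concrete inductive quantity is defined, no inequality is derived, and you explicitly flag this step as the ``genuine obstacle.'' Nor does your sketch explain where the $\liminf$ is needed: a self-improving fixed-point estimate of the kind you describe would, if it worked, presumably give a bound along the \emph{full} sequence $(q_n)$, contradicting Theorem~\ref{counter}. The paper obtains the $\liminf$ from an explicit arithmetic case analysis (Lemma~\ref{rate_lemma3}: either infinitely many record-breaking $a_{n+1}$, or $\limsup a_n<\infty$ and one takes $n$ with $a_{n+1}=\limsup a_n$), and the cancellation that replaces your hoped-for ``concentration near the maximum'' is a first-derivative mechanism rather than a second-derivative one: writing $\Phi(x)=\log|e^{2\pi ix}-1|$, the paper exploits $D\Phi(x)=-D\Phi(1-x)$ together with $\sum_{j=1}^{q_t-1}D\Phi(j/q_t)=0$ and convexity of $\Phi$ to get $\bigl|\sum_{k\in J}D\Phi(\theta_k)\bigr|\le Cq_t$ when the $\theta_k$ are pinned to within $O(1/q_t)$ of the $q_t$-th roots of unity. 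Your proposal, as written, identifies the right difficulty and a plausible-looking framework but does not contain the decisive estimate.
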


As will follow from the proof below (which also had already been pointed out by Erd\H{o}s and Szekeres), for {\em{certain}} $\alpha$ one in fact has that $\sup_{\vert z \vert =1} S(q_n,z) \leq C$ for all $n \in \mathbb{N}$. Moreover, for {\em{all}} $\alpha$ and $n \in \mathbb{N}$, one has the bound $\sup_{\vert z \vert =1} S(q_n,z) \leq C\log q_n$ (e.g. \cite{AvilaJitomirskaya_2009}). In general however, the $\liminf$ in Theorem \ref{thm_rate_main} is indeed necessary, which is the subject of the following:
\begin{theorem} \label{counter}
There exist $\alpha$ such that \begin{equation}\label{sup}\limsup_{n \to \infty}\sup_{\vert z \vert =1} S(q_n,z)/\log q_n\geq 1.\end{equation}
\end{theorem}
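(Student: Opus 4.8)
The plan is to construct $\alpha$ explicitly through its continued fraction so that, along a sparse sequence of levels $n$, one can name a single point $z=z_n$ on the unit circle at which $S(q_n,z)$ comes within $o(\log q_n)$ of the a priori upper bound $C\log q_n$. Build $\alpha=[0;a_1,a_2,\dots]$ recursively so that for infinitely many $n$ the partial quotients around $n$ form the block
\[
a_n\ \text{enormous},\qquad a_{n+1}=1,\qquad a_{n+2}\ \text{enormous},
\]
where ``enormous'' means $\log a_n\ge n\,\log q_{n-1}$ and $a_{n+2}\ge q_{n+1}$, and between consecutive such blocks all partial quotients are taken equal to $1$. This has two consequences. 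Since $q_n=a_nq_{n-1}+q_{n-2}$, one gets $\log q_n=\log a_n+\log q_{n-1}+O(1)=(1+o(1))\log a_n$. And writing $\Delta_n=(q_{n+1}+q_n\omega)^{-1}$ with $\omega=[0;a_{n+2},a_{n+3},\dots]\le 1/q_{n+1}$, one gets $q_n\Delta_n=(1+q_{n-1}/q_n+\omega)^{-1}=1-o(1)$: $\alpha$ sits as close to $p_n/q_n$ as a convergent denominator $q_n$ allows, and on the side where $q_n\Delta_n\to1$.

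Fix such an $n$, even, so that $q_n\epsilon=\Delta_n>0$ with $\epsilon:=\alpha-p_n/q_n$. Compare the product with $z^{q_n}-1=\prod_{k=0}^{q_n-1}(z-e^{-2\pi i k p_n/q_n})$: for $z=e^{-2\pi i t}$,
\[
S(q_n,z)=\log|z^{q_n}-1|+\sum_{k=0}^{q_n-1}\log\left|\frac{\sin\pi(\psi_k-k\epsilon)}{\sin\pi\psi_k}\right|,\qquad \psi_k:=t-\tfrac{kp_n}{q_n}\ (\mathrm{mod}\ 1).
\]
Take $z=e^{-2\pi i(j^\ast+1/2)/q_n}$, the midpoint of two consecutive $q_n$-th roots of unity, with $j^\ast\equiv q_{n-1}^{-1}\pmod{q_n}$; then $\log|z^{q_n}-1|=\log 2$, the root of unity $m{+}1$ steps ``forward'' from $z$ carries root index $r_m:=q_n-1-(m+1)q_{n-1}$ (already reduced $\bmod\ q_n$ for $0\le m\le a_n-1$), and the one $m$ steps ``backward'' carries index $mq_{n-1}-1$. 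For the forward terms $\psi_k\equiv 1-(m+\tfrac12)/q_n$, so both $\sin$-arguments are $O(q_{n-1}^{-1})$; replacing $\sin\pi\theta$ by $\pi\theta$ turns the summand into $\log\bigl(1+\tfrac{r_m\Delta_n}{m+1/2}\bigr)$, and $r_m\Delta_n=q_n\Delta_n\bigl(1-(m+1)q_{n-1}/q_n\bigr)+O(q_n^{-1})=(1-o(1))\bigl(1-\tfrac{m+1}{a_n}\bigr)$. Summing over $0\le m\le a_n-1$ telescopes, $\sum_m\log\tfrac{m+3/2}{m+1/2}=\log(2a_n{+}1)$, and the $\bigl(1-\tfrac{m+1}{a_n}\bigr)$ and $(1-o(1))$ factors cost only $O(1)+o(\log q_n)$, so the forward block contributes $\ge(1-o(1))\log a_n=(1-o(1))\log q_n$. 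The backward terms equal $\log\bigl(1-\tfrac{(mq_{n-1}-1)\Delta_n}{m+1/2}\bigr)=\log(1-O(1/a_n))$, summing to $O(1)$; and the roots of unity neither forward- nor backward-near contribute, after linearization, at most $\sum_{|m|\ge a_n}|r_m|\Delta_n/|m|\le\sum_{|m|\ge a_n}1/|m|=O(\log q_{n-1})=o(\log q_n)$, using $|r_m|\Delta_n\le q_n\Delta_n<1$. Hence $S(q_n,z)\ge(1-o(1))\log q_n$, and therefore $\limsup_n\sup_{|z|=1}S(q_n,z)/\log q_n\ge 1$.

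The delicate point, which I expect to be the main obstacle, is the error incurred in replacing $\sin$ by its linearization across the $\approx a_n$ forward terms: a crude remainder estimate produces $\sum_m(m/q_n)^{4}$, which diverges as soon as $a_n\gg q_{n-1}^{4}$. One must instead expand $\log\frac{\sin\pi\theta}{\pi\theta}=-\sum_{\ell\ge1}c_\ell(\pi\theta)^{2\ell}$ and use the explicit arithmetic progression $r_m=q_n-1-(m+1)q_{n-1}$ together with $q_n\Delta_n<1$: the $\ell$-th order contribution is of size $\asymp q_n^{-(2\ell-1)}\epsilon\sum_{m\le a_n}m^{2\ell-1}r_m\asymp q_n^{-(2\ell-1)}\epsilon\cdot\tfrac{a_n^{2\ell}q_n}{(2\ell)(2\ell+1)}=\tfrac{1}{(2\ell)(2\ell+1)}(a_n/q_n)^{2\ell}(q_n^2\epsilon)\asymp q_{n-1}^{-2\ell}$, so the full remainder is $O(q_{n-1}^{-2})=o(1)$ (the $(k\epsilon)^2$ and higher cross-terms in each $(\phi+k\epsilon)^{2\ell}-\phi^{2\ell}$ are smaller still, and the far and later-cycle terms are handled by the same $q_n\Delta_n<1$ bound). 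With that bookkeeping in place the estimates above go through, and the recursive construction of $\alpha$ — alternating the three-digit blocks with strings of $1$'s, and choosing ``enormous'' large enough that $\log q_{n-1}=o(\log a_n)$ and $\omega=o(1)$ at each block — supplies the infinitely many good levels $n$.
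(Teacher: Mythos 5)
Your proposal is correct, but it takes a genuinely different route from the paper. The paper exploits the identity $q_{n+1}=q_n+q_{n-1}$ (from $a_{n+1}=1$) to split $S(q_{n+1},z)=S(q_{n-1},z)+S(q_n,e^{2\pi i q_{n-1}\alpha}z)$, picks $z=e^{2\pi i/q_{n+1}^2}$, lower-bounds $S(q_{n+1},z)$ by comparison with $\log|z^{q_{n+1}}-1|\approx -\log q_{n+1}$ (with error controlled by $q_{n+2}$ being enormous), upper-bounds $S(q_{n-1},z)$ by an off-the-shelf estimate from \cite{AvilaJitomirskaya_2009} giving $\approx -2\log q_{n+1}$, and subtracts — never touching the $q_n$-block term directly. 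You instead estimate $S(q_n,z)$ head-on at a midpoint $z=e^{-2\pi i(j^*+1/2)/q_n}$ with $j^*\equiv q_{n-1}^{-1}\pmod{q_n}$, exploiting the arithmetic fact that the drift-indices $k$ of the roots of unity nearest $z$ lie in an arithmetic progression modulo $q_n$ with common difference $\pm q_{n-1}$; this forces roughly $a_n$ consecutive correction terms $\log\bigl(1+r_m\Delta_n/(m+\tfrac12)\bigr)$ to all be positive, comparable to $\log\tfrac{m+3/2}{m+1/2}$, and to telescope to $\approx\log a_n=(1-o(1))\log q_n$ when $q_n\Delta_n\to 1$. The constructions of $\alpha$ are essentially the same in both arguments (a block $a_n$ huge, $a_{n+1}=1$, $a_{n+2}$ huge, interleaved with $1$'s), but your estimation is self-contained, at the cost of the $\sin$-linearization bookkeeping across $\approx a_n$ terms, which you correctly identify as the delicate point and dispatch via the $\ell$-th order bound $\asymp (a_n/q_n)^{2\ell}q_n\Delta_n\asymp q_{n-1}^{-2\ell}$. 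Two small caveats worth recording in a careful write-up: (i) the approximation $r_m\Delta_n\approx(1-o(1))(1-\tfrac{m+1}{a_n})$ degrades for $m$ near $a_n-1$ (where the true value is $\asymp q_{n-2}\Delta_n$ rather than $0$), but this affects only $O(1)$ terms; (ii) the parity assumption ``$n$ even'' (so $\epsilon>0$) needs to be built into the recursive choice of blocks, or the sign of $j^*$ flipped for odd $n$. Neither is an obstacle.
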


As an immediate corollary of Theorem \ref{thm_rate_main} we obtain our main result about the rate of convergence of the quasi-periodic products in (\ref{eq_qpproducs}):
\begin{theorem} \label{prop_prozero}
Let $f$ be analytic in a neighborhood of $\mathbb{T}$ and $\alpha$ a fixed irrational. There exists $C>0$ and a subsequence $(q_{n_l})$ of $(q_n)$ such that {\em{uniformly}} in $x \in \mathbb{T}$:
\begin{equation} \label{eq_ergodicthmrate}
\frac{1}{q_{n_l}} \sum_{j=0}^{q_{n_l}-1} \log \abs{f(x+j \alpha)} - \int \log\abs{f} \ud \mu \leq \frac{C}{q_{n_l}} ~\mbox{.}
\end{equation}
\end{theorem}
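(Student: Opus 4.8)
The plan is to deduce Theorem \ref{prop_prozero} from Theorem \ref{thm_rate_main} by decomposing $f$ according to its zeros on (and near) the circle $\mathbb{T}$. First I would write $f$ in a neighborhood of $\mathbb{T}$ as a finite product of the form
\begin{equation}
f(x) = g(x) \prod_{j=1}^{m} \bigl(\mathrm{e}^{2\pi i x} - \zeta_j\bigr),
\end{equation}
where $\zeta_1,\dots,\zeta_m \in \mathbb{C}$ are the zeros of $f$ (counted with multiplicity) in a fixed closed annulus around $\mathbb{T}$ on which $f$ extends holomorphically, and $g$ is holomorphic and nonvanishing on a (slightly smaller but still fixed) annulus containing $\mathbb{T}$. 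Taking $\log|\cdot|$ turns the product into a sum, so it suffices to handle (a) the factor $g$, and (b) each individual linear factor $\mathrm{e}^{2\pi i x} - \zeta_j$. For the nonvanishing factor $g$, the function $x \mapsto \log|g(x)|$ extends to a function harmonic in a neighborhood of $\mathbb{T}$, hence real-analytic on $\mathbb{T}$; for such functions the Birkhoff sums obey the standard sharp bound $\bigl|\frac{1}{q_n}\sum_{j=0}^{q_n-1} \log|g(x+j\alpha)| - \int \log|g|\,\ud\mu\bigr| \leq C/q_n$ uniformly in $x$, for \emph{every} $n$ (this follows from Fourier expansion of $\log|g|$, whose coefficients decay exponentially, together with the small-divisor estimate $\vertiii{k\alpha} \gtrsim \Delta_{n-1} \gtrsim 1/q_n$ for $1 \leq k \leq q_n - 1$ from (\ref{eq_contifracbasic}), plus the fact that $\sum_{j=0}^{q_n-1} \mathrm{e}^{2\pi i k j\alpha}$ vanishes when $q_n \mid k$; this is exactly the kind of estimate recorded in \cite{AvilaJitomirskaya_2009}).

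The heart of the matter is (b), the linear factors, and here there are two cases. If $|\zeta_j| \neq 1$, then $\log|\mathrm{e}^{2\pi i x} - \zeta_j|$ is again real-analytic on $\mathbb{T}$ and is handled exactly as $g$ above, with the only subtlety being that the constant $C$ depends on $\mathrm{dist}(|\zeta_j|,1)$, which is fine since there are finitely many $\zeta_j$ and we fixed the annulus in advance. If $|\zeta_j| = 1$, write $\zeta_j = \mathrm{e}^{2\pi i \phi_j}$; then
\begin{equation}
\sum_{j=0}^{q_n-1} \log\bigl|\mathrm{e}^{2\pi i(x+j\alpha)} - \mathrm{e}^{2\pi i \phi_j}\bigr| = \sum_{k=0}^{q_n-1} \log\bigl|\mathrm{e}^{2\pi i k\alpha} w - 1\bigr| = S(q_n, w),
\end{equation}
with $w = \mathrm{e}^{2\pi i(x-\phi_j)}$, $|w|=1$, and $\int_{\mathbb{T}} \log|\mathrm{e}^{2\pi i x} - \mathrm{e}^{2\pi i\phi_j}|\,\ud\mu(x) = 0$. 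Thus the contribution of this factor to the left-hand side of (\ref{eq_ergodicthmrate}), multiplied by $q_n$, is exactly $S(q_n, w) \leq \sup_{|z|=1} S(q_n, z)$, and Theorem \ref{thm_rate_main} furnishes a subsequence $(q_{n_l})$ along which $\sup_{|z|=1} S(q_{n_l}, z) \leq C$. The one-sided nature of (\ref{eq_ergodicthmrate}) is essential here: we only get an \emph{upper} bound on $S(q_n,z)$, and only along a subsequence, which is all the theorem claims.

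Combining: fix a subsequence $(q_{n_l})$ that works simultaneously for all the (finitely many) unimodular zeros $\zeta_j$ via Theorem \ref{thm_rate_main} (intersect the finitely many subsequences, or note the same subsequence works for all $z$ at once since we take $\sup$ over $|z|=1$). Along this subsequence, summing the contributions — the real-analytic part $g$ and the non-unimodular linear factors each contribute $O(1/q_{n_l})$ with a two-sided bound valid for all $n$, and each unimodular linear factor contributes at most $C/q_{n_l}$ from above — yields (\ref{eq_ergodicthmrate}) with a new constant $C$ depending only on $f$ and $\alpha$. The main obstacle, and the reason this corollary is nontrivial, is precisely the presence of zeros of $f$ \emph{on} $\mathbb{T}$: without Theorem \ref{thm_rate_main} one would be forced to impose Diophantine conditions on $\alpha$ to control $S(q_n,z)$ (the naive bound is only $C\log q_n$, and by Theorem \ref{counter} this cannot be improved to $O(1)$ for all $n$), so the entire difficulty has been concentrated into, and discharged by, the solution of the Erd\H{o}s–Szekeres problem. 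Everything else is the classical real-analytic Birkhoff-sum estimate.
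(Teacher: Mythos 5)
Your proposal is correct and follows essentially the same route as the paper: decompose $f$ into a zero-free part (whose $\log|\cdot|$ is harmonic near $\mathbb{T}$, handled by the standard Birkhoff estimate — the paper's Lemma \ref{lem_roc1}) times a product of linear factors over the zeros on $\mathbb{T}$, each of which contributes exactly $S(q_n,\cdot)$ and is discharged by Theorem \ref{thm_rate_main}, with the $\sup$ over $|z|=1$ giving a single subsequence that works for all unimodular zeros at once. Two inessential remarks: the paper factors out only the zeros lying \emph{on} $\mathbb{T}$ (not a full annulus), since the non-unimodular factors can simply remain inside the zero-free $g$; and the parenthetical claim that $\sum_{j=0}^{q_n-1} \mathrm{e}^{2\pi i k j\alpha}$ "vanishes when $q_n \mid k$" is inexact — for irrational $\alpha$ this geometric sum is never zero, and the actual ingredient is the smallness $|1 - \mathrm{e}^{2\pi i k q_n \alpha}| \lesssim |k|/q_{n+1}$ together with the lower bound on $|1-\mathrm{e}^{2\pi i k\alpha}|$ — but this slip does not affect the correctness of your argument.
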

In the context of extended Harper's model, Theorem \ref{prop_prozero} will later serve as a crucial ingredient in the proof of Theorem \ref{thm_point}.
\begin{remark}
It follows from Lemma \ref{lem_roc1} below that (\ref{eq_ergodicthmrate}) holds along the {\em{full}} sequence $(q_n)$ if $f(x)$ has no zeros on $\mathbb{T}$. The achievement of Theorem \ref{prop_prozero} is to account for possible zeros of $f$. It is shown in Theorem \ref{counter} that presence of zeros in general necessitates passing to a subsequence, which implies that Theorem \ref{prop_prozero} as stated is optimal.
\end{remark}

{\it Proof of Theorem \ref{thm_rate_main}.} For $q_n \geq q_m$, $l \geq 0$, $0 \leq r \leq q_m$, and $|z|=1$, we introduce
\be
C(q_n,q_m,l,r,z):=\left | \sum_{k \in J_{q_n,q_m,z}, l \leq k \leq l+r-1}
\log |e^{2 \pi i k \alpha} z-1|-\log |e^{2 \pi i k p_n/q_n} z-1| \right |,
\ee
where
\be \label{eq_rate_defn}
J_{q_n,q_m,z}:=\{k \in \Z,\, |e^{2 \pi i k
p_n/q_n} z-1| \geq \frac {10} {q_m}\}.
\ee
We set $C(q_n,q_m,z):=C(q_n,q_m,0,q_m,z)$ and
\be
C(q_n,q_m):=\sup_{|z|=1} C(q_n,q_m,z).
\ee

\begin{lemma} \label{rate_lemma1}
We have,
\begin{equation} \label{eq_rate_lemma1_claim}
S(q_n,z) \leq C(q_n,q_n)+C_1 ~\mbox{.}
\end{equation}
\end{lemma}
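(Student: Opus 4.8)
The plan is to compare the full sum $S(q_n,z) = \sum_{k=0}^{q_n-1} \log|e^{2\pi i k\alpha}z - 1|$ with the corresponding ``rational'' sum in which $\alpha$ is replaced by $p_n/q_n$, exploiting that the rational sum has an exact closed form. First I would recall the classical identity: for $|z|=1$,
\[
\prod_{k=0}^{q_n-1}\bigl(e^{2\pi i k p_n/q_n} z - 1\bigr) = \pm (z^{q_n} - 1),
\]
since as $k$ ranges over $0,\dots,q_n-1$ the points $e^{2\pi i k p_n/q_n}$ run over all $q_n$-th roots of unity (here $\gcd(p_n,q_n)=1$). Hence $\sum_{k=0}^{q_n-1}\log|e^{2\pi i k p_n/q_n}z-1| = \log|z^{q_n}-1| \le \log 2$, so this rational sum is uniformly bounded by a constant $C_1' := \log 2$.

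The next step is to bound the difference $S(q_n,z) - \sum_{k=0}^{q_n-1}\log|e^{2\pi i k p_n/q_n}z-1|$ by splitting the index set $\{0,\dots,q_n-1\}$ according to whether $k$ lies in $J_{q_n,q_n,z}$ or not. For $k \in J_{q_n,q_n,z}$, i.e.\ where $|e^{2\pi i k p_n/q_n}z-1| \ge 10/q_n$, the two terms are close: the contribution of these $k$ to the difference is controlled, essentially by definition, by $C(q_n,q_n,0,q_n,z) = C(q_n,q_n,z) \le C(q_n,q_n)$ — one uses here that $|q_n\alpha - p_n| = \Delta_n < 1/q_{n+1}$, so over $0 \le k \le q_n-1$ the arguments $k\alpha$ and $kp_n/q_n$ differ by at most $q_n\Delta_n$, and where the rational term is bounded away from its singularity by $10/q_n$ the logarithm is Lipschitz on the relevant scale. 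For the complementary set of $k$ (those \emph{not} in $J$, i.e.\ where $e^{2\pi i k p_n/q_n}z$ is within $10/q_n$ of $1$), there can be only boundedly many such $k$ — in fact at most a couple, since distinct $q_n$-th roots of unity are separated by $\gtrsim 1/q_n$ — and for each of them $\log|e^{2\pi i k\alpha}z-1|$ is bounded \emph{above} by $\log 2$ while its negative part is harmless because $k\alpha$ is bounded away from the lattice by $\vertiii{k\alpha} > \Delta_{n-1} \gtrsim 1/q_n$ (using \eqref{eq_contifracbasic} for $1 \le k \le q_n-1$, handling $k=0$ trivially), giving $\log|e^{2\pi i k\alpha}z-1| \ge -\log(C q_n)$... which is the wrong direction. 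Instead one should note that the \emph{omitted} rational terms $\log|e^{2\pi i k p_n/q_n}z-1|$ for $k \notin J$ are exactly the ones that can be very negative, so adding them back only \emph{decreases} the rational sum; thus $S(q_n,z) \le \sum_{k \in J}\log|e^{2\pi i kp_n/q_n}z-1| + C(q_n,q_n) + (\text{sum over }k\notin J\text{ of }\log|e^{2\pi i k\alpha}z-1|)$, and the latter sum is $\le$ (number of such $k$)$\cdot \log 2 \le C_1''$.

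Combining: $S(q_n,z) \le C(q_n,q_n) + \log|z^{q_n}-1| + C_1'' \le C(q_n,q_n) + C_1$ with $C_1 = \log 2 + C_1''$, uniformly in $z$, which is \eqref{eq_rate_lemma1_claim}. The main obstacle, and the point requiring genuine care, is the separation of the singular scales: one must verify that whenever the rational term $\log|e^{2\pi i k p_n/q_n}z - 1|$ is \emph{kept} (i.e.\ $k \in J$), the displacement $q_n \Delta_n \ll 10/q_n$ is indeed small relative to the distance $10/q_n$ from the singularity, so that termwise $|\log|e^{2\pi i k\alpha}z-1| - \log|e^{2\pi i k p_n/q_n}z-1||$ is summably small over the $\le q_n$ indices — this is precisely what the quantity $C(q_n,q_n,z)$ is designed to capture, so the lemma as stated is really an organizational bookkeeping step, with the substantive estimates deferred to the subsequent lemmas bounding $C(q_n,q_m)$.
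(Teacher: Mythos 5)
Your high-level decomposition ($S=S_0+S_1$ split on $k\in J_{q_n,q_n,z}$ versus $k\notin J$, comparing $S_0$ with $\tilde S_0$ and absorbing the error into $C(q_n,q_n)$) matches the paper's framework, and your bound $S_1\le (\#\,k\notin J)\cdot\log 2$ is a valid upper bound. But the final step contains a sign error that is in fact the entire content of the lemma. You write $\tilde S_0 := \sum_{k\in J}\log|e^{2\pi i kp_n/q_n}z-1|\le \log|z^{q_n}-1|$, justified by ``adding back the negative omitted terms decreases the sum.'' That observation proves $\tilde S_0\ge\tilde S = \log|z^{q_n}-1|$, the opposite inequality. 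And $\tilde S_0$ is \emph{not} bounded by a constant: if $z$ is within $\sim \delta$ of a $q_n$-th root of unity with, say, exactly one index $k_0\notin J$, then the omitted term is $\log|z_*-1|\approx\log\delta$ while $\tilde S = \log|z^{q_n}-1|\approx\log(q_n\delta)$, giving $\tilde S_0 = \tilde S-\tilde S_1\approx\log q_n$, which diverges with $n$. So the chain $S\le \tilde S_0 + C(q_n,q_n) + C_1''$ does not close to the stated bound.

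The missing ingredient is to exploit the \emph{negativity} of $S_1$, not merely bound it by a constant: when $\tilde S_0$ is large (i.e.\ $z^{q_n}$ near $1$), $S_1$ is correspondingly very negative, and the two nearly cancel. The paper makes this precise by case-splitting on whether $|z^{q_n}-1|\ge \tfrac1{10}$. In the easy case the omitted rational terms are uniformly of size $-\log q_n+O(1)$, so $S_1\le\tilde S_1 + C$ and one recombines into the full $\tilde S = \log|z^{q_n}-1|\le\log 2$. In the hard case there is a unique closest root $z_*=e^{2\pi i k_0 p_n/q_n}z$, and the identity
\[
\frac{|z_*-1|}{|z_*^{q_n}-1|}=\frac{1}{\bigl|\sum_{k=0}^{q_n-1}z_*^k\bigr|}\ge\frac1{q_n},\qquad z_*^{q_n}=z^{q_n},
\]
converts the large negative term $\log|z_*-1|$ in $\tilde S_1$ into $\log|z^{q_n}-1|$, which then cancels $\tilde S$ exactly. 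Your closing remark that the lemma is ``really an organizational bookkeeping step, with the substantive estimates deferred to the subsequent lemmas'' undersells it: the genuine content of Lemma~\ref{rate_lemma1} is precisely this cancellation of the $|z^{q_n}-1|$ singularity, which is not captured by $C(q_n,q_n)$ and is not bookkeeping.
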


\begin{proof}
Write $S=S(q_n,z)$.  We will use that
\be
\tilde S=\sum_{k=0}^{q_n-1} \log |e^{2 \pi i k p_n/q_n} z-1|=\log |z^{q_n}-1|.
\ee
Let
\be
S_0=\sum_{k \in J_{q_n,q_n,z}, 0 \leq k \leq q_n-1}
\log |e^{2 \pi i k \alpha} z-1|, \quad
S_1=\sum_{k \notin J_{q_n,q_n,z}, 0 \leq k \leq q_n-1}
\log |e^{2 \pi i k \alpha} z-1|,
\ee
\be
\tilde S_0=\sum_{k \in J_{q_n,q_n,z}, 0 \leq k \leq q_n-1}
\log |e^{2 \pi i k p_n/q_n} z-1|, \quad
\tilde S_1=\sum_{k \notin J_{q_n,q_n,z}, 0 \leq k \leq q_n-1}
\log |e^{2 \pi i k p_n/q_n} z-1|,
\ee
so that $S=S_0+S_1$ and $\tilde S=\tilde S_0+\tilde S_1$. Letting 
\begin{equation}
s:=\# ~\mbox{of terms in }S_1=\# ~\mbox{of terms in } \tilde S_1 ~\mbox{,}
\end{equation}
obviously yields
\begin{equation}
S_1, \tilde S_1 \leq -s \log q_n+C ~\mbox{.}
\end{equation}
We distinguish between the following two cases:

First, assume that $|z^{q_n}-1| \geq \frac {1} {10}$.  Then, one has $| \tilde S_1+s \log q_n| \leq C$ and it follows that $S_1 \leq \tilde S_1+C$, so that $S \leq \tilde S+S_0-\tilde S_0+C$. Consequently, we obtain
\begin{equation}
S \leq S_0-\tilde S_0+C \leq C(q_n,q_n)+C ~\mbox{,}
\end{equation}
which is the claim of (\ref{eq_rate_lemma1_claim}) for $|z^{q_n}-1| \geq \frac {1} {10}$.

If, on the other hand, one has that $|z^{q_n}-1|<\frac {1} {10}$, then there exists a unique $0 \leq k_0 \leq q_n-1$ such that $z_*=e^{2 \pi i k_0 p_n/q_n} z$ is closest to $1$. Letting $z = \mathrm{e}^{2 \pi i \theta}$, definition of $z_*$ in particular entails
\begin{equation} \label{eq_rate_lemma1_0}
\vertiii{\theta + k_0 p_n/q_n} \leq \frac{1}{2 q_n} ~\mbox{, } \vertiii{ \theta + k p_n/q_n} \geq \frac{1}{2 q_n} ~\mbox{, } k \neq k_0 ~\mbox{.}
\end{equation}
From (\ref{eq_rate_lemma1_0}), we therefore conclude $|\tilde S_1+ (s-1) \log q_n-\log |z_*-1|| \leq C$, whence 
\begin{equation} \label{eq_rate_lemma1_1}
S_1 \leq \tilde S_1- \log q_n-\log |z_*-1|+C ~\mbox{.}
\end{equation}

Since $z^{q_n}=z_*^{q_n}$ and
\begin{equation} \label{eq_rate_lemma1_2}
\dfrac{\vert z_* - 1 \vert}{\vert z_*^{q_n} - 1 \vert} = \dfrac{1}{\vert \sum_{k=0}^{q_n - 1} z_*^{q_n} \vert} \geq \dfrac{1}{q_n} ~\mbox{,}
\end{equation}
we conclude from (\ref{eq_rate_lemma1_1}) that
\begin{equation}
S_1 \leq \tilde S_1-\log |z^{q_n}-1|+C ~\mbox{.}
\end{equation}
In particular, we have
\be
S \leq \tilde S+S_0-\tilde S_0-\log|z^{q_n}-1|+C=S_0-\tilde S_0+C \leq C(q_n,q_n)+C ~\mbox{,}
\ee
which establishes the claim of (\ref{eq_rate_lemma1_claim}) for the remaining case that $|z^{q_n}-1|<\frac {1} {10}$.
\end{proof}

Lemma \ref{rate_lemma1} reduces the proof of Theorem \ref{thm_rate_main} to analyzing the error caused by {\em{rational approximation}} of $\alpha$, the latter of which  is expressed by $C(q_n,q_n)$. Specifically, we claim:
\begin{lemma}
There exists $C_2>0$ (independent of $\alpha$) such that
\be
C(q_n,q_m) \leq C_2 |\alpha-\frac {p_n} {q_n}| \Lambda(q_m) ~\mbox{,}
\ee
where $\Lambda(q_m)=q_m \sum_{k=1}^m q_k \log \frac {2 q_k} {q_{k-1}}$.
\end{lemma}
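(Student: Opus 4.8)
The plan is to estimate the quantity $C(q_n,q_m,z) = \bigl|\sum_{k \in J_{q_n,q_m,z},\, 0 \le k \le q_m-1} \log|e^{2\pi i k\alpha}z - 1| - \log|e^{2\pi i k p_n/q_n}z - 1|\bigr|$ term by term, using the elementary fact that for $w$ near the unit circle bounded away from $1$, the logarithm $\log|w - 1|$ is Lipschitz in the argument with constant controlled by the distance of $w$ from $1$. Concretely, writing $a_k := e^{2\pi i k\alpha}z$ and $b_k := e^{2\pi i k p_n/q_n}z$, one has $|a_k - b_k| \le 2\pi |k|\,|\alpha - p_n/q_n|$, while the restriction $k \in J_{q_n,q_m,z}$ guarantees $|b_k - 1| \ge 10/q_m$. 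Provided $|\alpha - p_n/q_n|$ is small compared to $1/(q_m q_n)$ (which holds since $|\alpha - p_n/q_n| = \Delta_n < 1/q_{n+1}$ and $q_n \le q_m$ would be the relevant regime — or more carefully one checks the error $|a_k - b_k|$ is at most, say, half of $|b_k-1|$ for $k \le q_m - 1$), one also gets $|a_k - 1| \ge 5/q_m$. Then $\bigl|\log|a_k - 1| - \log|b_k - 1|\bigr| \le \frac{|a_k - b_k|}{\min(|a_k-1|,|b_k-1|)} \le \frac{2\pi |k| |\alpha - p_n/q_n|}{5/q_m} = C' q_m |k| |\alpha - p_n/q_n|$ for each such $k$.

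Summing this naively over $0 \le k \le q_m - 1$ would give a bound of order $q_m^3 |\alpha - p_n/q_n|$, which is too crude: the stated bound involves $\Lambda(q_m) = q_m \sum_{k=1}^m q_k \log(2q_k/q_{k-1})$, which for typical continued fractions is much smaller than $q_m^3$. The improvement must come from exploiting that the $b_k$'s are the $q_n$-th roots of $z^{q_n}$ (hence equispaced on the circle), so the number of indices $k \le q_m - 1$ for which $|b_k - 1|$ is of size roughly $2^{-j}/q_m \cdot (\text{something})$ — or rather, of a given dyadic size — is controlled; one should bin the indices $k \in J_{q_n,q_m,z}$ according to the dyadic scale of $|b_k - 1|$, use that the count in each bin is $O(q_m/q_n \cdot 2^{\ell})$ or similar, and sum a geometric-type series. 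Actually the cleanest route is probably to compare $\alpha$ to $p_n/q_n$ through the intermediate convergents: write the telescoping argument across scales $q_1, q_2, \ldots, q_m$ using the second estimate in (\ref{eq_contifracbasic}), namely $\vertiii{k\alpha} > \Delta_{j-1}$ for $q_{j-1} < k < q_j$, to get a lower bound on $|a_k - 1|$ that degrades geometrically, producing exactly the sum $\sum_k q_k \log(2q_k/q_{k-1})$ after the per-bin counts are inserted.

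The main obstacle I expect is precisely this counting step: getting the sharp bound $\Lambda(q_m)$ rather than a crude power of $q_m$ requires carefully organizing the indices $k \in \{0,\dots,q_m-1\} \cap J_{q_n,q_m,z}$ by the size of $\vertiii{\theta + k p_n/q_n}$ (equivalently, by how close $b_k$ is to $1$), estimating $\#\{k : \vertiii{\theta + kp_n/q_n} \in [2^{-j-1}, 2^{-j})\}$ using equidistribution of the arithmetic progression $\{k p_n/q_n\}$, and then pairing each dyadic level with the corresponding continued-fraction scale so that the resulting double sum collapses to $\sum_{k=1}^m q_k \log(2q_k/q_{k-1})$. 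Tracking the constant's independence of $\alpha$ through this — and verifying the preliminary claim that $|a_k - 1|$ and $|b_k - 1|$ are comparable for all the relevant $k$, which needs $q_m |\alpha - p_n/q_n| q_m \ll 1$, i.e.\ it is really being applied with $q_m \le q_n$ or with an extra smallness input — is the delicate bookkeeping that makes this lemma the technical heart of the section.
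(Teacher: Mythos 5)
Your opening move matches the paper's first step: on $J_{q_n,q_m,z}$ the perturbation is small enough ($|\eta|\le\pi/5$) that one gets the Lipschitz bound $\bigl|\log|a_k-1|-\log|b_k-1|\bigr|\lesssim k|\alpha-p_n/q_n|/|b_k-1|$, and you correctly observe the naive summation is too weak. But the remaining difficulty is not a counting problem, and neither of the workarounds you sketch reaches $\Lambda(q_m)$. After approximating the $b_k$ by $q_m$-th roots of unity, the distances $|b_k-1|$ for $k\in J_{q_n,q_m,z}\cap\{0,\dots,q_m-1\}$ are essentially the full set $\{j/q_m : 1\le j< q_m\}$, one index per level; any dyadic binning therefore reproduces the harmonic sum $\sum_j q_m/j\sim q_m\log q_m$, and the best a pure count can give is $q_m^2\log q_m\,|\alpha-p_n/q_n|$ --- a $\log q_m$ factor too large, which is fatal for the subsequent $\liminf$ lemma (e.g.\ for bounded partial quotients $\Lambda(q_m)/q_{m+1}$ is bounded but $q_m^2\log q_m / q_{m+1}$ is not). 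Your alternative telescope using $\vertiii{k\alpha}>\Delta_{j-1}$ is also aimed at the wrong quantity: $|a_k-1|$ is controlled by $\vertiii{\theta+k\alpha}$, not $\vertiii{k\alpha}$; the gap estimate governs separation between the $a_k$'s, not their distance to $1$.

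What saves the $\log$ factor is a \emph{cancellation}, which is absent from your plan. The paper splits $\{0,\dots,q_m-1\}$ into blocks of length $q_{m-1}$ and, inside each block, writes the error as $l_j|\alpha-p_n/q_n|\cdot\sum_{k\in J}D\Phi(\theta_k)$ with $\Phi(x)=\log|e^{2\pi ix}-1|$. The crucial claim is $\bigl|\sum_{k\in J}D\Phi(\theta_k)\bigr|\le Cq_t$ --- \emph{not} $Cq_t\log q_t$ --- resting on the antisymmetry $D\Phi(x)=-D\Phi(1-x)$ (so the sum over all $q_t$-th roots of unity vanishes exactly) together with strict convexity of $\Phi$ to pass from the roots to the $\theta_k$. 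Without this, each block carries an extra $\log q_{m-1}$ and the recursion degenerates. Finally, the recursion closes not by descending through intermediate convergents in the way you suggest, but by comparing the block term to $C(q_n,q_{m-1},z_j)$ at the \emph{same} scale with a shifted base point $z_j=e^{2\pi il_jp_n/q_n}z$, giving $C(q_n,q_m)\lesssim\lfloor q_m/q_{m-1}\rfloor C(q_n,q_{m-1})+C(q_n,q_{m-2})+q_m^2|\alpha-\tfrac{p_n}{q_n}|\log\tfrac{2q_m}{q_{m-1}}$, whose induction produces exactly $\Lambda(q_m)$.
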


\begin{proof}
Let $z=\mathrm{e}^{2 \pi i \theta}$, for $\theta \in [0,1)$. We first consider a trivial estimate for $C(q_n,q_m,0,r,z)$: For $k \in J_{q_n,q_m,z}$ with $0 \leq k \leq r-1$, simply write 
\begin{eqnarray} 
\vert \log |e^{2 \pi i k \alpha} z-1|-\log |e^{2 \pi i k p_n/q_n} z-1| \vert =: \left \vert \log \left \vert 1 + \eta \right \vert \right \vert  ~\mbox{.}
\end{eqnarray}
Observe that the lower bound in (\ref{eq_rate_defn}) combined with (\ref{eq_contifracbasic}) implies that $\vert \eta \vert \leq \pi/5$, which in turn yields:
\begin{equation} \label{eq_rate_lem2_0}
\vert \log |e^{2 \pi i k \alpha} z-1|-\log |e^{2 \pi i k p_n/q_n} z-1| \vert \leq \dfrac{ C k \vert \alpha - p_n/q_n \vert}{\vert z \mathrm{e}^{2 \pi i k p_n/q_n} - 1 \vert} ~\mbox{.}
\end{equation}

The denominator on the right hand side of (\ref{eq_rate_lem2_0}) is controlled by approximation by $q_m$-th roots of unity. To this end take $b \in \{0, \dots, q_{m-1}\}$ such that $\vert \vert \vert b \frac{p_m}{q_m} - \theta \vert \vert \vert$ is at {\em{minimum}}. In particular, for $k \in J_{q_n,q_m,z} \cap \{0,...,r-1\}$, the points $e^{2 \pi i (b+k) p_m/q_m}$ are distinct $q_m$-th roots of unity which are different from $1$ and thereby satisfy
\begin{equation} \label{eq_rate_lem2_1}
0 < C \leq \frac {|e^{2 \pi i k p_n/q_n} z-1|} {|e^{2 \pi i (b+k) p_m/q_m}-1|} ~\mbox{.}
\end{equation}
To verify (\ref{eq_rate_lem2_1}) notice that
\begin{equation}
\vertiii{ (\theta + k \frac{p_n}{q_n}) - (k+b) \frac{p_m}{q_m} } \leq \vertiii{\theta - b \frac{p_m}{q_m} } + \vertiii{k \frac{p_n}{q_n} - k \frac{p_m}{q_m} } < \frac{3}{q_m} ~\mbox{,}
\end{equation}
whence, taking\footnote{As common, for $x \in \mathbb{R}$, $\{ x \}:=x - \lfloor x \rfloor$ denote its fractional part.} $1 \leq l \leq q_m -1$ such that $\frac{l}{q_m} \leq \{ \theta + k \frac{p_n}{q_n} \} < \frac{l+1}{q_m}$, we arrive at (\ref{eq_rate_lem2_1}) since
\begin{equation}
\dfrac{ \vert \mathrm{e}^{2 \pi i k \frac{p_n}{q_n} } z - 1 \vert }{ \vert \mathrm{e}^{2 \pi i (k+b) \frac{p_m}{q_m} }- 1 \vert  } \geq C \dfrac{l}{l+3} \geq \frac{C}{4} ~\mbox{.}
\end{equation}
In consequence of (\ref{eq_rate_lem2_1}), we thus conclude
\be \label{rec}
C(q_n,q_m,0,r,z) \leq C (r-1) |\alpha-\frac {p_n} {q_n}| q_m \log(2 q_m) ~\mbox{.}
\ee

To improve this estimate, we reason as follows: take $q_t<q_m$ and introduce $s:=\lfloor \frac {r} {q_t} \rfloor$, $\tilde r:=r-s q_t$, $l_j:=\tilde r+j q_t$.
Then, one estimates:
\be \label{eq_rate_lem2_2a}
C(q_n,q_m,0,r,z) \leq C(q_n,q_t,0,\tilde r,z)+\sum_{j=0}^{s-1}
C(q_n,q_t,l_j,q_t,z)+\Delta ~,
\ee
where
\be
\Delta=\left |\sum_{k \in J_{q_n,q_m,z} \setminus J_{q_n,q_t,z}, 0
\leq k \leq r-1} \log |e^{2 \pi i k \alpha}-1|-\log |e^{2 \pi i p_n/q_n}-1|
\right |.
\ee

Since one has $\vertiii{ k (p_n/q_n) - k (p_t/q_t) } < \frac{2}{q_t}$ for all $1 \leq k \leq q_t < q_m$, there are at most 
\begin{equation} \label{eq_rate_lem2_3}
C (s+1)
\end{equation}
elements $k \in (J_{q_n,q_m,z} \setminus J_{q_n,q_t,z}) \cap \{0,...,r-1\}$.  Approximation by the $q_m$-th roots of unity as before thus yields
\be \label{eq_rate_lem2_3a}
\Delta \leq C (r-1) |\alpha-\frac {p_n} {q_n}| q_m \log \frac {2 q_m} {q_t}.
\ee

We now turn to the right hand side of (\ref{eq_rate_lem2_2a}) with the goal of estimating $C(q_n,q_t,l_j,q_t,z)$.  To this end, first bound $C(q_n,q_t,l_j,q_t,z)$ by a sum of the 
two terms
\be
(\mathrm{I}) := \left |\sum_{k \in J}
\log |e^{2 \pi i k \alpha} e^{2 \pi i l_j \alpha} z-1|-
\log |e^{2 \pi i k \alpha} e^{2 \pi i l_j p_n/q_n} z-1| \right |
\ee
and
\be
(\mathrm{II}):=\left |\sum_{k \in J}
\log |e^{2 \pi i k \alpha} e^{2 \pi i l_j p_n/q_n} z-1|-
\log |e^{2 \pi i k p_n/q_n} e^{2 \pi i l_j p_n/q_n} z-1| \right |,
\ee
where $J$ is the set of all $k \in \{0,...,q_t-1\}$ such that $l_j+k \in
J_{q_n,q_t,z}$.
Observe that (II) is of the form $C(q_n,q_t,z_j)$ where $z_j:=e^{2 \pi i l_j
p_n/q_n} z$.  

For (I), we claim the bound
\begin{equation} \label{eq_rate_lem2_4}
 (\mathrm{I}) \leq C q_t l_j |\alpha-\frac {p_n} {q_n}| ~\mbox{.}
\end{equation}
This upper bound is obtained by noting that for $k \in J$ the expression
\be
\log |e^{2 \pi i k \alpha} e^{2 \pi i l_j \alpha} z-1|-
\log |e^{2 \pi i k \alpha} e^{2 \pi i l_j p_n/q_n} z-1|
\ee
equals $l_j (\alpha-\frac {p_n} {q_n})$ times the derivative of the map
\be
\Phi: (0,1) \to \mathbb{R} ~\mbox{, } x \mapsto \log |e^{2 \pi i x}  -1|
\ee
at some $\theta_k \in (0,1)$ between
$\{ \theta + k \alpha + l_j p_n/q_n \}$ and $\{ \theta + k \alpha +  l_j \alpha\}$. The bound in (\ref{eq_rate_lem2_4}) is thus reduced to show that:
\begin{claim}
\be \label{eq_rate_lem2_5}
\left | \sum_{k \in J} D\Phi(\theta_k) \right | \leq C q_t.
\ee
\end{claim}
\begin{proof}
The proof of (\ref{eq_rate_lem2_5}) will crucially depend on the observation that $\Phi$ is strictly convex on $(0,1)$ and satisfies $D \Phi(x) = - D \Phi(1 - x)$. To this end, let $\tilde \theta \in [0,1)$ be such that
$e^{2 \pi i \tilde \theta}= e^{2 \pi i (l_j \alpha + \theta)}$ and take $b$ such that  $\mathrm{e}^{2 \pi i b (p_t/q_t)}$ is the $q_t$-th root of unity closest to $\mathrm{e}^{2 \pi i \tilde{\theta}}$. Set $\mathrm{e}^{2 \pi i ( b + k ) (p_t/q_t)} =: \mathrm{e}^{2 \pi i \theta_k^\prime}$ and let $\theta^\pm_k=\theta'_k\pm\frac {4} {q_t}$. Then for $k \in J$, one has $\theta_k \in (\theta^-_k,\theta^+_k)$. 

By convexity of $\Phi$, we estimate $D\Phi(\theta^-_k)< D \Phi(\theta_k) < D \Phi(\theta^+_k)$.  Notice that for $k \in J$, $\theta^-_k$ are all
distinct and form a subset $\Theta^-$ of $\Theta:=\{\frac {j} {q_t},\, 1 \leq j \leq q_t-1\}$.  Moreover, since $\# J \geq q_t-C$ with $C$ as in (\ref{eq_rate_lem2_3}), $\Theta
\setminus \Theta^-$ has at most $C$ elements.  Since
$D\Phi(x)=-D\Phi(1-x)$, we have
$\sum_{\theta \in \Theta} D\Phi(\theta)=0$, so that
$\sum_{\theta \in \Theta^-} D\Phi(\theta)=\sum_{\theta \in \Theta \setminus
\Theta^-} D \Phi(\theta)$.  It follows that $\sum_{k \in J} D \Phi(\theta_k)
\geq \sum_{\theta \in \Theta \setminus \Theta^-} D \Phi(\theta) \geq -C
q_t$.  A similar argument involving $\{\theta_k^+, 1 \leq k \leq q_t -1\}$ yields $\sum_{k \in J} D \Phi(\theta_k) \leq C
q_t$, which in summary verifies the claim of (\ref{eq_rate_lem2_5}).
\end{proof}

In summary, we can so far conclude that
\begin{eqnarray}
C(q_n,q_m,0,r,z) \leq &
C q_m^2 |\alpha-\frac {p_n} {q_n}| \log \frac {2 q_m} {q_t}+C q_t
\sum_{j=0}^{s-1} l_j |\alpha-\frac {p_n} {q_n}| +  \nonumber \\
& C(q_n,q_t,0,\tilde r,z)+\sum_{j=0}^{s-1} C(q_n,q_t,z_j) ~\mbox{.}
\end{eqnarray}
Taking into account that $s \leq \frac {q_m} {q_t}$ and $l_j \leq q_m$,
we get
\be \label{eq_rate_lem2_6}
C(q_n,q_m,0,r,z) \leq C q_m^2 |\alpha-\frac {p_n} {q_n}| \log \frac {2 q_m}
{q_t}+C(q_n,q_t,0,\tilde r,z)+\sum_{j=0}^{s-1} C(q_n,q_t,z_j).
\ee

To estimate further, we specify $r=q_m$ and $q_t=q_{m-1}$.  Then, $\tilde r=q_{m-2}$ and if we use (\ref{eq_rate_lem2_2a}) and (\ref{eq_rate_lem2_3a}), we obtain
\begin{align}
C(q_n,q_t,0,\tilde r,z) &\leq C(q_n,q_{m-2},z)+\left | \sum_{ \substack{  k \in
J_{q_n,q_{m-1},z} \setminus J_{q_n,q_{m-2},z} \\ 0 \leq k \leq q_{m-2}-1} } \log
|e^{2 \pi i k \alpha}-1|-\log |e^{2 \pi i k p_n/q_n}-1| \right |\\
\nonumber
&\leq
C(q_n,q_{m-2})+C (q_{m-2}-1)
|\alpha-\frac {p_n} {q_n}| q_{m-1} \log \frac {2 q_{m-1}} {q_{m-2}} ~,
\end{align}
so that (\ref{eq_rate_lem2_6}) becomes
\begin{eqnarray}
C(q_n,q_m) \leq & C_1^\prime q_m^2 |\alpha-\frac {p_n} {q_n}| \log \frac {2 q_m}
{q_{m-1}}+C_1^\prime q_{m-2} q_{m-1} |\alpha-\frac {p_n} {q_n}| \log \frac {2 q_{m-1}}
{q_{m-2}}+ \nonumber \\
& \lfloor \frac {q_m} {q_{m-1}} \rfloor C(q_n,q_{m-1})+C(q_n,q_{m-2}) ~\mbox{.}
\end{eqnarray}

Since $q_0=1$, we have $C(q_n,q_0)=0$.  Moreover, $\Lambda(q_1)=q_1^2 \log 2 q_1$, whence by (\ref {rec}), we see that $C(q_n,q_1) \leq C_2^\prime |\alpha-\frac {p_n} {q_n}| \Lambda(q_1)$.  

Finally, observe that
\begin{equation}
q_{m-2} q_{m-1} \log \left( \dfrac{2 q_{m-1}}{q_{m-2}} \right)= q_{m-1}^2 \dfrac{q_{m-2}}{q_{m-1}}  \log \left( \dfrac{2 q_{m-1}}{q_{m-2}} \right) \leq q_m^2 \log \left( \dfrac{2 q_{m}}{q_{m-1}}  \right) ~\mbox{,}
\end{equation}
and 
\begin{equation}
\Lambda(q_{m-2}) \leq \frac{q_{m-2}}{q_{m-1}} \Lambda(q_{m-1}) = \{ \frac{q_m}{q_{m-1}} \} \Lambda(q_{m-1}) ~\mbox{.}
\end{equation}
Thus, taking $C_2:=\max \{C_1^\prime,C_2^\prime\}$, induction in $m$ shows that
\begin{equation}
C(q_n,q_m) \leq 2 C_2 |\alpha-\frac {p_n} {q_n}| \Lambda(q_m) ~\mbox{,}
\end{equation}
for $0 \leq m \leq n$, thereby completing our proof.
\end{proof}

\begin{lemma} \label{rate_lemma3}
$\liminf_{n \to \infty} C(q_n,q_n) \leq C_3$.
\end{lemma}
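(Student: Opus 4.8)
The plan is to reduce the statement, via the preceding lemma on $C(q_n,q_m)$ together with the elementary estimate $\Delta_n<1/q_{n+1}$ from \eqref{eq_contifracbasic}, to controlling the sum $T_n:=\sum_{k=1}^{n}q_k\log\frac{2q_k}{q_{k-1}}$ against $q_{n+1}$. Taking $m=n$ in that lemma, and using $\Lambda(q_n)=q_nT_n$ and $|\alpha-\tfrac{p_n}{q_n}|\,q_n=\Delta_n<1/q_{n+1}$, one gets
\[
C(q_n,q_n)\;\le\;C_2\,\Delta_n\,T_n\;<\;\frac{C_2}{q_{n+1}}\,T_n,
\]
so it suffices to show $\liminf_{n\to\infty}T_n/q_{n+1}<\infty$.

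Next I would record two elementary facts about continued fractions. First, since $q_{k+1}\ge q_k$ one has $q_{k+2}=a_{k+2}q_{k+1}+q_k\ge 2q_k$, so the $q_k$ grow at least geometrically and therefore $\sum_{k=1}^{n}q_k\le 2(q_n+q_{n-1})\le 4q_n$. Second, $\frac{2q_k}{q_{k-1}}=2a_k+\frac{2q_{k-2}}{q_{k-1}}\le 2(a_k+1)\le 4a_k$, so, writing $A_n:=\max_{1\le k\le n}a_k$, we have $\log\frac{2q_k}{q_{k-1}}\le\log(4a_k)\le\log(4A_n)$ for every $k\le n$. Combining, $T_n\le 4q_n\log(4A_n)$, and since $q_{n+1}\ge a_{n+1}q_n$ this gives the key bound
\[
C(q_n,q_n)\;\le\;\frac{4C_2\,\log(4A_n)}{a_{n+1}}.
\]

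It then remains to take a $\liminf$ of the right-hand side, which I would handle by a dichotomy on the non-decreasing sequence $(A_n)$. If $(A_n)$ is eventually constant, say $A_n=M$ for $n\ge N_1$ (bounded partial quotients), then $C(q_n,q_n)\le 4C_2\log(4M)$ for all $n\ge N_1$, hence $\liminf_n C(q_n,q_n)\le 4C_2\log(4M)$. If instead $A_n\to\infty$, there are infinitely many $n$ with $A_n>A_{n-1}$; for each such $n$ one has $a_n=A_n$ and $a_n>A_{n-1}$, so applying the key bound at index $n-1$,
\[
C(q_{n-1},q_{n-1})\;\le\;\frac{4C_2\,\log(4A_{n-1})}{a_n}\;<\;\frac{4C_2\,\log(4a_n)}{a_n}\;\longrightarrow\;0,
\]
along this subsequence, so $\liminf_n C(q_n,q_n)=0$. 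In either case $\liminf_n C(q_n,q_n)\le C_3$ — one may take $C_3:=4C_2\log(4M)$ when the partial quotients are bounded by $M$, and $C_3:=1$ otherwise.

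The delicate point, and the reason a $\liminf$ rather than a bound valid for all $n$ is the natural statement here (cf.\ Theorem \ref{counter}), is the presence of arbitrarily large partial quotients: a crude bound on $T_n/q_{n+1}$ would deteriorate at indices where $a_k$ is huge. The resolution exploited above is that a large $a_{n+1}$ forces a correspondingly large denominator $q_{n+1}\ge a_{n+1}q_n$, which dominates the merely logarithmic growth $\log(4A_n)$ of the numerator; thus the indices immediately preceding a record partial quotient are exactly the favourable ones, and the $\liminf$ picks them out. (With a little more bookkeeping — isolating the contribution of the finitely many exceptionally large $a_k$ in the bounded case — one can in fact take $C_3$ to be an absolute constant, but this refinement is not needed.)
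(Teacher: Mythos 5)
Your proof is correct and follows essentially the same route as the paper: both reduce to bounding $\frac{1}{q_{n+1}}\sum_{k=1}^n q_k\log\frac{2q_k}{q_{k-1}}$, use the at-least-geometric growth of $q_k$ to sum $\sum q_k\lesssim q_n$, and then argue via a dichotomy on the partial quotients, exploiting the fact that at a record index the large $a_{n+1}$ in $q_{n+1}\geq a_{n+1}q_n$ overwhelms the logarithmic factor. The only (immaterial) difference is that in the bounded case the paper selects the subsequence where $a_{n+1}$ equals $\limsup a_n$, yielding the $\alpha$-independent constant $3\log 2$, whereas your argument settles for a constant depending on $\sup_k a_k$; since Theorem \ref{thm_rate_main} allows $C$ to depend on $\alpha$, this costs nothing.
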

\begin{proof}
Employing Lemma \ref{rate_lemma3}, it is enough to show that
\be
\liminf_{n \to \infty}
\frac {1} {q_{n+1}} \sum_{k=1}^n q_k \log 2 a_{k-1} \leq 3 \log 2.
\ee

Assume first that there exist infinitely many $n$ such that $a_{n+1} \geq
a_m$ for every $m \leq n$.  Then
\be
\frac {1} {q_{n+1}} \sum_{k=1}^n q_k \log 2 a_{k-1} \leq \frac
{\log 2 a_{n+1}} {a_{n+1}} \frac {1} {q_n} \sum_{k=1}^n q_k \leq 3 \ln 2.
\ee

Assume now that $a=\limsup a_n<\infty$.  Take $N$ such that $a_n \leq a$ for
every $n \geq N$.  If $n \geq N$ is such that $a_{n+1}=a$ then we have
\be
\frac {1} {q_{n+1}} \sum_{k=1}^n q_k \log 2 a_{k-1} \leq 3 \log
2+\frac {1} {q_{n+1}} \sum_{k=1}^N q_k \log 2 a_{k-1}=3 \log 2+O(\frac {1}
{q_{n+1}}).
\ee
\end{proof}

Finally, combining Lemma \ref{rate_lemma1} and \ref{rate_lemma3} we conclude that
\be
\liminf \sup_{|z|=1} S(q_n,z) \leq C_1 + C_3 ~\mbox{,}
\ee
thereby completing the proof of Theorem \ref{thm_rate_main}.\qed

\begin{proof}[Proof of Theorem \ref{counter}]
We will show that if $q_n > q_{n-1}^C,\; a_{n+1}=1$ and $q_{n+2}>Cq_{n+1}$ then $ \sup_{|z|=1} S(q_n,z) \geq (1-c)\log q_{n+1}.$
Using that for $z,w\in\mathbb{R}$ with $\cos (z-w) \geq 0,$ we have 
\begin{equation} \label{sinratio}
\left|\frac{\sin z}{\sin w} -1\right | \leq \left|2\frac{\sin(z-w)}{\sin w}\right |,
\end{equation}
we obtain for $z=e^{\frac {2\pi i}{q_{n+1}^2}}$, that 
\begin{eqnarray} \nonumber
|\log |e^{2\pi i k\alpha} z - 1| -\log |e^{2\pi i k\frac{p_{n+1}}{q_{n+1}}} z - 1||&=&|\log |\frac {\sin \pi (k\alpha+1/q_{n+1}^2)}{\sin \pi (k\frac{p_{n+1}}{q_{n+1}}+1/q_{n+1}^2)}|\leq \\\nonumber
&\leq&\frac{C\sin \pi k \Delta_{n+1}}{q_{n+1}\min (\sin \pi (k\frac{p_{n+1}}{q_{n+1}}+1/q_{n+1}^2),\sin \pi (k\alpha+1/q_{n+1}^2))}\nonumber ~\mbox{,}
\end{eqnarray}
and hence, for our choice of $z$, 
\begin{equation} 
\sum_{k=0}^{q_{n+1}-1}|\log |e^{2\pi i k\alpha} z - 1| -\log |e^{2\pi i k\frac{p_{n+1}}{q_{n+1}}} z - 1||\leq \frac {C q_{n+1}\log q_{n+1}}{q_{n+2}} ~\mbox{.}
\end{equation}

We have therefore
\begin{equation} 
\sum_{k=0}^{q_{n+1}-1} \log |e^{2\pi i k\alpha} z - 1| \geq \sum_{k=0}^{q_{n+1}-1}\log |e^{2\pi i k\frac{p_{n+1}}{q_{n+1}}} z - 1| -\frac {C q_{n+1}\log q_{n+1}}{q_{n+2}}=\log |z^{q_{n+1}}-1|-\frac {C q_{n+1}\log q_{n+1}}{q_{n+2}} ~\mbox{.}
\end{equation}

On the other hand, 
\begin{equation} 
\sum_{k=0}^{q_{n+1}-1} \log |e^{2\pi i k\alpha} z - 1| =
\sum_{k=0}^{q_{n-1}-1} \log |e^{2\pi i k\alpha} z - 1|+ 
\sum_{k=0}^{q_{n}-1} \log |e^{2\pi i (k+q_{n-1})\alpha} z - 1| ~\mbox{.}
\end{equation}

We also have \cite{AvilaJitomirskaya_2009}
\begin{equation} \label{1}
\sum_{k=0}^{q_{n-1}-1} \log |e^{2\pi i k\alpha} z - 1|< C\log q_{n-1} +\log \min_{0\leq k\leq q_{n-1}-1} |e^{2\pi i k\alpha} z - 1| ~\mbox{.}
\end{equation}

As a result, with our choice of $z$, the $\min $ in (\ref{1}) is achieved at $k=0,$ and, based on the relationship between $q_{n-1},  q_{n+1}, q_{n+2},$ we conclude
\begin{eqnarray} 
\sum_{k=0}^{q_{n}-1} \log |e^{2\pi i (k+q_{n-1})\alpha} z - 1| & \geq & \log |z^{q_{n+1}}-1|-\log |z-1| -\frac {C q_{n+1}\log q_{n+1}}{q_{n+2}}-C\log q_{n-1} \nonumber \\ 
                                    & \geq & (1-c)\log q_{n+1} ~\mbox{.}
\end{eqnarray}
\end{proof}

\begin{proof}[Proof of Theorem \ref{prop_prozero}] 
Decompose
\begin{equation} \label{eq_17}
f(x) = g(x) \prod_{j=1}^{n} \left(\mathrm{e}^{2 \pi i x} - \mathrm{e}^{2 \pi i x_j}\right) ~\mbox{,}
\end{equation}
where $\{x_j, 1 \leq j \leq n\}$ denote the zeros of $f$ on $\mathbb{T}$ counting multiplicity and $g$ is zero free and analytic in a neighborhood of $\mathbb{T}$. Then, since $\log \abs{g}$ is harmonic in a neighborhood of $\mathbb{T}$, the zero free part of (\ref{eq_17}) is easily dealt with as a result of the following:
\begin{lemma} \label{lem_roc1}
Let $\alpha \in \mathbb{T}$ be a fixed irrational number and $h$ a harmonic function in a neighborhood of $\mathbb{T}$. Then for some $C>0$, 
\begin{equation}
\left \vert \frac{1}{q_n} \sum_{j=0}^{q_n-1} h(x + j \alpha) - \int_\mathbb{T} h(x) \ud \mu(x) \right \vert \leq \frac{C}{q_n} ~\mbox{, } 
\end{equation}
for all $n \in \mathbb{N}$ and uniformly in $x \in \mathbb{T}$.
\end{lemma}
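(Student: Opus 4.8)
The plan is to expand $h$ in Fourier series on $\mathbb{T}$ and estimate the resulting exponential sums by means of the best-approximation property of $q_n$. Since $h$ is harmonic in a neighborhood of $\mathbb{T}$, its restriction to $\mathbb{T}$ is real-analytic, so its Fourier coefficients $\hat h(m):=\int_{\mathbb{T}}h(x)\mathrm{e}^{-2\pi i m x}\,\ud\mu(x)$ satisfy $|\hat h(m)|\le C_0\mathrm{e}^{-c_0|m|}$ for suitable $C_0,c_0>0$, while $\hat h(0)=\int_{\mathbb{T}}h\,\ud\mu$. Setting $S_n(x):=\sum_{j=0}^{q_n-1}h(x+j\alpha)$ and summing term by term,
\[
S_n(x)-q_n\,\hat h(0)=\sum_{m\neq 0}\hat h(m)\,\mathrm{e}^{2\pi i m x}\,\mathcal{G}_n(m),\qquad \mathcal{G}_n(m):=\sum_{j=0}^{q_n-1}\mathrm{e}^{2\pi i m j\alpha},
\]
so it suffices to bound $\sum_{m\neq 0}|\hat h(m)|\,|\mathcal{G}_n(m)|$ by a constant independent of $n$ and $x$; dividing by $q_n$ then yields the claim.

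To estimate $|\mathcal{G}_n(m)|=|\sin(\pi m q_n\alpha)|/|\sin(\pi m\alpha)|$ I would split the sum at $|m|=q_n$. For $|m|>q_n$ only the trivial bound $|\mathcal{G}_n(m)|\le q_n$ is used, so this part contributes at most $q_n\sum_{|m|>q_n}|\hat h(m)|\le Cq_n\mathrm{e}^{-c_0 q_n}$, which is bounded (indeed $\to 0$) uniformly in $n$. For $1\le|m|\le q_n$ the key point is to combine the numerator bound $|\sin(\pi m q_n\alpha)|\le\pi\vertiii{m q_n\alpha}\le\pi|m|\vertiii{q_n\alpha}=\pi|m|\Delta_n$ with the denominator bound $|\sin(\pi m\alpha)|\ge 2\vertiii{m\alpha}\ge 2\Delta_n$, where $\vertiii{m\alpha}\ge\Delta_n$ for $1\le|m|\le q_n$ is precisely the best-approximation property of $q_n$ recorded in~(\ref{eq_contifracbasic}). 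The factor $\Delta_n$ cancels, leaving the clean linear estimate $|\mathcal{G}_n(m)|\le\frac{\pi}{2}|m|$, so this part contributes at most $\frac{\pi}{2}\sum_{m\neq 0}|m|\,|\hat h(m)|\le\frac{\pi}{2}C_0\sum_{m\neq 0}|m|\mathrm{e}^{-c_0|m|}$, a convergent series depending only on $h$. Adding the two pieces gives $|S_n(x)-q_n\hat h(0)|\le C_h$ for all $n$ and all $x$, and dividing by $q_n$ finishes the proof.

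I do not expect a genuine obstacle in this lemma; the one place requiring care is the cancellation of $\Delta_n$ in the bound for $\mathcal{G}_n(m)$ when $|m|\le q_n$, which is what makes that bound \emph{linear} in $|m|$ and hence summable against the exponentially decaying Fourier coefficients of $h$. A cruder bound such as $|\mathcal{G}_n(m)|\le 1/(2\vertiii{m\alpha})$ would be useless for Liouville $\alpha$, so it is essential to exploit both that $q_n\alpha$ is very close to an integer (numerator) and that no $m$ with $1\le|m|\le q_n$ does better (denominator). All remaining estimates are routine.
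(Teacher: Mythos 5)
Your proposal is correct and follows essentially the same route as the paper's proof in Appendix C: expand $h$ in Fourier series, split the frequencies at $|m|=q_n$, bound the high-frequency tail trivially using the exponential decay of $\hat h(m)$, and bound the low-frequency geometric sums $\mathcal{G}_n(m)$ linearly in $|m|$ by playing the smallness of $\vertiii{m q_n\alpha}$ against the lower bound on $\vertiii{m\alpha}$ coming from best approximation. The only cosmetic difference is that you phrase the cancellation via $\Delta_n$ while the paper phrases it via $1/q_{n+1}$, and your citation of (\ref{eq_contifracbasic}) for $\vertiii{m\alpha}\ge\Delta_n$ on $1\le|m|\le q_n$ is a standard (and correct) variant of the best-approximation bound actually displayed there.
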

The proof of Lemma \ref{lem_roc1} is fairly standard and will be given in Appendix \ref{c}. It thus remains to deal with the product in (\ref{eq_17}) which is precisely what is achieved in Theorem \ref{thm_rate_main}. 
\end{proof}

\section{Spectral consequences of the global theory} \label{sec_avilasglobal}
This section is not specific to extended Harper's model, but considers an arbitrary quasi-periodic Jacobi operator $H_\theta$ of the form (\ref{eq_hamiltonian}) with analytic sampling functions $c(\theta) \not \equiv 0$ and $v(\theta)$.

Several results of this section were first obtained in \cite{JitomirskayaMarx_2012} where certain aspects of the GT, which had originally been developed in \cite{global} for {\em{Schr\"odinger}} operators  ($c \equiv 1$), were extended to the {\em{Jacobi}} case ($c \not \equiv 1$). To keep this paper as self-contained as possible, the intention of this section is to embed our main result, Theorem \ref{thm_point}, into this framework and to discuss its spectral consequences. In particular, we will thereby reduce Theorem \ref{thm_ehmspectral} to Theorem \ref{thm_point}. For a more detailed presentation of the dynamical aspects of the spectral theory of quasi-periodic Jacobi operators, including some extensions to long-range operators, we refer the reader to the recent survey article \cite{JitomirskayaMarx_ETDS_2016_review}.

We start by recalling some definitions. Following, $M_2(\mathbb{C})$ denotes the 2$\times$2 complex matrices, and $\Vert . \Vert$ is any fixed matrix norm. Given $\alpha \in \mathbb{T}$ irrational and $D:\mathbb{T} \to M_2(\mathbb{C})$ measurable with $\log_+\Vert D(.) \Vert \in L^1(\mathbb{T})$, a quasi-periodic {\em{cocycle}} $(\alpha, D)$ is a dynamical system on $\mathbb{T} \times \mathbb{C}^2$ defined by $(\alpha,D)(\theta,v):=(\theta + \alpha, D(\theta) v)$. If $D$ is analytic, $(\alpha,D)$ is called an {\em{analytic cocycle}}. An analytic cocycle $(\alpha,D)$ where $\det D(\theta_0) = 0$ for some $\theta_0 \in \mathbb{T}$  is called {\em{singular}}, and {\em{non-singular}} otherwise. 

The averaged asymptotics of any cocycle $(\alpha,D)$ is quantified by its (top) {\em{Lyapunov exponent}},
\begin{equation}
L(\alpha, D) := \lim_{n \to \infty} \dfrac{1}{n} \int_{\mathbb{T}} \log \Vert D(\theta+(n-1) \alpha) \dots D(\theta) \Vert \ud \mu(\theta) ~\mbox{,}
\end{equation}
which is well-defined by subadditvity with values in $[-\infty, + \infty)$.

In view of quasi-periodic analytic Jacobi operators, the relevant analytic cocycle is induced by 
\begin{equation} \label{eq_defnjacobico}
A^E(\theta):= \begin{pmatrix} E - v(\theta) & - \widetilde{c}(\theta - \alpha) \\ c(\theta) & 0 \end{pmatrix} ~\mbox{,}
\end{equation}
where the spectral parameter $E$ ranges in $\mathbb{R}$. Here, for $\epsilon \in \mathbb{R}$ and $\theta \in \mathbb{T}$, we define $\widetilde{c}(\theta+i\epsilon):=\overline{c(\theta - i \epsilon)}$ as the reflection of $c$ along the real axis. Morally, $\widetilde{c}$ analytically ``re-interpretes'' $\overline{c}$ appearing in (\ref{eq_hamiltonian}), which agrees with $\tilde{c}$ on $\mathbb{T}$. 

Iterates of $(\alpha, A^E)$ relate to solutions of the finite difference equation $H_\theta \psi = E \psi$ over $\mathbb{C}^\mathbb{Z}$, cf (\ref{eq_deftransfer}). If $c$ has zeros on $\mathbb{T}$, Jacobi cocycles $(\alpha, A^E)$ provide important examples for singular cocycles since $\det A^E(\theta) = c(\theta) \widetilde{c}(\theta - \alpha)$. With this in mind, one calls a quasi-periodic Jacobi operator {\em{singular}} if $c$ has zeros on $\mathbb{T}$, and {\em{non-singular}} otherwise.

The GT stratifies the energy axis according to the behavior of {\em{complexified Lyapunov exponent of a quasi-periodic Jacobi operator}}, defined by
\begin{equation} \label{eq_defcomplexle}
L(E;\epsilon):= L(\alpha, A_\epsilon^E) - \int_\mathbb{T} \log\vert c(\theta) \vert ~\ud \mu(\theta) ~\mbox{,}
\end{equation}
for real $\epsilon$ in a neighborhood of $\epsilon = 0$. Here, for fixed $E \in \mathbb{R}$, $L(\alpha, A_\epsilon^E)$ is the Lyapunov exponent obtained by phase-complexifying the Jacobi cocycle, 
\begin{equation} A_\epsilon^E(\theta):= A^E(\theta+i\epsilon) ~\mbox{.}
\end{equation}
As we shall elaborate, the GT relates the complexified Lyapunov exponent of a given quasi-periodic analytic Jacobi operator to its spectral properties. We also note that by letting $\epsilon = 0$, the complexified Lyapunox exponent reduces to what is usually called the Lyapunov exponent of a Jacobi operator; for simplicity, we denote the latter by $L(E):=L(E; 0)$. In view of Theorem \ref{thm_global} mentioned below, we recall that $L(E) \geq 0$ for all $E \in \mathbb{R}$.

\begin{remark} \label{rem_defcomplexLE}
For later purposes, we emphasize that in the definition of the complexified Lyapunov exponent (\ref{eq_defcomplexle}), we complexified the Jacobi cocycle $(\alpha, A^E)$ and {\em{not}} the measurable cocycle $(\alpha, B^E)$. The latter generates solutions to the finite difference equation and is defined below in (\ref{eq_deftransfer}). In particular, the logarithmic integral on the right hand side of (\ref{eq_defcomplexle}) carries no $\epsilon$-dependence. Indeed, as explained in \cite{JitomirskayaMarx_2013_erratum}, $L(\alpha, B^E)$ would not even be an even function in $\epsilon$ (see also Appendix \ref{app_thm_accel}); evenness in $\epsilon$ is crucial for the partition of the spectrum into subcritical, supercritical, and critical energies introduced below. Moreover, there is an important dynamical reason underlying the definition of the complexified LE, which will be explored in Sec. \ref{sec_afk_reductions}, see the comment following (\ref{eq_defcomplexle_1}). The latter plays a role in the spectral theoretic implications of the GT, which are discussed below and in Sec. \ref{sec_afk} - \ref{sec_almredimpliesac}.
\end{remark}

It is well known from Kotani theory that the set 
\begin{equation}
\mathcal{Z}:= \{E \in \mathbb{R}~:~ L(E)=0\}
\end{equation}
forms an essential support of the ac spectrum of $H_\theta$. One of the main achievements of the GT, however, is that it refines Kotani theory by explicitly separating contributions from purely singular (sc+pp) spectrum from those of purely ac spectrum. 

The GT relies on the properties of the complexified LE, which we summarize in Theorem \ref{thm_global}. Following, we denote by $\Sigma$ the spectrum of $H_{\theta}$, which is well known to be independent of $\theta$.
\begin{theorem} \label{thm_global}
Fixing $E \in \mathbb{R}$, $L(E;\epsilon)$ is a non-negative, even, piecewise linear, and convex function in $\epsilon$ with right derivatives satisfying
\begin{equation}
\omega(E;\epsilon) = \dfrac{1}{2 \pi} \lim_{h \to 0+} \dfrac{L(\alpha, A_{\epsilon+h}^E) - L(\alpha, A_\epsilon^E)}{h} \in \frac{1}{2} \mathbb{Z} ~\mbox{.}
\end{equation}
Moreover, for every $E \in \mathbb{R}$ with $L(E)>0$, $E \in \Sigma$ if and only if $\omega(E;\epsilon)$ has a jump discontinuity at $\epsilon=0$, or equivalently, $\omega(E;0)>0$.
\end{theorem}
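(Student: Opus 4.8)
The plan is to verify, in order, the analytic properties of $\epsilon\mapsto L(E;\epsilon)$ — convexity, piecewise linearity, the half-integer quantization of its right slopes, evenness, and non-negativity — and then to read off the spectral dichotomy, following the template of the global theory of \cite{global} in the Jacobi form worked out in \cite{JitomirskayaMarx_2012,JitomirskayaMarx_2013_erratum}. \emph{Convexity and quantization.} For each $n$, $z\mapsto\log\Vert A^E(z+(n-1)\alpha)\cdots A^E(z)\Vert$ is subharmonic on the strip of analyticity of $v$ and $c$, so its average over the horizontal slice $\{\operatorname{Im}z=\epsilon\}$ is convex in $\epsilon$; letting $n\to\infty$ preserves convexity, whence $\epsilon\mapsto L(\alpha,A^E_\epsilon)$, hence $L(E;\epsilon)=L(\alpha,A^E_\epsilon)-\int\log\vert c\vert$, is convex and has one-sided derivatives, with $\omega(E;\cdot)$ non-decreasing and right-continuous. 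The quantization $\omega(E;\epsilon)\in\tfrac12\mathbb Z$ is the Jacobi analogue of Avila's quantization-of-acceleration theorem: once $\vert\epsilon\vert$ is large enough that the complexified cocycle is dominated, $L(\alpha,A^E_\epsilon)=2\pi k\epsilon+O(1)$ with $k\in\tfrac12\mathbb Z$ read off from the leading Fourier modes of $v$ and of $c,\widetilde c$ (the half-integrality reflecting $\det A^E=c\,\widetilde c(\cdot-\alpha)$), and a continuation/rigidity argument across the strip — controlling the Riesz masses of the subharmonic functions $\tfrac1n\log\Vert A^E_n\Vert$ — forces $\omega(E;\cdot)$ to be $\tfrac12\mathbb Z$-valued and locally constant everywhere. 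Together with convexity this makes $L(E;\cdot)$ piecewise linear with finitely many pieces.

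\emph{Evenness and non-negativity.} Since $\Vert M\Vert=\Vert\overline M\Vert$ and, by the defining relation $\widetilde c(z)=\overline{c(\bar z)}$, the matrix $\overline{A^E(\theta+i\epsilon)}$ is conjugate — by the fixed involution exchanging the two coordinates — to the matrix obtained from $A^E(\theta-i\epsilon)$ by interchanging the roles of $c$ and $\widetilde c$, while $\int_{\mathbb T}\log\vert c\vert=\int_{\mathbb T}\log\vert\widetilde c\vert$ because $\vert\widetilde c\vert=\vert c\vert$ on $\mathbb T$, the cocycles $(\alpha,A^E_\epsilon)$ and $(\alpha,A^E_{-\epsilon})$ have equal Lyapunov exponents; hence $L(E;\epsilon)=L(E;-\epsilon)$. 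This is precisely why one complexifies $A^E$ rather than the measurable transfer cocycle (cf.\ Remark \ref{rem_defcomplexLE}). For non-negativity, from $\Vert A^E_n\Vert\ge\vert\det A^E_n\vert^{1/2}$ and $\tfrac1n\int\log\vert\det A^E_n\vert=\int\log\vert\det A^E\vert=\int\log\vert c\,\widetilde c(\cdot-\alpha)\vert=2\int\log\vert c\vert$ one gets $L(\alpha,A^E)\ge\int\log\vert c\vert$, i.e.\ $L(E;0)=L(E)\ge0$; since a convex even function attains its minimum at $0$, $L(E;\epsilon)\ge L(E;0)\ge0$ for all $\epsilon$.

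\emph{The spectral dichotomy.} Fix $E$ with $L(E)>0$. By evenness the left derivative of $L(E;\cdot)$ at $0$ is $-2\pi\omega(E;0)$ and the right derivative is $2\pi\omega(E;0)$; convexity forces $\omega(E;0)\ge0$, with a jump at $\epsilon=0$ occurring exactly when $\omega(E;0)>0$, which settles the ``equivalently $\omega(E;0)>0$'' clause. It then remains to show $E\in\Sigma\Leftrightarrow\omega(E;0)>0$, equivalently (since $\omega(E;\cdot)$ is a non-decreasing $\tfrac12\mathbb Z$-valued step function) $E\notin\Sigma\Leftrightarrow\omega(E;\cdot)$ is constant on a neighborhood of $0$. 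For this I invoke two ingredients from \cite{JitomirskayaMarx_2012}: (a) for $E$ with $L(E)>0$, $E\notin\Sigma$ iff the analytically normalized Jacobi cocycle at $E$ is uniformly hyperbolic (the Jacobi Johnson-type theorem; in particular uniform hyperbolicity forces $c$ to be zero-free); and (b) $(\alpha,A^E)$ is uniformly hyperbolic iff $\omega(E;\cdot)$ is constant near $\epsilon=0$ — the forward direction because uniform hyperbolicity is open in the sup norm and the dominated splitting persists analytically in $\epsilon$, forcing $L(\alpha,A^E_\epsilon)$ to be real-analytic hence (by piecewise linearity) affine near $0$; the converse because local constancy of $\omega(E;\cdot)$ together with $L(E)>0$ produces a dominated splitting for $(\alpha,A^E)$. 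Chaining (a) and (b) finishes the proof.

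I expect the main obstacle to be step (b) — in particular the implication ``$\omega(E;\cdot)$ constant near $0$ and $L(E)>0$ $\Rightarrow$ uniform hyperbolicity'' — hand in hand with running the Johnson-type dictionary of step (a) for possibly \emph{singular} Jacobi cocycles, which first requires ruling out zeros of $c$ under the hyperbolicity hypothesis. The quantization of acceleration is the heaviest black box, but it is available off the shelf from \cite{JitomirskayaMarx_2012}; the genuine work lies in the hyperbolicity-versus-flatness dictionary.
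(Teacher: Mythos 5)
Your proposal follows the same outline as the paper's Appendix B proof: convexity from subharmonicity of $\log\Vert A^E_n\Vert$ via Hardy's convexity theorem, quantization of the acceleration imported as a black box (the paper cites Theorem~1.4 of \cite{AvilaJitomirskayaSadel_2013}), piecewise linearity from convexity plus quantization, non-negativity from evenness plus convexity plus $L(E;0)\geq 0$, and the spectral dichotomy by chaining the Johnson-type characterization $E\notin\Sigma\Leftrightarrow$ dominated splitting (the paper cites \cite{Marx_2014}) with the characterization of dominated splitting as ``$L>0$ and locally constant acceleration'' (Theorem~1.2 of \cite{AvilaJitomirskayaSadel_2013}).

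The one place where you diverge is the evenness step, and there your argument is slightly off in its details. The paper passes by a measurable conjugacy (\ref{eq_conjaltcocycl}) to the cocycle $\widetilde A^E$ of (\ref{eq_altcocycle}), which is real-symmetric on $\mathbb{T}$, and then invokes the Schwarz reflection principle to get $L(\alpha,\widetilde A^E_{-\epsilon})=L(\alpha,\widetilde A^E_\epsilon)$. You instead observe that $\overline{A^E(\theta+i\epsilon)}$ equals $A^E(\theta-i\epsilon)$ with $c$ and $\widetilde c$ interchanged (call this $B^E(\theta-i\epsilon)$), and then need $L(\alpha,B^E_{-\epsilon})=L(\alpha,A^E_{-\epsilon})$. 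Your stated reason --- a ``fixed involution exchanging the two coordinates'' --- does not work: conjugating by the constant swap matrix does not return a matrix of the transfer-matrix form. What does work is a $\theta$-dependent diagonal cocycle conjugacy $C(\theta)=\mathrm{diag}\bigl(1,\,c(\theta-\alpha)/\widetilde c(\theta-\alpha)\bigr)$, which preserves the Lyapunov exponent because $\log\vert c/\widetilde c\vert$ is integrable (the appeal to $\int\log\vert c\vert=\int\log\vert\widetilde c\vert$ on $\mathbb{T}$ is not the relevant point --- cocycle conjugacies preserve the LE regardless of the normalization, provided the conjugating matrix is log-integrable). The two routes are equivalent in substance; the paper's is arguably cleaner because the reflection principle is applied to a genuinely real-symmetric analytic cocycle, while yours requires the additional (correct, but unstated) fact that the $c\leftrightarrow\widetilde c$ swap is an LE-preserving cocycle conjugacy.
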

\begin{remark}
For certain applications it is useful to know that for {\em{non-singular}} Jacobi operators, one has in fact that $\omega(E;\epsilon) \in \mathbb{Z}$ for all $\epsilon$ in any neighborhood of $\epsilon=0$ where $c(. +i \epsilon)$ does not vanish, see Theorem 1 in \cite{JitomirskayaMarx_2013_erratum}. This played an important role in the computation of the complexified Lyapunov exponent for extended Harper's model.
\end{remark}
$\omega(E;\epsilon)$ is called the {\em{acceleration}} and was first introduced for Schr\"odinger operators in \cite{global}; correspondingly, the fact that $\omega(E;\epsilon) \in \frac{1}{2} \mathbb{Z}$ is known as ``{\em{quantization of the acceleration}}.'' Likewise, Theorem \ref{thm_global} first appeared in \cite{global} for the special case of Schr\"odinger cocycles. In its present formulation, Theorem \ref{thm_global} includes results from \cite{JitomirskayaMarx_2012, JitomirskayaMarx_2013_erratum, AvilaJitomirskayaSadel_2013, Marx_2014}. For convenience of the reader, we assemble these results in Appendix \ref{app_thm_accel} and also provide simplified proofs of certain aspects. 

To discuss the stratification of the spectrum implied by Theorem \ref{thm_global}, we first distinguish between non-singular and singular Jacobi operators. We mention that some of the below-mentioned spectral consequences of the GT were in fact developed earlier or in parallel to the GT; important contributions were made in \cite{AvilaFayadKrikorian_2011, AvilaKrikorian_2006, AvilaJitomirskaya_2010}. For further context of the historical developments leading to the GT, including a more comprehensive list of references, we refer the reader to survey article \cite{JitomirskayaMarx_ETDS_2016_review}.

\subsection{Non-singular Jacobi operators} \label{subsec_nonsingjac}
Taking into account Theorem \ref{thm_global}, we partition the set $\mathcal{Z}$ into {\em{subcritical}} energies, where $\omega(E;\epsilon)$ does {\em{not}} exhibit a jump discontinuity at $\epsilon=0$ (correspondingly, $\omega(E; 0) = 0$), and {\em{critical}} energies with, correspondingly, $\omega(E; 0) > 0$. Any $E \in  \Sigma$ where $L(E)>0$ is called {\em{supercritical}}. We remark, that this terminology was inspired by the spectral properties of the almost Mathieu operator \cite{global, JitomirskayaMarx_2012}. Identifying subcritical and critical energies in $\mathcal{Z}$ yields above mentioned resolution of $\mathcal{Z}$ which explicitly identifies contributions from singular and ac spectrum:
 
\begin{itemize}
\item {\bf{Critical behavior}} is associated with {\em{singular}} (sc + pp) spectrum, as a consequence of:
\begin{theorem} \label{thm_coroafk}
Given a {\em{non-singular}} quasi-periodic, analytic Jacobi operator with irrational $\alpha,$ the set of critical energies has zero Lebesgue measure.
\end{theorem}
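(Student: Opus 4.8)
\emph{Proof proposal.} The plan is to deduce Theorem~\ref{thm_coroafk} from the Avila--Fayad--Krikorian dichotomy for Jacobi operators, Theorem~\ref{m5_thm_afk_jacobi} (proved in Section~\ref{sec_afk}): for Lebesgue-a.e.\ $E\in\mathbb{R}$, either $L(E)>0$, or the analytically normalized Jacobi cocycle at $E$ is analytically reducible to rotations. Every critical energy satisfies $L(E)=0$ by definition, hence falls outside the first alternative, so it is enough to show that the second alternative---analytic reducibility to rotations---forces the acceleration $\omega(E;0)$ to vanish, i.e.\ makes $E$ subcritical rather than critical. Granting this, the set of critical energies is contained in the Lebesgue-null exceptional set of Theorem~\ref{m5_thm_afk_jacobi}, which proves the claim.

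For the remaining step I would first record that, for $\epsilon$ in a small neighborhood of $0$, the complexified Lyapunov exponent $L(E;\epsilon)$ of $(\alpha,A^E)$ coincides with the phase-complexified Lyapunov exponent of the normalized cocycle. Here non-singularity is used crucially: since $c$ has no zeros on $\mathbb{T}$, $\log\abs{c}$ is harmonic on a strip around $\mathbb{T}$, so $\int_{\mathbb{T}}\log\abs{c(\theta+i\epsilon)}\,\ud\mu(\theta)$ is independent of small $\epsilon$; combined with $\det A^E(\theta)=c(\theta)\widetilde{c}(\theta-\alpha)$ and invariance of $\mu$ under rotation, the normalization in~(\ref{eq_defcomplexle}) cancels the determinant contribution uniformly in $\epsilon$. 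Now suppose the normalized cocycle $(\alpha,\hat A^E)$ is analytically reducible to rotations: there is $B$ analytic on a strip $\{\,\abs{\im\theta}<\delta\,\}$, valued in $SL(2,\mathbb{R})$ on $\mathbb{T}$, with $B(\theta+\alpha)^{-1}\hat A^E(\theta)B(\theta)=R_{g(\theta)}\in SO(2,\mathbb{R})$ for a real-analytic $g$ (the case of a constant rotation being even easier). Phase-complexifying by $\epsilon$ with $\abs{\epsilon}<\delta$, the conjugacy $B(\,\cdot\,+i\epsilon)$ has bounded norm and bounded inverse, so the complexified Lyapunov exponent of $(\alpha,\hat A^E)$ equals that of $(\alpha,R_{g(\,\cdot\,+i\epsilon)})$; the latter vanishes, because $\prod_{k=0}^{n-1}R_{g(\theta+i\epsilon+k\alpha)}=R_{\,\sum_{k=0}^{n-1}g(\theta+i\epsilon+k\alpha)}$ has norm $\asymp\exp\bigl|\sum_{k=0}^{n-1}\im g(\theta+i\epsilon+k\alpha)\bigr|$, while $\theta\mapsto\im g(\theta+i\epsilon)$ has zero mean on $\mathbb{T}$ (as $g$ is real on $\mathbb{T}$ and analytic, a contour shift preserves the integral), so its Birkhoff sums under the uniquely ergodic rotation by $\alpha$ are $o(n)$, uniformly in $\theta$. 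Hence $L(E;\epsilon)=0$ for all $\epsilon$ near $0$, so by Theorem~\ref{thm_global} the acceleration $\omega(E;0)=0$, i.e.\ $E$ is subcritical.

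Combining the two steps: if $E$ is critical then $L(E)=0$ and the normalized cocycle at $E$ cannot be analytically reducible to rotations, so $E$ belongs to the Lebesgue-null set on which Theorem~\ref{m5_thm_afk_jacobi} fails; therefore the set of critical energies has zero Lebesgue measure. The genuine difficulty is contained entirely in Theorem~\ref{m5_thm_afk_jacobi}, i.e.\ in extending the Avila--Fayad--Krikorian dichotomy to Jacobi cocycles---which, as indicated in the introduction, essentially reduces to establishing $L^2$-reducibility of the normalized Jacobi cocycle and then promoting its regularity. For the corollary itself the only points needing care are the $\epsilon$-uniform determinant bookkeeping identifying $L(E;\epsilon)$ with the complexified exponent of the normalized cocycle, and the routine fact that an analytically rotations-reducible cocycle has complexified Lyapunov exponent vanishing identically (hence with zero acceleration) near the real axis.
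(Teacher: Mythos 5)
Your argument is correct and coincides with the paper's route: the paper proves Theorem~\ref{m5_thm_afk_jacobi} (for Lebesgue-a.e.\ $E$ with $L(E)=0$, the normalized cocycle $(\alpha,(A^E)^\sharp)$ is analytically reducible to rotations) and then deduces Theorem~\ref{thm_coroafk} in one line from the identity~(\ref{eq_defcomplexle_1}), which identifies $L(E;\epsilon)$ with $L(\alpha,(A_\epsilon^E)^\sharp)$, so analytic reducibility forces $L(E;\epsilon)\equiv 0$ near $\epsilon=0$ and hence $\omega(E;0)=0$. You simply spell out the implicit last step --- boundedness of the complexified conjugacy on a strip plus the zero mean of $\im g(\cdot+i\epsilon)$ by contour shift and unique ergodicity --- which the paper compresses into a single sentence.
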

Theorem \ref{thm_coroafk} was first obtained for quasi-periodic, analytic Schr\"odinger operators in \cite{AvilaFayadKrikorian_2011}. In Sec. \ref{sec_afk} we extend this result to the Jacobi case, thereby proving Theorem \ref{thm_coroafk}. 

\vspace{0.2 cm}

\item {\bf{Subcritical behavior}} identifies the contribution from ac spectrum as a consequence of:
\begin{theorem} \label{thm_subcritical}
Let $H_\theta$ be a non-singular quasi-periodic, analytic Jacobi operator with irrational frequency $\alpha$. Then, for $\mu$-a.e. $\theta$, all its spectral measures are purely ac on the set of subcritical energies.
\end{theorem}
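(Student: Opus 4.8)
The plan is to reduce the statement about spectral measures to a reducibility property of the (analytically normalized) Jacobi cocycle at subcritical energies, then invoke an almost-reducibility-implies-ac argument. First I would recall that, by definition, an energy $E \in \mathcal{Z}$ is subcritical precisely when the acceleration $\omega(E;\epsilon)$ vanishes at $\epsilon=0$, so by the piecewise-linear, even, convex structure established in Theorem \ref{thm_global}, $L(E;\epsilon)\equiv 0$ on a whole neighborhood $|\epsilon|<\delta$ of the real axis. Since $H_\theta$ is non-singular, $c(\cdot+i\epsilon)$ does not vanish for small $\epsilon$, so the analytically normalized cocycle (conjugating $(\alpha,A^E)$ so that the determinant becomes $1$, which is possible for non-singular operators exactly as in \cite{JitomirskayaMarx_2012}) extends analytically to that strip with vanishing Lyapunov exponent; this puts us in the setting to which the global theory's almost-reducibility machinery applies.

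Next I would invoke the almost reducibility theorem of \cite{Avila_prep_ARC_1, Avila_prep_ARC_2}: an analytic $SL(2,\mathbb{R})$-cocycle with zero Lyapunov exponent in a strip is almost reducible. (In the Jacobi setting one first passes through the analytic normalization just mentioned; the relevant conjugacies and the fact that they are well-behaved in $\theta$ are exactly the content of the extension of the global theory to Jacobi operators in \cite{JitomirskayaMarx_2012, JitomirskayaMarx_2013_erratum}.) Then I would apply Theorem \ref{thm_arimpliesac} of Sec. \ref{sec_almredimpliesac}, which asserts that almost reducibility of the cocycle associated to an energy implies that the spectral measures are purely absolutely continuous near that energy for $\mu$-a.e.\ $\theta$. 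Since the set of subcritical energies is relatively open in $\mathcal{Z}$ (the acceleration is integer-valued and upper semicontinuous, so the condition $\omega(E;0)=0$ persists locally) and $\mathbb{R}$ is second countable, one covers the subcritical set by countably many such neighborhoods; taking the intersection of the corresponding full-measure sets of phases $\theta$ gives a single full-measure set on which the restriction of every spectral measure to the subcritical set is purely ac.

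The main obstacle is the passage from the measurable/analytic reducibility information provided by the global theory to a genuine statement about spectral measures in the Jacobi (rather than Schrödinger) category: almost-reducibility-implies-ac is classical for Schrödinger operators but the Jacobi normalization introduces an extra analytic conjugation (by the matrix built from $c(\cdot+i\epsilon)$), and one must check that this conjugation does not destroy the control on the $m$-functions / transfer matrices needed to conclude absolute continuity. This is precisely what Theorem \ref{thm_arimpliesac} is designed to handle, so modulo that theorem the argument is a packaging of known pieces; the remaining care is purely in the bookkeeping of covering the subcritical set and intersecting countably many full-measure phase sets.
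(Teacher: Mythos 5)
Your proposal follows essentially the same route as the paper: by the almost reducibility theorem (ART) of \cite{Avila_prep_ARC_1,Avila_prep_ARC_2}, applied to the analytically normalized Jacobi cocycle $(\alpha,(A^E)^\sharp)$ from Sec.~\ref{sec_afk_reductions}, every subcritical $E$ is almost reducible; then Theorem~\ref{thm_arimpliesac} converts almost reducibility into purely ac spectral measures for $\mu$-a.e.\ $\theta$. This is exactly the paper's (unwritten but clearly sketched) proof, spread over the remark following the theorem, Sec.~\ref{sec_afk_reductions}, and Sec.~\ref{sec_almredimpliesac}.

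One small inefficiency: the final covering-and-countable-intersection step is not needed. Theorem~\ref{thm_arimpliesac} is not a local statement ``ac near one almost-reducible $E$''; it produces a single full-measure set of $\theta$ for which all spectral measures are purely ac on the \emph{entire} set $\Sigma_{ar}$ at once. Once ART places the whole set of subcritical energies inside $\Sigma_{ar}$, the conclusion follows in one stroke, without appealing to relative openness of the subcritical set in $\mathcal{Z}$ or to second countability. (Your openness claim is in fact correct for non-singular Jacobi operators — it follows from upper semicontinuity of the acceleration, integrality of $\omega$, and the constraint $\omega(E;0)\geq 0$ — but it does no work here.)
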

\begin{remark}
Theorem \ref{thm_subcritical} is known for Schr\"odinger operators; its proof for Jacobi operators will be the subject of both Sec. \ref{sec_afk} and Sec. \ref{sec_almredimpliesac}. In essence, Theorem \ref{thm_subcritical} relies on a general dynamical result known as {\em{almost reducibility theorem}} (ART) which shows equivalence between subcritical behavior of analytic $SL(2, \mathbb{R})$-cocycles and a certain dynamical property known as {\em{almost reducibility}}, see Def. \ref{def_almostred} in Sec. \ref{sec_almredimpliesac}.  A proof of ART is announced in \cite{global}, to appear in \cite{Avila_prep_ARC_2}; the latter extends an earlier result which proves ART for exponentially Liouvillean $\alpha$ \cite{Avila_prep_ARC_1}. For Jacobi operators, the relevant analytic $SL(2, \mathbb{R})$ cocycles will be given in (\ref{eq_defApr}) of Sec. \ref{sec_afk_reductions}. Given ART, the missing link to Theorem \ref{thm_subcritical} is to prove that almost reducibility implies purely ac spectrum. For Schr\"odinger operators this was established in \cite{Avila_prep_ARC_1} for $\mu$-a.e. $\theta$, and, using a much more delicate argument, for {\em{all}} $\theta \in \mathbb{T}$ in \cite{Avila_prep_ARC_2}. Since the statement for $\mu$-a.e. $\theta$ is enough for the spectral theory of extended Harper's model (Theorem \ref{thm_ehmspectral}), we will limit our proof for Jacobi operators to this a.e. statement which is the subject of Sec. \ref{sec_almredimpliesac}.
\end{remark}
\end{itemize}

\subsection{Singular Jacobi operators} \label{susec_singjac}
Like for non-singular Jacobi operators, all $E \in  \Sigma$ where $L(E)>0$ are called {\em{supercritical}}. Even though Theorem \ref{thm_global} holds irrespective of whether the Jacobi operator is singular or non-singular, dividing the set $\mathcal{Z}$ into subcritical and critical behavior as above does not provide additional insight. Indeed, by a well known argument \cite{Dombrowsky_1978} (see also \cite{JitomirskayaMarx_2012}, Proposition 7.1 therein), one has:
\begin{prop} \label{fact_singular}
Let $H_\theta$ be a {\em{singular}} quasi-periodic analytic Jacobi operator with irrational frequency $\alpha$. Then, for all $\theta \in \mathbb{T}$, the ac spectrum of $H_\theta$ is empty.
\end{prop}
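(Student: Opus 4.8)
The plan is to reduce the statement to the classical theorem of Dombrowski \cite{Dombrowsky_1978}: a bounded half-line Jacobi matrix whose off-diagonal parameters have a subsequence tending to zero carries no absolutely continuous spectrum. First I would fix $\theta\in\mathbb{T}$ and exploit singularity. By assumption $c$ is analytic and $\not\equiv 0$, hence it has only isolated (finitely many on the compact $\mathbb{T}$) zeros, and since $H_\theta$ is singular at least one such zero exists, say $c(\theta_0)=0$. As $\alpha$ is irrational, the rotation by $\alpha$ is minimal, so both the forward orbit $\{\theta+n\alpha:n\ge 0\}$ and the backward orbit $\{\theta+n\alpha:n\le 0\}$ are dense in $\mathbb{T}$; choosing $n_j\to+\infty$ with $\theta+n_j\alpha\to\theta_0$ and using continuity of $c$ gives $c(\theta+n_j\alpha)\to c(\theta_0)=0$, and similarly in the backward direction. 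Writing $a_n:=c(\theta+n\alpha)$ for the off-diagonal coefficients of $H_\theta$ (which are bounded, because $v$ and $c$ are continuous on $\mathbb{T}$), we conclude $\liminf_{n\to+\infty}|a_n|=0$ and $\liminf_{n\to-\infty}|a_n|=0$.

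Next I would pass from the whole line to the half line. Zeroing the two matrix entries joining sites $-1$ and $0$ is a finite-rank (hence trace-class) perturbation of $H_\theta$ and produces the orthogonal sum $H_\theta^-\oplus H_\theta^+$ on $\ell^2(\mathbb{Z}_{<0})\oplus\ell^2(\mathbb{Z}_{\ge 0})$. Since trace-class perturbations leave the absolutely continuous spectrum invariant (Kato--Rosenblum), it suffices to show that each of $H_\theta^\pm$ has empty absolutely continuous spectrum. But $H_\theta^+$ is a bounded half-line Jacobi matrix whose off-diagonal coefficients satisfy $\liminf_n|a_n|=0$, so Dombrowski's theorem applies and gives a trivial absolutely continuous part; the same argument (using the backward $\liminf$) handles $H_\theta^-$. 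Hence the absolutely continuous spectrum of $H_\theta$ is empty, for every $\theta\in\mathbb{T}$. Alternatively, one can run the argument through Remling's theorem: the right limits of $H_\theta^+$ include a whole-line Jacobi matrix with a vanishing off-diagonal entry, which therefore decouples into two half-line operators and so cannot be reflectionless on any set of positive Lebesgue measure, contradicting the non-triviality of $\Sigma_{ac}$.

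The argument is essentially a citation, so there is no deep obstacle; the points requiring care are (i) that the relevant $\liminf$ must be taken in the one-sided sense demanded by the half-line formulation, which the minimality of the rotation supplies in \emph{both} directions; (ii) that the Jacobi parameters are bounded, which is needed for the cited theorem and follows from continuity of $v,c$; and (iii) that, because the zeros of $c$ are merely isolated, the orbit of a given $\theta$ need not actually meet a zero, so one genuinely argues via approximation and $\liminf$ rather than by decoupling $H_\theta$ itself. (See also \cite{JitomirskayaMarx_2012}, Proposition~7.1, where this reasoning is carried out.)
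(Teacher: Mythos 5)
Your argument is correct and matches the paper's intent exactly: the paper itself treats this as a "well known argument," citing precisely Dombrowski's theorem \cite{Dombrowsky_1978} and \cite{JitomirskayaMarx_2012}, Proposition 7.1, which is the reduction you spell out (minimality of the rotation plus analyticity forces $\liminf_{n\to\pm\infty}|c(\theta+n\alpha)|=0$, a rank-two perturbation decouples $H_\theta$ into two half-line pieces with ac spectrum preserved, and Dombrowski kills the ac part of each). Your remark (iii) that one must argue via $\liminf$ rather than exact decoupling — and, dually, that if the orbit does hit one of the finitely many zeros the operator decouples further into at most two half-lines plus finite blocks, to which the same reasoning applies — is exactly the right point of care.
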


In summary, combining Sec. \ref{subsec_nonsingjac} and \ref{susec_singjac}, the GT yields a full characterization of the spectral properties of {\em{both}} singular and non-singular Jacobi operators, provided one can establish the content of the following conjecture, which we call the {\em{critical energy conjecture}}:
\begin{conj}[Critical energy conjecture (CEC)] \label{conj_cec}
Let $\alpha$ be irrational and $H_\theta$ be a quasi-periodic Jacobi operator with analytic sampling functions.
\begin{itemize}
\item[(i)] If the Jacobi operator is {\em{non-singular}}, the spectrum on the set of critical energies is purely sc for $\mu$-a.e. $\theta$. 
\item[(ii)] If the Jacobi operator is {\em{singular}}, the spectrum on the set $\mathcal{Z}$ is purely sc for $\mu$-a.e. $\theta$.
\end{itemize}
\end{conj}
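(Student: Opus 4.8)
The plan is to derive the conjecture from the global theory by reducing it, in both cases, to one purely dynamical statement: that energies at which the Lyapunov exponent vanishes carry no point spectrum for $\mu$-a.e. phase. The global theory already controls the continuous part and the absolutely continuous part of the spectrum, so the only gap between ``critical (zero-Lyapunov) behavior'' and ``purely singular continuous spectrum'' is the possible presence of embedded eigenvalues; closing that gap is the entire content of the conjecture.

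I would first dispose of the absolutely continuous part. In case (i) let $\mathcal{C}\subseteq\mathcal{Z}$ denote the set of critical energies; by Theorem \ref{thm_coroafk} it has zero Lebesgue measure. For any $\psi\in\ell^2(\mathbb{Z})$ the absolutely continuous component $\nu^{\mathrm{ac}}_\psi$ of its spectral measure is absolutely continuous with respect to Lebesgue measure, hence $\nu^{\mathrm{ac}}_\psi(\mathcal{C})=0$, so $\chi_{\mathcal{C}}\,\nu_\psi$ is purely singular; this holds for every $\theta$. In case (ii), Proposition \ref{fact_singular} gives directly that the absolutely continuous spectrum of $H_\theta$ is empty for every $\theta$, so $\chi_{\mathcal{Z}}\,\nu_\psi$ is purely singular for every $\theta$. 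Hence in either case ``purely singular continuous on the relevant set'' is \emph{equivalent} to ``no atom of any spectral measure lies in that set.''

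Next I would localize where such an atom could sit. In case (i) the set $\mathcal{Z}$ is the disjoint union of $\mathcal{C}$ and the subcritical energies, and Theorem \ref{thm_subcritical} gives purely absolutely continuous spectrum on the subcritical energies for $\mu$-a.e. $\theta$ (while supercritical energies lie outside $\mathcal{Z}$ entirely); so case (i) reduces to: for $\mu$-a.e. $\theta$, $H_\theta$ has no eigenvalue in $\mathcal{C}$. In case (ii) one cannot peel off the subcritical part this way, but the target has the same form: for $\mu$-a.e. $\theta$, $H_\theta$ has no eigenvalue in $\mathcal{Z}$. Both statements thus collapse to the assertion that vanishing-Lyapunov-exponent energies support no point spectrum for $\mu$-a.e. $\theta$.

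This final step is the main obstacle; in the stated generality it is exactly the problem that keeps this a conjecture, and in the present paper it is established for the extended Harper's model as Theorem \ref{thm_point}, whose proof strategy I would imitate. One argues by contradiction from a nonzero $\ell^2$ solution of $H_\theta\psi=E\psi$ with $E\in\mathcal{Z}$ (and, in case (i), $E$ critical). At such an energy the complexified Lyapunov exponent vanishes at $\epsilon=0$ while the acceleration $\omega(E;0)$ is positive; consequently, along the continued-fraction denominators $q_n$ of $\alpha$, the transfer matrices are on the one hand controlled from above at the sub-exponential rate $\exp(\mathcal{O}(1/q_{n_l}))$ by Theorem \ref{prop_prozero} applied to the relevant analytic sampling function --- and it is precisely here that the Erd\H{o}s--Szekeres estimate of Section \ref{sec_rate} removes any arithmetic hypothesis on $\alpha$ --- while on the other hand an Aubry-duality/reflection argument fed by the $\ell^2$ decay of $\psi$ forces those same matrices to be large on the support of $\psi$. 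Balancing the two bounds along a good subsequence $q_{n_l}$ contradicts the square-summability of $\psi$, except for a zero-measure set of phases $\theta$ (the $\alpha$-rational phases, plus, in the degenerate symmetric cases, an additional set coming from almost-uniqueness of rational approximation). The genuinely hard point is to make this balance valid for \emph{all} irrational $\alpha$, not only Diophantine or topologically generic ones --- which is exactly the role of the number-theoretic input developed in Section \ref{sec_rate}.
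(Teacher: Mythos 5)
You correctly note that the CEC is an open conjecture and not a theorem, and the paper does not prove it in general; your proposal does not claim to either. Your reduction --- killing the ac part via Theorem \ref{thm_coroafk} (zero Lebesgue measure of the critical set) in case (i) and via Proposition \ref{fact_singular} in case (ii), peeling off the subcritical energies with Theorem \ref{thm_subcritical}, and isolating ``no eigenvalues on the critical (resp.\ zero-Lyapunov) set for $\mu$-a.e.\ $\theta$'' as the residual obstruction --- is exactly the framework the paper develops in Sections \ref{sec_avilasglobal}--\ref{sec_avilasglobal_applicehm}, and your identification of that residual step as what Theorem \ref{thm_point} supplies only for extended Harper's model is also correct.

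Your sketch of how Theorem \ref{thm_point} closes that step, however, mischaracterizes the contradiction. It is not a growth argument on transfer matrices over the support of $\psi$, and the final incompatibility is not with the square-summability of $\psi$. Rather, one introduces the degree-$n$ trigonometric polynomial $\Psi^{(n)}_{\sigma(\lambda)}=\mathrm{tr}\{d^{(n)}_{\sigma(\lambda)}(B^{E/\lambda_2}_{\sigma(\lambda);n}-R^n_\theta)\}$ and plays two estimates against each other: an upper bound $\Vert\Psi^{(q_{n_l})}_{\sigma(\lambda)}\Vert_{L^1(\mathbb{T})}=o(\mathrm{e}^{q_{n_l}I(\sigma(\lambda))})$, derived from the $L^2$-conjugacy $M_\theta$ of Proposition \ref{prop_det}, Cauchy--Schwarz, and the Erd\H{o}s--Szekeres-type control of $d^{(q_{n_l})}_{\sigma(\lambda)}$ from Theorem \ref{prop_prozero}; and a lower bound on the boundary Fourier coefficients $\vert\widehat{\Psi^{(q_{n_l})}_{\sigma(\lambda)}}(\pm q_{n_l})\vert$ obtained from an explicit tridiagonal-determinant computation (Lemma \ref{lem_1}, Proposition \ref{prop_lb}). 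These are incompatible by the trivial inequality $\vert\widehat{\Psi^{(n)}}(k)\vert\leq\Vert\Psi^{(n)}\Vert_{L^1(\mathbb{T})}$. The explicit computability of those boundary coefficients is a consequence of the model's algebraic structure and is precisely the piece that does not transfer to a general analytic Jacobi operator --- which is why, as you rightly flag, the conjecture remains open.
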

\begin{remark} \label{rem_CEC}
The CEC yields a sought-after {\em{direct}} criterion for detecting presence of sc spectrum for quasi-periodic Jacobi operators with analytic sampling functions. Even though the CEC was at least implicit in \cite{JitomirskayaMarx_2012, JitomirskayaMarx_2013_erratum}, in the present form the CEC appears first in this article. We also mention that it can be considered a special case of a problem posed by Damanik in \cite{Damanik_SimonFest_Kotani}, asking to prove or disprove that for ergodic Schr\"odinger operators, the set of zero LE does not contain any eigenvalues.
\end{remark} 

\section{Applications to extended Harper's model} \label{sec_avilasglobal_applicehm}
In \cite{JitomirskayaMarx_2012, JitomirskayaMarx_2013_erratum} we explicitly computed the complexified Lyapunov exponent for extended Harper's model, thereby identifying subcritical, critical, and supercritical energies for all values of $\lambda$ and all irrational $\alpha$. Theorem \ref{thm_complexLEEHM} summarizes these results and the arising phase diagram in the sense of the GT is depicted in Fig. \ref{figure_phased}. Theorem \ref{thm_complexLEEHM} in particular shows that respective type of behavior (i.e., subritical, supercritical, or critical) only depends on $\lambda$, i.e., is the same everywhere on the spectrum and is independent of $\alpha$. 
\begin{theorem}[Corollary 5.1. in \cite{JitomirskayaMarx_2012} and Sec. 4.5 in \cite{Marx_thesis}] \label{thm_complexLEEHM} 
For $\alpha$ irrational, all energies in the spectrum of extended Harper's model are
\begin{itemize}
\item[(i)] supercritical for all $\lambda \in I^\circ \cup \{ \lambda_1 + \lambda_3 = 0 ~,~ 0 < \lambda_2 < 1\}$,
\item[(ii)] subcritical for all $\lambda \in II^\circ \cup \{ \lambda_1 + \lambda_3 = 0 ~,~ \lambda_2 > 1\}$,
\item[(iii)] subcritical for all $\lambda \in III^\circ$ if $\lambda_1 \neq \lambda_3$
\item[(iv)] critical for all $\lambda \in III^\circ$ if $\lambda_1 = \lambda_3$
\item[(v)] critical for all $\lambda \in L_I \cup L_{II} \cup L_{III}$
\end{itemize}
\end{theorem}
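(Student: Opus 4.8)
The plan is to extract the classification directly from the explicit formula for the complexified Lyapunov exponent $\epsilon\mapsto L(E;\epsilon)$ of extended Harper's model, whose derivation for all $\lambda$ and all irrational $\alpha$ is the main computation of \cite{JitomirskayaMarx_2012, JitomirskayaMarx_2013_erratum}. By Theorem \ref{thm_global}, for each $E\in\Sigma$ this function is convex, even and piecewise linear, so the GT-type of $E$ is decided by only two numbers: the value $L(E)=L(E;0)\ge 0$ and the right derivative $2\pi\,\omega(E;0)$ at $\epsilon=0$. Supercriticality means $L(E)>0$; among energies with $L(E)=0$, subcriticality means the graph is flat through $\epsilon=0$, i.e. $\omega(E;0)=0$, and criticality means it has a corner there, i.e. $\omega(E;0)>0$. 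So the theorem reduces to pinning down the shape of $L(E;\cdot)$ near $\epsilon=0$ for each $\lambda$ and checking that this shape depends neither on $E\in\Sigma$ nor on $\alpha$.

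First I would analyze $|\epsilon|$ large. Complexified, each entry of the Jacobi cocycle $A^E(\theta+i\epsilon)$ is a degree-one trigonometric polynomial in $e^{2\pi i\theta}$, and for $\epsilon\to+\infty$ (assuming $\lambda_1,\lambda_3>0$; the cases $\lambda_1\lambda_3=0$ are analogous) one has $A^E(\theta+i\epsilon)=e^{2\pi\epsilon}e^{-2\pi i\theta}\big(C+o(1)\big)$ with $C=\begin{pmatrix}-1 & -\lambda_3 e^{\pi i\alpha}\\ \lambda_1 e^{-\pi i\alpha} & 0\end{pmatrix}$ a \emph{constant} matrix; telescoping the modulus-one scalar prefactors gives $L(\alpha,A^E_\epsilon)=2\pi\epsilon+\log\rho(C)+o(1)$, where $\rho(C)$ is the spectral radius. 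Hence the acceleration equals $1$ for all large $\epsilon$ and, by evenness, $-1$ for all very negative $\epsilon$; convexity and quantization of the acceleration (Theorem \ref{thm_global}) then force $L(E;\cdot)$ to be piecewise linear with outer slopes $\pm2\pi$ and only finitely many break points, all in a bounded interval. So the entire graph is determined once $L(E)$ and the break points are known.

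Next I would feed in the Lyapunov-exponent information already available. In $I^\circ$ one has $L(E)>0$ on $\Sigma$ (Herman's subharmonicity argument, carried out for this model in \cite{JitomirskayaKosloverSchulteis_2005}), which gives (i) on $I^\circ$; the direct almost Mathieu computation $L(E)=\max\{0,-\log\lambda_2\}$ on the strip $\lambda_1+\lambda_3=0$ gives the rest of (i) and the $\lambda_2>1$ part of (ii). Throughout $III\cup L_{II}$, and also on $II^\circ$, one has $L(E)=0$ on $\Sigma$; moreover $II^\circ$ is the Aubry dual of the supercritical $I^\circ$, and running the Region-I graph (whose acceleration is $1$ for all $\epsilon>0$) through the explicit duality relation of \cite{JitomirskayaMarx_2012} yields a graph flat through $\epsilon=0$, so $\omega(E;0)=0$ on $II^\circ$: case (ii). There remains the subcritical-versus-critical alternative on $III$ and on the three line segments, i.e. whether $L(E;\cdot)$ is flat or has a corner at $\epsilon=0$.

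This last step is the crux, and the deciding feature is whether $c_\lambda$ vanishes on $\mathbb{T}$. Solving $\lambda_3 w^2+\lambda_2 w+\lambda_1=0$ on $|w|=1$ shows $c_\lambda$ has a zero on $\mathbb{T}$ precisely when $\lambda_1=\lambda_3$ and $\lambda_2\le\lambda_1+\lambda_3$, or when $\lambda_1+\lambda_3=\lambda_2$ (the $w=-1$ zero); inside $III^\circ$ this is exactly $\lambda_1=\lambda_3$, and it also covers all of $L_{III}$. When $c_\lambda$ has such a zero, Jensen's formula makes $\epsilon\mapsto\int_{\mathbb{T}}\log|c_\lambda(\theta+i\epsilon)|\,d\mu(\theta)$ non-smooth at $\epsilon=0$ with a strictly positive jump in slope; since $L(\alpha,A^E_\epsilon)\ge\tfrac12\int_{\mathbb{T}}\log|c_\lambda(\theta+i\epsilon)\,\widetilde{c}_\lambda(\theta-\alpha+i\epsilon)|\,d\mu(\theta)$, with the right-hand side equal to $\int_{\mathbb{T}}\log|c_\lambda|\,d\mu$ at $\epsilon=0$ (because $L(E)=0$ here) and inheriting that corner, while $L(E;0)=0$ and $L(E;\cdot)$ is convex, we conclude $\omega(E;0)>0$, so $E$ is critical — giving (iv) and the $L_{III}$ part of (v). When instead $\lambda_1\ne\lambda_3$ in $III^\circ$, the cocycle is non-singular, the same symmetrized-determinant term is real-analytic through $\epsilon=0$, and one shows (as in \cite{JitomirskayaMarx_2012}) that $L(E;\cdot)\equiv0$ near $\epsilon=0$, i.e. $\omega(E;0)=0$: case (iii). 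The remaining critical cases, $L_I$ and $L_{II}$ (generically zero-free), are instead of critical-almost-Mathieu type: $L(E)=0$ there by the above, but $L_I$ lies on the boundary of the supercritical region and $L_{II}$ is the duality-fixed line, and in both the acceleration still jumps at $\epsilon=0$, exactly as for the critical almost Mathieu operator at $\lambda_2=1$. Uniformity in $E$ and $\alpha$ is automatic, since these enter only through the $o(1)$ in the large-$\epsilon$ asymptotics, which affects neither $L(E;0)$ nor $\omega(E;0)$. The main obstacle is the explicit evaluation of the complexified Lyapunov exponent across the whole three-dimensional parameter space, particularly the control of the genuinely singular cocycles ($\lambda_1=\lambda_3$, or $\lambda_1+\lambda_3=\lambda_2$), where the usual $SL(2)$ machinery does not apply directly; this is the technical heart of \cite{JitomirskayaMarx_2012}.
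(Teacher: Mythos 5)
This statement is not proved in the present paper at all: Theorem~\ref{thm_complexLEEHM} is quoted from Corollary~5.1 of \cite{JitomirskayaMarx_2012} (as corrected in \cite{JitomirskayaMarx_2013_erratum}) and Sec.~4.5 of \cite{Marx_thesis}, where the complexified Lyapunov exponent $\epsilon\mapsto L(E;\epsilon)$ of extended Harper's model is computed explicitly for all $\lambda$ and all irrational $\alpha$. Your proposal reconstructs the strategy of those references faithfully at the structural level --- the reduction, via Theorem~\ref{thm_global}, to the two numbers $L(E;0)$ and $\omega(E;0)$; the large-$|\epsilon|$ asymptotics giving outer slopes and quantized acceleration; Herman's estimate in $I^\circ$ and the almost Mathieu line $\lambda_1+\lambda_3=0$ for (i); duality for (ii). The Jensen-type lower bound
\[
L(E;\epsilon)\ \ge\ \tfrac12\bigl(J(\epsilon)+J(-\epsilon)\bigr)-J(0),\qquad J(\epsilon)=\int_{\mathbb T}\log|c_\lambda(\theta+i\epsilon)|\,d\mu(\theta),
\]
combined with $L(E)=0$ on $\mathcal{SD}$, is a clean, self-contained way to force $\omega(E;0)>0$ whenever $c_\lambda$ vanishes on $\mathbb T$; this correctly handles (iv) and all of $L_{III}$ (plus the exceptional singular points of $L_I$ and $L_{II}$), and is a pleasant simplification compared to what the cited references spell out.

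There is, however, a genuine gap in the remaining cases. For (iii) you assert that because the cocycle is non-singular and the symmetrized determinant integral is analytic through $\epsilon=0$, ``one shows (as in \cite{JitomirskayaMarx_2012}) that $L(E;\cdot)\equiv0$ near $\epsilon=0$.'' But non-singularity and analyticity of $\int\log|c_\lambda(\cdot+i\epsilon)|\,d\mu$ say nothing by themselves about flatness of $L(E;\cdot)$: the critical almost Mathieu operator has $c\equiv1$, manifestly non-singular, and yet $\omega(E;0)=1$. Whether $III^\circ$ with $\lambda_1\neq\lambda_3$ is subcritical is precisely a consequence of the explicit evaluation, not of a soft structural feature, and your sketch does not supply it. The same issue, more acutely, affects $L_I$ and the zero-free part of $L_{II}$ in (v): there the Jensen lower bound is vacuous (no zeros), and the claim that ``the acceleration still jumps at $\epsilon=0$, exactly as for the critical almost Mathieu operator'' is a conjecture by analogy, not an argument --- it is exactly the content of the case analysis in \cite{JitomirskayaMarx_2012}. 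You do acknowledge at the end that the explicit computation is the technical heart of the matter, which is honest, but as a blind proof this leaves (iii) and the non-singular portions of (v) undemonstrated.
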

\begin{figure}[ht] 
\centering
\subfigure[$\lambda_1 \neq \lambda_3$]{
\includegraphics[width=0.4\textwidth]{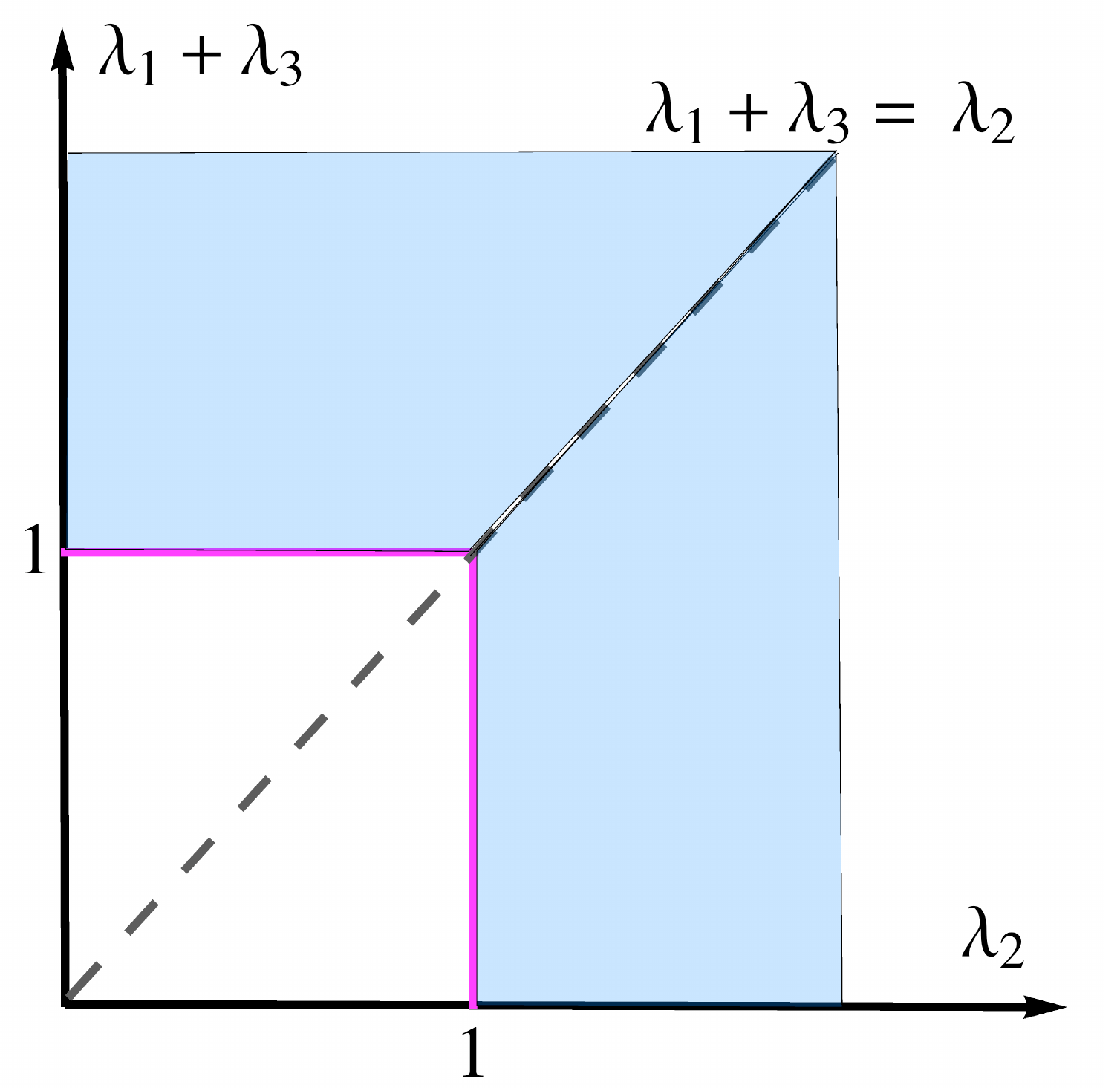} \label{figure_phased_aniso}
}
\subfigure[$\lambda_1 = \lambda_3$]{
\includegraphics[width=0.4\textwidth]{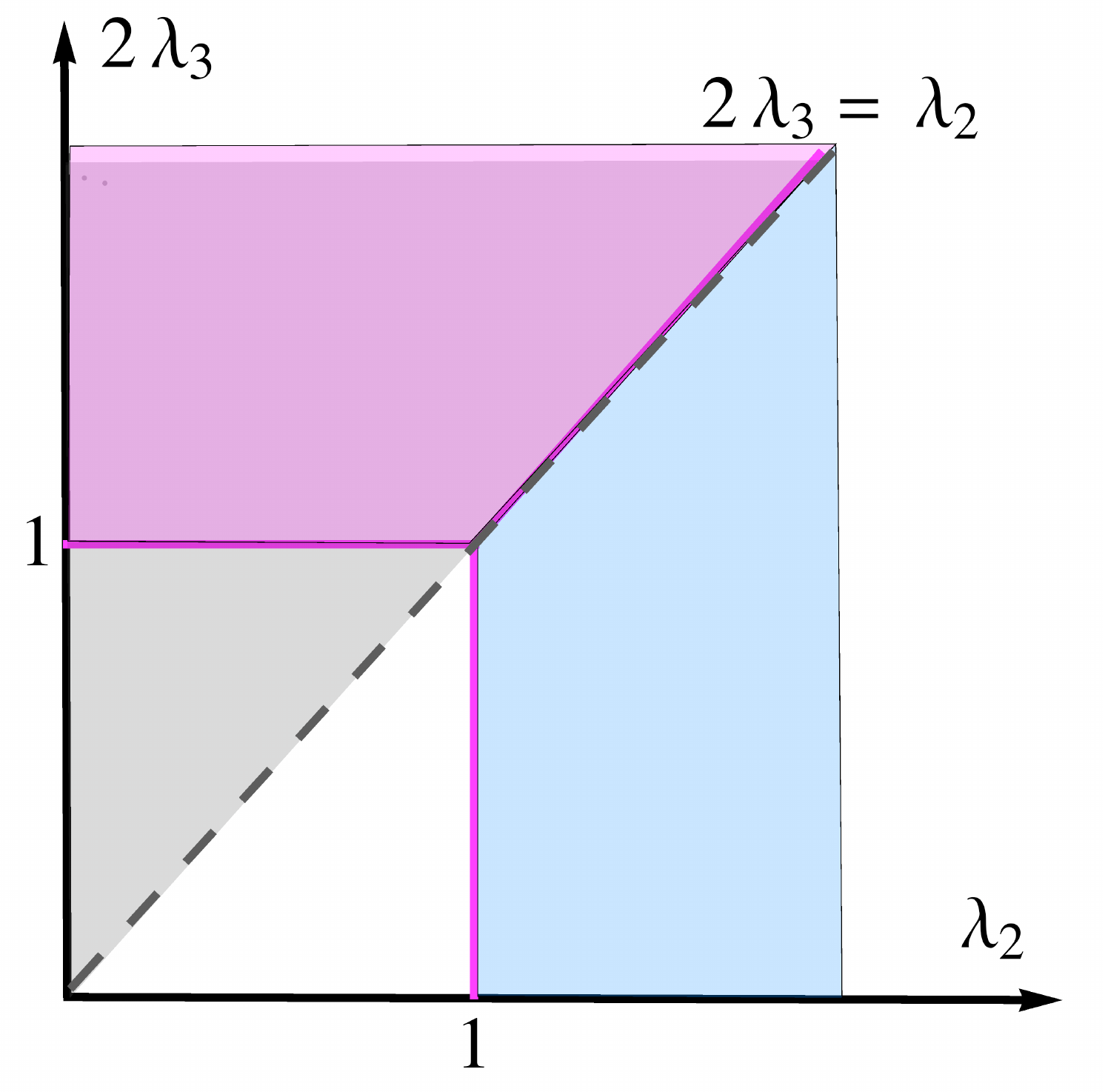} \label{figure_phased_iso}
}
\caption{Phase diagram for extended Harper's model in the sense of GT. Based on Proposition \ref{obs_cfun}, the dashed grey line and shaded grey area (panel (b)) indicate singularity of the underlying Jacobi operator.  Subcritical behavior is shown in light blue. Areas in red indicate critical behavior if extended Harper's model is non-singular, and zero LE, if extended Harper's model is singular (cf Proposition \ref{fact_singular}). The remaining region, corresponding to $0 \leq \lambda_1 + \lambda_3 < 1$ and $0 \leq \lambda_2 <1$ shows supercritical behavior. } 
\label{figure_phased}
\end{figure}

Notice that Theorem \ref{thm_complexLEEHM} exhibits a symmetry-induced transition in $III^\circ$ from subcritical behavior, if $\lambda_1 \neq \lambda_3$, to critical behavior, if $\lambda_1 = \lambda_3$; as a consequence of Theorem \ref{thm_point}, the latter results in the spectral collapse from ac to sc spectrum given in Theorem \ref{thm_ehmspectral}.

Moreover, the presence of singularities of extended Harper's model is quantified by the following proposition, which is easily verified by direct computation:
\begin{prop} \label{obs_cfun}
Letting $z=\theta + i \epsilon$, $\theta \in \mathbb{T}$, $c_{\lambda}(z)$ has at most {\em{two}} zeros. Necessary conditions for real roots are $\lambda_{1} = \lambda_{3}$ or $\lambda_{1} + \lambda_{3} = \lambda_{2}$. Moreover,
\begin{itemize}
\item[(a)] for $\lambda_{1} = \lambda_{3}$, $c_{\lambda}(z)$ has real roots if and only if $2\lambda_{3} \geq \lambda_{2}$, determined by
\begin{equation} \label{eq_conditioncfun}
2 \lambda_3 \cos(2 \pi (\theta + \frac{\alpha}{2} )) = - \lambda_2 ~\mbox{,}
\end{equation}
and giving rise to a double root at $\theta = \frac{1}{2} - \frac{\alpha}{2}$ if $\lambda_2 = 2 \lambda_3$.
\item[(b)] for  $\lambda_1 \neq \lambda_3$, $c_\lambda(\theta)$ has only one simple real root at $\theta = \pm \frac{1}{2} - \frac{\alpha}{2}$ if $\lambda_1 + \lambda_3 = \lambda_2$.
\end{itemize}
\end{prop}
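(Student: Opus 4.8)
The plan is to reduce everything to the behavior of a single quadratic polynomial. Introduce the substitution $w := \mathrm{e}^{2\pi i(z + \alpha/2)}$, which is a local biholomorphism from a horizontal strip around $\mathbb{R}$ onto an annulus, and rewrite $c_\lambda$ from (\ref{eq_hamiltonian1}) as $c_\lambda(z) = w^{-1} P(w)$ with $P(w) := \lambda_3 w^2 + \lambda_2 w + \lambda_1$. Since $P$ has degree at most two and $w \mapsto z$ is injective on a fundamental domain of the strip, the zeros of $c_\lambda$ (counted with multiplicity) are in bijection with the roots of $P$ lying on the relevant annulus; in particular there are at most two of them, and a double root of $P$ at $w_0 \neq 1$ translates into a double zero of $c_\lambda$ at the corresponding point $z_0$.

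Next I would identify the \emph{real} zeros of $c_\lambda$ (i.e.\ zeros with $\epsilon = 0$) with roots $w$ of $P$ on the unit circle $|w| = 1$, using $|w| = 1 \iff z \in \mathbb{R}$. Since $\lambda \in \mathbb{R}^3$, the roots of $P$ are either both real or a complex-conjugate pair. For a conjugate pair $w, \bar w$ the product relation gives $|w|^2 = \lambda_1/\lambda_3$, so a root on $|w|=1$ forces $\lambda_1 = \lambda_3$; for a genuinely real root on $|w|=1$ one must have $w = \pm 1$, and $P(1) = \lambda_1 + \lambda_2 + \lambda_3$ cannot vanish under the normalization $\lambda_2 \geq 0$, $\lambda_1 + \lambda_3 \geq 0$ with not all $\lambda_i$ zero, while $P(-1) = 0$ is exactly $\lambda_1 + \lambda_3 = \lambda_2$. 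This yields the stated necessary condition $\lambda_1 = \lambda_3$ or $\lambda_1 + \lambda_3 = \lambda_2$.

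It then remains to run the two cases. For $\lambda_1 = \lambda_3$ (so $\lambda_3 \geq 0$): if $\lambda_3 = 0$ then $\lambda_2 > 0$, $c_\lambda \equiv \lambda_2$ has no zeros and $2\lambda_3 < \lambda_2$, consistent; if $\lambda_3 > 0$, the roots of $\lambda_3 w^2 + \lambda_2 w + \lambda_3$ have product $1$, hence lie on $|w|=1$ precisely when they are non-real, i.e.\ when the discriminant $\lambda_2^2 - 4\lambda_3^2 < 0$, equivalently $2\lambda_3 > \lambda_2$; at the threshold $\lambda_2 = 2\lambda_3$ one gets the double root $w = -1$, i.e.\ $z_0 = \tfrac12 - \tfrac{\alpha}{2}$; and for $2\lambda_3 < \lambda_2$ the two real roots are $\{t, 1/t\}$ with $t \neq \pm1$, so neither lies on the unit circle. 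On the unit circle the relation $w + \bar w = -\lambda_2/\lambda_3$ reads $2\lambda_3\cos(2\pi(\theta+\alpha/2)) = -\lambda_2$, which is (\ref{eq_conditioncfun}). For $\lambda_1 \neq \lambda_3$ with $\lambda_1 + \lambda_3 = \lambda_2$: $w = -1$ is a root of $P$, and the other root equals $-\lambda_1/\lambda_3$ (or $P$ is linear if $\lambda_3 = 0$), which is real and distinct from $-1$ because $\lambda_1 \neq \lambda_3$, hence off the unit circle; so $c_\lambda$ has the single simple real zero at $z = \pm\tfrac12 - \tfrac{\alpha}{2}$ on $\mathbb{T}$, which is (b).

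The computation is elementary throughout; the only points requiring a little care are (i) keeping the multiplicity bookkeeping under the change of variables, so that a double root of $P$ is correctly seen as a double zero of $c_\lambda$, and (ii) invoking the sign normalization $\lambda_2 \geq 0$, $\lambda_1 + \lambda_3 \geq 0$ at the right moment to discard the spurious possibility $w = 1$. No genuine obstacle is expected.
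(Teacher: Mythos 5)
Your argument is correct, and it is precisely the ``direct computation'' the paper alludes to: the paper gives no proof of Proposition \ref{obs_cfun}, stating only that it is ``easily verified by direct computation.'' The substitution $w=\mathrm{e}^{2\pi i(z+\alpha/2)}$, the reduction to the quadratic $P(w)=\lambda_3 w^2+\lambda_2 w+\lambda_1$ restricted to the annulus, and the use of the product of roots $\lambda_1/\lambda_3$ to characterize when a conjugate pair lies on $|w|=1$, are exactly the natural route. Two small points are worth tightening. First, ``a double root of $P$ at $w_0\neq 1$'' should read $w_0\neq 0$: the relevant caveat is that $c_\lambda=w^{-1}P(w)$, but $w=0$ is never attained on the cylinder, so multiplicities carry over unconditionally; the exclusion of $w_0=1$ is a typo and plays no role. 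Second, the assertion that $P(1)=\lambda_1+\lambda_2+\lambda_3$ ``cannot vanish'' under the sign normalization is not literally true: it can vanish if $\lambda_2=0$ and $\lambda_1+\lambda_3=0$ simultaneously (e.g.\ $\lambda_1=1,\lambda_3=-1,\lambda_2=0$). However, this harms nothing — in that degenerate case $\lambda_1+\lambda_3=\lambda_2$ holds trivially, so the stated necessary condition is still met, and moreover that corner of parameter space lies outside the regions ($\mathrm{I}$, $\mathrm{II}$, $\mathrm{III}$ with $\lambda_2>0$, or the $\lambda_2=0$ case handled separately with $\lambda_1+\lambda_3\geq 1$) where the proposition is actually invoked. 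With those two clarifications the argument is complete.
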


Combining Theorem \ref{thm_complexLEEHM} with Proposition \ref{obs_cfun}, the content of Sec. \ref{sec_avilasglobal} reduces the proof of Theorem \ref{thm_ehmspectral} to excluding point-spectrum in the self-dual regime, as claimed by Theorem \ref{thm_point}. In particular, Theorem \ref{thm_point} establishes the CEC for the special case of extended Harper's model. 

\section{A dynamical formulation of Aubry-Andr\'e duality} \label{sec_Aubry}

Following, we will assume that $\lambda_2 > 0$, in which case Aubry-Andr\'e duality is expressed by the map $\sigma(\lambda)$ defined in (\ref{eq_sigma}). If $\lambda_2 =0$, the theorems of Sec. \ref{sec_Aubry} and \ref{sec_selfdual} may be adapted to still hold true. Since the underlying ideas are analogous, we postpone the details to Appendix \ref{app_zeronn}. 

First, recall that the solutions to the time-independent Schr\"odinger equation $H_{\theta;\lambda,\alpha} \psi = E \psi$ over $\mathbb{C}^\mathbb{Z}$ can be generated iteratively using the transfer matrix
\begin{eqnarray} \label{eq_deftransfer}
B_{\lambda}^{E}(\theta) := \dfrac{1}{c_{\lambda}(\theta)} \begin{pmatrix} E - v(\theta) & -\overline{c_{\lambda}(\theta-\alpha)} \\ c_{\lambda}(\theta) & 0 \end{pmatrix} ~\mbox{.}
\end{eqnarray}
Since $c_{\lambda}$ and $v$ are analytic on $\mathbb{T}$, $B_{\lambda}^{E}(\theta)$ is well-defined except for the possibly finitely many $\theta \in \mathbb{T}$ where $c_\lambda(\theta) = 0$ (quantified in Proposition \ref{obs_cfun}). Given $\lambda$, let $\mathfrak{Z}(\lambda):=\{\theta \in \mathbb{T}: c_{\lambda}(\theta)=0\}$ and set $\mathbb{T}_0(\lambda):= \mathbb{T} \setminus \cup_{\theta \in \mathfrak{Z}(\lambda)} \mathcal{O}(z)$, where $\mathcal{O}(\theta):=\{\theta + n \alpha(\mathrm{mod}1), n \in \mathbb{Z}\}$. Clearly, $\mu(\mathbb{T}) = \mu(\mathbb{T}_0(\lambda)) = 1$.

Thus, fixing $\lambda$, for all $\theta \in \mathbb{T}_0(\lambda)$ (and hence $\mu$-a.e. on $\mathbb{T}$), solutions $\psi \in \mathbb{C}^\mathbb{Z}$ of $H_{\theta;\lambda,\alpha} \psi = E \psi$ are generated by iterating the measurable cocycle $(\alpha, B_\lambda^E)$:
\begin{eqnarray} 
& \begin{pmatrix} \psi_{n} \\ \psi_{n-1} \end{pmatrix} = B_{\lambda; n}^{E}(\alpha,\theta) \begin{pmatrix} \psi_{0} \\ \psi_{-1} \end{pmatrix}  ~\mbox{,} \\
& B_{\lambda; n}^{E}(\alpha,\theta) := B_{\lambda}(\theta + \alpha (n-1)) \dots  B_{\lambda}(\theta) ~\mbox{,} \\
& B_{\lambda; -n}^{E}(\alpha,\theta) := B_{\lambda; n}^{E}(\alpha,\theta - n \alpha)^{-1} ~\mbox{,} ~n\geq 1 ~\mbox{.} \label{eq_iterates}
\end{eqnarray}

Suppose now that for some $\lambda$ and $\theta$, $H_{\theta;\lambda, \alpha}$ has an eigenvalue $E \in \mathbb{R}$ with respective eigenvector $(u_n)$. Then, considering its Fourier transform, 
\begin{equation}
u(x) : = \sum_{n\in\mathbb{Z}} u_n \mathrm{e}^{2 \pi i n x} \in L^2(\mathbb{T})\setminus \{0\} ~\mbox{,}
\end{equation} 
and letting
\begin{equation}
M_\theta(x) = \left(\begin{array}{c c} u(x) & u(-x) \\ \mathrm{e}^{-2 \pi i \theta} u(x - \alpha) & \mathrm{e}^{2 \pi i \theta} u(-(x-\alpha)) \end{array}\right)  ~\mbox{,}
\end{equation}
Aubry-Andr\'e duality can be formulated as the $L^2$-semiconjugacy:
\begin{equation} \label{eq_semiconj}
B_{\sigma(\lambda)}^{E/\lambda_2}(x) M_\theta(x) = M_\theta(x+\alpha) R_\theta, ~R_\theta = \begin{pmatrix} \mathrm{e}^{2 \pi i \theta} & 0 \\ 0 & \mathrm{e}^{-2 \pi i \theta} \end{pmatrix} ~\mbox{.}
\end{equation} 

For all non-$\alpha$-rational phases (see Definition \ref{def_alpharational}), the semi-conjugacy of (\ref{eq_semiconj}) is in fact an $L^2$-conjugacy:
\begin{prop} \label{prop_det}
Let $\theta$ {\em{not}} be $\alpha$-rational. Then, for a.e. $x \in \mathbb{T}$, $\det M_\theta(x) \neq 0$. Moreover, for some $b > 0$, one has
\begin{equation}
\abs{\det M_\theta(x)} = \dfrac{b}{\abs{c(x - \alpha)}} ~\mbox{.}
\end{equation}
\end{prop}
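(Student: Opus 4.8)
\textbf{Proof plan for Proposition \ref{prop_det}.}

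The plan is to study the function $d(x) := \det M_\theta(x)$ by exploiting the conjugacy-like relation \eqref{eq_semiconj} together with the explicit column structure of $M_\theta$. First I would take determinants on both sides of \eqref{eq_semiconj}. Since $\det R_\theta = 1$ and $\det B_{\sigma(\lambda)}^{E/\lambda_2}(x) = \widetilde{c}_{\sigma(\lambda)}(x-\alpha)/c_{\sigma(\lambda)}(x)$ (from the form of the transfer matrix in \eqref{eq_deftransfer}), this yields a multiplicative cocycle identity of the shape
\begin{equation}
d(x+\alpha) = \frac{\widetilde{c}_{\sigma(\lambda)}(x-\alpha)}{c_{\sigma(\lambda)}(x)}\, d(x) ~\mbox{,}
\end{equation}
valid a.e. on $\mathbb{T}$ (one has to be careful about the zeros of $c_{\sigma(\lambda)}$, but these form an $\alpha$-invariant null set which can be removed as in the definition of $\mathbb{T}_0$). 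Rewriting this as $c_{\sigma(\lambda)}(x)\, d(x+\alpha) = \widetilde{c}_{\sigma(\lambda)}(x-\alpha)\, d(x)$, and observing that $|c_{\sigma(\lambda)}(x-\alpha)| = |\widetilde{c}_{\sigma(\lambda)}(x-\alpha)|$ on $\mathbb{T}$, the function $g(x) := |c_{\sigma(\lambda)}(x-\alpha)|\,|d(x)|$ satisfies $g(x+\alpha) = g(x)$ a.e., hence $g$ is a.e. constant by ergodicity of the irrational rotation. Unwinding the relation between $c_{\sigma(\lambda)}$ and the original $c_\lambda = c$ via the duality map $\sigma$ (using \eqref{eq_sigma} and \eqref{eq_hamiltonian1}) should produce exactly the claimed formula $|d(x)| = b/|c(x-\alpha)|$ for the appropriate constant $b \geq 0$.

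The substantive part is then to rule out $b = 0$, i.e.\ to show $d$ does not vanish identically; this is where non-$\alpha$-rationality of $\theta$ enters. I would argue as follows. Expanding $d(x) = \det M_\theta(x)$ using the definition of $M_\theta$ gives
\begin{equation}
d(x) = \mathrm{e}^{2\pi i\theta}\, u(x)\, u(\alpha - x) - \mathrm{e}^{-2\pi i\theta}\, u(-x)\, u(x-\alpha) ~\mbox{.}
\end{equation}
Suppose $d \equiv 0$. Then on $L^2(\mathbb{T})$ we have the identity $\mathrm{e}^{2\pi i\theta} u(x) u(\alpha-x) = \mathrm{e}^{-2\pi i\theta} u(-x) u(x-\alpha)$. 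Passing to Fourier coefficients, the left side is (up to the phase and a shift) the Fourier transform of the convolution/reflection product of $u$ with itself, and similarly on the right; comparing coefficients, the hypothesis forces a relation of the form $\sum_{m} u_m u_{k-m}\,(\text{phase depending on }\theta,\alpha,k) = 0$ for all $k$. The key point is that the phase factors $\mathrm{e}^{\pm 2\pi i(\theta + \text{integer multiple of }\alpha)}$ arising on the two sides differ, and since $\theta$ is not $\alpha$-rational, i.e.\ $2\theta \notin \mathbb{Z}\alpha + \mathbb{Z}$, these phases never coincide for the relevant indices, forcing all $u_m = 0$ — contradicting that $u \in L^2(\mathbb{T})\setminus\{0\}$. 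Thus $b > 0$ and $\det M_\theta(x) \neq 0$ a.e.

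The main obstacle I anticipate is the bookkeeping in this last step: correctly tracking the phase factors through the determinant expansion and the Fourier-coefficient comparison, and pinning down precisely which arithmetic condition on $\theta$ is needed to prevent cancellation — verifying that it is exactly $(\mathbb{Z}\alpha + 2\theta)\cap\mathbb{Z} = \emptyset$ and not something weaker or stronger. A secondary technical point is justifying all the a.e.\ manipulations (products of $L^2$ functions need not be $L^1$ a priori, so one may want to first work formally at the level of Fourier series, or restrict to the dense subspace where $u$ is nice, then pass to the limit). Once $b \neq 0$ is established, the formula $|\det M_\theta(x)| = b/|c(x-\alpha)|$ and hence a.e.\ non-degeneracy follow immediately from the ergodicity argument above, upgrading the $L^2$-semiconjugacy \eqref{eq_semiconj} to an $L^2$-conjugacy.
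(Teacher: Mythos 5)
Your first half — taking determinants in \eqref{eq_semiconj}, using $\det B_{\sigma(\lambda)}^{E/\lambda_2}(x) = \widetilde{c}_{\sigma(\lambda)}(x-\alpha)/c_{\sigma(\lambda)}(x)$ and ergodicity to get $|c_{\sigma(\lambda)}(x-\alpha)|\,|\det M_\theta(x)| = b$ a.e. for a constant $b \geq 0$ — is exactly the paper's argument and is fine. (The paper also notes that by the cocycle identity the set $\{x : \det M_\theta(x) = 0\}$ is rotation-invariant, hence of measure zero or one; this dichotomy is what makes the contradiction argument clean.)

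The second half, where you try to rule out $b=0$, has a genuine gap. You propose to expand $\det M_\theta(x) = \mathrm{e}^{2\pi i\theta} u(x)u(\alpha - x) - \mathrm{e}^{-2\pi i\theta} u(-x)u(x-\alpha)$ and compare Fourier coefficients of the two bilinear products. But if you actually do this, the $k$-th coefficient of $u(x)u(\alpha - x)$ is $A_k := \sum_n u_{n+k}u_n\,\mathrm{e}^{2\pi i n\alpha}$ while the $k$-th coefficient of $u(-x)u(x-\alpha)$ is $B_k := \sum_m u_m u_{m+k}\,\mathrm{e}^{-2\pi i(m+k)\alpha}$. These are \emph{different} quadratic expressions in $(u_n)$ — they are not equal, nor related by conjugation or a pure phase, because $u$ has no reality or symmetry constraint. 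The identity $d\equiv 0$ gives $\mathrm{e}^{2\pi i\theta}A_k = \mathrm{e}^{-2\pi i\theta}B_k$ for all $k$, which is a bilinear relation that does not by itself force any $u_m$ to vanish. Your key sentence — ``these phases never coincide, forcing all $u_m = 0$'' — would only be a valid deduction if $A_k$ and $B_k$ were the same quantity; they are not, so the argument does not close.

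The paper avoids this by \emph{factoring} rather than expanding: if $\det M_\theta \equiv 0$ a.e., the two columns of $M_\theta(x)$ are a.e.\ parallel, so there is a scalar $\phi(x) = u(x)/u(-x)$ with $\bigl(\begin{smallmatrix} u(x) \\ \mathrm{e}^{-2\pi i\theta}u(x-\alpha) \end{smallmatrix}\bigr) = \phi(x)\bigl(\begin{smallmatrix} u(-x) \\ \mathrm{e}^{2\pi i\theta}u(\alpha-x) \end{smallmatrix}\bigr)$. Comparing the two rows yields the \emph{linear} cocycle equation $\phi(x+\alpha) = \mathrm{e}^{-4\pi i\theta}\phi(x)$ a.e.; ergodicity gives $|\phi|$ a.e.\ constant and nonzero (since $u\not\equiv 0$), so $\phi\in L^\infty\subset L^1$ and one can take Fourier coefficients of $\phi$ itself: $\hat\phi_n(\mathrm{e}^{2\pi i n\alpha + 4\pi i\theta} - 1) = 0$. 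Now the phase mismatch is applied to a single function rather than to a bilinear form, and non-$\alpha$-rationality of $\theta$ (i.e.\ $2\theta + n\alpha\notin\mathbb{Z}$ for all $n$) forces $\hat\phi_n = 0$ for all $n$, contradicting $\phi\not\equiv 0$. The ratio $\phi$ is therefore not bookkeeping — it is the essential structural step that converts the quadratic identity into a linear rotation-eigenfunction equation where the Fourier argument actually bites. (Your secondary worry about $L^1$ of the product is not a problem — two $L^2$ functions multiply into $L^1$ by Cauchy–Schwarz — but that was never the weak point.)
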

The statement is known for analytic Schr\"odinger operators where it played a significant role in a {\em{quantitative}} version of the Aubry-Andr\'e duality \cite{AvilaJitomirskaya_2010}. Since the proof of Proposition \ref{prop_det} only requires slight modifications of the Schr\"odinger case, we defer it to Appendix \ref{app_aubry}.

In summary we have thus arrived at the following characterization of solutions of dual points in parameter space:
\begin{prop} \label{coro_bddsol}
For given irrational $\alpha$, suppose $\lambda$ and $\theta$ are such that $H_{\theta; \lambda, \alpha}$ has an eigenvalue $E \in \mathbb{R}$. If $\theta$ is {\em{not}} $\alpha$-rational, the cocycle $(\alpha, B_{\sigma(\lambda)}^{E/\lambda_2})$ is $L^2$-conjugate to the complex rotation $(\alpha, R_\theta)$. In particular, 
if the eigenfunction associated with $E$ is in $l^1(\mathbb{Z})$, one has
\begin{equation} \label{eq_growthtransferm}
\sup_{x \in \mathbb{T}} \Vert  B_{\sigma(\lambda);n}^{E/\lambda_2}(x) \Vert = \mathcal{O}(1) ~\mbox{.}
\end{equation}
\end{prop}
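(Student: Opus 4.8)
The plan is to read off both assertions from the semiconjugacy (\ref{eq_semiconj}), upgraded to a genuine conjugacy by Proposition \ref{prop_det}.

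First I would record the setup: an eigenvalue $E$ of $H_{\theta;\lambda,\alpha}$ comes with an eigenvector $(u_n)\in\ell^2(\mathbb{Z})\setminus\{0\}$, so $u(x)=\sum_n u_n\mathrm{e}^{2\pi i n x}\in L^2(\mathbb{T})\setminus\{0\}$, the matrix $M_\theta$ has $L^2(\mathbb{T})$ entries, and (\ref{eq_semiconj}) holds for a.e.\ $x$. Since $\theta$ is not $\alpha$-rational, Proposition \ref{prop_det} gives that $M_\theta(x)$ is invertible for a.e.\ $x$, with $\abs{\det M_\theta(x)}=b/\abs{c(x-\alpha)}$ for some $b>0$. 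Solving (\ref{eq_semiconj}) for $B_{\sigma(\lambda)}^{E/\lambda_2}$ then yields
\[
B_{\sigma(\lambda)}^{E/\lambda_2}(x)=M_\theta(x+\alpha)\,R_\theta\,M_\theta(x)^{-1}\qquad\text{a.e.,}
\]
which is exactly the claimed $L^2$-conjugacy of $(\alpha, B_{\sigma(\lambda)}^{E/\lambda_2})$ to the rotation $(\alpha, R_\theta)$ via $M_\theta$; one checks in passing that $M_\theta^{-1}$ also has $L^2(\mathbb{T})$ entries, since dividing by $\det M_\theta$ amounts in modulus to multiplying by the bounded factor $\abs{c(x-\alpha)}/b$.

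For the quantitative conclusion I would add the hypothesis $(u_n)\in\ell^1(\mathbb{Z})$. Then the Fourier series of $u$ converges absolutely, so $u$ is continuous with $\norm{u}_\infty\le\sum_n\abs{u_n}$; hence $M_\theta$ is continuous with $\sup_{x}\norm{M_\theta(x)}<\infty$, and the conjugacy now holds pointwise on the $\alpha$-invariant full-measure set $\mathbb{T}_0(\sigma(\lambda))$ on which $B_{\sigma(\lambda)}^{E/\lambda_2}$ and all its iterates are defined. Telescoping the iterates — which collapses cleanly because $R_\theta$ is independent of the base point — gives
\[
B_{\sigma(\lambda);n}^{E/\lambda_2}(x)=M_\theta(x+n\alpha)\,R_\theta^{\,n}\,M_\theta(x)^{-1},
\]
and since $R_\theta$ is diagonal with unimodular entries, $\norm{R_\theta^{\,n}}$ is bounded in $n$. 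Thus $\norm{B_{\sigma(\lambda);n}^{E/\lambda_2}(x)}\le C\,\norm{M_\theta(x+n\alpha)}\,\norm{M_\theta(x)^{-1}}$ for some constant $C$; the first factor is uniformly bounded by the previous step, and for the second I would write $M_\theta(x)^{-1}=(\det M_\theta(x))^{-1}\,\mathrm{adj}\,M_\theta(x)$, noting that $\mathrm{adj}\,M_\theta$ has the same (bounded) entries as $M_\theta$ up to sign while $\abs{\det M_\theta(x)}^{-1}=\abs{c(x-\alpha)}/b$ is bounded by continuity of $c$ on $\mathbb{T}$. This gives (\ref{eq_growthtransferm}).

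The content of the argument is genuinely light — everything is powered by (\ref{eq_semiconj}) and Proposition \ref{prop_det} — and the only point I would flag, which is precisely where the Jacobi setting departs from the Schr\"odinger one, is the control of $M_\theta^{-1}$: the possible zeros of $c$ on $\mathbb{T}$ (present exactly when the Jacobi operator is singular) make $\det M_\theta$ \emph{large} rather than small, so $M_\theta^{-1}$ stays uniformly bounded and no Diophantine hypothesis on $\alpha$ is needed. The remaining care is purely bookkeeping: restricting to the $\alpha$-orbit–invariant full-measure set $\mathbb{T}_0(\sigma(\lambda))$ on which the transfer matrices and the telescoping identity make sense pointwise.
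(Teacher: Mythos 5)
Your argument is correct and is exactly the natural fleshing-out of what the paper leaves implicit. The paper supplies no proof of Proposition~\ref{coro_bddsol}: it introduces the statement with ``In summary we have thus arrived at\ldots'' and immediately after remarks that the Schr\"odinger analogue is Theorem~2.5 of \cite{AvilaJitomirskaya_2010}. So the intended reasoning is precisely yours: Proposition~\ref{prop_det} upgrades the semiconjugacy (\ref{eq_semiconj}) to an invertible $L^2$ change of coordinates, and when $(u_n)\in\ell^1$ the entries of $M_\theta$ and of its adjugate are uniformly bounded while $\abs{\det M_\theta}^{-1}=\abs{c(x-\alpha)}/b$ is bounded, so telescoping gives (\ref{eq_growthtransferm}). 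Two small remarks on your flagged point. First, for the $L^2$-conjugacy clause you should also note that $\log\abs{\det M_\theta}=\log b-\log\abs{c_{\sigma(\lambda)}(x-\alpha)}\in L^1(\mathbb{T})$ by analyticity of $c_{\sigma(\lambda)}$, since that integrability is part of the paper's definition of a conjugacy. Second, your remark that zeros of $c$ ``make $\det M_\theta$ large'' is harmless but slightly misleading: under the $\ell^1$ hypothesis $\det M_\theta$ is continuous and hence bounded, so Proposition~\ref{prop_det}'s formula forces $c_{\sigma(\lambda)}$ to be bounded away from zero --- in the singular case the $\ell^1$ hypothesis is simply incompatible with the setup, rather than an instance where the determinant blows up. This does not affect validity (the argument is vacuous in that case and in the actual application, region $II^\circ$ dual to $I^\circ$, the relevant $c_{\sigma(\lambda)}$ has no real zeros), but it is cleaner to phrase the boundedness of $M_\theta^{-1}$ purely via $\abs{\det M_\theta}^{-1}\le\norm{c_{\sigma(\lambda)}}_\infty/b$ without invoking a picture of the determinant growing.
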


As mentioned earlier, the analogues of Propositions \ref{prop_det} and \ref{coro_bddsol} are known for analytic Schr\"odinger operators \cite{AvilaJitomirskaya_2010}, see Theorem 2.5 therein. 

To conclude, we apply Proposition \ref{coro_bddsol} to the interior of region $\mathrm{II}$. As usual, $\alpha$ is called {\em{Diophantine}} if 
\begin{equation} \label{eq_diophcond}
\vert \sin(2 \pi n \alpha) \vert > \dfrac{\kappa}{\vert n \vert^r} ~\mbox{, } n \in \mathbb{Z} \setminus \{0\} ~\mbox{,}
\end{equation}
for some $r>1$ and $\kappa>0$. We make use of the following result:
\begin{theorem}[Theorem 1 in \cite{JitomirskayaKosloverSchulteis_2005}] \label{thm_jks}
Let $\alpha$ be Diophantine and fix $\lambda \in \mathrm{I}^\circ$. For a full measure set of phases, $H_{\theta;\lambda, \alpha}$ is purely point with exponentially localized eigenfunctions. 
\end{theorem}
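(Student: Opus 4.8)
The statement is the non-perturbative Anderson localization theorem for extended Harper's model in $\mathrm{I}^\circ$, and the plan is to prove it by the now-standard scheme (in the spirit of \cite{j,bg}) whose sole ``hard'' analytic input is positivity of the Lyapunov exponent: by Theorem~\ref{thm_complexLEEHM}(i) all energies in the spectrum $\Sigma$ are supercritical for $\lambda\in\mathrm{I}^\circ$, i.e.\ $L(E)>0$ on $\Sigma$. By a Schnol/Berezanskii argument it then suffices to show that for $\mu$-a.e.\ $\theta$ every polynomially bounded solution $\psi$ of $H_{\theta;\lambda,\alpha}\psi=E\psi$ (for any $E\in\Sigma$) decays exponentially; such $\psi$ are then genuine $\ell^2$ eigenvectors, so the spectrum is pure point, and the same quantitative bound yields exponential localization of the eigenfunctions. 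The exceptional phases will be organized into a single $\mu$-null ``doubly resonant'' set, depending on the fixed Diophantine $\alpha$ and on $\lambda$ but not on $E$.

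The first technical step is a large deviation theorem (LDT) for the transfer cocycle $B^E_\lambda$ of (\ref{eq_deftransfer}): for fixed $E$ and large $n$, the set of $\theta$ with $\bigl|\tfrac1n\log\Vert B^E_{\lambda;n}(\alpha,\theta)\Vert-\bigl(L(E)+\int_\mathbb{T}\log|c_\lambda|\,\ud\mu\bigr)\bigr|>n^{-\tau}$ has measure $\le e^{-n^{\eta}}$. For a one-frequency analytic (here meromorphic, because of the $1/c_\lambda$ prefactor) cocycle this follows from subharmonicity of $\tfrac1n\log\Vert\cdot\Vert$ together with bounds on the number of near-resonant small divisors, in the sharp one-frequency form that does not even need $\alpha$ Diophantine. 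The only genuinely new point relative to the Schr\"odinger case is to absorb the prefactor $\prod_{j}|c_\lambda(\theta+j\alpha)|^{-1}$: by Proposition~\ref{obs_cfun} the zeros of $c_\lambda$ on $\mathbb{T}$ are finitely many, and under the Diophantine condition the partial sums $\sum_j\log\vertiii{\,\cdot\,}$ along the orbit grow only like $O(\log n)$, so the prefactor changes the estimate by a subexponential amount.

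The heart of the argument is the passage from the LDT to exponential decay via Cram\'er's rule and the elimination of double resonances. Denoting by $P_{[a,b]}(\theta,E)$ the relevant determinant of the truncation $H_{[a,b]}$, one has the Green's-function identity bounding $|\psi(k)|$ by (a sum of two such determinants over subintervals) times $\max_{\pm}|\psi(x_\pm)|$ divided by $|P_{[x_1,x_2]}(\theta,E)|$; the LDT gives $\tfrac1m\log|P_{[a,a+m-1]}(\theta,E)|\approx L(E)+\int\log|c_\lambda|\,\ud\mu$ for $a$ outside an exponentially small set, so a block is ``$(E,\theta)$-regular'' except on such a set, and then its Green's function decays off the diagonal at rate $L(E)$. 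Using the Diophantine condition on $\alpha$ one shows that a block around $0$ and a block around $k$ cannot both be singular for large $|k|$ (a double resonance would force $\vertiii{m\alpha}$, or $\vertiii{2\theta+m\alpha}$ for $\theta$ in the exceptional set, to be impossibly small); hence every large $k$ lies in a regular block, and iterating the Green's-function expansion yields $|\psi(k)|\le e^{-c|k|}$ with $c$ close to $L(E)$.

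The step I expect to be the main obstacle is the resonance bookkeeping: one must (i) make the ``bad'' set of $\theta$ a single $\mu$-null set, \emph{uniform in $E\in\Sigma$}, which requires covering $\Sigma$ at scale $n$ by $O(e^{\varepsilon n})$ intervals and controlling how the exceptional $\theta$ depend on $E$; and (ii) handle the true Jacobi features inside the determinant and Green's-function identities --- the non-constant $c_\lambda$, and, when $\lambda_1=\lambda_3$ or $\lambda_1+\lambda_3=\lambda_2$, its vanishing on $\mathbb{T}$ --- tracking the $c_\lambda$ factors carefully so that the accumulated logarithmic singularities from $\prod_j|c_\lambda(\theta+j\alpha)|^{-1}$, tamed by the Diophantine condition, do not eat up the exponential gain. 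Once (i) and (ii) are in place, pure point spectrum with exponentially localized eigenfunctions for $\mu$-a.e.\ $\theta$ follows exactly as in \cite{JitomirskayaKosloverSchulteis_2005}.
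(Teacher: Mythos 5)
The paper does not prove Theorem~\ref{thm_jks}; it cites it as Theorem~1 of \cite{JitomirskayaKosloverSchulteis_2005} and uses it as a black box to deduce Theorem~\ref{thm_dualregime}. There is therefore no in-paper proof to compare against, only the cited source.

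As a proof sketch in its own right, your outline follows the generic non-perturbative localization template, but the $E$-uniformity device you propose --- covering $\Sigma$ at scale $n$ by $O(e^{\varepsilon n})$ intervals and controlling the $E$-dependence of the exceptional $\theta$-set --- is characteristic of the Bourgain--Goldstein method \cite{bg}, not of \cite{JitomirskayaKosloverSchulteis_2005}. The latter follows \cite{j}: the block determinants $P_\lambda^{(n)}(E;\theta)$ of (\ref{eq_polyncutoff}) are trigonometric polynomials of degree $n$ in $\theta$, and this polynomial structure permits a direct, $E$-free count of resonant phases via Lagrange interpolation, so the reference proof avoids energy-covering entirely. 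Both routes can yield localization at Diophantine frequency, but they organize the double-resonance elimination quite differently.

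One step you flag deserves more care than you give it: ``the accumulated logarithmic singularities from $\prod_j|c_\lambda(\theta+j\alpha)|^{-1}$, tamed by the Diophantine condition.'' Diophantinicity of $\alpha$ controls recurrence of the orbit to the finite zero set of $c_\lambda$, but for a fixed initial $\theta$ the single factor $\min_{0\le j<n}|c_\lambda(\theta+j\alpha)|$ can still be arbitrarily small; the polynomial lower bound one needs is obtained only after excluding a further $\mu$-null set of phases by a Borel--Cantelli argument on the orbit of the zero set, and this exclusion must be built into the a.e.-$\theta$ statement from the start. The resulting polynomial loss then has to be tracked through both the LDT for the normalized cocycle and the Green's-function/Cram\'er expansion. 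This Jacobi-specific bookkeeping is carried out in \cite{JitomirskayaKosloverSchulteis_2005} and is exactly what makes the Jacobi case harder than the Schr\"odinger one; your sketch acknowledges it but does not supply the argument.
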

Theorem \ref{thm_jks} and Proposition \ref{coro_bddsol} consequently imply:
\begin{theorem} \label{thm_dualregime}
Let $\alpha$ Diophantine and $\lambda \in II^\circ$. For a.e. $x \in \mathbb{T}$, the spectrum of $H_{x;\lambda, \alpha}$ is purely absolutely continuous.
\end{theorem}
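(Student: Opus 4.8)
The plan is to deduce this from Region-I localization by Aubry--Andr\'e duality. Since $\sigma(\mathrm{I}^\circ)=\mathrm{II}^\circ$ (Observation \ref{obs_dualitymap}), given $\lambda\in\mathrm{II}^\circ$ I would write $\lambda=\sigma(\hat\lambda)$ with $\hat\lambda\in\mathrm{I}^\circ$. As $\alpha$ is Diophantine, Theorem \ref{thm_jks} supplies a full-measure set of phases $\theta$ for which $H_{\theta;\hat\lambda,\alpha}$ is pure point with exponentially decaying eigenfunctions; in particular every eigenfunction lies in $\ell^1(\mathbb{Z})$. Removing also the countable (hence null) set of $\alpha$-rational phases, Proposition \ref{coro_bddsol} applies to each such $\theta$ and each eigenvalue $E$ of $H_{\theta;\hat\lambda,\alpha}$: the dual cocycle $(\alpha,B_{\lambda}^{E/\hat\lambda_2})$ is $L^2$-conjugate to the rotation $(\alpha,R_\theta)$ and its transfer matrices are bounded uniformly in $n\in\mathbb{Z}$ and $x\in\mathbb{T}$. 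Here it is worth recording that $\lambda\in\mathrm{II}^\circ$ is non-singular: since $\lambda_1+\lambda_3<\lambda_2$ there, neither necessary condition for a real zero of $c_\lambda$ in Proposition \ref{obs_cfun} can hold, so the cocycle is genuinely defined for every $x$.

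Next I would introduce $\mathcal{U}:=\{E'\in\mathbb{R}:\ \sup_{n\in\mathbb{Z}}\sup_{x\in\mathbb{T}}\Vert B_{\lambda;n}^{E'}(x)\Vert<\infty\}$, a Borel ($F_\sigma$) set by continuity of the transfer matrices in $(E',x)$. The previous paragraph shows that for a.e. $\theta$ every eigenvalue of $H_{\theta;\hat\lambda,\alpha}$ lies in $\hat\lambda_2\,\mathcal{U}$; since for these $\theta$ the spectral measures of $H_{\theta;\hat\lambda,\alpha}$ are pure point and carried by the eigenvalues, they give zero weight to the complement of $\hat\lambda_2\,\mathcal{U}$. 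Integrating in $\theta$, the integrated density of states of $H_{\cdot;\hat\lambda,\alpha}$ vanishes off $\hat\lambda_2\,\mathcal{U}$. Because the duality transform is a unitary equivalence between the family $\{H_{\theta;\hat\lambda,\alpha}\}_\theta$ and the rescaled dual family $\{\hat\lambda_2 H_{x;\lambda,\alpha}\}_x$, the corresponding densities of states match under $E\mapsto E/\hat\lambda_2$; hence the integrated density of states of $H_{\cdot;\lambda,\alpha}$ vanishes off $\mathcal{U}$, and therefore $\mu_x^{\lambda}(\mathbb{R}\setminus\mathcal{U})=0$ for a.e. $x$, where $\mu_x^{\lambda}$ is a maximal spectral measure of $H_{x;\lambda,\alpha}$.

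Finally I would invoke the standard consequence of Gilbert--Pearson/Last--Simon subordinacy theory, in the form used in \cite{AvilaJitomirskaya_2010}: for each fixed $x$, whole-line uniform boundedness of the transfer matrices at every energy of a Borel set forces the restriction of $\mu_x^{\lambda}$ to that set to be purely absolutely continuous. Applying this with $\mathcal{U}$ and combining with $\mu_x^{\lambda}(\mathbb{R}\setminus\mathcal{U})=0$ yields that $\mu_x^{\lambda}$ is purely absolutely continuous for a.e. $x$, which is the assertion.

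I expect the main obstacle to be this last step --- passing from bounded transfer matrices at (density-of-states) a.e. energy to purely absolutely continuous spectrum --- which is a soft but genuinely nontrivial input; one must also check that the boundedness estimates and their spectral consequences, originally set up for Schr\"odinger cocycles, survive for the Jacobi cocycle here, but since $\lambda\in\mathrm{II}^\circ$ is non-singular this is routine (e.g., by passing to the analytically normalized cocycle). The bookkeeping with the $\alpha$-rational phases and with the rescaling $E\leftrightarrow E/\hat\lambda_2$ under duality is straightforward.
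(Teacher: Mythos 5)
Your proposal is correct and follows essentially the same route as the paper: localization in $\mathrm{I}^\circ$ from Theorem \ref{thm_jks}, boundedness of the dual transfer matrices via Proposition \ref{coro_bddsol} after discarding $\alpha$-rational phases, subordinacy theory to get pure ac spectrum on the set of energies with bounded transfer matrices, and duality-invariance of the density of states to show the complement carries no spectral weight for a.e. $x$. The only cosmetic difference is that you work with the set $\mathcal{U}$ of energies with uniformly bounded transfer matrices, whereas the paper uses $\Sigma_0=\cup_{\theta\in\Omega}\sigma_{\mathrm{pt}}(H_{\theta;\sigma(\lambda),\alpha})$, but these play the same role (and a tiny inaccuracy: in $\mathrm{II}^\circ$ the condition $\lambda_1=\lambda_3$ can certainly hold, so non-singularity there instead follows because $2\lambda_3=\lambda_1+\lambda_3<\lambda_2$ fails the inequality in Proposition \ref{obs_cfun}(a)).
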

\begin{proof}
Given $\lambda \in II^\circ$, let $\Omega$ be the full measure set of phases $\theta \in \mathbb{T}$ for which Theorem \ref{thm_jks} asserts localization of the {\em{dual}} operator $H_{\theta; \sigma(\lambda), \alpha}$. Since the $\alpha$-rational phases are only a countable set, we may assume them to be removed from $\Omega$. Let
\begin{equation} \label{eq_defsignma0}
\Sigma_0:=\cup_{\theta \in \Omega} \sigma_{\mathrm{pt}}(H_{\theta;\sigma(\lambda), \alpha}) ~\mbox{.}
\end{equation}

By a standard argument based on subordinacy theory (or alternatively, using \cite{LastSimon_1999}), (\ref{eq_growthtransferm}) already implies pure ac-spectrum of $H_{x;\lambda, \alpha}$ on $\Sigma_0$, for all $x \in \mathbb{T}$. Thus the theorem follows if we can show that for $\mu$-a.e. $x \in \mathbb{T}$, $\mathbb{R} \setminus \Sigma_0$ does not support any spectrum of $H_{x;\lambda, \alpha}$.

To see this, denote by 
\begin{equation} \label{eq_dosmeasure}
n(\lambda, \alpha; . ):=\int_\mathbb{T} \nu(x, \lambda, \alpha; .) \ud \mu(x) ~\mbox{,}
\end{equation}
the density of states measure for $H_{x;\lambda, \alpha}$, where $\nu(\theta, \lambda, \alpha; .)$ is the spectral measure of $H_{x;\lambda, \alpha}$ and $\delta_0 \in l^2(\mathbb{Z})$. Invariance of the density of states under duality implies
\begin{equation}
n(\lambda, \alpha; \mathbb{R} \setminus \Sigma_0) = \int_{\mathbb{T}} \nu( \theta, \sigma(\lambda), \alpha; \lambda_2^{-1} \left( \mathbb{R} \setminus \Sigma_0\right) ) \ud \mu(\theta) = 0 ~\mbox{,}
\end{equation}
where the last equality follows by definition of $\Sigma_0$. Thus, for a.e. $x \in \mathbb{T}$, $\nu(x, \lambda, \alpha; \mathbb{R} \setminus \Sigma_0)=0$, which proves above claim.
\end{proof}
\begin{remark}
Given ART, the content of Theorem \ref{thm_dualregime} extends to {\em{all}} phases and {\em{all irrational}} frequencies.
\end{remark}

\section{Absence of point spectrum in the self-dual regime} \label{sec_selfdual}

We will now explore the formulation of Aubry-Andr\'e duality given in the previous section to prove absence of point spectrum for $\lambda \in \mathcal{SD}$. The proof of Theorem \ref{thm_point}  is done by contradiction, leading to the set-up of Section \ref{sec_Aubry}. 

To give a preview of what is to come for the self-dual extended Harper's model, we start with the special case of the critical almost Mathieu operator. Recall from Sec. \ref{sec_intro} that the latter arises from extended Harper's model by letting $\lambda_1 = \lambda_3=0$ and $\lambda_2=1$.

\subsection{Warm-up: The critical almost Mathieu operator} \label{sec_selfdual_criticalamo}
We aim to prove Theorem \ref{thm_point} in the special case of the critical almost Mathieu operator:
\begin{theorem} \label{thm_AMO}
For all irrational $\alpha$, the critical almost Mathieu operator has empty point spectrum for all phases $\theta$ which are not $\alpha$-rational.
\end{theorem}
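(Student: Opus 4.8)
\textbf{Proof proposal for Theorem \ref{thm_AMO}.}

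The plan is to argue by contradiction. Suppose that for some irrational $\alpha$ and some non-$\alpha$-rational $\theta$, the critical almost Mathieu operator $H_{\theta;\lambda,\alpha}$ (with $\lambda_1=\lambda_3=0$, $\lambda_2=1$, so $c_\lambda \equiv 1$ and the operator is non-singular and self-dual under $\sigma$) has an eigenvalue $E$ with eigenvector $(u_n) \in \ell^2(\mathbb{Z})$. Since $\sigma$ fixes this $\lambda$, Aubry-André duality as set up in Section \ref{sec_Aubry} produces, via the Fourier transform $u(x) = \sum_n u_n \mathrm{e}^{2\pi i n x} \in L^2(\mathbb{T})\setminus\{0\}$ and the matrix $M_\theta(x)$, the $L^2$-semiconjugacy \eqref{eq_semiconj}. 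Because $\theta$ is not $\alpha$-rational, Proposition \ref{prop_det} upgrades this to a genuine $L^2$-conjugacy: $\det M_\theta(x) \neq 0$ a.e., so the cocycle $(\alpha, B_\lambda^E)$ is $L^2$-conjugate to the complex rotation $(\alpha, R_\theta)$. The key point is that $L^2$-conjugacy of $(\alpha, B_\lambda^E)$ to rotations, while it does \emph{not} immediately give uniform boundedness of the transfer matrices, still controls their growth in an averaged sense — and this averaged control will clash with a \emph{lower} bound on transfer-matrix growth coming from the criticality / self-duality of the problem.

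The heart of the argument is to extract from the conjugacy an upper bound on $\frac{1}{n}\log\|B_{\lambda;n}^E(x)\|$ in terms of $\frac{1}{n}\log|u(\cdot)|$ and $\frac{1}{n}\log|\det M_\theta(\cdot)| = \frac{1}{n}\log b$ (the latter being harmless since $c\equiv 1$ here), and to compare this with the pointwise behavior forced by the \emph{dynamics}. Here is where Theorem \ref{prop_prozero} enters in the almost Mathieu case in its simplest form: iterating \eqref{eq_semiconj} gives
\[
B_{\lambda;q_n}^{E}(x)\, M_\theta(x) = M_\theta(x + q_n\alpha)\, R_\theta^{q_n},
\]
so $\|B_{\lambda;q_n}^E(x)\|$ is comparable (up to the conjugating matrices, whose norms and the norm of whose inverses are governed by $|u|$ and $\det M_\theta$) to $1$ along any subsequence of $x$ where $M_\theta$ and $M_\theta(\cdot + q_n\alpha)$ are simultaneously well-conditioned. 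One then integrates $\log\|B_{\lambda;q_n}^E\|$ over $\mathbb{T}$: by the Herman/Lyapunov-exponent lower bound (or, more precisely, by the fact that for the \emph{critical} operator $L(E)=0$ while the cocycle is not uniformly hyperbolic and not reducible), the averaged growth is $o(q_n)$, but a finer accounting — using that $u \in L^2$ forces $\frac{1}{q_n}\int \log^-|u(x)|\,d\mu \to 0$ along a good subsequence, together with the uniform upper bound on $\frac{1}{q_n}\sum_{j=0}^{q_n-1}\log|f(x+j\alpha)| - \int\log|f|$ from Theorem \ref{prop_prozero} applied to the relevant analytic functions — pins the transfer matrix norms at a specific scale and contradicts the existence of a nonzero $L^2$ solution to the \emph{dual} eigenvalue equation. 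Concretely: $L^2$-conjugacy to the rotation $(\alpha,R_\theta)$ means the cocycle $(\alpha, B_\lambda^E)$ has two linearly independent $L^2$ (in $x$) invariant sections up to the rotation twist; Fourier-expanding these sections back produces an $\ell^2(\mathbb{Z})$ (in $n$) solution of $H_{x_0;\lambda,\alpha}\psi = E\psi$ for a.e. $x_0$, i.e. $E$ is an eigenvalue for a positive-measure set of phases. But the density of states of the critical almost Mathieu operator is non-atomic (equivalently, the spectrum supports no eigenvalue of positive-measure multiplicity), a contradiction.

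The main obstacle, and the step that requires the real work, is passing from the bare $L^2$-conjugacy to a \emph{quantitative, uniform-in-$x$} growth bound on the transfer matrices along a well-chosen subsequence $q_{n_l}$ of denominators — it is precisely here that one cannot afford any arithmetic assumption on $\alpha$, so the subsequence must be the one delivered by Theorem \ref{prop_prozero} (ultimately the Erdős–Szekeres estimate of Theorem \ref{thm_rate_main}), which handles the zeros of the relevant trigonometric/analytic functions without Diophantine conditions. The subtlety is that $M_\theta(x)$ can be badly conditioned (its inverse blows up where $u$ or $u(\cdot-\alpha)$ is small), and one must show these bad sets are small enough, uniformly along the subsequence, that the conjugacy still transports the $O(1)$ growth of $R_\theta^{q_{n_l}}$ to an $\mathrm{e}^{o(q_{n_l})}$ — in fact $\mathrm{e}^{O(q_{n_l}^{-1}\cdot q_{n_l})} = O(1)$ after the logarithmic averaging — bound on $\|B_{\lambda;q_{n_l}}^E(x)\|$ for a.e.\ $x$; combined with Fatou/Fubini this forces the dual eigenfunction into a regularity class excluded by the non-atomicity of the density of states, completing the contradiction. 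In the body of the paper I expect this to be organized so that the almost Mathieu case serves as the template, with the general self-dual extended Harper case (Section \ref{sec_selfdual}) differing only in that $c_\lambda$ now has zeros, so $\det M_\theta$ is no longer constant and the factor $|c(x-\alpha)|^{-1}$ in Proposition \ref{prop_det} must be absorbed — which is exactly the reason Theorem \ref{prop_prozero} is stated for general analytic $f$ with zeros rather than just for the zero-free case covered by Lemma \ref{lem_roc1}.
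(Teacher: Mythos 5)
Your proposal diverges substantially from the paper's argument, and it contains a genuine gap at the crucial step.

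\textbf{The gap.} You argue that the $L^2$-conjugacy furnished by Proposition~\ref{prop_det} gives two $L^2$ invariant sections for $(\alpha, B^E)$, and that Fourier-expanding these sections "produces an $\ell^2(\mathbb{Z})$ solution of $H_{x_0;\lambda,\alpha}\psi=E\psi$ for a.e.\ $x_0$," so that $E$ is an eigenvalue for a positive-measure set of phases, contradicting non-atomicity of the density of states. This does not work. The columns of $M_\theta(x)$ are built from $u(x)$ and $u(-x)$; their Fourier coefficients are exactly $(u_n)$ and $(u_{-n})$, the original $\ell^2$ eigenvector \emph{at the single phase $\theta$}. They do not give $\ell^2$ solutions at a.e.\ other phase $x_0$. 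To extract bounded (let alone $\ell^2$) solutions at a.e.\ phase from a conjugacy, one would need $L^\infty$-control of the conjugating matrices — which is precisely what is \emph{not} available here: $M_\theta(x)^{-1}$ blows up where $|u|$ is small, and $u\in L^2$ gives no pointwise control. Proposition~\ref{coro_bddsol} makes this explicit: it requires an $\ell^1$ eigenfunction (which one cannot assume) to upgrade to a uniformly bounded cocycle. Your argument acknowledges this obstacle ("does \emph{not} immediately give uniform boundedness") but then effectively needs exactly such control to reach the density-of-states contradiction, and no mechanism is supplied. A second, smaller issue: Theorem~\ref{prop_prozero} plays no role in the AMO case. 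Since $c\equiv 1$, there is no quasi-periodic product $d^{(n)}$ to control; the Erd\H{o}s--Szekeres input is only needed in the general extended Harper case where $c_\lambda$ has zeros.

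\textbf{What the paper actually does.} The paper's proof of Theorem~\ref{thm_AMO} entirely sidesteps any attempt to extract solutions from the conjugacy. Instead it forms the scalar quantity
\[
\Psi^{(n)}(x) := \mathrm{tr}\bigl\{ B^E_n(x) - R_\theta^n\bigr\} = \mathrm{tr}\bigl\{B^E_n(x)\bigr\} - 2\cos(2\pi n\theta),
\]
a trigonometric polynomial of degree $n$, and exploits the tension between two observations. On one hand, a direct computation from the explicit form of $B^E(x)$ shows that the boundary Fourier coefficients satisfy $|\widehat{\Psi^{(n)}}(\pm n)| = 1$ for every $n$. On the other hand, writing $B^E_n(x) = M_\theta(x+n\alpha)R_\theta^n M_\theta(x)^{-1}$, cyclicity of the trace gives
\[
|\Psi^{(n)}(x)| \leq 2\,\|M_\theta(x+n\alpha)-M_\theta(x)\|\cdot\|M_\theta(x)\|,
\]
and Cauchy--Schwarz plus $M_\theta\in L^2$ and $\vertiii{q_n\alpha}\to 0$ (continuity of translation in $L^2$) forces $\|\Psi^{(q_n)}\|_{L^1(\mathbb{T})}\to 0$. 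Since $|\widehat{f}(k)|\leq \|f\|_{L^1}$, this contradicts $|\widehat{\Psi^{(q_n)}}(\pm q_n)|=1$. The entire proof is elementary; no almost-reducibility, Lyapunov exponent estimate, invariant-section regularity, or density-of-states argument is invoked, and the only arithmetic used is $\vertiii{q_n\alpha}\to 0$.
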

\begin{remark}
As pointed out also in Remark \ref{rem_criticalAMO_sc}, Theorem \ref{thm_AMO} has so far only appeared in the preprint \cite{Avila_preprint_2008_2}, which was not intended for publication.
\end{remark}

Since it is known from \cite{global} (see also \cite{JitomirskayaMarx_2012}, for an alternative proof) that all energies in the spectrum of the critical almost Mathieu operator are critical in the sense of the GT, Theorem \ref{thm_AMO} immediately implies Theorem \ref{AM}.

Since the critical almost Mathieu operator amounts to extended Harper's model with $\lambda=(1,0,1)$, the transfer matrix in (\ref{eq_deftransfer}) simplifies to
\begin{equation} \label{eq_AMO_0}
B^E(x) = \begin{pmatrix} E - 2 \cos(2 \pi x) & -1 \\ 1 & 0 \end{pmatrix} ~\mbox{.}
\end{equation}
Notice also that $(1,0,1)$ is a fixed point of $\sigma$, whence the transfer matrix of the critical almost Mathieu operator is invariant under duality.

\begin{proof}[Proof of Theorem \ref{thm_AMO}]
Assume that the critical almost Mathieu operator had an eigenvalue $E$ for some phase $\theta$ which is not $\alpha$-rational. Then, Proposition \ref{prop_det} yields the $L^2$-conjugacy,
\begin{equation} \label{eq_AMO_1}
 B^E(x)  = M_\theta(x+ \alpha) R_\theta M_\theta(x)^{-1} ~\mbox{.}
\end{equation}

Inspired by (\ref{eq_AMO_1}), we compare the cocycle dynamics before and after the coordinate change, introducing
\begin{equation} \label{eq_AMO_2}
\Psi^{(n)}(x):= \mathrm{tr} \{ B_n^E(x) - R_\theta^n \} = \mathrm{tr} \{ B_n^E(x) \} - 2 \cos(2 \pi n \theta) ~\mbox{.}
\end{equation}
Here, as before, we denote $B_n^E(x):= B^E(x + (n-1) \alpha) \dots B^E(x)$.

$B^E(x)$ only involves trigonometric polynomials of degree 1, whence $\Psi^{(n)}$ is a trigonometric polynomial of degree $n$. The simple form of $B^E(x)$ allows to immediately write down its boundary Fourier coefficients,
\begin{equation}
\widehat{\Psi^{(n)}}(\pm n) = (-1)^n \prod_{k=0}^{n-1} \mathrm{e}^{\pm 2 \pi i k \alpha} = (-1)^n \mathrm{e}^{ \pm \pi i \alpha n (n-1) } ~\mbox{,}
\end{equation}
which in particular implies
\begin{equation} \label{eq_AMO_3}
\vert \widehat{\Psi^{(n)}}(\pm n) \vert = 1 ~\mbox{.}
\end{equation}

To contrast this, using (\ref{eq_AMO_1}), we estimate 
\begin{eqnarray}
\vert  \Psi^{(n)}(x) \vert & = & \vert \mathrm{tr} \left\{ \left[ M_\theta(x+n\alpha) - M_\theta(x) \right] R_\theta^n M_\theta(x)^{-1} \right\} \vert \nonumber \\
                                                & \leq & 2 \Vert M_\theta(x+ n \alpha) - M_\theta(x) \Vert \cdot \Vert M_\theta(x) \Vert ~\mbox{.} \label{eq_AMO_4}
\end{eqnarray}
We mention that (\ref{eq_AMO_4}) uses cyclicity of the trace and the straightforward bounds, $\mathrm{tr}(A) \leq 2 \norm{A}$ and $\norm{A^{-1}} = \frac{\norm{A}}{\vert \det(A) \vert}$ for $A \in GL(2, \mathbb{C})$. 

Recalling that $M_\theta \in L^2(\mathbb{T},SL(2,\mathbb{C}))$, Cauchy-Schwarz yields
\begin{equation} \label{eq_AMO_5}
\Vert \Psi^{(n)} \Vert_{L^1(\mathbb{T})} \leq \Vert \norm{M_\theta(. + \alpha n) - M_\theta(.)}\Vert_{L^2(\mathbb{T})} \Vert \norm{M_\theta(.)} \Vert_{L^2(\mathbb{T})} ~\mbox{.} 
\end{equation}

Finally, since $\vertiii{ q_n \alpha} \to 0$, (\ref{eq_AMO_5}) implies that $\Vert \Psi^{(q_n)} \Vert_{L^1(\mathbb{T})} = o(1)$ as $n \to \infty$, which contradicts (\ref{eq_AMO_3}).
\end{proof}

\subsection{Including next nearest neighbor interaction}

Before turning to the proof of Theorem \ref{thm_point}, we comment on the exclusion of the zero-measure set of phases in its statement. First, notice that given $\alpha$, consideration of the set of $\alpha$-rational phases is a priori excluded for all $\lambda$ because our strategy relies on Proposition \ref{prop_det}. 

For the same reason, this a priori exclusion of phases has already been encountered in Sec. \ref{sec_selfdual_criticalamo} for the critical almost Mathieu operator. In fact, our proof shows that for $\lambda_1 \neq \lambda_3$, empty point spectrum for the self-dual extended Harper's model holds for {\em{all}} non $\alpha$-rational phases. 

As opposed to the critical almost Mathieu operator, one can however claim that the exclusion of $\alpha$-rational phases is in general necessary for extended Harper's model: For $\lambda_1 = \lambda_3$, presence of real zeros of the sampling function $c_\lambda(x)$, generating off-diagonal elements of the Jacobi operator, allows for phases where the operator has a finite decoupled block, and thus eigenvalues.
\begin{prop} \label{prop_ehmexcludephase}
Fix $\alpha$ irrational and let $\lambda_1 = \lambda_3$. There exists a dense set of $\lambda \in \mathrm{III}^\circ$ and a corresponding $\alpha$-resonant phase $\theta = \theta(\lambda)$ such that $\sigma_{\mathrm{pt}}(H_{\theta; \lambda, \alpha}) \neq \emptyset$. 
\end{prop}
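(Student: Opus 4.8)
The plan is to exploit the real zeros of $c_\lambda$ in the isotropic case: by Proposition \ref{obs_cfun}(a), when $\lambda_1 = \lambda_3 = \lambda_3$ and $2\lambda_3 \geq \lambda_2$, the function $c_\lambda$ has real roots determined by $2\lambda_3 \cos(2\pi(\theta + \tfrac{\alpha}{2})) = -\lambda_2$. The key observation is that a vanishing of an \emph{off-diagonal} coefficient of the Jacobi matrix at some site decouples the operator into two half-line pieces. For a generic choice of $\lambda$ in $\mathrm{III}^\circ$ with $\lambda_1 = \lambda_3$ and $2\lambda_3 > \lambda_2$, there are two distinct real roots $\theta_\pm$ of $c_\lambda$ modulo $1$; if we can choose $\lambda$ (keeping $\lambda_1 = \lambda_3$, staying in $\mathrm{III}^\circ$) so that the \emph{two} roots $\theta_+$ and $\theta_-$ differ by an integer multiple of $\alpha$, say $\theta_- = \theta_+ + m\alpha \pmod 1$ for some integer $m \geq 1$, then for the phase $\theta$ for which $c_\lambda(\theta + \alpha k_1) = 0$ and $c_\lambda(\theta + \alpha k_2) = 0$ at two sites $k_1 < k_2$ with $k_2 - k_1 = m$, the operator $H_{\theta;\lambda,\alpha}$ has a genuinely \emph{finite} decoupled block, supported on sites $k_1+1, \dots, k_2$ (equivalently, on an interval of length $m$). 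This finite block is a finite self-adjoint matrix, hence has eigenvalues; each such eigenvalue is an eigenvalue of $H_{\theta;\lambda,\alpha}$ with a compactly supported eigenvector. This forces $\sigma_{\mathrm{pt}}(H_{\theta;\lambda,\alpha}) \neq \emptyset$, and the corresponding $\theta$ is $\alpha$-resonant (indeed, $\theta$ satisfies an arithmetic resonance condition coming from $c_\lambda(\theta + \alpha k_1) = c_\lambda(\theta + \alpha k_2) = 0$).

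Carrying this out, I would proceed as follows. First, parametrize: with $\lambda_1 = \lambda_3 = s$ and $\lambda_2 = t$, membership in $\mathrm{III}^\circ$ requires $2s > \max\{1,t\}$ and $t > 0$, and existence of real roots requires $2s \geq t$, which is automatic here; the two roots are $\theta_\pm + \tfrac{\alpha}{2} = \pm\tfrac{1}{2\pi}\arccos(-\tfrac{t}{2s}) \pmod 1$. Set $\phi(s,t) := \tfrac{1}{2\pi}\arccos(-\tfrac{t}{2s}) \in (0, \tfrac12)$, so the two roots are separated (mod $1$) by $2\phi$ and by $1 - 2\phi$. Second, I want to arrange $2\phi(s,t) = \{m\alpha\}$ for some positive integer $m$ (or, symmetrically, $1 - 2\phi(s,t) = \{m\alpha\}$). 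As $(s,t)$ ranges over the open region $\{2s > \max\{1,t\}, t > 0\}$, the ratio $t/(2s)$ ranges over all of $(0,1)$, so $\phi$ takes all values in $(0,\tfrac12)$; hence $2\phi$ takes all values in $(0,1)$. Since $\alpha$ is irrational, $\{\{m\alpha\} : m \geq 1\}$ is dense in $(0,1)$, so for each $m$ in a suitable infinite set we can solve $2\phi(s,t) = \{m\alpha\}$ along a curve in parameter space; letting $m$ vary gives a set of solutions $(s,t)$ that accumulates densely — more precisely, for any target point $(s_0, t_0) \in \mathrm{III}^\circ$ with $\lambda_1 = \lambda_3$ and any $\varepsilon$, density of $\{m\alpha\}$ lets us pick $m$ with $\{m\alpha\}$ within $\varepsilon'$ of $2\phi(s_0, t_0)$, and then solve for a nearby $(s,t)$. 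Third, for such $(s,t)$ and the corresponding phase $\theta$ (chosen so that $\theta + \tfrac{\alpha}{2}$ realizes one root at site $0$ and the other at site $m$), verify that the block is finite and self-adjoint, extract an eigenvalue, and note $\theta$ is $\alpha$-resonant by construction.

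The main obstacle I anticipate is the bookkeeping around \emph{which} decoupling actually produces a \emph{finite} block rather than a half-line. A single real zero of $c_\lambda$ (generic case) only splits $\mathbb{Z}$ into two half-lines, which need not contribute point spectrum; one genuinely needs \emph{two} zeros of $c_\lambda$ along the same orbit $\{\theta + n\alpha\}$ to trap a finite block, which is exactly why $\lambda_1 = \lambda_3$ (forcing a second root via the even symmetry of the cosine condition) is essential and why the resonance $\theta_- - \theta_+ \in \mathbb{Z}\alpha$ must be imposed — this is the step that both produces the $\alpha$-resonant phase and pins down the dense subset of $\lambda$. A secondary, more routine point is checking that the solution curves $2\phi(s,t) = \{m\alpha\}$ genuinely intersect $\mathrm{III}^\circ$ (not just its closure) and that the associated phase $\theta$ is well-defined and distinct from the orbit of other zeros; this is a direct computation with the explicit formula for $\phi$ and the open conditions defining $\mathrm{III}^\circ$. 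One should also double-check that the boundary terms $\overline{c_\lambda(\theta + \alpha(k-1))}$ vanish precisely when $c_\lambda(\theta + \alpha(k-1)) = 0$, so that both the super- and sub-diagonal entries drop out simultaneously at the decoupling sites, as is clear from the form of \eqref{eq_hamiltonian}.
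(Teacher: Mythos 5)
Your proof is correct and follows essentially the same strategy as the paper's: use Proposition \ref{obs_cfun}(a) to locate the two real zeros of $c_\lambda$ in the isotropic case, impose that their difference lie in $\mathbb{Z}\alpha+\mathbb{Z}$ (which, via \eqref{eq_conditioncfun}, is exactly $\alpha$-rationality of the root and produces a finite decoupled self-adjoint block, hence an eigenvalue), and deduce density in $\mathrm{III}^\circ$ from density of $\{m\alpha\}$ together with continuity of the root angle in $\lambda$. One small slip: with $\lambda_1=\lambda_3=s$ and $t/(2s)\in(0,1)$ one has $\phi=\tfrac{1}{2\pi}\arccos(-t/(2s))\in(\tfrac14,\tfrac12)$, not $(0,\tfrac12)$, so $2\phi$ ranges over $(\tfrac12,1)$; but density of $\{m\alpha\}$ in that interval still closes the argument.
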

\begin{proof}
By Proposition \ref{obs_cfun} (a), whenever $\lambda_1 = \lambda_3$ and $2\lambda_3 > \lambda_2$, $c_\lambda(\theta)$ has two distinct real roots $\theta_1, \theta_2$ determined by (\ref{eq_conditioncfun}). Thus, if $\theta_1, \theta_2$ are such that for some $n \in \mathbb{Z}$ one has $\vert \theta_1 - \theta_2 \vert = n \alpha$, the Jacobi operator will have a finite decoupled block of size $(\vert n \vert -1)$. Using (\ref{eq_conditioncfun}), this happens if and only if $\theta_1 = \theta_1(\lambda)$ is $\alpha$-rational.

Since for given $\alpha$, the set of $\alpha$-rational phases is dense in $\mathbb{T}$, (\ref{eq_conditioncfun}) implies that for any fixed $\lambda_3$ there exists a dense set of $\lambda_2$ in $\{2 \lambda_3 > \lambda_2\}$ which allow $\theta_1=\theta_1(\lambda)$ to be $\alpha$-rational.
\end{proof}

\subsection{Proof of Theorem \ref{thm_point}} \label{sec_proofofthmpoint} 

Assume the claim was false, i.e. for some non $\alpha$-rational $\theta$, the operator $H_{\theta; \lambda, \alpha}$ had an eigenvalue $E$. For $n \in \mathbb{N}$, write $d_{\sigma(\lambda)}^{(n)}(x) := \prod_{j=0}^{n-1} \abs{c_{\sigma(\lambda)}(x+j\alpha)}$ and introduce
\begin{eqnarray} \label{eq_psi}
\Psi_{\sigma(\lambda)}^{(n)}(x) & := &\mathrm{tr}\left\{d_{\sigma(\lambda)}^{(n)}(x) \left(B_{\sigma(\lambda);n}^{E/\lambda_2}(x) - R_\theta^n\right)\right\}  \\
& = & \mathrm{tr}\left(d_{\sigma(\lambda)}^{(n)}(x) B_{\sigma(\lambda);n}^{E/\lambda_2}(x) \right)- 2 d_{\sigma(\lambda)}^{(n)}(x) \cos(2 \pi n \theta) ~\mbox{,}
\end{eqnarray}
in analogy to (\ref{eq_AMO_2}). Then, Proposition \ref{prop_det} implies that for a.e. $x \in \mathbb{T}$, one has
\begin{eqnarray}
\abs{\Psi_{\sigma(\lambda)}^{(n)}(x)}&\leq &\dfrac{2 \abs{d_{\sigma(\lambda)}^{(n)}}}{\abs{\det M_\theta(x)}} \norm{M_\theta(x + \alpha n) - M_\theta(x)} \norm{M_\theta(x)} \\
& \leq & \dfrac{ 2 \Vert c \Vert_\mathbb{T} \abs{d_{\sigma(\lambda)}^{(n)}(x)}}{b} \norm{M_\theta(x + \alpha n) - M_\theta(x)} \norm{M_\theta(x)} ~\mbox{.}
\end{eqnarray}

The appearance of $d_{\sigma(\lambda)}^{(n)}(x)$ complicates matters enough to require the results of Section \ref{sec_rate}. Indeed the growth of the quasi-periodic product $d_{\sigma(\lambda)}^{(n)}(x)$ is controlled by Theorem \ref{prop_prozero}, which guarantees that there exists $C>0$ a subsequence $q_{n_l}$ such that
\begin{equation} \label{eq_16}
\abs{\Psi_{\sigma(\lambda)}^{(q_{n_l})}(x)} \leq C \mathrm{e}^{q_{n_l} I(\sigma(\lambda))} \norm{M_\theta(x + \alpha q_{n_l}) - M_\theta(x)} \norm{M_\theta(x)} ~\mbox{,}
\end{equation}
for a.e. $x \in \mathbb{T}$. Here, we let 
\begin{equation}
I(\lambda) : = \int_\mathbb{T} \log\abs{c_\lambda(x)} \ud \mu(x) ~\mbox{.}
\end{equation}
Set
\begin{equation} \label{eq_vee}
a \vee b := \max\{a,b\} ~\mbox{for $a,b \in \mathbb{R}$} ~\mbox{.}
\end{equation} 
In \cite{JitomirskayaKosloverSchulteis_2005}, the integral $I(\lambda)$ is explicitly computed, which, for $\lambda \in \mathcal{SD}$, gives
\begin{equation} \label{eq_integral}
I(\lambda) = \begin{cases} \log \abs{\lambda_{3} \vee \lambda_1} & \mbox{, if } \lambda \in \mathrm{III} ~\mbox{,} \\
\log \left \vert \dfrac{2\lambda_{1}\lambda_{3}}{1- \sqrt{1 - 4\lambda_{1}\lambda_{3}} } \right \vert & \mbox{, if} ~\lambda \in  \mathrm{L}_\mathrm{II} ~\mbox{and} ~ \lambda_{1},\lambda_{3} \neq 0 ~\mbox{,}\\
0 & \mbox{, if} ~\lambda \in \mathrm{L}_\mathrm{II} ~\mbox{,} ~ \lambda_{1} ~\mbox{or}  ~\lambda_{3} = 0 ~\mbox{.} \end{cases}
\end{equation}

Application of Cauchy-Schwarz in (\ref{eq_16}) finally yields
\begin{equation} \label{eq_5}
\Vert \Psi_{\sigma(\lambda)}^{(q_{n_l})} \Vert_{L^1(\mathbb{T})} \leq C \mathrm{e}^{q_{n_l} I(\sigma(\lambda))} \Vert \norm{M_\theta(. + \alpha q_{n_l}) - M_\theta(.)}\Vert_{L^2(\mathbb{T})} \Vert \norm{M_\theta(.)} \Vert_{L^2(\mathbb{T})} ~\mbox{.} 
\end{equation}
as $l \to  \infty$. In particular, since $\vert \vert \vert q_{n_l} \alpha \vert \vert \vert \to 0$, (\ref{eq_5}) implies
\begin{equation} \label{eq_12}
\Vert \Psi_{\sigma(\lambda)}^{(q_{n_l})} \Vert_{L^1(\mathbb{T})} \leq C_l  \mathrm{e}^{q_{n_l} I(\sigma(\lambda))} ~\mbox{, } C_l = o(1) ~\mbox{, as $l \to  \infty$.}
\end{equation}

For later purposes, we note that Theorem \ref{prop_prozero} and (\ref{eq_12}) also holds along the sequence $(q_{n_l}) \cup (2 q_{n_l}) \cup (3 q_{n_l})$; here, given two sequences $(x_n)$ and $(y_n)$, we define their concatenation by $(x_n) \cup (y_n):=(x_1, y_1, x_2, y_2, \dots)$. 
 
On the other hand, notice that $\Psi_{\sigma(\lambda)}^{(n)}$ is a trigonometric polynomial of degree $n$. Similar to the critical almost Mathieu operator, we will explicitly compute the boundary Fourier-coefficients $\widehat{\Psi_{\sigma(\lambda)}^{(n)}}(\pm n)$ and show that their decay rate contradicts (\ref{eq_12}). To simplify notation, set $\phi_\lambda^{(n)}:= \mathrm{tr}\left(  d_\lambda^{(n)} B_{\lambda;n}^{E/\lambda_2} \right)$. Then,
\begin{equation} \label{eq_9}
\widehat{\Psi_{\sigma(\lambda)}^{(n)}}(\pm n) = \widehat{\phi_{\sigma(\lambda)}^{(n)}}(\pm n) - 2 \cos(2 \pi n \theta) \cdot \begin{cases}
\left(\dfrac{\lambda_1}{\lambda_2}\right)^n \mathrm{e}^{\pi i \alpha n^2} & \mbox{, for } +n ~\mbox{,} \\  \left(\dfrac{\lambda_3}{\lambda_2}\right)^n \mathrm{e}^{- \pi i \alpha n^2} & \mbox{, for } -n ~\mbox{.} \end{cases} 
\end{equation}

It is well known that $\phi_\lambda^{(n)}$ is related to finite cut offs of the original Jacobi operator (\ref{eq_hamiltonian}). Indeed, let $\Pi_{[0,n]}$ be the orthogonal projection in $l^2(\mathbb{Z})$ onto $\mathrm{Span}\{\delta_k ~\mbox{, } 0 \leq k \leq n\}$ and set 
\begin{eqnarray} \label{eq_polyncutoff}
P_\lambda^{(n)}(E; x):=\det \left(E- \Pi_{[0,n-1]} H_{x; \lambda,\alpha} \Pi_{[0,n-1]}\right) ~\mbox{,} ~n \geq 1 ~\mbox{,} \\
P_\lambda^{(0)}(E; x):= 1~\mbox{,} ~P_\lambda^{(-1)}(E;x):=0 ~\mbox{.}
\end{eqnarray}
Then, for $x \in \mathbb{T}_0(\lambda)$, one has
\begin{equation}
d_{\lambda}^{(n)}(x) B_{\lambda;n}^E(x) = \begin{pmatrix} P_\lambda^{(n)}(E; x) & - \overline{c_\lambda(x-\alpha)} P_\lambda^{(n-1)}(E;x+\alpha) \\ c_\lambda(x + (n-1)\alpha) P_\lambda^{(n-1)}(E; x) & -c_\lambda(x+(n-1)\alpha) \overline{c_\lambda(x-\alpha)} P_\lambda^{(n-2)}(E; x + \alpha)  \end{pmatrix} ~\mbox{.}
\end{equation}

In particular, this allows to express $\widehat{\phi_{\sigma(\lambda)}^{(n)}}(\pm n)$ as 
\begin{equation} \label{eq_6}
\widehat{\phi_{\sigma(\lambda)}^{(n)}}(\pm n) = \widehat{P_{\sigma(\lambda)}^{(n)}}(\pm n) - \dfrac{\lambda_1 \lambda_3}{\lambda_2^2} \mathrm{e}^{\pm 2 \pi i (2n-3) \alpha} \widehat{P_{\sigma(\lambda)}^{(n-2)}}(\pm(n-2)) ~\mbox{.}
\end{equation}
The problem is thus reduced to computing $\widehat{P_{\sigma(\lambda)}^{(n)}}(\pm n)$. A first simplifcation is achieved by the following Lemma:
\begin{lemma} \label{lem_1}
Let $\tilde{\lambda}_1 = \frac{\lambda_1}{\lambda_2} \mathrm{e}^{i \pi \alpha}$ and $\tilde{\lambda}_3= \frac{\lambda_3}{\lambda_2} \mathrm{e}^{-i \pi \alpha}$, then
\begin{equation}
(-1)^n \widehat{P_{\sigma(\lambda)}^{(n)}}(\pm n) = \mathrm{e}^{\pm \pi i \alpha n (n-1)} \det(T_n) ~\mbox{,}
\end{equation}
where $T_n$ is a tridiagonal $n \times n$-matrix defined by
\begin{equation}
T_n := \begin{pmatrix} 1& \tilde{\lambda}_1 & & & \\ \tilde{\lambda}_3 & 1 & \tilde{\lambda}_1 & & \\ & \tilde{\lambda}_3 & 1& \tilde{\lambda}_1 & \\ & & \ddots & \ddots & \ddots     \end{pmatrix}
\end{equation}
\end{lemma}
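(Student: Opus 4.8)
The plan is to extract the top-degree Fourier coefficient of the characteristic polynomial $P_{\sigma(\lambda)}^{(n)}(E;x)$ directly from the tridiagonal structure of the cut-off operator. Write $\mu := \sigma(\lambda)$, so that $\mu = \tfrac{1}{\lambda_2}(\lambda_3, 1, \lambda_1)$, and recall that $c_\mu(\theta) = \mu_1 e^{-2\pi i(\theta+\alpha/2)} + \mu_2 + \mu_3 e^{2\pi i(\theta+\alpha/2)}$ with $\mu_2 = 1/\lambda_2$ and, crucially, $\mu_1 = \lambda_3/\lambda_2$, $\mu_3 = \lambda_1/\lambda_2$. The matrix $\Pi_{[0,n-1]}H_{x;\mu,\alpha}\Pi_{[0,n-1]}$ is tridiagonal with diagonal entries $v(x+j\alpha)$ and off-diagonal entries $c_\mu(x+j\alpha)$ (resp. its conjugate). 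Expanding $\det(E - \Pi_{[0,n-1]}H\Pi_{[0,n-1]})$ via the cofactor/path expansion, the unique way to produce the Fourier mode $e^{2\pi i n x}$ is to never pick a diagonal entry $E - v(x+j\alpha) = E - e^{2\pi i(x+j\alpha)} - e^{-2\pi i(x+j\alpha)}$ with a positive-frequency contribution more than the bare count allows, and similarly to arrange the off-diagonal factors so all frequencies add constructively: concretely, the $+n$ mode comes only from the product of the $n$ superdiagonal (or $n$ subdiagonal, depending on sign convention) entries $c_\mu(x+j\alpha)$, each contributing its $e^{2\pi i(x+j\alpha+\alpha/2)}$ term with coefficient $\mu_3 = \lambda_1/\lambda_2$, while the remaining factor along the main diagonal must be replaced by $1$'s. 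This is exactly the kind of computation already carried out for the almost Mathieu case around equation (\ref{eq_AMO_3}).

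First I would make the bookkeeping precise. The off-diagonal entries of $H_{x;\mu,\alpha}$ are $c_\mu(x+j\alpha)$ above the diagonal and $\overline{c_\mu(x+(j-1)\alpha)}$ below; when extracting $\widehat{P^{(n)}_\mu}(+n)$ only the $j$-th superdiagonal entry contributes its highest Fourier mode $\mu_3\, e^{2\pi i(x+j\alpha+\alpha/2)}$ (for $+n$; the $-n$ case is the mirror image using $\mu_1$ and the subdiagonal). Summing the phases $\sum_{j=0}^{n-1}(j\alpha + \alpha/2) = \tfrac{\alpha}{2}n + \alpha\binom{n}{2} = \tfrac{\alpha n^2}{2}$ — wait, more carefully $\sum_{j=0}^{n-1}(x + j\alpha + \alpha/2)$ contributes the $x$-mode $e^{2\pi i n x}$ with phase factor $e^{2\pi i \alpha(\, \sum_{j=0}^{n-1} j + n/2 \,)} = e^{2\pi i \alpha(n(n-1)/2 + n/2)} = e^{\pi i \alpha n^2}$. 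Meanwhile the diagonal is replaced by a product of $n$ copies of $E - v$, but restricted to the Fourier-mode-zero constant part — which is just $E$ after the determinant expansion reorganizes into the tridiagonal minor; this is where the matrix $T_n$ with $1$'s on the diagonal emerges, after absorbing the remaining $e^{\pm i\pi\alpha}$ twists from the half-shift $\alpha/2$ in $c_\mu$ into the redefinitions $\tilde\lambda_1 = \tfrac{\lambda_1}{\lambda_2}e^{i\pi\alpha}$ and $\tilde\lambda_3 = \tfrac{\lambda_3}{\lambda_2}e^{-i\pi\alpha}$. The cleanest way to organize this is: peel off the pure phase factor $(-1)^n e^{\pm\pi i\alpha n(n-1)}$ coming from the product of leading monomials, and observe that what remains is precisely the determinant of the constant tridiagonal matrix whose super/sub-diagonals record $\tilde\lambda_1$, $\tilde\lambda_3$ and whose diagonal is $1$ — the contribution of the "$E$" part of $E - v$ along the main diagonal of the cofactor expansion, normalized.

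I would then verify the claim by induction on $n$ using the standard three-term recursion for determinants of tridiagonal matrices, $\det T_n = \det T_{n-1} - \tilde\lambda_1\tilde\lambda_3 \det T_{n-2}$, matched against the analogous recursion for $\widehat{P^{(n)}_\mu}(\pm n)$ inherited from the recursion $P^{(n)}_\mu(E;x) = (E - v(x+(n-1)\alpha))P^{(n-1)}_\mu(E;x) - |c_\mu(x+(n-2)\alpha)|^2 P^{(n-2)}_\mu(E;x)$, keeping only the top Fourier mode on each side; the base cases $n=0,1$ are immediate from $P^{(0)}_\mu = 1$, $P^{(1)}_\mu = E - v$. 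The main obstacle I anticipate is purely combinatorial/notational: tracking the half-integer shifts $\alpha/2$ inside $c_\mu$ and the direction conventions (which off-diagonal is which, and hence whether $\lambda_1$ or $\lambda_3$ lands in the $+n$ coefficient — note the duality map $\sigma$ swaps $\lambda_1\leftrightarrow\lambda_3$, so care is needed that the statement's $\tilde\lambda_1$ involving $\lambda_1$ is consistent with $P^{(n)}_{\sigma(\lambda)}$), and making sure the accumulated phases collapse exactly to $e^{\pm\pi i\alpha n(n-1)}$ and to the $e^{\pm i\pi\alpha}$ absorbed into $\tilde\lambda_{1},\tilde\lambda_3$. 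Once the recursions are aligned these cancel cleanly, and no deeper idea beyond the almost-Mathieu computation of (\ref{eq_AMO_3}) is required.
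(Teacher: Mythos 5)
Your actual plan of proof---induction on $n$ via the three-term recursion $P^{(n)}_\mu(E;x) = (E - v(x+(n-1)\alpha))P^{(n-1)}_\mu(E;x) - |c_\mu(x+(n-2)\alpha)|^2 P^{(n-2)}_\mu(E;x)$, matched against $\det T_n = \det T_{n-1} - \tilde\lambda_1\tilde\lambda_3\det T_{n-2}$---is correct and does work. Extracting the coefficient of $e^{2\pi i n x}$ from the recursion and setting $Q_n := (-1)^n e^{-\pi i\alpha n(n-1)}\widehat{P^{(n)}_{\sigma(\lambda)}}(n)$, the accumulated phases $e^{2\pi i(n-1)\alpha}$ and $e^{2\pi i(2n-3)\alpha}$ combine with $e^{\mp\pi i\alpha n(n-1)}$ to give exactly $Q_n = Q_{n-1} - \tilde\lambda_1\tilde\lambda_3 Q_{n-2}$ with $Q_0 = Q_1 = 1$, matching $t_n = \det T_n$. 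This is, however, a different route from the paper's: there the formula is obtained in one step by noting that $\widehat{P^{(n)}_{\sigma(\lambda)}}(\pm n)$ is (up to the sign $(-1)^n$) the determinant of the tridiagonal matrix whose entries are the degree-$(\pm 1)$ Fourier coefficients of the entries of $E - \Pi_{[0,n-1]}H_{x;\sigma(\lambda),\alpha}\Pi_{[0,n-1]}$; this matrix has a common factor $e^{2\pi i j\alpha}$ in its $j$-th row, and pulling these scalars out of the determinant yields $e^{\pm\pi i\alpha n(n-1)}\det T_n$ directly. Your induction has the advantage of being entirely local; the paper's argument relies on the (true but slightly nontrivial) fact that the degree-$n$ Fourier mode of the determinant of an $n\times n$ matrix whose entries are trigonometric polynomials of degree $\le 1$ is the determinant of the matrix of their top-mode coefficients.

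One item in your motivational preamble is wrong and would derail a direct combinatorial attempt: you assert that ``the $+n$ mode comes only from the product of the $n$ superdiagonal (or $n$ subdiagonal) entries.'' An $n\times n$ tridiagonal matrix has only $n-1$ superdiagonal entries, and no permutation in the Leibniz expansion consists of superdiagonal entries alone; the permutations contributing to a tridiagonal determinant are products of disjoint adjacent transpositions (each pairing a superdiagonal with the matching subdiagonal entry) together with diagonal fixed points. The $+n$ Fourier mode is therefore a signed \emph{sum} over all such matchings, which is precisely why the nontrivial determinant $\det T_n$ appears on the right-hand side rather than a single monomial. This misstatement does not affect the inductive proof you describe, since the recursion already encodes the correct combinatorics, but it should not be used as the basis of a direct argument.
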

\begin{proof}
We show the argument for the boundary coefficient $+n$; $-n$ is dealt with analogously. The claim becomes obvious when rewriting $\widehat{P_{\sigma(\lambda)}^{(n)}}(n)$ in terms of $\tilde{\lambda}_{1}$ and $\tilde{\lambda}_3$, since then
\begin{equation}
\widehat{P_{\sigma(\lambda)}^{(n)}}(n) = \det \begin{pmatrix} 1 & \tilde{\lambda}_1 & & & \\ \tilde{\lambda}_3 \mathrm{e}^{2 \pi i \alpha} & \mathrm{e}^{2 \pi i \alpha} & \tilde{\lambda}_1 \mathrm{e}^{2 \pi i \alpha} & & \\ & \tilde{\lambda}_3 \mathrm{e}^{4 \pi i \alpha} & \mathrm{e}^{4 \pi i \alpha} & \tilde{\lambda}_1 \mathrm{e}^{4 \pi i \alpha} & \\ & & \ddots & \ddots & \ddots        \end{pmatrix} ~\mbox{.}
\end{equation} 
\end{proof}

Setting $t_n:=\det(T_n)$ and employing Lemma \ref{lem_1}, (\ref{eq_6}) yields
\begin{equation}
\widehat{\phi_{\sigma(\lambda)}^{(n)}}(\pm n) = (-1)^n \mathrm{e}^{\pm \pi i \alpha n (n-1)} \left\{ t_n - \dfrac{\lambda_1 \lambda_3}{\lambda_2^2}  t_{n-2} \right\} ~\mbox{.}
\end{equation}
The simple form of the matrices $T_n$ allows to compute $t_n$. Expanding $T_n$ with respect to its last row, $(t_n)$ satisfies the following second order finite difference equation 
\begin{equation} \label{eq_7}
t_n = t_{n-1} - \gamma t_{n-2} ~\mbox{,} ~ n\geq 1 ~\mbox{,}
\end{equation}
subject to the initial conditions $t_0 = 1$ and $t_{-1} = 0$. Here, for ease of notation, we write $\gamma:=\dfrac{\lambda_1 \lambda_3}{\lambda_2^2}$. 

Solving (\ref{eq_7}), we obtain
\begin{equation}
t_n =  \begin{cases} \frac{1}{\lambda_{+} - \lambda_{-}} (\lambda_{+}^{n+1} - \lambda_{-}^{n+1}) & \mbox{, if} ~\gamma \neq \frac{1}{4} ~\mbox{,} \\
(n+1) \left(\frac{1}{2}\right)^n & \mbox{, if} ~\gamma=\frac{1}{4}  ~\mbox{,} \end{cases} 
\end{equation}
where 
\begin{equation} \label{eq_10}
\lambda_{\pm} = \frac{1}{2} \left( 1 \pm \sqrt{1 - 4 \gamma} \right) ~\mbox{.}
\end{equation}
Finally this gives rise to the following closed expression for $\widehat{\phi_{\sigma(\lambda)}^{(n)}}(\pm n)$,
\begin{equation} \label{eq_8}
(-1)^n \widehat{\phi_{\sigma(\lambda)}^{(n)}}(\pm n) = \mathrm{e}^{\pm \pi i \alpha n (n-1)} \begin{cases} \frac{1}{\lambda_{+} - \lambda_{-}} \left[ \left(\lambda_{+}^{n+1} - \lambda_{-}^{n+1} \right)
- \gamma \left( \lambda_{+}^{n-1} - \lambda_{-}^{n-1} \right)   \right] & \mbox{, if} ~\gamma \neq \frac{1}{4} ~\mbox{,} \\
\left(\frac{1}{2}\right)^{n-1} & \mbox{, if} ~\gamma=\frac{1}{4}  ~\mbox{.} \end{cases} 
\end{equation}

Equations (\ref{eq_9}) and (\ref{eq_8}) allow to analyze the sequences $(\widehat{\Psi_{\sigma(\lambda)}^{(n)}}(\pm n))_{n \in \mathbb{N}}$. In view of that, we set $(m_l):=(q_{n_l}) \cup (2 q_{n_l}) \cup (3 q_{n_l})$. Without loss of generality, we may assume $\lambda_1 \vee \lambda_3 = \lambda_1$ \footnote{If $\lambda_1 \vee \lambda_3 = \lambda_3$, consider $\Psi_{\sigma(\lambda)}^{(n)}(-n)$ instead of $\Psi_{\sigma(\lambda)}^{(n)}(n)$ in the proof of Proposition \ref{prop_lb}.}.

Using (\ref{eq_6}) and (\ref{eq_8}),  one obtains
\begin{eqnarray}  \label{eq_11}
\left \vert  \widehat{\Psi_{\sigma(\lambda)}^{(n)}}(n) \right \vert =  2 \left \vert \dfrac{\lambda_1}{\lambda_2}\right\vert^n \left \vert - \cos(2 \pi n \theta) 
+ (-1)^n \dfrac{\mathrm{e}^{-i \pi \alpha n}}{\lambda_{+} - \lambda_{-}} \dfrac{\lambda_{+}^n}{2 \left(\frac{\lambda_1}{\lambda_2}\right)^n} \lambda_{+} \right. \nonumber \\ 
\quad \left. \times \left[ \left(1 - \left(\dfrac{\lambda_{-}}{\lambda_+} \right)^{n+1}   \right) - \dfrac{\gamma}{\lambda_{+}^2} \left(1 - \left(\dfrac{\lambda_-}{\lambda_+}\right)^{n-1}          \right)             \right]       \right\vert  ~\mbox{, if $\gamma \neq \frac{1}{4}$ ,}
\end{eqnarray}
and
\begin{eqnarray} \label{eq_13}
\left \vert  \widehat{\Psi_{\sigma(\lambda)}^{(n)}}(n) \right \vert & = & 2 \left \vert \dfrac{\lambda_1}{\lambda_2}\right\vert^n \left\vert (-1)^n \mathrm{e}^{-i \pi \alpha n} \dfrac{\left(\frac{1}{2}\right)^{n-1}}{2 \left(\frac{\lambda_1}{\lambda_2}\right)^n} - \cos(2 \pi n \theta) \right\vert  ~\mbox{, if $\gamma = \frac{1}{4}$ .} 
\end{eqnarray}

\begin{prop} \label{prop_lb}
Let $\alpha$ irrational and $\lambda \in \mathcal{SD}$. For a.e. $\theta$, 
\begin{eqnarray} 
\limsup_{l \to \infty} \mathrm{e}^{- m_l I(\sigma(\lambda))} \left \vert  \widehat{\Psi_{\sigma(\lambda)}^{(m_l)}}(m_l) \right \vert > 0 ~\mbox{.} \label{eq_19a}
\end{eqnarray}
\end{prop}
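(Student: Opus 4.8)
The plan is to analyze the right-hand sides of \eqref{eq_11} and \eqref{eq_13} along the sequence $(m_l) = (q_{n_l}) \cup (2 q_{n_l}) \cup (3 q_{n_l})$, after dividing by $\mathrm{e}^{m_l I(\sigma(\lambda))}$. The first step is to identify the normalizing factor: from \eqref{eq_integral} (applied to $\sigma(\lambda)$, using that $\sigma(\mathrm{III})=\mathrm{III}$ and $\sigma(\mathrm{L}_{\mathrm{II}})=\mathrm{L}_{\mathrm{II}}$) one should check that $\mathrm{e}^{I(\sigma(\lambda))}$ equals precisely the quantity governing the growth of $|\widehat{\Psi_{\sigma(\lambda)}^{(n)}}(n)|$. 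Concretely, with $\sigma(\lambda) = \lambda_2^{-1}(\lambda_3,1,\lambda_1)$, the ratio $\lambda_1/\lambda_2$ appearing in \eqref{eq_11} is the modulus of the $(+n)$-th coefficient's base; when $\sigma(\lambda) \in \mathrm{III}$ one has $\mathrm{e}^{I(\sigma(\lambda))} = \max\{\lambda_1,\lambda_3\}/\lambda_2 = |\lambda_+|$ (the larger root in \eqref{eq_10}), while on $\mathrm{L}_{\mathrm{II}}$ one gets $\mathrm{e}^{I(\sigma(\lambda))} = |\lambda_+|$ as well by the middle branch of \eqref{eq_integral}. Thus after normalization the dominant term in \eqref{eq_11} is, up to a bounded nonvanishing prefactor $\frac{\lambda_+}{\lambda_+-\lambda_-}\bigl[1 - \gamma/\lambda_+^2 + o(1)\bigr]$, a unimodular oscillating factor $(-1)^n \mathrm{e}^{-i\pi\alpha n}$, and it is being subtracted from $\cos(2\pi n\theta)$ after multiplication by the $o(1)$-going... no: one must be careful — the term $\lambda_+^n / (\lambda_1/\lambda_2)^n$ is exactly $O(1)$ (in fact $1$ when $\lambda_1 > \lambda_3$), so both terms inside the absolute value in \eqref{eq_11} are $O(1)$, and the normalized quantity $\mathrm{e}^{-m_l I(\sigma(\lambda))}|\widehat{\Psi_{\sigma(\lambda)}^{(m_l)}}(m_l)| = 2(\lambda_1/(\lambda_2 \mathrm{e}^{I(\sigma(\lambda))}))^{m_l}\,|\cdots|$ is a product of a factor which is $O(1)$ (and bounded below when $\lambda_1 \geq \lambda_3$) times $|{-\cos(2\pi m_l \theta)} + \zeta_{m_l}|$, where $\zeta_n := (-1)^n \mathrm{e}^{-i\pi\alpha n} \kappa_n$ with $\kappa_n \to \kappa_\infty := \frac{\lambda_+}{\lambda_+-\lambda_-}(1 - \gamma/\lambda_+^2) = \frac{\lambda_+ - \lambda_-}{\lambda_+ - \lambda_-} \cdot(\text{something nonzero})$ — one checks $\kappa_\infty \neq 0$ since $\gamma \neq \lambda_+^2$ as $\lambda_+ \neq \lambda_-$ forces $\lambda_+^2 = \lambda_+ - \gamma \neq \gamma$ unless $\lambda_+ = 2\gamma$, a degenerate case handled directly or absorbed into the $\gamma = 1/4$ branch \eqref{eq_13}.

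The second and main step is to show that for a.e.\ $\theta$, the expression $|{-\cos(2\pi m_l\theta)} + \zeta_{m_l}|$ does not tend to zero along $(m_l)$. This is where passing to the concatenated sequence $(q_{n_l}) \cup (2q_{n_l}) \cup (3q_{n_l})$ is essential: the pure sequence $(q_{n_l})$ could in principle be such that $\cos(2\pi q_{n_l}\theta) \to \re \zeta_\infty'$ for a bad set of $\theta$, but one cannot have $\cos(2\pi q\theta)$, $\cos(4\pi q\theta)$ and $\cos(6\pi q\theta)$ all simultaneously approximate the three prescribed target values $\re\zeta_q, \re\zeta_{2q}, \re\zeta_{3q}$ for $\theta$ in a positive-measure set, because of the algebraic dependence among $\cos(2\pi q\theta)$, $\cos(4\pi q\theta) = 2\cos^2(2\pi q\theta)-1$, $\cos(6\pi q\theta) = 4\cos^3(2\pi q\theta) - 3\cos(2\pi q\theta)$: knowing $\cos(2\pi q\theta)$ to precision $\varepsilon$ pins down the other two, and the imaginary parts (which also must be small since $\zeta_n$ is complex while $\cos$ is real, so actually $|\im \zeta_{m_l}| \to 0$ is required too) force $(-1)^{m_l}\mathrm{e}^{-i\pi\alpha m_l}$ near a real number, i.e. $\vertiii{\alpha m_l / 2 - \text{integer or half-integer related shift}} \to 0$ — but $m_l$ ranges over $q_{n_l}, 2q_{n_l}, 3q_{n_l}$ and $\vertiii{q_{n_l}\alpha} \to 0$ does not imply $\vertiii{q_{n_l}\alpha/2} \to 0$; this quantization of $\alpha q_{n_l}/2$ near $\{0,\tfrac12\}$ together with the three-term trigonometric identity is what produces the contradiction for positive-measure $\theta$. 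More precisely: I would argue by contradiction, supposing \eqref{eq_19a} fails on a set $\Theta$ of positive measure; then for $\theta \in \Theta$ along a subsequence all of $|\cos(2\pi q\theta) - a|$, $|\cos(4\pi q\theta) - b|$, $|\cos(6\pi q\theta) - c|$ (with $a,b,c$ the real parts of the corresponding bounded-away-from-zero targets, which exist as limits along a further subsequence by compactness) go to zero, so the fixed polynomial relations $b = 2a^2-1$, $c = 4a^3 - 3a$ must hold; combined with the constraint that each target has modulus arising from $\kappa_\infty \cdot (\text{unimodular})$ of modulus $|\kappa_\infty| \neq 0$, one gets that $a^2 + (\im\text{-part})^2 = |\kappa_\infty|^2$ is forced for all three, which over-determines the unimodular phase and yields a contradiction unless $\Theta$ has measure zero.

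The third step is bookkeeping to make the case distinction clean: (a) the generic case $\gamma \neq \tfrac14$ with $\lambda_1 \neq \lambda_3$ (so $\lambda_+ \neq \lambda_-$ and $\lambda_1/\lambda_2 = |\lambda_+|$ strictly, giving $\lambda_+^n/(\lambda_1/\lambda_2)^n$ exactly $\mathrm{e}^{i(\arg\lambda_+)n}$ of modulus $1$); (b) the case $\gamma = \tfrac14$ handled via \eqref{eq_13}, where the normalization gives $2(\lambda_1/(\lambda_2 \mathrm{e}^{I(\sigma(\lambda))}))^{m_l} \to $ const $\neq 0$ (since $\mathrm{e}^{I} = 1/2 = \lambda_1/\lambda_2$ when $\lambda_1=\lambda_3$, $2\lambda_1\lambda_3 = \lambda_2^2/2$... one checks $I(\sigma(\lambda)) = \log(1/2) = \log|\lambda_1/\lambda_2|$), and the term $(\tfrac12)^{n-1}/(2(\lambda_1/\lambda_2)^n) = 1$, so one is reduced to $|(-1)^n\mathrm{e}^{-i\pi\alpha n} - \cos(2\pi n\theta)|$ not going to zero, handled by the same three-term argument; (c) the boundary subcase $\lambda \in \mathrm{L}_{\mathrm{II}}$ with $\lambda_1$ or $\lambda_3 = 0$, where $\gamma = 0$, $I(\sigma(\lambda)) = 0$, $\lambda_+ = 1$, $\lambda_- = 0$, and \eqref{eq_11} collapses to $2(\lambda_1/\lambda_2)^n|{-\cos(2\pi n\theta)} + (-1)^n\mathrm{e}^{-i\pi\alpha n}|$ with $\lambda_1/\lambda_2 = \mathrm{e}^{I(\sigma(\lambda))}=1$ on that segment (this is essentially the critical almost Mathieu computation \eqref{eq_AMO_3}, where in fact $|\widehat{\Psi^{(n)}}(\pm n)| = 1$ identically and one does not even need the a.e.-$\theta$ argument). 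The main obstacle is step two: extracting a genuine contradiction on a positive-measure set of $\theta$ from the approximate trigonometric identities, which is exactly why the problem requires the concatenation trick — a single lacunary sequence $\vertiii{q_{n_l}\alpha} \to 0$ is too weak, and one genuinely needs three consecutive multiples to invoke the polynomial dependence of Chebyshev polynomials together with the modulus constraint $|\kappa_\infty| \neq 0$ to rule out cancellation. I expect this to be the heart of the argument, with everything else being the (somewhat lengthy but routine) verification that $\mathrm{e}^{I(\sigma(\lambda))}$ matches the exponential growth base in \eqref{eq_11}--\eqref{eq_13} across all of $\mathcal{SD}$ and that the limiting prefactor $\kappa_\infty$ is nonzero.
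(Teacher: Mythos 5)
Your proposal correctly identifies the outer skeleton (normalize by $\mathrm{e}^{m_l I(\sigma(\lambda))}$, reduce to a question about $|{-\cos(2\pi m_l\theta)} + \text{target}|$ along $(q_{n_l})\cup(2q_{n_l})\cup(3q_{n_l})$, use Chebyshev relations $T_2, T_3$ to get information from multiples), but the central step, ``yields a contradiction unless $\Theta$ has measure zero,'' is asserted rather than proved, and this is precisely where your argument and the paper's argument part ways. The Chebyshev/expanding-map mechanism does indeed rule out degenerate behavior when the target $a_1$ is not a fixed point of the tent-type maps, e.g.\ $a_1 = 0$ forces $a_2 = T_2(0) = -1$, inconsistent with the target at $2q_{n_l}$, and $a_1 = \pm 1/2$ forces $a_3 = T_3(\pm 1/2) = \mp 1$, again inconsistent. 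But exactly in the cases the paper flags as requiring extra work --- $\lambda_1=\lambda_3$ with $\gamma = 1/4$ (where the normalized target becomes $\pm 1$), and $\lambda_1 = \lambda_3$ with $\gamma > 1/4$ (where the target itself oscillates as $\cos(2\pi q\phi)$ with $\phi = \tfrac{1}{2\pi}\arg\lambda_+$) --- the Chebyshev relations \emph{are} consistent: $T_j(\pm 1) = \pm 1$, so there is no algebraic contradiction. Your ``modulus over-determination'' $a^2 + (\mathrm{Im})^2 = |\kappa_\infty|^2$ cannot save this either, since in those cases the imaginary parts line up as well. What actually closes the argument is a completely independent, purely measure-theoretic input: Proposition~\ref{lem_almostunique} from Section~\ref{sec_rationalapprox}, asserting that for \emph{any} fixed sequence $k_n \to \infty$, the set $\{\theta: \vertiii{k_n\theta} \to 0\}$ has $\mu$-measure zero. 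This is what finishes Cases~I(b) and~I(c) and is the source of the ``a.e.\ $\theta$'' in the statement. Your proposal never invokes or reproduces any such fact, so the argument has a genuine gap at its most delicate point.

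A smaller but still substantive omission: the paper first passes to a subsequence along which $p_{n_l} + q_{n_l}$ has constant parity $\mathfrak{p}$, so that the unimodular factor $(-1)^{j q_{n_l}}\mathrm{e}^{-i\pi\alpha j q_{n_l}}$ tends to a definite sign $(-1)^{j\mathfrak{p}}$ (equation~\eqref{eq_asyexprimp}). Without this reduction your ``target'' sequence does not converge and the Chebyshev comparison between $j=1,2,3$ is ill-posed. You gesture at $\vertiii{\alpha q_{n_l}/2}$ near $\{0,\tfrac12\}$, but a clean case distinction by the parity of $p_{n_l}+q_{n_l}$ (and, in Appendix~D, additionally of $q_{n_l}$ itself) is what makes the targets well-defined constants. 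Finally, the Case~II (i.e.\ $\lambda \in \mathrm{L}_{\mathrm{II}}$) branches are not just the $\gamma=0$ critical-AMO boundary: the paper treats $0 < \gamma < 1/4$, $\gamma = 1/4$, and $\gamma > 1/4$ separately there too, and only $\gamma=0$ is the trivial AMO-type computation.
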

\begin{remark}
The proof below shows that Proposition \ref{prop_lb} holds {\em{for all}} $\theta$ if $\lambda_1 \neq \lambda_3$.
\end{remark}

\begin{proof}
We consider separately the two situations, $\lambda \in \mathrm{III}$ and $\lambda \in \mathrm{L}_{\mathrm{II}}$. 

In both cases, the following observation will be of use: As shown above, the expression for $\widehat{\Psi_{\sigma(\lambda)}^{(k)}}(k)$ contains a term of the form $\mathrm{e}^{-i \pi \alpha k} (-1)^{k}$. As we are only interested in asymptotic behavior (following indicated by ``$\sim$''), employing (\ref{eq_contifracbasic}) yields
\begin{equation} \label{eq_asyexprimp}
\mathrm{e}^{-i \pi \alpha j q_n} (-1)^{j q_n} \sim (-1)^{j (p_n + q_n)} ~\mbox{, $j \in \mathbb{N}$ ,}
\end{equation}
which, for fixed $j \in \mathbb{N}$, produces a constant sign upon passing to a subsequence of $(q_n)$ where $(p_n + q_n)$ has constant parity. From here on, we shall thus assume $(q_{n_l})$ to be a fixed subsequence of $(q_n)$ such that the conclusion of Theorem \ref{prop_prozero} holds and that $(p_{n_l} + q_{n_l})$ has {\em{constant parity}}. Following, denote by $\mathfrak{p}$ this (constant) parity of $(p_{n_l} + q_{n_l})$.

\begin{description}
\item[Case I, $\lambda \in \mathrm{III}$]  Since $\lambda_1 \vee \lambda_3 = \lambda_1$, (\ref{eq_integral}) implies $I(\sigma(\lambda)) = \log\left( \frac{\lambda_1}{\lambda_2}\right)$. 
Suggested by (\ref{eq_8}), we distinguish the following three cases for $\gamma$:
\begin{description}
\item[(a) $0 \leq \gamma < \frac{1}{4}$]
In this case $\lambda_{\pm}$ in (\ref{eq_10}) are real positive and distinct. Moreover, $\lambda_1 + \lambda_3 \geq \lambda_2$ (and $\lambda_1 \vee \lambda_3 = \lambda_1$) implies that $\lambda_{+} \leq \frac{\lambda_1}{\lambda_2}$ with equality if and only if $\lambda_1 + \lambda_3 = \lambda_2$.

Upon use of (\ref{eq_asyexprimp}), for any fixed $j \in \mathbb{N}$, (\ref{eq_11}) reduces to 
\begin{equation} \label{eq_11aa}
\left \vert  \widehat{\Psi_{\sigma(\lambda)}^{(j q_{n_l})}}(j q_{n_l}) \right \vert \sim 2 \left\vert \dfrac{\lambda_1}{\lambda_2} \right\vert^{j q_{n_l}} \left\vert -\cos(2 \pi j q_{n_l} \theta) + (-1)^{j (p_{n_l} + q_{n_l})} A \right\vert ~\mbox{,}
\end{equation}
where A=0, if $\lambda_1 + \lambda_3 > \lambda_2$, and $A=\frac{1}{2}$, if $\lambda_1 + \lambda_3 = \lambda_2$.

We first consider the situation when $\lambda_1 + \lambda_3 = \lambda_2$, which by (\ref{eq_11aa}) depends on $\mathfrak{p}$. 

For odd $\mathfrak{p}$, the claim of the theorem would follow directly for $\widehat{\Psi_{\sigma(\lambda)}^{(q_{n_l})}}(q_{n_l})$ if one could ensure that
\begin{equation}
\limsup_{l \to \infty} \vert - \cos(2 \pi q_{n_l} \theta) - \frac{1}{2} \vert > 0 ~\mbox{,}
\end{equation}
which, however, will not be true for general $\theta$. 

Making use of (\ref{eq_11aa}) for $j=3$, this may easily be mended, replacing $q_{n_l}$ by $3 q_{n_l}$ whenever $l$ is such that $\cos(2 \pi q_{n_l} \theta) \approx -\frac{1}{2}$, in which case it is guaranteed that $\cos(2 \pi (3 q_{n_l}) \theta) \not\approx -\frac{1}{2}$. Referring to (\ref{eq_11aa}), the same strategy also works if $\mathfrak{p}$ is even.

For $\lambda_1 + \lambda_3 > \lambda_2$, a similar argument can be used to conclude the claim of the theorem; we mention that based on (\ref{eq_11aa}) with $A=0$, the argument is independent of $\mathfrak{p}$ and it is enough to consider the sequence $(q_{n_l}) \cup (2 q_{n_l})$.

\item[(b) $\gamma = \frac{1}{4}$] From $\lambda_1 + \lambda_3 \geq \lambda_2$, we conclude that $\frac{\lambda_1}{\lambda_2} \geq \frac{1}{2}$ and $\frac{\lambda_3}{\lambda_2} \leq \frac{1}{2}$, where equality holds if and only if $\lambda_1 = \lambda_3$. 
Referrring to (\ref{eq_13}), if $\lambda_1/\lambda_2 > 1/2$, the claim (\ref{eq_19a}) follows for $(q_{n_l}) \cup (2 q_{n_l})$, thereby taking care of instances $l$ when $\cos(2 \pi q_{n_l} \theta) \approx 0$.

If $\lambda_1 /\lambda_2 = 1/2$, one has
\begin{equation} \label{eq_13b}
\left \vert  \widehat{\Psi_{\sigma(\lambda)}^{(j q_{n_l})}}(j q_{n_l}) \right \vert = 2 \left(\frac{\lambda_1}{\lambda_2}\right)^{j q_{n_l}} \left\vert -\cos(2 \pi j q_{n_l} \theta) + (-1)^{j (p_{n_l} + q_{n_l})} \right\vert ~\mbox{.}
\end{equation}

For even $\mathfrak{p}$, the sign in (\ref{eq_13b}) is constant in $j$. We note however that above strategy of replacing $q_{n_l}$ by $j q_{n_l}$ does not work for {\em{any}} $j$ since the {\em{expanding map}} of degree $j$, $E_j: \mathbb{T} \to \mathbb{T}$, $E_j(x) = j x (\mathrm{mod} ~1)$, has a fixed point at zero.

We address this problem in Sec. \ref{sec_rationalapprox} where Proposition \ref{lem_almostunique} shows that at least for $\mu$-a.e. $\theta$ one has
\begin{equation} \label{eq_13a}
\limsup_{l \to \infty} \dfrac{ \left \vert  \widehat{\Psi_{\sigma(\lambda)}^{(q_{n_l})}}(q_{n_l}) \right \vert   }{2 \left(\frac{\lambda_1}{\lambda_2}\right)^{q_{n_l}}} \gtrsim \limsup_{l \to \infty} \left\vert \cos(2 \pi q_{n_l} \theta) - 1 \right\vert > 0 ~\mbox{.}
\end{equation}

The case when $\mathfrak{p}$ odd is reduced to a problem analogous to (\ref{eq_13a}) by considering $\left \vert  \widehat{\Psi_{\sigma(\lambda)}^{(2 q_{n_l})}}(2 q_{n_l}) \right \vert$ instead of  $\left \vert  \widehat{\Psi_{\sigma(\lambda)}^{(q_{n_l})}}(q_{n_l}) \right \vert$.

\item[(c) $\gamma > \frac{1}{4}$] Then, $\lambda_+ = \overline{\lambda_-}$ and $\abs{\lambda_+} = \sqrt{\gamma}$. In particular, $\abs{\lambda_+} \leq \frac{\lambda_1}{\lambda_2}$ with equality if and only if $\lambda_1 = \lambda_3$. Hence, using (\ref{eq_11}) for $\lambda_1 \neq \lambda_3$, the claim follows for $(q_{n_l}) \cup (2 q_{n_l})$ and {\em{every}} $\theta$.
  
If $\lambda_1 = \lambda_3$, the right hand side of (\ref{eq_11}) additionally depends on $\phi:=\frac{1}{2 \pi} \arg(\lambda_+)=\frac{1}{2 \pi } \arctan(\sqrt{4\gamma-1})$. Referring to (\ref{eq_11}), we set 
\begin{equation}
\left \vert  \widehat{\Psi_{\sigma(\lambda)}^{(n)}}(n) \right \vert  =: 2 \left \vert \dfrac{\lambda_1}{\lambda_2}\right\vert^n \left \vert - \cos(2 \pi n \theta) + (-1)^n \dfrac{\mathrm{e}^{-i \pi \alpha n}}{2} A_n \right\vert ~\mbox{.} 
\end{equation}

A computation verifies that $A_n$ is purely real with $A_n = 2 \cos(2 \pi \phi n)$. Therefore, 
\begin{equation} \label{eq_13c}
\dfrac{ \left \vert  \widehat{\Psi_{\sigma(\lambda)}^{(j q_{n_l} )}}(j q_{n_l}) \right \vert   }{2 \left(\frac{\lambda_1}{\lambda_2}\right)^{j q_{n_l}}} \gtrsim \left\vert -\cos(2 \pi j q_{n_l} \theta) + (-1)^{j (p_{n_l} + q_{n_l})} \cos(2 \pi j q_{n_l} \phi)  \right\vert ~\mbox{.}
\end{equation}
As the right hand side of (\ref{eq_13c}) now requires control of two cosines oscillating at, in general, unrelated frequencies, the simple argument relying on properties of the expanding map will not be of use.

For even $\mathfrak{p}$, the sign on the right hand side of (\ref{eq_13c}) is independent of $j$, whence the claim reduces to
\begin{equation} \label{eq_13d}
\limsup_{l\to\infty} \left\vert -\cos(2 \pi q_{n_l} \theta) + \cos(2 \pi q_{n_l} \phi) \right\vert \stackrel{?}{>} 0 ~\mbox{.}
\end{equation}
Even though (\ref{eq_13d}) will not be true for all $\theta$, the problem may again be formulated in a form that allows application of Proposition \ref{lem_almostunique}, thus implying (\ref{eq_13d}) for $\mu$-a.e. $\theta$.

To this end, first assume by possibly passing to an appropriate subsequence, that both $(q_{n_l} \theta)$ and $(q_{n_l} \phi)$ converge. Then, if (\ref{eq_13d}) fails, the set \begin{equation}
\Omega_\phi:=\{\theta \in \mathbb{T}: q_{n_l} (\theta \pm \phi) \to 0   \} ~\mbox{,}
\end{equation}
will be non-empty, which however is of $\mu$-measure zero by Proposition \ref{lem_almostunique}.

The case when $\mathfrak{p}$ is odd leads to the same type of problem as (\ref{eq_13d}), when replacing $(q_{n_l})$ by $(2 q_{n_l})$.

\end{description}
\end{description}

\begin{description}
\item[Case II, $\lambda \in \mathrm{L}_{\mathrm{II}}$] In particular then, $\lambda_2 = 1$. First, notice that for $\gamma = 0$, (\ref{eq_integral}) implies that $I(\lambda) = 0$. Moreover, from Lemma \ref{lem_1}, we have
\begin{equation}
\left \vert  \widehat{\Psi_{\sigma(\lambda)}^{(n)}}(n) \right \vert = \left \vert  \widehat{\phi_{\sigma(\lambda)}^{(n)}}(n) \right \vert = 1 ~\mbox{,}
\end{equation}
which, in summary, already implies (\ref{eq_19a}).

For $\gamma \neq 0 $, rewriting (\ref{eq_integral}) in terms of the relevant parameter $\gamma$ yields
\begin{equation}
I(\lambda) = \log \left \vert \dfrac{\gamma}{\lambda_{-}} \right \vert ~\mbox{.}
\end{equation}

We again distinguish three cases.
\begin{description}
\item[(a) $0 < \gamma < \frac{1}{4}$]
Making use of (\ref{eq_11}), 
\begin{equation} \label{eq_15}
\left\vert  \dfrac{ \widehat{\Psi_{\sigma(\lambda)}^{(q_n)}}(q_n)}{(\gamma/\lambda_-)^{q_n}   }  \right\vert \geq \left\vert -2 \cos(2 \pi q_n \theta) \left(\dfrac{ \lambda_- \lambda_1  }{ \gamma  }\right)^{q_n} + (-1)^{q_n} \mathrm{e}^{- i \pi \alpha q_n}   \right\vert ~\mbox{.}
\end{equation}
We note that $\frac{\lambda_- \lambda_1}{\gamma} \geq 0$ since $\lambda_1 \vee \lambda_3 = \lambda_1$ and $\lambda_1 + \lambda_3 \geq 0$.

For $\frac{ \lambda_- \lambda_1   }{ \gamma  } \neq 1$, (\ref{eq_15}) immediately implies
\begin{equation}
\limsup_{n\to\infty} \left\vert  \dfrac{ \widehat{\Psi_{\sigma(\lambda)}^{(q_n)}}(q_n)}{(\gamma/\lambda_-)^{q_n}   }  \right\vert \geq 1 ~\mbox{.}
\end{equation}

The case $ \frac{ \lambda_- \lambda_1   }{ \gamma  } = 1$ implies $\lambda_1 + \lambda_3 =1$, thus has already been dealt with in region III.

\item[(b) $\gamma = \frac{1}{4}$]
Here, $I(\lambda) = \log \left( \frac{1}{2} \right)$ by (\ref{eq_integral}). Thus, using (\ref{eq_13}), the claim follows immediately for $\lambda_1 > 1/2$, 
\begin{equation}
\limsup_{n\to\infty} \left\vert  \dfrac{ \widehat{\Psi_{\sigma(\lambda)}^{(q_n)}}(q_n)}{ \left( \frac{1}{2} \right)^{q_n -1}}  \right\vert \geq 1 ~\mbox{.}
\end{equation}

If, on the other hand, $\lambda_1=1/2$, one obtains the same expression as in (\ref{eq_13b}), whence can proceed as then.
\item[(c) $\gamma > \frac{1}{4}$] 
Since $\abs{\lambda_{\pm}} = \sqrt{\gamma}$,
\begin{equation}
\left \vert \frac{\lambda_1 \lambda_-}{\gamma} \right \vert = \left( \frac{\lambda_1}{\lambda_3}  \right)^{1/2} ~\mbox{,}
\end{equation}
whence for $\lambda_1 \neq \lambda_3$ 
\begin{equation}
\limsup_{n \to \infty} \left \vert \frac{\lambda_-}{\gamma} \right \vert^{q_n} \left \vert  \widehat{\Psi_{\sigma(\lambda)}^{(q_n)}}(q_n) \right \vert \geq 1 ~\mbox{.}
\end{equation}
Notice that for $\lambda_1 = \lambda_3$, $2 \lambda_3 \leq 1$ implies $\gamma \leq \frac{1}{4}$.
\end{description}
\end{description}
\end{proof}

Since Proposition \ref{prop_lb} contradicts the asymptotics of $\Vert \Psi_{\sigma(\lambda)}^{(n_l)} \Vert_{L^1(\mathbb{T})}$ given in (\ref{eq_12}), we have proven Theorem \ref{thm_point}.

\subsection{Almost uniqueness in rational approximation} \label{sec_rationalapprox}

An important ingredient in the proof of Proposition \ref{prop_lb} (specifically, (\ref{eq_13a}) in Case I(b) and (\ref{eq_13d}) in Case I(c) of Sec. \ref{sec_proofofthmpoint})  were conclusions of the form:
\begin{equation} \label{eq_motivrational}
\mu \left( \left\{ \theta \in \mathbb{T}: q_{n_l} \theta \to 0 \right\} \right) = 0 ~\mbox{.}
\end{equation}
Here, $(q_{n_l})$ was a certain subsequence of the sequence of denominators $(q_n)$ in the continued fraction expansion of $\alpha$, which in particular implies that $q_{n_l} \alpha \to 0$. The purpose of this section is to prove statements of the form (\ref{eq_motivrational}).

To this end, let $\theta \in \mathbb{T}$ be irrational. We call a sequence $(k_n)$ of natural numbers a {\em{sequence of denominators approximating $\theta$}} if $\vert \vert \vert k_n \theta \vert \vert \vert \to 0$, as $n \to \infty$. Necessarily, $\theta \in \mathbb{R}\setminus \mathbb{Q}$ implies $k_n \to \infty$. Given $(k_n)$, let $\Omega(k_n)$ be the set of $\theta \in \mathbb{T}$ such that $(k_n)$ forms a sequence of denominators approximating $\theta$. 

The following proposition asserts ``almost - uniqueness'' of the approximated number for a given sequence of denominators:
\begin{prop} \label{lem_almostunique}
Let $(k_n)$ be a sequence in $\mathbb{N}$, then
\begin{equation}
\mu(\Omega(k_n))= 0 ~\mbox{.}
\end{equation}
\end{prop}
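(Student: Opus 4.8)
The plan is to show that the set $\Omega(k_n)$ of points $\theta \in \mathbb{T}$ along which $\vertiii{k_n \theta} \to 0$ has Haar measure zero whenever $(k_n)$ is a sequence with $k_n \to \infty$ (the case where $(k_n)$ is bounded is trivial since then $\Omega(k_n)$ is finite). First I would observe that it suffices to treat a subsequence: since $\Omega(k_n) \subseteq \Omega(k_{n_j})$ for any subsequence, it is enough to exhibit {\em one} subsequence $(k_{n_j})$ along which $\mu\bigl(\{\theta : \vertiii{k_{n_j}\theta} \to 0\}\bigr)=0$. Passing to such a subsequence, I may assume $k_{n+1}/k_n \ge 2$ (say), i.e. the $k_n$ grow at least geometrically.

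The core of the argument is a quantitative van der Corput / Weyl-type estimate combined with Borel--Cantelli. Fix $\delta>0$ and let $E_n(\delta):=\{\theta \in \mathbb{T} : \vertiii{k_n \theta} < \delta\}$; since $\theta \mapsto k_n\theta$ is measure-preserving onto $\mathbb{T}$ (as a $k_n$-to-one map, each fibre contributing proportionally), one has $\mu(E_n(\delta)) = 2\delta$ exactly. The subtlety is that these events are not independent, so one cannot apply Borel--Cantelli naively even though $\sum \mu(E_n(\delta))$ diverges for the complement. Instead I would estimate the measure of the liminf directly. Note $\Omega(k_n) = \bigcap_{m\ge 1}\bigcap_{N\ge 1}\bigcup_{n \ge N} E_n(1/m)^c \ldots$ — more cleanly, $\Omega(k_n) = \bigcap_{\delta>0}\bigl(\text{eventually in } E_n(\delta)\bigr)$, so $\mu(\Omega(k_n)) \le \mu\bigl(\liminf_n E_n(\delta)\bigr)$ for every $\delta$. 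Thus it suffices to show $\mu\bigl(\liminf_n E_n(\delta)\bigr)=0$ for some fixed small $\delta$, e.g. $\delta = 1/8$. For this, expand the indicator $\mathbf{1}_{E_n(\delta)}$ in a Fourier series, $\mathbf{1}_{E_n(\delta)}(\theta)=2\delta + \sum_{j\ne 0} \widehat{\mathbf{1}}(j)\, e^{2\pi i j k_n \theta}$, and compute the $L^2$ norm of a partial Cesàro average $\frac1M\sum_{n=1}^M \mathbf{1}_{E_n(\delta)}$; using the geometric growth $k_{n+1}/k_n \ge 2$, the frequencies $j k_n$ ($j\ne 0$, $1\le n\le M$) have bounded overlap multiplicity, so $\bigl\|\frac1M\sum_{n=1}^M(\mathbf{1}_{E_n(\delta)}-2\delta)\bigr\|_{L^2}^2 = O(1/M)$. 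Hence $\frac1M\sum_{n=1}^M \mathbf{1}_{E_n(\delta)}(\theta) \to 2\delta$ in $L^2$, and along a further subsequence a.e. In particular, for a.e. $\theta$ the density of $\{n : \theta \in E_n(\delta)\}$ is $2\delta < 1$, so $\theta$ does {\em not} lie in $E_n(\delta)$ for all large $n$; a fortiori $\theta \notin \Omega(k_n)$. Since this holds for a.e. $\theta$, $\mu(\Omega(k_n))=0$.

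The main obstacle I anticipate is handling the dependence among the events $E_n(\delta)$ cleanly: a pure $L^2$/Fourier bound needs the geometric-growth reduction to control how many of the frequencies $\{j k_n\}$ can coincide, and one must be a little careful that the Fourier coefficients $\widehat{\mathbf 1}(j) = O(1/|j|)$ are only conditionally summable, so the off-diagonal terms in $\|\cdot\|_{L^2}^2$ must be estimated with the decay in $j$ rather than crudely. An alternative, perhaps cleaner, route avoiding Fourier analysis altogether is an elementary covering/induction argument: at stage $n$, $E_n(\delta)$ consists of $k_n$ arcs each of length $2\delta/k_n$ equidistributed with gaps $\approx 1/k_n$; intersecting with the already-built ``good'' set and using $k_{n+1} \gg k_n$ one shows each of the finitely many arcs comprising $\bigcap_{m\le n} E_m(\delta)$ has its measure multiplied by a factor close to $2\delta<1$ at the next step, so $\mu\bigl(\bigcap_{m\le n}E_m(\delta)\bigr) \le (2\delta + o(1))^n \to 0$, and then $\liminf_n E_n(\delta)$ is null because it is contained in $\bigcup_N \bigcap_{n\ge N} E_n(\delta)$, each of whose terms we have just shown is null (being $\le \bigcap_{N \le m \le M} E_m(\delta)$ for all $M$). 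Either way, the statement follows for all $\delta$ simultaneously by taking $\delta \to 0$ through a countable set, completing the proof.
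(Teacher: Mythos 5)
Your proposal is correct, but both the Fourier/lacunary route and the covering/induction variant are far heavier machinery than necessary; the paper's own proof is essentially two lines. Setting $A_k^\epsilon := \{\theta \in \mathbb{T} : \vertiii{k\theta} < \epsilon\}$, one observes $\mu(A_k^\epsilon) \leq 2\epsilon$ for every $k$ and every $\epsilon > 0$. Since $\theta \in \Omega(k_n)$ means $\vertiii{k_n\theta} \to 0$, for each fixed $\epsilon$ one has $\Omega(k_n) \subseteq \liminf_n A_{k_n}^\epsilon = \bigcup_{N} \bigcap_{n \geq N} A_{k_n}^\epsilon$. This is an increasing union of sets, each contained in a single $A_{k_N}^\epsilon$ and hence of measure at most $2\epsilon$; by continuity of measure from below, $\mu(\Omega(k_n)) \leq 2\epsilon$, and letting $\epsilon \to 0$ finishes the proof. (The paper also records, in a footnote, a second one-line argument: $\Omega(k_n)$ is a measurable proper subgroup of $\mathbb{T}$, hence null.) The structural point you missed is that no independence, lacunarity, Borel--Cantelli, or equidistribution information about the arcs making up $A_{k_n}^\epsilon$ is needed; the crude bound $\mu(A_k^\epsilon) \leq 2\epsilon$, uniform in $k$, already traps $\Omega(k_n)$ inside a $\liminf$ of small sets, and a $\liminf$ of sets of measure at most $c$ always has measure at most $c$. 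Your argument does establish something strictly stronger at a fixed scale --- for a lacunary subsequence, $\mu(\liminf_n A_{k_n}^\delta) = 0$ for each fixed $\delta < 1/2$, not merely $\leq 2\delta$ --- but that extra strength is unused for the proposition, and you pay for it with the subsequence reduction and the frequency-multiplicity estimate; in particular the main technical worry you flag, the merely conditional summability of the Fourier coefficients of the arc indicator, is an obstacle the correct proof never meets.
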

\begin{remark}
\begin{itemize}
\item[(i)] Considering the degree $N$ expanding map $E_N: \mathbb{T} \to \mathbb{T}$, $x \mapsto n ~x (\mathrm{mod} ~1)$, one concludes
\begin{equation} \label{eq_invexp}
E_N(\Omega(k_n)) \subseteq \Omega(k_n) ~\mbox{.}
\end{equation}
In particular, $\theta_0 \in \Omega(k_n)$ implies the same holds true for any $\theta$ with $\left( \mathbb{Z} \theta_0 + \theta \right) \cap \mathbb{Z} \neq \emptyset$. Notice however that $\Omega$ is {\em{not}} invariant under $E_N$.
\item[(ii)] It is easy to see that $\Omega(k_n)$ is in general {\em{uncountable}}. Indeed, suppose $k_n=10^{l_n}$ with $l_n \in \mathbb{N}$ such that $l_{n+1} - l_n \geq n+1$. Any $\theta \in [0,1)$ whose decimal expansion $0.a_1 a_2 \dots$ satisfies $a_{l_{n+j}} = 0$, $1\leq j \leq n$ and all $n \in \mathbb{N}$, yields $\vertiii{ k_n \theta } \leq 10^{-n} \to 0$. Obviously, the set of such $\theta$ is uncountable.
\end{itemize}
\end{remark}
\begin{proof}\footnote{An alternative argument would be to observe that $\Omega(k_n)$ is a proper subgroup of $\mathbb{T}$, whence $\mu(\Omega(k_n)) = 0$ by problem 14 in Sec. 1 of Katzelson's classic book on Harmonic Analysis \cite{Katznelson_book2004}.} .

Set $A_k^\epsilon =\{\theta : \vert\vert \vert k\theta \vert\vert \vert <\epsilon.\} $ For any $k\in \mathbb{N},$ $\vert A_k^\epsilon\vert \leq 2\epsilon$. Since for every $\epsilon,$ $\Omega (k_n)\subset A_{k_{n(\epsilon})}^\epsilon,$ the result follows. 
\end{proof}

\section{The theorem of Avila, Fayad, and Krikorian for Jacobi operators} \label{sec_afk}

Purpose of this section is to prove Theorem \ref{thm_coroafk} for a {\em{non-singular}} quasi-periodic, analytic Jacobi operator. For Schr\"odinger operators ($c \equiv 1$), the theorem is an immediate consequence of \cite{AvilaFayadKrikorian_2011}, where the following dichotomy is proven for Lebesgue a.e. $E \in \mathbb{R}$: either the $SL(2,\mathbb{R})$-cocycle $(\alpha, A^E)$ satisfies $L(\alpha, A^E) > 0$ or it is analytically conjugate to a real, not necessarily constant rotation. In this section, we comment on extending this statement to non-singular Jacobi operators. 

\subsection{Reductions} \label{sec_afk_reductions}
We first recall some definitions. Let $\mathcal{Y}$ stand for either $L^p$, $ p \geq 1$, or $\mathcal{C}^{\omega}$ (analytic category), and $\mathcal{M}$ for one of $M_2(\mathbb{C})$, $SL(2,\mathbb{C})$ or $SL(2,\mathbb{R})$. Fixing $\alpha$ irrational, two cocycles $(\alpha, A)$ and $(\alpha, D)$, $A, D \in \mathcal{Y}(\mathbb{T},\mathcal{M})$, are $\mathcal{Y}$-{\em{conjugate}} over $\mathcal{M}$ if for some $C \in \mathcal{Y}(\mathbb{R}/2\mathbb{Z},\mathcal{M})$ \footnote{In view of the case $\mathcal{Y}=\mathcal{C}^\omega$, we only require the mediating change of coordinates $C(x)$ in (\ref{eq_defconj}) to be two- instead of one-periodic.} with 
$\log\abs{\det{C}} \in L^1(\mathbb{R}/2\mathbb{Z})$ one has
\begin{equation} \label{eq_defconj}
C(. + \alpha)^{-1} A(.) C(.) = D(.) ~\mbox{, in $\mathcal{Y}$.}
\end{equation}
Clearly, $L(\alpha,D) = L(\alpha,A)$.

\begin{definition} \label{def_rotred}
For $\mathcal{M}=SL(2,\mathbb{R})$, we call $(\alpha,A)$ {\em{$\mathcal{Y}$-reducible}} if it is $\mathcal{Y}$-conjugate over $SL(2, \mathbb{R})$ to a real, {\em{not necessarily constant}} rotation.
\end{definition}

The proof in \cite{AvilaFayadKrikorian_2011} relies on Theorem 1.3 therein, which is not specific to Schr\"odinger cocycles. The strategy is based on a KAM scheme which requires an analytic $SL(2, \mathbb{R})$-cocycle which is homotopic to the identity. We emphasize that the techniques used in \cite{AvilaFayadKrikorian_2011} rely on {\em{real}}-analyticity.

In spite of $A^E(\theta)$ in (\ref{eq_defnjacobico}) in general being $M_2(\mathbb{C})$-valued, for any {\em{non}}-singular quasi-periodic, analytic Jacobi operator one has the following analytic conjugacy
\begin{eqnarray} \label{eq_conjcomplexreal}
C(\theta+\alpha)^{-1} \begin{pmatrix} E - v(\theta) & - \abs{c}(\theta-\alpha) \\ \abs{c}(\theta) & 0 \end{pmatrix} C(\theta)  &=& \begin{pmatrix} E - v(\theta) & -\widetilde{c}(\theta - \alpha) \\ c(\theta) & 0  \end{pmatrix} ~\mbox{,} \nonumber \\
C(\theta) & := & \begin{pmatrix} 1 & 0 \\ 0 & \sqrt{\frac{\widetilde{c}(\theta- \alpha)}{c(\theta-\alpha)}} \end{pmatrix} ~\mbox{,}
\end{eqnarray}
which reduces the problem to a quasi-periodic, analytic Jacobi operator where $c(\theta)$ is {\em{real and positive}} \footnote{The conjugacy (\ref{eq_conjcomplexreal}) is a dynamical formulation of the well-known fact that any Jacobi operator $H_{c,v}$ with underlying sequences $c=(c_n)$ and $v=(v_n)$ is unitarily equivalent to $H_{\vert c \vert, v}$, see e.g. \cite{Teschl_book_2000}, (1.57) and Lemma 1.6, therein.}. 

In the same spirit as $\widetilde{c}$ analytically ``re-interprets'' $\overline{c}$, morally, the function $\abs{c}(\theta):= \sqrt{\widetilde{c}(\theta) c(\theta)} \in \mathcal{C}^\omega(\mathbb{T})$ analytically ``re-interprets'' $\vert c(\theta) \vert$. We note that since $\inf_{\theta \in \mathbb{T}} \vert c(\theta) \vert > 0$, the branch of the square-root appearing in (\ref{eq_conjcomplexreal}) and in the definition of $\abs{c}(\theta)$ can be chosen so that both $\abs{c}(\theta)$ and $\sqrt{\frac{\widetilde{c}(\theta- \alpha)}{c(\theta-\alpha)}}$ are still 1-periodic and holomorphic in a neighborhood of $\mathbb{R}$ (apply e.g. Fact 1 in \cite{JitomirskayaMarx_2013_erratum}).

In particular, we may apply the arguments of \cite{AvilaFayadKrikorian_2011} to the {\em{analytically normalized real  Jacobi-cocycle}} $(\alpha, (A^E)^\sharp)$ defined by
\begin{equation} \label{eq_defApr}
(A^E)^{\sharp}(\theta) := \dfrac{A^E(\theta)}{\sqrt{det A^E(\theta)}} \in \mathcal{C}^\omega(\mathbb{T}, SL(2,\mathbb{R})) ~\mbox{.}
\end{equation}
Note that in the neighborhood of $\epsilon = 0$ where $c(\theta + i \epsilon) \neq 0$ (and thus where (\ref{eq_defApr}) is well-defined), one has
\begin{equation} \label{eq_defcomplexle_1}
L(\alpha, (A_\epsilon^E)^{\sharp}) = L(\alpha, A_\epsilon^E) - \int_\mathbb{T} \log \vert c(\theta) \vert ~\ud \mu(\theta) = L(E; \epsilon) ~\mbox{,}
\end{equation}
This was the dynamical reason, mentioned in the end of Remark \ref{rem_defcomplexLE}, which underlies the definition of the complexified Lyapunov exponent in (\ref{eq_defcomplexle}).

To apply the arguments of \cite{AvilaFayadKrikorian_2011} to the normalized Jacobi cocycle, first notice that $(\alpha, (A^E)^\sharp)$ is homotopic  to the identity in $\mathcal{C}^\omega(\mathbb{T}, SL(2,\mathbb{R}))$: To see this, just consider
\begin{eqnarray}
& H_t(\theta) = \dfrac{1}{\sqrt{c(\theta) c(\theta - t \alpha)}} \begin{pmatrix} t (E - v(\theta))  & -c(\theta - t \alpha) \\ c(\theta) & 0  \end{pmatrix}, ~t \in [0,1] ~\mbox{,}
\end{eqnarray}
which establishes a homotopy of $(\alpha, (A^E)^\sharp)$ to the constant (real) rotation by $\pi/2$ and hence to the identity matrix.

Based on Theorem 1.3 in \cite{AvilaFayadKrikorian_2011}, the authors then argue (Lemma 1.4 and its proof on p.4 of \cite{AvilaFayadKrikorian_2011}) that if $(\alpha, A)$ is $L^2$-reducible, it is  already so analytically. 

Hence, it is left to establish $L^2$-reducibility of $(\alpha, (A^E)^\sharp)$ for Lebesgue a.e. $E$ where $L(\alpha, (A^E)^\sharp) = L(E)= 0$. As in the Schr\"odinger case \cite{DeiftSimon_1983}, this is a consequence of Kotani theory. Assuming a more dynamical point of view, we extend the result in Sec. \ref{m5_app_kotanidyn} below.

In summary, we arrive at the following extension of the result in \cite{AvilaFayadKrikorian_2011} to non-singular Jacobi operators: 
\begin{theorem} \label{m5_thm_afk_jacobi}
Consider a  non-singular quasi-periodic, analytic Jacobi-operator with irrational frequency $\alpha$. For Lebesgue a.e. $E \in \mathbb{R}$: either $L(E) > 0$ or the cocycle $(\alpha, (A^E)^\sharp)$ is analytically reducible to a real, not necessarily constant rotation. 
\end{theorem}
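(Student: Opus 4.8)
The plan is to combine the reductions carried out in Section~\ref{sec_afk_reductions} with the one remaining analytic ingredient, namely $L^2$-reducibility of the normalized cocycle on the set where the Lyapunov exponent vanishes. First I would use the analytic conjugacy (\ref{eq_conjcomplexreal}) to reduce, without loss of generality, to a non-singular quasi-periodic analytic Jacobi operator with $c(\theta)$ real and strictly positive: since $\inf_{\theta\in\mathbb{T}}\vert c(\theta)\vert>0$ the mediating coordinate change is $1$-periodic and holomorphic in a neighborhood of $\mathbb{R}$, it leaves $L(E)$ unchanged, and, being analytic, it carries analytic reducibility to analytic reducibility (one simply composes the two analytic conjugacies). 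For such an operator $\det A^E(\theta)=c(\theta)c(\theta-\alpha)$ is bounded below, so the analytically normalized cocycle $(\alpha,(A^E)^\sharp)$ of (\ref{eq_defApr}) is a genuine element of $\mathcal{C}^\omega(\mathbb{T},SL(2,\mathbb{R}))$, and by (\ref{eq_defcomplexle_1}) its Lyapunov exponent equals $L(E)$.

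Second, I would feed this cocycle into the machinery of \cite{AvilaFayadKrikorian_2011}. The explicit homotopy $H_t$ exhibited above shows that $(\alpha,(A^E)^\sharp)$ is homotopic to the identity in $\mathcal{C}^\omega(\mathbb{T},SL(2,\mathbb{R}))$; hence Theorem~1.3 of \cite{AvilaFayadKrikorian_2011} --- which is not specific to Schr\"odinger cocycles --- together with the argument of Lemma~1.4 there applies, giving that an $L^2$-reducible cocycle of this type is already analytically reducible to a real, not necessarily constant, rotation. This reduces Theorem~\ref{m5_thm_afk_jacobi} to the statement that $(\alpha,(A^E)^\sharp)$ is $L^2$-reducible for Lebesgue a.e.\ $E$ in $\mathcal{Z}$.

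Third, I would establish this $L^2$-reducibility by the Jacobi version of the Kotani / Deift--Simon theory, which is the content of Section~\ref{m5_app_kotanidyn}. The classical input is that vanishing of $L(E)$ on the positive-measure set $\mathcal{Z}$ forces, at a.e.\ $E\in\mathcal{Z}$, the Weyl $m$-functions of $H_\theta$ to be purely imaginary on the real axis and to continue analytically there; organizing the associated decaying solutions into a measurable frame that diagonalizes the (complexified) cocycle and checking that the resulting change of coordinates lies in $L^2$ yields the desired conjugacy to $SO(2,\mathbb{R})$-valued rotations. Here it is essential --- precisely for the reason flagged in Remark~\ref{rem_defcomplexLE} --- to work with the analytically normalized cocycle $(\alpha,(A^E)^\sharp)$ and not with the transfer cocycle $(\alpha,B^E)$, since it is the former whose Lyapunov exponent is even in $\epsilon$ and equals $L(E)$.

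Putting the three steps together gives the dichotomy for a.e.\ $E$: either $L(E)>0$, or $L(E)=0$, in which case Step~3 gives $L^2$-reducibility of $(\alpha,(A^E)^\sharp)$ and Step~2 upgrades this to analytic reducibility. I expect the main obstacle to be Step~3: transporting the reducibility half of Kotani theory, standard for Schr\"odinger operators \cite{DeiftSimon_1983}, to non-singular Jacobi cocycles --- in particular pinpointing the normalized $SL(2,\mathbb{R})$ cocycle whose $L^2$-reducibility corresponds to analyticity of the $m$-functions, and verifying $L^2$-integrability of the conjugating matrix. The homotopy and conjugacy bookkeeping, by contrast, is routine given (\ref{eq_conjcomplexreal}), (\ref{eq_defApr}) and the explicit homotopy $H_t$.
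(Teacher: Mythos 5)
Your proposal is correct and follows the same three-step strategy as the paper's proof: reduce via the analytic conjugacy (\ref{eq_conjcomplexreal}) to real positive $c$, apply Theorem~1.3 and the argument of Lemma~1.4 of \cite{AvilaFayadKrikorian_2011} to the normalized cocycle $(\alpha,(A^E)^\sharp)$ (homotopic to the identity via $H_t$) to upgrade $L^2$-reducibility to analytic reducibility, and establish $L^2$-reducibility on $\mathcal{Z}$ by the Jacobi version of Deift--Simon/Kotani theory, which is exactly Theorem~\ref{m5_thm_kotani} in Section~\ref{m5_app_kotanidyn}. One small imprecision: the $m$-functions at a.e.\ $E\in\mathcal{Z}$ are not ``purely imaginary'' and do not ``continue analytically'' to the real axis --- only the combination in (\ref{m5_eq_kotanirel}) has vanishing real part, and the boundary values exist in a measurable sense --- but this is a loose description of the known input, not a gap in the argument.
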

By (\ref{eq_defcomplexle_1}), analytic reducibility of $(\alpha, (A^E)^\sharp)$ implies subcritical behavior, whence Theorem \ref{m5_thm_afk_jacobi} proves Theorem \ref{thm_coroafk}. 

\subsection{A dynamical formulation of Kotani theory} \label{m5_app_kotanidyn}

Following, we consider a fixed non-singular quasi-periodic, analytic Jacobi-operator $H_\theta$ with irrational frequency $\alpha$; in particular, for all $\theta \in \mathbb{T}$, one has
\begin{equation} \label{m5_eq_kotani_assumptions}
0 < m \leq c(\theta) \leq M < +\infty ~\mbox{.}
\end{equation}
The previous section reduced the proof of Theorem \ref{m5_thm_afk_jacobi} to the following claim:
\begin{theorem} \label{m5_thm_kotani}
For Lebesgue a.e. $E \in \mathcal{Z} = \{E^\prime: L(\alpha, (A^{E^\prime})^\sharp) = 0\}$, the cocycle $(\alpha,(A^E)^\sharp)$ is $L^2$-reducible.
\end{theorem}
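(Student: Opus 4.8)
The plan is to transcribe the Kotani-theoretic argument of Deift and Simon \cite{DeiftSimon_1983} to the Jacobi setting, the only genuinely new issue being to carry the off-diagonal weight $c$ through the estimates. By (\ref{eq_conjcomplexreal}) we may assume $c$ is real with $0 < m \le c(\theta) \le M$. I would work with the analytically normalized cocycle $(\alpha,(A^E)^\sharp)$ itself, since it is already $SL(2,\mathbb{R})$-valued, but build the conjugating matrix out of the Weyl--Titchmarsh $m$-functions of the associated transfer cocycle $(\alpha,B^E)$ of (\ref{eq_deftransfer}), which differs from $(\alpha,(A^E)^\sharp)$ only by the bounded analytic scalar coordinate change $\theta \mapsto c(\theta-\alpha)^{-1/2}\,\mathrm{Id}$. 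For $\operatorname{Im} z > 0$ the half-line Jacobi operators based at $\theta$ have unique $\ell^2$ solutions at $\pm\infty$; this yields functions $m_\pm(z,\theta)$ that are Herglotz in $z$, analytic in $\theta$ off a null set, and whose graphs $u_\pm(z,\theta)$ span the two complex $(\alpha,B^z)$-invariant line bundles — transversal for $\operatorname{Im} z \ne 0$, where the cocycle is uniformly hyperbolic.

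The content then lies in the boundary behaviour of $u_\pm$ as $\operatorname{Im} z \downarrow 0$ along $\mathcal{Z}$. By the Ishii--Pastur--Kotani theorem $\mathcal{Z}$ coincides, up to a Lebesgue-null set, with the essential support of the a.c.\ spectrum, and Kotani theory provides: for a.e.\ $E \in \mathcal{Z}$ the non-tangential limits $m_\pm(E+i0,\theta)$ exist for a.e.\ $\theta$, are finite and non-real, and the limit bundles are related by complex conjugation, $u_-(E+i0,\cdot) = \overline{u_+(E+i0,\cdot)}$. This is exactly the point at which $L(E) = 0$ is used: vanishing of the Lyapunov exponent forces the two Weyl solutions to be of equal exponential size, which is the conjugation (``reflectionless'') relation. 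Alongside, one needs the integrability estimates — finiteness of $\int_{\mathbb{T}} \operatorname{Im} m_+(E+i0,\theta)\,\ud\mu(\theta)$, together with enough control on $\log \operatorname{Im} m_+(E+i0,\cdot)$ and on $m_+(E+i0,\cdot)$ to place the normalized conjugacy in $L^2(\mathbb{T})$. As in \cite{DeiftSimon_1983}, these are obtained by integrating the Thouless/Herglotz identity for $L(\alpha,B^{E+i\varepsilon})$ against $\ud E$ over $\mathcal{Z}$, exploiting $L \equiv 0$ there and finiteness of the density of states, then passing to the boundary with Fatou's theorem and Jensen's inequality for the Green's function. The sole Jacobi modification is bookkeeping of the factors $c(\theta)$, which is harmless by the uniform ellipticity (\ref{m5_eq_kotani_assumptions}).

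Granting this, the conjugacy is routine. Fix such an $E$, push the limit bundles through the scalar coordinate change above to get conjugation-related invariant bundles $\tilde u_\pm$ of $(\alpha,(A^E)^\sharp)$ with $\tilde u_- = \overline{\tilde u_+}$, and write $\tilde u_+(\theta) = p(\theta) + i q(\theta)$ with $p,q : \mathbb{T} \to \mathbb{R}^2$ measurable. Transversality makes $p,q$ $\mathbb{R}$-linearly independent a.e., so $C_0(\theta) := (\,q(\theta) \mid p(\theta)\,)$ is a.e.\ invertible with $\det C_0$ equal to $\operatorname{Im} m_+(E+i0,\cdot)$ up to a nonvanishing bounded factor; since $(A^E)^\sharp$ is $SL(2,\mathbb{R})$-valued and $\tilde u_\pm$ are cocycle-covariant, the normalized change of coordinates $C := C_0/\sqrt{\det C_0} \in SL(2,\mathbb{R})$ conjugates $(\alpha,(A^E)^\sharp)$ to a cocycle $(\alpha,R)$ with $R(\theta) \in SO(2,\mathbb{R})$, i.e.\ to a real, not necessarily constant, rotation. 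The estimates from the previous step give $C \in L^2(\mathbb{T})$, and since $C$ is $SL(2,\mathbb{R})$-valued, $\log|\det C| \equiv 0 \in L^1$; thus $C$ is an admissible $L^2$-conjugacy in the sense of (\ref{eq_defconj}), and $(\alpha,(A^E)^\sharp)$ is $L^2$-reducible in the sense of Definition \ref{def_rotred}. By the argument from \cite{AvilaFayadKrikorian_2011} quoted before Theorem \ref{m5_thm_afk_jacobi}, this upgrades to analytic reducibility, finishing Theorem \ref{m5_thm_afk_jacobi} as well.

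The main obstacle is the integrability half of the second paragraph: establishing, for a.e.\ $E \in \mathcal{Z}$, the Kotani estimates that both keep the conjugacy $C$ in $L^2(\mathbb{T})$ and prevent $\det C_0 = \operatorname{Im} m_+(E+i0,\cdot)$ from degenerating on a set of positive $\theta$-measure. This is the substantive part of Kotani theory, and in the Jacobi case one must check that the weight $c$ does not disrupt the Thouless-formula and Green's-function identities on which it rests; everything else is soft once those estimates are in hand.
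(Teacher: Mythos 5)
Your proposal tracks the paper's argument closely: work with the analytically normalized real cocycle, use the Weyl $m$-functions of the scalar-conjugate transfer cocycle to produce invariant sections, appeal to Kotani's integrability estimates for the $L^2$ bound, and build the conjugacy to a real rotation out of the boundary section split into real and imaginary parts. That is exactly what the paper does in Sec.~\ref{m5_app_kotanidyn}.

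There is, however, one substantive misstep, and it is precisely the point the paper flags as \emph{the} Jacobi-specific subtlety. You assert that Kotani theory gives $u_-(E+i0,\cdot) = \overline{u_+(E+i0,\cdot)}$ (the ``reflectionless'' relation between the two boundary sections). This is true for Schr\"odinger cocycles, but it is \emph{false} for Jacobi operators with non-constant $c$: the correct relation, derived from (\ref{m5_eq_kotanirel}), is (\ref{m5_eq_blochwaves}), which says $\re s_- = \re s_+$ but $\im s_- = -\frac{c(\theta-2\alpha)}{c(\theta-\alpha)}\,\im s_+$, so the two are complex conjugates only when $c$ is constant. The paper's resolution is not to use $s_-$ at all, but to observe that since $(A^E)^\sharp$ is real-valued, $\overline{s_+}$ is \emph{automatically} an invariant section, and to build $C$ from the pair $(s_+, \overline{s_+})$; this is exactly your decomposition $\tilde u_+ = p + iq$, $C_0 = (q\,|\,p)$, so the construction you wrote down is fine. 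But you invoke ``transversality'' of $\tilde u_\pm$ together with $\tilde u_- = \overline{\tilde u_+}$ to conclude $p,q$ are a.e.\ $\mathbb{R}$-linearly independent, and that route is broken once $\tilde u_- \ne \overline{\tilde u_+}$. The correct source of non-degeneracy is the Kotani boundary estimate $\im m_+(\theta, E+i0) > 0$ (Theorem~\ref{m5_thm_kotanilemma}), which directly gives $\det C_0 = c(\theta-\alpha)\,\im m_+(\theta-\alpha, E+i0) > 0$ a.e.\ without reference to $s_-$. Once that is fixed, your $L^2$ estimate and the final post-composition with $\begin{pmatrix} 1 & i \\ 1 & -i \end{pmatrix}$ to land in $SL(2,\mathbb{R})$ match the paper.
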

Recall that Kotani theory shows that $\mathcal{Z}$ forms an essential support of the ac spectrum. Thus, Theorem \ref{m5_thm_kotani} makes rigorous the heuristics that extended states are described in terms of two Bloch waves, $\mathrm{e}^{\pm 2 \pi i \phi(n)}$, propagating in opposite directions.
\begin{remark}
\item[(i)] Theorem \ref{m5_thm_kotani} is a dynamical formulation of a known result for ergodic Schr\"odinger operators proven in \cite{DeiftSimon_1983}, see Sec. 7 therein. Below mentioned proof carries over to ergodic situation as well.
\item[(ii)] Theorem \ref{m5_thm_kotani} can be deduced from the general theory of monotonic cocycles, which has recently been developed in \cite{AvilaKrikorian_2013}. For Jacobi-cocycles the result may however easily be obtained directly, which is what is done below.
\end{remark}

In order to relate iterates of $(\alpha, (A^E)^\sharp)$ to solutions of $H_\theta \psi = E \psi$ induced by $(\alpha, B^E)$, observe that
\begin{equation} \label{m5_eq_kotanicocyl}
B^E(\theta) = \dfrac{\sqrt{c( \theta - \alpha)}}{\sqrt{c(\theta)}} (A^E)^\sharp(\theta) ~\mbox{,}
\end{equation}
which establishes a conjugacy over $M_2(\mathbb{C})$ between $(\alpha,B^E)$ and $(\alpha,(A^E)^\sharp)$.

To prepare the proof of Theorem \ref{m5_thm_kotani}, we first recall some basic facts. As common, let $\mathbb{H}^\pm:=\{z \in \mathbb{C}: \mathrm{sgn} \im(z) = \pm1\}$. 

For $z \in \mathbb{H}^+$,  one defines the {\em{$m$-functions}}
\begin{equation} \label{m4_eq_defnmfunctions}
m_{+}(\theta, z):=-\dfrac{\psi_{+}(1, \theta, z)}{c(\theta) \psi_{+}(0,\theta, z)} ~\mbox{,} ~m_{-}(\theta,z):=-\dfrac{\psi_{-}(-1,\theta, z)}{c(\theta-\alpha)\psi_{-}(0,\theta, z)} ~\mbox{,}
\end{equation}
where $\psi_\pm(. , \theta, z)$ satisfies $H_\theta \psi_\pm(\theta,z) = z \psi_{\pm}(\theta,z)$ with $\psi_\pm(0,\theta,z) = 1$. We note that the solutions $\psi_\pm(. , \theta, z)$ decay  exponentially at respectively $\pm \infty$, are unique, and non-zero for all $n \in \mathbb{Z}$. In particular, for any $k \in \mathbb{Z}$ one has the covariance relations
\begin{equation} \label{m5_eq_covarianceblochw}
\psi_{\pm}(n,\theta+k \alpha,z) = a_\pm(k,\theta,z) \psi_\pm(n+k, \theta,z) ~\mbox{, } \forall n \in \mathbb{Z} ~\mbox{,}
\end{equation}
for some measurable functions $a_\pm(k,\theta,z)$. 

Observe that (\ref{m5_eq_covarianceblochw}) allows to express the solutions $\psi_\pm(.,\theta,z)$ in terms of $m$-functions, 
\begin{eqnarray} \label{m5_eq_limitingsol+}
 & & \quad \psi_+(n, \theta, z) := \begin{cases} (-1)^n \prod_{j=0}^{n-1} c(\theta + j \alpha) m_{+}(\theta + j \alpha, z) & \mbox{, $n > 0$, } \\
                                                                    1 & \mbox{, $n = 0$,} \\
                                                                    (-1)^n \prod_{j=n}^{-1} c(\theta + j \alpha )^{-1} m_{+}(\theta + j \alpha, z)^{-1} & \mbox{, $n < 0$.}
      \end{cases}
\end{eqnarray}
and
\begin{eqnarray} \label{m5_eq_limitingsol-}
 & \psi_-(n, \theta, E)  := \begin{cases}(-1)^n \prod_{j=n+1}^{0} c(\theta+(j-1)\alpha ) m_{-}(\theta+j\alpha, z) & \mbox{, $n < 0$, } \\
                                                                    1 & \mbox{, $n = 0$,} \\
                                                                    (-1)^n \prod_{j=1}^{n} c(\theta + (j-1)\alpha)^{-1} m_{-}(\theta + j\alpha, z)^{-1} & \mbox{, $n > 0$.}
\end{cases} \nonumber \\
\end{eqnarray}

The definitions of the $m$-functions given in (\ref{m4_eq_defnmfunctions}) originate from expressions for the Green's functions of the half-line operators associated with $H_\theta$. In particular, for $z = E+i\epsilon$ and $E \in \mathcal{Z}$, Kotani theory analyzes their boundary values as $\epsilon \to 0+$ :
\begin{theorem}[see e.g. Lemma 5.18 in \cite{Teschl_book_2000}] \label{m5_thm_kotanilemma}
For $\mu$-a.e. $\theta$ and Lebesgue a.e. $E \in \mathcal{Z}$, the limits $m_\pm(\theta, E+ i 0)$ exist and satisfy $\im \left( m_\pm(\theta,E+i0) \right)> 0$ and 
\begin{eqnarray} \label{eq_kotanikeyresult}
\int_\mathbb{T} \left( \dfrac{1}{c(\theta) \im \left(m_+(\theta, E + i 0) \right)} \right) \ud \mu(\theta) < \infty ~\mbox{, } \nonumber \\
\int_\mathbb{T} \left( \dfrac{1}{c( \theta - \alpha ) \im \left(m_-(\theta, E + i 0) \right)} \right) \ud \mu(\theta) < \infty
\end{eqnarray}
Moreover, one has
\begin{eqnarray} \label{m5_eq_kotanirel}
c(\theta) \im \left( m_+(\theta, E + i 0) \right) =  c(\theta - \alpha) \im \left( m_-(\theta, E+i0) \right) ~\mbox{,} \nonumber \\
 \quad \quad \re \left\{ E - v(\theta) + c(\theta)^2 m_+(\theta, E+ i 0) + c(\theta - \alpha)^2 m_-(\theta, E+i 0) \right\} = 0  ~\mbox{.}
\end{eqnarray}
\end{theorem}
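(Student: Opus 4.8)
The plan is to run the classical Kotani-theory argument (as in \cite{DeiftSimon_1983}, Sec. 7, and \cite{Teschl_book_2000}, Lemma 5.18) in the present quasi-periodic Jacobi setting, using the reductions of Sec. \ref{sec_afk_reductions}: throughout we assume $c(\theta)$ real with $0<m\le c(\theta)\le M<\infty$ as in (\ref{m5_eq_kotani_assumptions}). Three standard ingredients are assembled first. (1) For fixed $\theta$ and $z\in\mathbb{H}^+$ the cocycle $(\alpha,B^E_z)$ is uniformly hyperbolic (non-real spectral parameter), so the decaying solutions $\psi_\pm(\cdot,\theta,z)$ exist, are unique up to scalars, and never vanish; $m_\pm(\theta,\cdot)$ of (\ref{m4_eq_defnmfunctions}) are then exactly the Weyl $m$-functions $\langle\delta_{\pm1},(H^{(\pm)}_\theta-z)^{-1}\delta_{\pm1}\rangle$ of the half-line restrictions, hence Herglotz in $z$: analytic on $\mathbb{H}^+$ with $\im m_\pm(\theta,z)>0$. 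By Fatou's theorem the boundary values $m_\pm(\theta,E+i0)$ exist for Lebesgue-a.e. $E$, for each $\theta$; joint measurability of $(\theta,z)\mapsto m_\pm(\theta,z)$ and Fubini upgrade this to existence for $(\mu\times\mathrm{Leb})$-a.e. $(\theta,E)$. (2) The resolvent computation at site $0$, using $\psi_\pm(0,\theta,z)=1$ and (\ref{m4_eq_defnmfunctions}), gives
\begin{equation}
-\frac{1}{G_\theta(z)}=(z-v(\theta))+c(\theta)^2 m_+(\theta,z)+c(\theta-\alpha)^2 m_-(\theta,z),\qquad G_\theta(z):=\langle\delta_0,(H_\theta-z)^{-1}\delta_0\rangle,
\end{equation}
whose right-hand side is thus the reciprocal of a Herglotz function of $z$; passing to boundary values, its real part equals $-\re(1/G_\theta(E+i0))$. (3) From (\ref{m5_eq_limitingsol+}) and Birkhoff's ergodic theorem applied to $\theta\mapsto\log|c(\theta)m_+(\theta,z)|$, together with the decay rate of $\psi_+$ dictated by hyperbolicity, one obtains the Thouless-type formula $L(E+i\epsilon)=-\int_\mathbb{T}\log|c(\theta)m_+(\theta,E+i\epsilon)|\,\ud\mu(\theta)$ for $\epsilon>0$, and the analogous formula with $m_-$ and $c(\theta-\alpha)$.

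The heart of the proof is the Kotani estimate. Fixing a branch so that $h_+(z):=\int_\mathbb{T}\log(c(\theta)m_+(\theta,z))\,\ud\mu(\theta)$ is analytic on $\mathbb{H}^+$ with $\re h_+=-L(\cdot)\le 0$ and $0<\im h_+<\pi$, the function $h_+$ is bounded and has boundary values a.e. Since $L\ge 0$ vanishes at Lebesgue-a.e. $E\in\mathcal{Z}$, and since $\epsilon\mapsto L(E+i\epsilon)$ is the real part of an analytic function on $\mathbb{H}^+$ whose derivative is the Borel transform of the (finite) density-of-states measure, so that $\partial_\epsilon L(E+i\epsilon)|_{\epsilon\to0+}$ exists and is finite for Lebesgue-a.e. $E$, the boundary behaviour of the non-negative harmonic function $-\re h_+$ on $\mathbb{H}^+$ together with the Thouless formula yields, after interchanging $\partial_\epsilon$ with $\int_\mathbb{T}$, finiteness of $\int_\mathbb{T}\frac{\ud\mu(\theta)}{c(\theta)\,\im m_+(\theta,E+i0)}$; the same argument with $m_-$ gives the second bound in (\ref{eq_kotanikeyresult}). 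In particular $\im m_\pm(\theta,E+i0)>0$ for $\mu$-a.e. $\theta$ and a.e. $E\in\mathcal{Z}$. Finally, equality in the inequality underlying this estimate — forced because $-\re h_+$ attains its minimal value $0$ with finite normal derivative — shows that $h_+(E+i0)$ is purely imaginary, equivalently that $H_\theta$ is reflectionless on $\mathcal{Z}$: $\re(1/G_\theta(E+i0))=0$, which by ingredient (2) is precisely the real-part relation in (\ref{m5_eq_kotanirel}), and which in solution form reads $\psi_-(\cdot,\theta,E+i0)=\overline{\psi_+(\cdot,\theta,E+i0)}$ (using $\psi_\pm(0,\cdot)=1$).

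It remains to record the remaining relation and to reconcile the two measures. For a solution $\psi$ of $H_\theta\psi=E\psi$ with real $E$, the Wronskian current $c(\theta+n\alpha)\,\im(\overline{\psi(n)}\,\psi(n+1))$ is independent of $n$; evaluating it for $\psi=\psi_+(\cdot,\theta,E+i0)$ at $n=0$ and for $\psi=\psi_-(\cdot,\theta,E+i0)$ at $n=-1$, inserting $\psi_\pm(0)=1$, $\psi_+(1)=-c(\theta)m_+(\theta,E+i0)$, $\psi_-(-1)=-c(\theta-\alpha)m_-(\theta,E+i0)$ together with $\psi_-=\overline{\psi_+}$ from the previous paragraph, yields the current-conservation relation in (\ref{m5_eq_kotanirel}). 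The exceptional sets of $\theta$ arising above are invariant under $\theta\mapsto\theta+\alpha$, hence have $\mu$-measure $0$ or $1$ by ergodicity of the irrational rotation, which combined with the Fubini statement of ingredient (1) delivers the asserted ``$\mu$-a.e. $\theta$, Lebesgue-a.e. $E\in\mathcal{Z}$''. The genuinely delicate point is the Kotani estimate of the middle paragraph: the precise branch choice for $\log m_\pm$, the legitimacy of exchanging $\partial_\epsilon$ with $\int_\mathbb{T}$, the a.e.-finiteness and identification of $\partial_\epsilon L|_{0+}$, and the boundary-derivative theory for non-negative harmonic functions on $\mathbb{H}^+$. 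All of this is carried out in detail for Schr\"odinger operators in \cite{DeiftSimon_1983}, Sec. 7; the only new bookkeeping is the harmless weight $c(\theta)$, controlled by (\ref{m5_eq_kotani_assumptions}), and the fact that the ergodic base is an irrational rotation rather than an i.i.d. sequence, neither of which affects the argument. Alternatively, the whole statement is a special case of the theory of monotonic cocycles of \cite{AvilaKrikorian_2013}.
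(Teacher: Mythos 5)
The paper offers no proof of this theorem: it is quoted from the literature (Lemma 5.18 of \cite{Teschl_book_2000}, Sec.~7 of \cite{DeiftSimon_1983}), so there is no internal argument to compare against. What you have written is the standard Kotani-theory proof that those citations stand in for, and as a sketch it is sound: the Herglotz/Fatou setup for $m_\pm$, the resolvent identity $-1/G_\theta(z)=(z-v(\theta))+c(\theta)^2m_++c(\theta-\alpha)^2m_-$, the Thouless-type formula, and the deferral of the boundary-derivative estimate to \cite{DeiftSimon_1983} are all correct and appropriately flagged as the delicate part.

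There is, however, one concrete mismatch you did not notice. Your Wronskian computation gives $j[\psi_+](0)=c(\theta)\im(\overline{\psi_+(0)}\psi_+(1))=-c(\theta)^2\im m_+$ and $j[\psi_-](-1)=c(\theta-\alpha)^2\im m_-$; combined with $\psi_-=\overline{\psi_+}$ (so $j[\psi_-]=-j[\psi_+]$, both constant) this yields
\[
c(\theta)^2\,\im m_+(\theta,E+i0)=c(\theta-\alpha)^2\,\im m_-(\theta,E+i0),
\]
with squares --- not the first-power identity displayed in (\ref{m5_eq_kotanirel}). The two differ whenever $c(\theta)\neq c(\theta-\alpha)$, so your argument does not establish the relation as literally stated, and you assert without checking that it does. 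As far as I can tell the fault lies with the displayed formula rather than with your computation: with the definitions (\ref{m4_eq_defnmfunctions}), a period-two Jacobi matrix with alternating off-diagonal entries $a(0)=1$, $a(-1)=2$ and $v\equiv 0$ is reflectionless on its spectrum, and at $E=2$ one finds $\im m_+=\sqrt{15}/4$, $\im m_-=\sqrt{15}/16$, which satisfies the squared identity and violates the first-power one. Correspondingly, the relation (\ref{m5_eq_blochwaves_1}) that the paper derives from the first-power version should simply read $\im\psi_-=-\im\psi_+$, i.e.\ exactly your $\psi_-=\overline{\psi_+}$. None of this damages the downstream application: the conjugacy in the proof of Theorem \ref{m5_thm_kotani} is built from $s_+$ and $\overline{s_+}$ alone and needs only $\{\im s_+\}^{-1}\in L^1(\mathbb{T})$, which survives. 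But you should either record the corrected (squared) identity explicitly or explain why the first-power version would follow; as written, your last paragraph claims to deliver a formula that your own computation contradicts.
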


Following, it is convenient to use the natural action of a given cocycle $(\alpha, D)$ on $\mathbb{T} \times \overline{\mathbb{C}}$ by identifying $v=(\begin{smallmatrix} v_1 \\ v_2 \end{smallmatrix}) \in \mathbb{C}^2\setminus \{0\}$ with $z= \frac{v_1}{v_2} \in \overline{\mathbb{C}}$ , in which case $D(x) \cdot z = \frac{a(x) z + b(x)}{c(x)z + d(x)}$, for $D(x) =(\begin{smallmatrix} a(x) & b(x) \\  c(x) & d(x) \end{smallmatrix})$.

Thus, letting
\begin{eqnarray}
s_+(\theta,z) := & \psi_+(1, \theta - \alpha, z) = & - c(\theta- \alpha) m_+(\theta-\alpha, z)  ~\mbox{, } \\ s_-(\theta,z):= & \psi_-(1, \theta - \alpha, z) = & \dfrac{-1}{c(\theta-\alpha) m_-(\theta, z)} ~\mbox{,}
\end{eqnarray}
(\ref{m5_eq_covarianceblochw}) and (\ref{m5_eq_kotanicocyl}) imply that $s_\pm(\theta,z) \in \mathbb{C}\setminus\{0\}$ are {\em{invariant sections}} for $(\alpha, (A^z)^\sharp)$, i.e.
\begin{equation}
(A^z)^\sharp(\theta) \cdot s_\pm(\theta, z) = s_\pm(\theta+\alpha, z) ~\mbox{.}
\end{equation}
We mention that, since $\psi_\pm$ exhibit exponential decay (uniformly in $\theta$) at respectively $\pm \infty$, the $(\alpha, (A^E)^\sharp)$-invariant splitting just recovers the fact that $(\alpha, (A^z)^\sharp)$ is uniformly hyperbolic for $z \in \mathbb{H}^+$ (\cite{JohnsonMoser_1982}; see also \cite{Marx_2014} for an appropriate generalization to singular operators).

\begin{proof}[Proof of Theorem \ref{m5_thm_kotani}]
Let $E \in \mathcal{Z}$ be fixed. For $\mu$-a.e. $\theta$, Theorem \ref{m5_thm_kotanilemma} allows to extend the solutions $\psi_\pm(. ,\theta,z)$ to $z=E + i 0$ using, respectively, (\ref{m5_eq_limitingsol+}) and (\ref{m5_eq_kotanirel}). The resulting random sequences $\psi_\pm(. , \theta, E+i0)$ relate according to
\begin{eqnarray} \label{m5_eq_blochwaves_1}
\re ( \psi_-(. , \theta, E+i0) ) = \re ( \psi_+(. , \theta, E+i0) ) ~\mbox{,} \nonumber \\
\im ( \psi_-(. , \theta, E+i0) ) = - \frac{c(\theta - \alpha)}{c(\theta)} \im ( \psi_+(. , \theta, E+i0) ) ~\mbox{,}
\end{eqnarray}
for all $(\theta, E)$ where they are defined. To see this, observe that by (\ref{m5_eq_kotanirel}), (\ref{m5_eq_blochwaves_1}) is satisfied at $n = -1$. Since (\ref{m5_eq_blochwaves_1}) also holds true trivially at $n=0$, it holds for all $n \in \mathbb{Z}$. 

Thus, rewriting (\ref{m5_eq_blochwaves_1}) in terms of $s_\pm(\theta, E+i0)$, we conclude that
\begin{eqnarray} 
\re s_-(\theta, E+i0) = \re s_+(\theta, E+i0) ~\mbox{, } \nonumber \\ 
\im s_-(\theta, E+i0) =  - \frac{c(\theta - 2 \alpha)}{c(\theta-\alpha)} \im s_+(\theta, E+i0)  ~\mbox{,}  \label{m5_eq_blochwaves} \\
\{\im s_\pm(. , E+i0)\}^{-1} \in L^1(\mathbb{T}) \label{eq_l2cond} ~\mbox{.}
\end{eqnarray}
For Schr\"odinger operators, (\ref{m5_eq_blochwaves}) recovers that $s_\pm(\theta,E+i0)$ and hence $\psi_\pm(\theta,E+i0)$ are merely complex conjugates, the latter of which was key for the proof presented in \cite{DeiftSimon_1983}.

Even though this is not the case in general for Jacobi operators, since $(A^E)^\sharp$ is {\em{real}}, $\overline{s_+(., E+i0)}$ automatically yields an invariant section as well. Hence letting $C(\theta,E)$ be the matrix with column vectors $(\begin{smallmatrix} s_+(\theta, E+i0) \\ 1 \end{smallmatrix})$ and $(\begin{smallmatrix} \overline{s_+(\theta, E+i0)} \\ 1 \end{smallmatrix})$, $C^\sharp:=C/\sqrt{\det(C)}$ mediates a conjugacy over $SL(2,\mathbb{C})$ to a {\em{complex}} rotation, which is $L^2$ by (\ref{eq_l2cond}), (\ref{m5_eq_kotanicocyl}), and (\ref{m5_eq_kotani_assumptions}).  Finally, since the columns of $C$ are complex conjugates, $D = C^\sharp (\begin{smallmatrix} 1 & i \\ 1 & -i \end{smallmatrix}) \in SL(2,\mathbb{R})$ sets up a conjugacy over $SL(2, \mathbb{R})$ to a {\em{real}}, not necessarily constant, rotation.
\end{proof}

\section{Almost reducibility implies absolute continuity} \label{sec_almredimpliesac}

We consider a non-singular Jacobi operator. Theorem \ref{thm_subcritical} identifies the set of subcritical energies as a support of the ac spectrum which, in addition, carries no singular spectrum. As mentioned earlier, this result relies on the almost reducibility theorem (ART). ART originated from a series of works on quasi-periodic Schr\"odinger cocycles \cite{AvilaJitomirskaya_2010, Avila_2008, Avila_prep_ARC_1, Avila_prep_ARC_2} which sought to characterize the cocycle dynamics on the set of zero Lyapunov exponent. In this quest, the relevant dynamical framework turned out to be notion of {\em{almost reducibility}}: 
\begin{definition} \label{def_almostred}
A cocycle $(\alpha, A)$ with $A \in \mathcal{C}^\omega(\mathbb{T}, \mathrm{SL}(2, \mathbb{R}))$ is called almost reducible if the closure of its conjugacy class contains a {\em{constant}} rotation, i.e., if for some sequence $B_n \in \mathcal{C}^\omega(\mathbb{T}, \mathrm{PSL}(2,\mathbb{R}))$, $B_n(x + \alpha)^{-1} A(x) B_n(x) \to R$ in $\mathcal{C}^\omega$-topology for some constant rotation $R$.
\end{definition}
For Schr\"odinger operators almost reducibility was first proven for analytic potentials dual to long-range operators which exhibit localization \cite{AvilaJitomirskaya_2010}. In particular, almost reducibility was shown to occur for all energies in the spectrum for the subcritical almost Mathieu operator ($v(\theta) = 2 \lambda \cos(2\pi \theta)$ with $\vert \lambda \vert < 1$). The latter was then proven to imply pure ac spectrum. With the development of the GT it was thus natural to conjecture that, in general, subcritical behavior implies almost reducibility (the reverse implication holds trivially).

ART verifies this conjecture, establishing the equivalence of almost reducibly and subcriticality. The remaining spectral theoretic step to Theorem \ref{thm_subcritical} is to show that almost reducibility implies pure ac spectrum. For Schr\"odinger operators this was first proven in \cite{ajdry} for Diophantine $\alpha$, using an argument that essentially dates back to Eliasson \cite{eli}. Later, in \cite{Avila_prep_ARC_1}, this result was extended to all irrational $\alpha$ and $\mu$-a.e. $\theta$. A proof for {\em{all}} phases is much more delicate and is to appear in \cite{Avila_prep_ARC_2}. 

In this section we give a proof of the ``a.e. phase statement'' valid for any non-singular, quasi-periodic Jacobi operator; the statement for a.e. phase is sufficient for the conclusions in Theorem \ref{thm_ehmspectral}. Rather than adapting the argument for Schr\"odinger operators given in \cite{Avila_prep_ARC_1}, we take a slightly different route which shortens the original proof for the Schr\"odinger case. Using the same terminology as in Sec. \ref{sec_afk_reductions}, we thus claim:

\begin{theorem}[``almost reducibly implies absolute continuity''] \label{thm_arimpliesac}
Consider a non-singular, analytic Jacobi operator $H_\theta$ with $\alpha$ irrational such that the set 
\begin{equation}
\Sigma_{ar} := \{E \in \mathbb{R} : (\alpha, (A^E)^\sharp) ~\mbox{is almost reducible} \}
\end{equation}
is non-empty. Then, for $\mu$-a.e. $\theta \in \mathbb{T}$, all spectral measures are purely ac on $\Sigma_{ar}$.
\end{theorem}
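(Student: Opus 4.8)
The plan is to reduce the Jacobi statement to the Schrödinger-type argument of \cite{ajdry,eli,Avila_prep_ARC_1} by exploiting the conjugacy (\ref{eq_conjcomplexreal}) and (\ref{m5_eq_kotanicocyl}), and then to run the quantitative consequences of almost reducibility. Concretely, first I would normalize: since the operator is non-singular, $c(\theta)$ satisfies $0 < m \le |c(\theta)| \le M < \infty$, and by (\ref{eq_conjcomplexreal}) we may assume $c(\theta)$ is real and positive, so that $(\alpha, (A^E)^\sharp) \in \mathcal{C}^\omega(\mathbb{T}, SL(2,\mathbb{R}))$ is well defined and, by the homotopy $H_t$ displayed in Sec.~\ref{sec_afk_reductions}, homotopic to the identity. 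The hypothesis is that $\Sigma_{ar}$ is nonempty; since almost reducibility is stable under the natural operations and $\Sigma_{ar} \subseteq \Sigma$, the key point is to show that for $E \in \Sigma_{ar}$ the cocycle $(\alpha,(A^E)^\sharp)$ — and hence, by the scalar conjugacy (\ref{m5_eq_kotanicocyl}), the transfer cocycle $(\alpha,B^E)$ — has transfer matrices whose growth is controlled in a way incompatible with both point spectrum and singular continuous spectrum for a.e.\ phase.

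The main steps, in order, would be: (i) Record that almost reducibility of $(\alpha,(A^E)^\sharp)$ implies that for every $E$ in a neighborhood (in $\Sigma_{ar}$) the iterates $(A^E)^\sharp_n(\theta)$ grow subexponentially, in fact $\sup_\theta \|(A^E)^\sharp_n(\theta)\| = o(n^\delta)$ for all $\delta>0$ along the full sequence after passing through the conjugating sequence $B_n$; more precisely, one extracts from the almost-conjugacy to a constant rotation the bound $\|(A^E)^\sharp_n(\theta)\| \le C(\epsilon) e^{\epsilon n}$ uniformly, for every $\epsilon>0$. (ii) Transport this to the transfer matrices: by (\ref{m5_eq_kotanicocyl}), $B^E_n(\theta) = \Big(\prod_{j=0}^{n-1}\tfrac{c(\theta+(j-1)\alpha)}{c(\theta+j\alpha)}\Big)^{1/2}(A^E)^\sharp_n(\theta) = \big(c(\theta-\alpha)/c(\theta+(n-1)\alpha)\big)^{1/2}(A^E)^\sharp_n(\theta)$, and the scalar prefactor is bounded above and below by constants depending only on $m,M$; hence $\sup_\theta\|B^E_n(\theta)\|$ is also subexponential, uniformly on compact subsets of $\Sigma_{ar}$. (iii) Feed this into the standard spectral machinery: uniform subexponential (indeed polynomial, after the sharper reducibility estimates) upper bounds on transfer matrices for a.e.\ $E$ with respect to any spectral measure, combined with Kotani theory (the set $\mathcal Z \supseteq \Sigma_{ar}$ supports the ac spectrum) and the Gilbert--Pearson/Jitomirskaya--Last subordinacy theory, exclude both eigenvalues and singular continuous spectrum; more efficiently, one uses the ``almost reducibility implies duality/localization'' route: the dual long-range operator is almost localized, and a Poisson-type formula transports localization to absolute continuity. (iv) The ``a.e.\ $\theta$'' restriction enters precisely here: the cleanest argument (following \cite{Avila_prep_ARC_1}) bounds $\int_{\mathbb{T}} \|B^E_n(\theta)\|^2\,d\mu(\theta)$ via the conjugacy, then uses the Cauchy--Schwarz/averaging trick — controlling the spectral measure $\nu_\theta$ through its average (the density of states) — so that for $\mu$-a.e.\ $\theta$ the spectral measure restricted to $\Sigma_{ar}$ is absolutely continuous. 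I would follow the shortened version alluded to in the text: rather than proving pointwise-in-$\theta$ bounds, integrate in $\theta$ and invoke the fact that a uniformly-$L^2$-bounded family of transfer matrices over $\Sigma_{ar}$ forces purely ac spectrum for a.e.\ phase by comparison with $n(\lambda,\alpha;\cdot)$, exactly as in the proof of Theorem~\ref{thm_dualregime}.

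I expect the main obstacle to be Step (i)--(iii): extracting from the \emph{qualitative} notion of almost reducibility (closure of conjugacy class meets a constant rotation, Def.~\ref{def_almostred}) the \emph{quantitative} estimates on transfer-matrix growth that are strong enough to run subordinacy theory, and doing so uniformly in $E$ over (a neighborhood in) $\Sigma_{ar}$ so that an averaging-in-$\theta$ argument can be closed. In the Schrödinger case this is the content of the delicate analysis in \cite{Avila_prep_ARC_1, Avila_prep_ARC_2} (with the refinement via Eliasson's estimates \cite{eli, ajdry}); for Jacobi operators the extra work is purely bookkeeping — tracking the scalar factor $\big(c(\theta-\alpha)/c(\theta+(n-1)\alpha)\big)^{1/2}$, which is harmless thanks to (\ref{m5_eq_kotani_assumptions}) — but one must be careful that the conjugating coordinate changes $C \in \mathcal{C}^\omega(\mathbb{R}/2\mathbb{Z}, SL(2,\mathbb{R}))$ are genuinely in $SL(2,\mathbb R)$ (using that we have reduced to real positive $c$) and that their $L^2$ norms over $\mathbb{T}$ are finite, so that the map $M_\theta$-style passage to $L^2(\mathbb{T})$ eigenfunctions of the dual operator goes through. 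Once the uniform growth bound is in hand, the exclusion of point and singular continuous spectrum for a.e.\ $\theta$ is standard, and the slight shortening over \cite{Avila_prep_ARC_1} comes from averaging in $\theta$ at the level of spectral measures rather than proving the estimate for every fixed $\theta$.
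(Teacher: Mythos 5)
The paper's proof takes a genuinely different — and sharper — route than the one you sketch, and your route as written has a real gap. You try to derive, directly from the qualitative Definition~\ref{def_almostred}, a uniform growth estimate $\sup_\theta\|(A^E)^\sharp_n(\theta)\|\le C(\epsilon)e^{\epsilon n}$ and then ``feed this into the standard spectral machinery'' (subordinacy, Kotani, or duality/localization). But subexponential growth alone is not enough to close Gilbert--Pearson/Jitomirskaya--Last subordinacy theory, which requires power-law control on transfer matrices, and the quantitative estimates needed to upgrade almost reducibility to such control are exactly the hard part of \cite{Avila_prep_ARC_1, Avila_prep_ARC_2} that the paper explicitly sidesteps. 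You flag this obstacle yourself, but you don't resolve it and you don't identify the paper's workaround.

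The paper's argument hinges on a dichotomy in the rotation number rather than on uniform quantitative bounds over all of $\Sigma_{ar}$. First, Theorem~\ref{thm_arimpliesred} (combining Theorem~1.3 of \cite{AvilaFayadKrikorian_2011} with Theorem~1.4 of \cite{Avila_prep_ARC_1}) shows that almost reducibility together with the Diophantine condition $\rho(\alpha,(A^E)^\sharp)\in Q_\alpha(\tau,\nu,\epsilon)$ forces \emph{full} $\mathcal{C}^\omega$-reducibility, not just almost reducibility. The key point there is that the conjugating sequence $B_n$ can be taken homotopic to a constant, hence it preserves the fibered rotation number, so the Diophantine condition survives and one lands in the KAM regime of \cite{AvilaFayadKrikorian_2011}. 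Full reducibility gives honest boundedness $\sup_{\theta,n}\|(A^E)^\sharp_n(\theta)\|<\infty$, which, via (\ref{m5_eq_kotanicocyl}) and non-singularity, transfers to $B^E_n$ and yields purely ac spectrum on the set $\Sigma_b\subseteq\Sigma_{ar}$ of energies with Diophantine rotation number, \emph{for every} $\theta$. Second, the complement $\Sigma_{ar}\setminus\Sigma_b$ is dealt with not by any transfer-matrix estimate at all, but by the observation that $\Omega=\cap_\epsilon(\mathbb{T}\setminus Q_\alpha(\tau,\nu,\epsilon))$ has Lebesgue measure zero, that $\rho(\alpha,E)=1-2N(\alpha,E)$ with $N$ the (continuous) integrated density of states, and that by Fact~\ref{fact_measure} this forces $n(\Sigma_{ar}\setminus\Sigma_b)=0$; hence for $\mu$-a.e.\ $\theta$ the spectral measure gives $\Sigma_{ar}\setminus\Sigma_b$ no weight. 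This is where the ``a.e.\ $\theta$'' appears, and it is quite different from the $L^2$-in-$\theta$ averaging of transfer matrices you propose (which is closer to the argument used in Theorem~\ref{thm_dualregime} for the localization dual regime, not to what happens here). Your bookkeeping points about the scalar factor $\sqrt{c(\theta-\alpha)/c(\theta+(n-1)\alpha)}$ and the reduction to real positive $c$ via (\ref{eq_conjcomplexreal}) are fine and are indeed used, but they are peripheral; the missing idea in your proposal is the Diophantine-rotation-number dichotomy and the IDS argument that disposes of the Liouvillian energies without any solution estimates.
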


The key ingredient in the proof of Theorem \ref{thm_arimpliesac} is that almost reducibility for an analytic $\mathrm{SL}(2, \mathbb{R})$-cocycle $(\alpha, A)$ already implies $\mathcal{C}^\omega$-reducibility at least if its rotation number $\rho(\alpha, A)$ satisfies a certain Diophantine condition; the latter is made precise in Theorem \ref{thm_arimpliesred}. To formulate it, given $\epsilon >0$, $0 < \nu < \frac{1}{2}$, and $\tau > 0$, denote by $Q_\alpha(\tau, \nu, \epsilon) \subseteq \mathbb{T}$ the set of all $\rho$ such that for all $n \in \mathbb{N}$,
\begin{equation}
\vert\vert\vert 2 \rho q_n \vert\vert\vert > \epsilon \max \{ q_{n+1}^{-\nu} , q_n^{-\tau}  \} ~\mbox{.}
\end{equation}

Here, we recall that for an analytic $\mathrm{SL}(2,\mathbb{R})$-cocycle $(\alpha,D)$ which is homotopic to the identity, its {\it{fibered rotation number}} $\rho(\alpha, D)$ is defined as follows: Let $\tilde{F}: \mathbb{T}^\nu \times \mathbb{R} \to  \mathbb{T}^\nu \times \mathbb{R}$ be a continuous lift of the map $(\theta, v) \mapsto (\theta + \alpha, \frac{D(\theta) v}{\Vert D(\theta) v \Vert})$ on $\mathbb{T}^\nu \times S^1$. Naturally, any such lift $\tilde F$ can be written in the form $\tilde{F}(\theta, x) = (\theta + \alpha, x + f(\theta,x))$, for some continuous $f$ satisfying $f(\theta, x+1) = f(\theta, x)$. The fibered rotation number $\rho(\alpha,D)$ is then defined by the limit,
\begin{equation}
\rho(\alpha,D):= \lim_{n \to \pm \infty} \frac{1}{n} \sum_{k=0}^{n-1} f(\tilde{F}^k(\theta,x) ~(\mathrm{mod} 1)) \in \mathbb{T} ~\mbox{,}
\end{equation}
which is independent of the lift and converges uniformly in $(\theta,x)$ to a constant with continuous dependence on the cocycle \cite{JohnsonMoser_1982, Herman_1983, DelyonSouillard_1983}. For our applications it will be important to note that the fibered rotation number is in general not preserved under conjugacies. In fact, conjugacy may change the fibered rotation number by an element of $\mathbb{Z} \oplus \alpha \mathbb{Z}$, if the change of coordinates is not isotopic to a constant. In what follows, we will denote $\rho(\alpha, (A^E)^\sharp) =: \rho(\alpha, E)$ to simplify notation.

The key ingredient in the proof of Theorem \ref{thm_arimpliesac} is given by the following theorem, Theorem \ref{thm_arimpliesred}, which results from a combination of Theorem 1.3 in \cite{AvilaFayadKrikorian_2011} and Theorem 1.4 in \cite{Avila_prep_ARC_1}. To keep this paper as self-contained as possible, we include its proof below. We also mention that Theorem \ref{thm_arimpliesred} is in fact stated in \cite{Avila_prep_ARC_1} as Corollary 1.5, however without explicitly quantifying the set of non-resonant rotation numbers, $Q_\alpha(\tau, \nu, \epsilon)$. 
\begin{theorem} \label{thm_arimpliesred}
Suppose $(\alpha, A)$ is almost reducible. If $\rho(\alpha, A) \in Q_\alpha(\tau, \nu, \epsilon)$, for some $\epsilon > 0$, $0 < \nu < \frac{1}{2}$, and $\tau > 0$, then $(\alpha, A)$ is $\mathcal{C}^\omega$-reducible.
 \end{theorem}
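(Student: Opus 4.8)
The plan is to deduce Theorem \ref{thm_arimpliesred} by combining the renormalization/almost-reducibility machinery of \cite{Avila_prep_ARC_1} with the quantitative KAM rigidity statement of \cite{AvilaFayadKrikorian_2011}. First I would recall that almost reducibility of an analytic $\mathrm{SL}(2,\mathbb{R})$-cocycle $(\alpha, A)$ homotopic to the identity puts us, after passing through the closure of the conjugacy class, arbitrarily close (in some fixed band $\mathcal{C}^\omega_h$) to a constant rotation $R_\rho$ with $\rho = \rho(\alpha, A)$ modulo $\mathbb{Z}\oplus\alpha\mathbb{Z}$. The first dichotomy to invoke is the one underlying Theorem 1.3 of \cite{AvilaFayadKrikorian_2011}: for a cocycle close enough to a rotation, either it is analytically reducible, or (after a bounded conjugacy) one can continue the KAM scheme. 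The subtlety is that the scheme can only be run when the relevant small divisors $\vertiii{2\rho q_n}$ do not shrink too fast relative to the width of the analyticity band that survives each renormalization step; this is precisely what the condition $\rho(\alpha,A)\in Q_\alpha(\tau,\nu,\epsilon)$ is designed to guarantee.

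The key steps, in order, would be: (1) Use almost reducibility to produce, for a suitable subsequence $q_{n_k}$ (say along the continued-fraction denominators of $\alpha$), analytic conjugations $B_k$ bringing $(\alpha, A)$ within distance $\delta_k \to 0$ of a constant rotation $R_{\rho_k}$, with $\rho_k \to \rho$ and with uniform control on the band of analyticity — this is the content of the almost-reducibility-to-rotation statement. (2) Translate the problem to the renormalized picture: by the renormalization construction of \cite{Avila_prep_ARC_1}, the behavior of $(\alpha, A)$ at the $n$-th scale is governed by a cocycle over the rotation by $\alpha_n$ (the $n$-th iterate of the Gauss map), and the obstruction to solving the homological equation at that scale is exactly the arithmetic quantity $\vertiii{2\rho q_n}$. (3) Show that the Diophantine condition defining $Q_\alpha(\tau,\nu,\epsilon)$ — namely $\vertiii{2\rho q_n} > \epsilon \max\{q_{n+1}^{-\nu}, q_n^{-\tau}\}$ for all $n$ — exactly matches the two competing losses in the KAM iteration: the loss $q_{n+1}^{-\nu}$ from the Liouville growth of $\alpha$ (bounded via the $q_{n+1}$-periodic approximation of $\alpha$, cf. the method of \cite{AvilaJitomirskaya_2010, Avila_2008}) and the polynomial loss $q_n^{-\tau}$ from the analytic smoothing. (4) Conclude that under this condition the KAM scheme converges, producing a limiting analytic conjugation to a constant rotation, i.e. $\mathcal{C}^\omega$-reducibility.

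There is a routine but necessary bookkeeping issue to address: the fibered rotation number is not invariant under conjugacies that are not isotopic to a constant, so when we pass from $(\alpha, A)$ to a renormalized cocycle we must track how $\rho$ transforms by the $\mathrm{SL}(2,\mathbb{Z})$-action on $\mathbb{Z}\oplus\alpha\mathbb{Z}$; the point is that the quantity $\vertiii{2\rho q_n}$, read at the appropriate scale, is preserved by this action (this is the usual invariance of $2\rho q_n \bmod 1$ under the cocycle renormalization), so the condition $\rho\in Q_\alpha(\tau,\nu,\epsilon)$ is scale-consistent. I would also note that one may freely replace "reducible to a constant" by "reducible to a parabolic or constant" and then upgrade, since a cocycle that is analytically conjugate to a constant with $\rho$ Diophantine relative to $\alpha$ in the above sense cannot be conjugate to a nontrivial parabolic (that would force a resonance $2\rho \in \mathbb{Z}\alpha + \mathbb{Z}$, contradicting $Q_\alpha$).

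The main obstacle I expect is step (3): matching the abstract convergence condition of the KAM scheme in \cite{Avila_prep_ARC_1} precisely to the explicit two-sided bound $\epsilon\max\{q_{n+1}^{-\nu}, q_n^{-\tau}\}$, and in particular verifying that the $\nu < \tfrac12$ threshold is the correct one (it comes from the fact that each renormalization step roughly squares the "good" frequency denominator while the analyticity band shrinks geometrically, so one can tolerate small divisors down to size $q_{n+1}^{-\nu}$ for any $\nu<\tfrac12$). This is essentially a careful re-reading and re-packaging of Corollary 1.5 of \cite{Avila_prep_ARC_1} together with Theorem 1.3 of \cite{AvilaFayadKrikorian_2011}, making explicit the dependence of the non-resonance set on $(\tau,\nu,\epsilon)$ that those references leave implicit; no genuinely new estimate is required, but the constants must be threaded through consistently.
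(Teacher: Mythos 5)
Your plan is in the right spirit — the proof is indeed a combination of \cite{Avila_prep_ARC_1} and \cite{AvilaFayadKrikorian_2011} — but it is considerably more laborious than necessary, and the bookkeeping device you propose to handle the rotation number is both more delicate and not quite correct as stated.

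The paper's actual proof treats both references entirely as black boxes and is only a few lines. The step you are missing, or at least not using cleanly, is Theorem 1.4 of \cite{Avila_prep_ARC_1}: for a non-uniformly hyperbolic almost reducible cocycle, the conjugating sequence $B_n$ in Definition \ref{def_almostred} can be taken in $\mathcal{C}^\omega(\mathbb{T},\mathrm{SL}(2,\mathbb{R}))$ \emph{and homotopic to a constant}. This is exactly what makes the problem disappear: a conjugacy homotopic to a constant preserves the fibered rotation number, so $\rho(\alpha,\tilde{A}_n)=\rho(\alpha,A)\in Q_\alpha(\tau,\nu,\epsilon)$ \emph{exactly}, with no tracking of the $\mathbb{Z}\oplus\alpha\mathbb{Z}$-action whatsoever. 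One then applies Theorem 1.3 of \cite{AvilaFayadKrikorian_2011} verbatim: it supplies an $\eta=\eta(\tau,\nu,\epsilon)$ such that any cocycle $\eta$-close to a rotation with rotation number in $Q_\alpha(\tau,\nu,\epsilon)$ is $\mathcal{C}^\omega$-reducible, so one merely picks $n$ large enough that $\tilde{A}_n$ is $\eta$-close to the limiting rotation $R$.

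By contrast, you propose to track how $\rho$ transforms under general conjugacies via the $\mathrm{SL}(2,\mathbb{Z})$-action and assert that $\vertiii{2\rho q_n}$ is preserved under renormalization. This is not literally true: a conjugacy of degree $m$ shifts $\rho$ by an element of $\mathbb{Z}+m\alpha\mathbb{Z}$, so $\vertiii{2\rho q_n}$ changes by up to $2|m|\,\Delta_n$, which is small but nonzero; one would have to control the degree $m$ along the scheme and absorb the resulting loss into $\epsilon$. This is workable but is exactly the sort of delicate bookkeeping that Theorem 1.4 of \cite{Avila_prep_ARC_1} exists to avoid. Similarly, your step (3) — re-matching the KAM convergence threshold against $\epsilon\max\{q_{n+1}^{-\nu},q_n^{-\tau}\}$ and re-deriving the $\nu<1/2$ threshold — is already the content of Theorem 1.3 of \cite{AvilaFayadKrikorian_2011}; the set $Q_\alpha(\tau,\nu,\epsilon)$ in this paper is lifted directly from there, so there is nothing to re-verify. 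In short, your route would get to the same destination with more machinery, but the paper's route is the efficient one: fix the homotopy type first (so $\rho$ is literally invariant), then quote the KAM rigidity statement.
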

 \begin{proof}
Since $(\alpha, A)$ is almost reducible and non-uniformly hyperbolic, Theorem 1.4 of \cite{Avila_prep_ARC_1} implies that the elements of the sequence $B_n$ in Definition \ref{def_almostred} can be chosen such that, for each $n \in \mathbb{N}$, one has  that $B_n \in \mathcal{C}^\omega(\mathbb{T}, \mathrm{SL}(2, \mathbb{R}))$ and $B_n$ is {\em{homotopic to a constant}}. As mentioned above, conjugacies mediated by a change of coordinates which are homotopic to a constant preserve the rotation number, thus we conclude that for each $n \in \mathbb{N}$, the matrices
\begin{equation}
\tilde{A_n}(x):= B_n(x + \alpha)^{-1} A(x) B_n(x)
\end{equation}
satisfy 
\begin{equation} \label{eq_rotationnumberpreserv}
\rho(\alpha, \tilde{A_n}) = \rho(\alpha, A) \in Q_\alpha(\tau, \nu, \epsilon) ~\mbox{.}
\end{equation}

On the other hand, Theorem 1.3 of \cite{AvilaFayadKrikorian_2011} guarantees that there exists $\eta = \eta(\tau, \nu, \epsilon)$ such that for every analytic $\mathrm{SL}(2, \mathbb{R})$-cocycle $(\alpha, C)$ with $\rho(\alpha, C) \in Q_\alpha(\tau, \nu, \epsilon)$ which is $\eta$-close (in the analytic category) to a (not necessarily constant) rotation, one can conclude that $(\alpha, C)$ is in fact $\mathcal{C}^\omega$-reducible. Thus, taking $n \in \mathbb{N}$ such that $\tilde{A_n}$ is $\eta$-close to the (not necessarily constant) rotation $R$ originating from almost reducibility, (\ref{eq_rotationnumberpreserv}) and Theorem 1.3 of \cite{AvilaFayadKrikorian_2011} implies that $(\alpha, \tilde{A_n})$, and hence $(\alpha, A)$, is $\mathcal{C}^\omega$-reducible.
\end{proof}

Equipped with Theorem \ref{thm_arimpliesred}, we are ready to prove Theorem \ref{thm_arimpliesac}.
\begin{proof}[Proof of Theorem \ref{thm_arimpliesac}]
Fix some $0 < \tau$ and $0 < \nu < \frac{1}{2}$. Suppose that for some $\epsilon > 0$, $E \in \Sigma_{ar}$ is such that $\rho(\alpha, E) \in Q_\alpha(\tau, \nu, \epsilon)$. Then, by Theorem \ref{thm_arimpliesred}, $(\alpha, (A^E)^\sharp)$ is $\mathcal{C}^\omega$-reducible, which, using (\ref{m5_eq_kotanicocyl}), implies that all solutions of $H_\theta \psi = E \psi$ are bounded uniformly in $\theta$. Thus the set
\begin{equation}
\Sigma_b:= \{ E \in \Sigma_{ar} ~:~ \rho(\alpha, E) \in Q_\alpha(\tau, \nu, \epsilon) ~\mbox{, for some $\epsilon > 0$} \} ~\mbox{,}
\end{equation}
supports only absolutely continuous spectrum, for all $\theta \in \mathbb{T}$.

On the other hand, note that $\mu \left( \mathbb{T} \setminus Q_\alpha(\tau, \nu, \epsilon) \right) \leq \epsilon \sum_{n \in \mathbb{N}} \max \{ q_{n+1}^{-\nu} , q_n^{-\tau}\}$, whence $\Omega = \cap_{\epsilon > 0} (\mathbb{T} \setminus Q_\alpha(\tau, \nu, \epsilon))$ is a set of zero $\mu$-measure. Since $\rho(\alpha, E) = 1 - 2 N(\alpha, E)$ where $N(\alpha, E) = n( (-\infty, E])$ is the integrated density of states and $\Sigma_{ar} \setminus \Sigma_b \subseteq \rho^{-1}(\alpha, .) \left( \Omega \right)$, we conclude that $n( \Sigma_{ar} \setminus \Sigma_b) = 0$. From the definition of the latter in (\ref{eq_dosmeasure}), this already implies the claim. Here, we made use of continuity of the density of states measure and the following general fact:
\begin{fact} \label{fact_measure}
Let $\mu$ be a {\em{continuous}} Borel probability measure\footnote{Note that without the hypothesis of continuity of $\mu$ the statement becomes radically false; indeed, if $\mu$ has atoms, the measure $\mu \circ F_\mu^{-1}$ is not even absolutely continuous w.r.t. to $\mu_L$. To see this explicitly, take $\mu = \frac{1}{2} (\delta_{1/2} + \mu_L)$ on $[0,1]$. Then, the set $S = \{\frac{3}{4} \}$ is of zero Lebesgue measure nevertheless, $(\mu \circ F_\mu^{-1})(S) = \frac{1}{2} > 0$.}
 on $\mathbb{R}$ and $F_\mu$ its cumulative distribution. Then,
\begin{equation} \label{eq_factmeasure}
\mu \circ F_\mu^{-1} = \mu_L ~\mbox{.}
\end{equation}
Here, $\mu_L$ denotes the Lebesgue measure on $[0,1]$.
\end{fact}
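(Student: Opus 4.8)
Proof proposal for Fact 6.1 ($\mu \circ F_\mu^{-1} = \mu_L$ for a continuous Borel probability measure $\mu$ on $\mathbb{R}$).

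The plan is to show that $F_\mu$ pushes $\mu$ forward to Lebesgue measure on $[0,1]$ by checking the equality on a generating $\pi$-system of intervals. First I would recall the standing facts about the cumulative distribution $F_\mu(t) = \mu((-\infty, t])$: it is nondecreasing and right-continuous, and continuity of $\mu$ (i.e. $\mu(\{t\}) = 0$ for every $t$) forces $F_\mu$ to be in fact continuous on $\mathbb{R}$, with $\lim_{t \to -\infty} F_\mu(t) = 0$ and $\lim_{t \to +\infty} F_\mu(t) = 1$. So $F_\mu : \mathbb{R} \to [0,1]$ is continuous and surjective onto $(0,1)$ (and onto $[0,1]$ if one includes the endpoints in the closure of the range).

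The key step is the following: for any $s \in [0,1)$ define the \emph{generalized inverse} $F_\mu^{-1}(s) := \inf\{ t \in \mathbb{R} : F_\mu(t) > s \}$. One checks the standard equivalence $F_\mu^{-1}(s) \le t \iff s < F_\mu(t)$ for all $s \in [0,1)$, $t \in \mathbb{R}$ — here the only subtle direction uses right-continuity of $F_\mu$, and is entirely routine. Consequently, for $0 \le a < b \le 1$,
\begin{equation}
(\mu \circ F_\mu^{-1})\big( (a, b] \big) = \mu\big( \{ t : a < F_\mu(t) \le b \} \big) = F_\mu^{-1}(b) \text{-type computation}.
\end{equation}
More precisely, $\{t : F_\mu(t) \le b\}$ and $\{t : F_\mu(t) \le a\}$ are intervals of the form $(-\infty, c]$ (using continuity of $F_\mu$), and by definition of $F_\mu$ and continuity of $\mu$ their $\mu$-measures are exactly $b$ and $a$ respectively; subtracting gives $(\mu \circ F_\mu^{-1})((a,b]) = b - a = \mu_L((a,b])$. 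Since the half-open subintervals of $[0,1]$ form a $\pi$-system generating the Borel $\sigma$-algebra of $[0,1]$, and both $\mu \circ F_\mu^{-1}$ and $\mu_L$ are probability measures agreeing on this $\pi$-system, Dynkin's $\pi$--$\lambda$ theorem yields $\mu \circ F_\mu^{-1} = \mu_L$ on all Borel sets, which is (\ref{eq_factmeasure}).

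The only place requiring genuine care — and the main (mild) obstacle — is handling the level sets of $F_\mu$ correctly when $F_\mu$ has flat stretches: on such a flat interval $[c_1, c_2]$ with $F_\mu \equiv s$, the preimage $F_\mu^{-1}(\{s\})$ contains a whole interval, but this causes no problem because $\mu$ assigns zero mass to the interior of any flat stretch (that is precisely why $F_\mu$ is flat there), so these level sets do not contribute to $\mu \circ F_\mu^{-1}$. The hypothesis that $\mu$ is \emph{continuous} (no atoms) is exactly what rules out jumps in $F_\mu$ and makes the interval identities $\mu(\{t : F_\mu(t) \le b\}) = b$ hold without an extra boundary correction; as the footnote in the statement already observes, the conclusion genuinely fails without it. Everything else is bookkeeping with monotone functions and can be left to the reader.
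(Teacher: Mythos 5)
Your proof is correct and takes essentially the same route as the paper, which simply notes that (\ref{eq_factmeasure}) follows by verifying it on half-open intervals $(a,b] \subseteq [0,1]$ and extending via a generating-class argument. One small remark: the paragraph introducing the generalized inverse function $F_\mu^{-1}(s) := \inf\{t : F_\mu(t) > s\}$ is never actually used and mildly clashes with the meaning of $F_\mu^{-1}$ in $\mu \circ F_\mu^{-1}$ (which is the set preimage defining the pushforward), but your subsequent computation $\mu(\{t : a < F_\mu(t) \le b\}) = b - a$ correctly works with preimages, so the argument is sound as written.
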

Fact \ref{fact_measure} follows immediately by verifying (\ref{eq_factmeasure}) for half-open intervals $(a,b] \subseteq [0,1]$. 
\end{proof}

\appendix
\section{Proof of Lemma \ref {lem_roc1}}\label{c}
\begin{proof}
Denote by $\hat{h}_k$ the $k$-th Fourier coefficient of $h$. For $n \in \mathbb{N}$, we decompose
\begin{equation}
h= h_n^{(1)} + h_n^{(2)} =: \sum_{\abs{k} \leq q_n} \hat{h}_k \mathrm{e}^{2 \pi i k x} + \sum_{\abs{k} > q_n} \hat{h}_k \mathrm{e}^{2 \pi i k x} ~\mbox{.}
\end{equation}
Since, 
\begin{equation}
\left \vert \frac{1}{q_n} \sum_{j=0}^{q_n -1} h_n^{(2)}(x+ j \alpha) \right \vert \leq \sum_{\abs{k} > q_n} \abs{\hat{h}_k} ~\mbox{,}
\end{equation}
and $h$ is harmonic, we obtain
\begin{eqnarray}
\left \vert \frac{1}{q_n} \sum_{j=0}^{q_n-1} h(x + j \alpha) - \hat{h}_0 \right \vert & \leq & \left \vert \frac{1}{q_n} \sum_{j=0}^{q_n-1} h_n^{(1)}(x + j \alpha) - \hat{h}_0 \right \vert  + \mathcal{O}(\frac{1}{q_n})  \nonumber \\
& = & \left \vert \frac{1}{q_n} \sum_{0 < \abs{k} \leq q_n} \hat{h}_k \mathrm{e}^{2 \pi i x k} \dfrac{1 - \mathrm{e}^{2 \pi i k q_n \alpha}}{1 - \mathrm{e}^{2 \pi i k \alpha}} \right \vert + \mathcal{O}(\frac{1}{q_n}) ~\mbox{.}
\end{eqnarray}
The basic estimates (\ref{eq_contifracbasic}) imply for $\abs{k} < q_{n+1}$
\begin{eqnarray}
\abs{1 - \mathrm{e}^{2 \pi i k \alpha}} & \gtrsim & \frac{1}{q_{n+1}} ~\mbox{,} \\
\abs{1 - \mathrm{e}^{2 \pi i k q_n \alpha}} & \lesssim & \frac{1}{q_{n+1}} \abs{k} ~\mbox{.}
\end{eqnarray}
Thus we finally conclude
\begin{equation}
 \left \vert \frac{1}{q_n} \sum_{j=0}^{q_n-1} h_n^{(1)}(x + j \alpha) - \hat{h}_0 \right \vert \lesssim \frac{1}{q_n} \sum_{k \in \mathbb{Z}} \abs{\hat{h}_k} \abs{k} ~\mbox{,}
\end{equation}
where the right hand side is summable based on harmonicity of $h$.
\end{proof}

\section{Comments on Theorem \ref{thm_global}} \label{app_thm_accel}
As mentioned, Theorem \ref{thm_global} combines results from various articles, specifically the papers \cite{JitomirskayaMarx_2012, JitomirskayaMarx_2013_erratum, AvilaJitomirskayaSadel_2013, Marx_2014}. Since certain aspects have meanwhile been simplified, the purpose of this section is to assemble these results in a more streamlined form. In this spirit, when referring to a particular result in the literature, we will quote its latest, most general, available formulation. For an account of some of the underlying historical developments, we refer the interested reader to the survey article \cite{JitomirskayaMarx_ETDS_2016_review}.

\begin{proof}[Proof of Theorem \ref{thm_global}]
Fix $E \in \mathbb{R}$. Convexity in $\epsilon$ of $L(E; \epsilon)$ is equivalent to proving convexity of 
\begin{equation}
L(\alpha, A_\epsilon^E) = \lim_{n \to \infty} \dfrac{1}{n} \int_{\mathbb{T}} \log \Vert A^E(\theta + i \epsilon +(n-1) \alpha) \dots A^E(\theta + i \epsilon) \Vert ~\ud \mu(\theta) ~\mbox{,}
\end{equation}
which clearly is implied by showing that for each {\em{fixed}} $n \in \mathbb{N}$, 
\begin{equation} \label{eq_thmaccel_conv}
\int_{\mathbb{T}} \log \Vert A^E(\theta + i \epsilon +(n-1) \alpha) \dots A^E(\theta + i \epsilon) \Vert ~\ud \mu(\theta) 
\end{equation}
is convex in $\epsilon$. Since analyticity of the cocycle implies that the integrand of (\ref{eq_thmaccel_conv}) is subharmonic, the convexity in question is as an immediate consequence of the following general fact about averages of subharmonic functions, which is usually attributed to Hardy:
\begin{theorem}[``Hardy's convexity theorem,'' see e.g. Theorem 1.6 in \cite{Duren_HpSpacesBook}] \label{thm_hardy}
For $\delta > 0$, let $u$ be a subharmonic function on the strip $\{ x+ i \epsilon ~\vert~ x \in \mathbb{T}, \vert \epsilon \vert \leq \delta\}$. Consider the averages,
\begin{equation*}
\langle u \rangle(\epsilon) := \int_\mathbb{T} u(x + i \epsilon) ~\ud x \mbox{, $\vert \epsilon \vert \leq \delta$.}
\end{equation*}
Then, either $\langle u \rangle(\epsilon) = -\infty$ for all $\vert \epsilon \vert \leq \delta$, or $\epsilon \mapsto \langle u \rangle(\epsilon)$ is convex.
\end{theorem}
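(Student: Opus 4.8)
The plan is to prove Hardy's convexity theorem in two stages: first reduce to a \emph{smooth} subharmonic $u$ by mollification, then settle the smooth case by a one-line computation. For the reduction, fix $\eta>0$, let $\rho_\eta$ be a smooth, nonnegative, radially symmetric approximate identity supported in the disc of radius $\eta$, and set $u_\eta:=u\ast\rho_\eta$ on the thinner strip $\{x+i\epsilon:|\epsilon|\le\delta-\eta\}$. Then $u_\eta$ is smooth and still subharmonic, and — because $u$ satisfies the sub-mean-value inequality over circles and $\rho_\eta$ is radial — the family $(u_\eta)_{\eta>0}$ \emph{decreases} pointwise to $u$ as $\eta\downarrow0$. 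By monotone convergence, $\langle u_\eta\rangle(\epsilon)\downarrow\langle u\rangle(\epsilon)$ for every $|\epsilon|\le\delta-\eta$. Since a pointwise decreasing limit of convex functions on an interval is again convex unless it is identically $-\infty$, it suffices to prove convexity of each $\langle u_\eta\rangle$ and then exhaust the open strip by the sub-strips $|\epsilon|\le\delta-\eta$.

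For smooth subharmonic $u$ the argument is immediate. Subharmonicity means $\Delta u=\partial_x^2u+\partial_\epsilon^2u\ge0$, and since $u$ is smooth and $\mathbb{T}$ is compact we may differentiate under the integral sign:
\[
\frac{d^2}{d\epsilon^2}\langle u\rangle(\epsilon)=\int_{\mathbb{T}}\partial_\epsilon^2u(x+i\epsilon)\,\ud x=\int_{\mathbb{T}}\Delta u(x+i\epsilon)\,\ud x-\int_{\mathbb{T}}\partial_x^2u(x+i\epsilon)\,\ud x .
\]
The second integral on the right vanishes: $x\mapsto\partial_xu(x+i\epsilon)$ is $1$-periodic, so the integral of its $x$-derivative over $\mathbb{T}$ is $0$. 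The first integral is $\ge0$ because $\Delta u\ge0$. Hence $\langle u\rangle''\ge0$ on $(-\delta+\eta,\delta-\eta)$, i.e.\ $\langle u\rangle$ is convex there, which is exactly what is needed.

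Assembling the two stages yields that $\langle u\rangle$ is convex on every sub-strip $|\epsilon|\le\delta-\eta$, hence on $(-\delta,\delta)$, or else identically $-\infty$ there; the dichotomy is forced because a subharmonic function on the connected strip is either $\equiv-\infty$ or in $L^1_{\mathrm{loc}}$ (so $\langle u\rangle$ is finite for a.e.\ $\epsilon$), and a convex function on an open interval that equals $-\infty$ at one interior point is $-\infty$ throughout. The behaviour at the two boundary circles $\epsilon=\pm\delta$ is then subsumed by the same alternative. I expect the only delicate bookkeeping to be in the reduction step: checking that convolution with a radial mollifier preserves subharmonicity and produces a \emph{monotone} approximating family, and managing the unavoidable shrinkage of the strip. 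An alternative would be to argue directly via the domination principle — comparing $u$ on the finite cylinder $\mathbb{T}\times[\epsilon_1,\epsilon_2]$ with the affine (hence harmonic) function of $\epsilon$ interpolating $\langle u\rangle(\epsilon_1)$ and $\langle u\rangle(\epsilon_2)$ — but this has to contend with the mere upper semicontinuity of $u$ up to the boundary, so the smoothing route is cleaner.
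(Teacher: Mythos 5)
The paper does not prove this theorem; it cites Duren's \emph{Theory of $H^p$ Spaces} (Theorem 1.6), so there is no in-text argument to compare against. Your mollification-and-exhaustion proof is a valid, standard, self-contained argument for the stated strip version, and the key steps are all sound: radial convolution of a subharmonic function is smooth, subharmonic, and \emph{increases} with the mollification radius (because the circular means $\mu_u(z,r)$ increase in $r$), so $u_\eta\downarrow u$ as $\eta\downarrow 0$; differentiation under the integral sign and periodicity of $\partial_x u$ give $\langle u_\eta\rangle''=\int_{\mathbb{T}}\Delta u_\eta\,dx\ge 0$; a decreasing pointwise limit of convex functions is convex (with values in $[-\infty,\infty)$) and is $\equiv-\infty$ once it hits $-\infty$ at an interior point; and the dichotomy is completed by the standard fact that a subharmonic function on a connected domain is either $\equiv-\infty$ or locally integrable. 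The one place you rightly flag as ``delicate bookkeeping'' — monotonicity of the mollified family — does need the mollifier scaled as $\rho_\eta(w)=\eta^{-2}\rho(w/\eta)$ for a fixed radial profile $\rho$, so that $u_\eta(z)$ is an $\eta$-rescaled weighted average of the circular means $\mu_u(z,\cdot)$; with an arbitrary radial family the monotonicity is not automatic. Your alternative sketch via the domination (three-lines) principle is the route Duren actually takes for the analytic/disk version, and it avoids smoothing at the cost of handling the upper semicontinuity at the boundary; either route is fine here.
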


Quantization of the acceleration, i.e. $\omega(E;\epsilon) \in \frac{1}{2} \mathbb{Z}$, follows from Theorem 1.4 of \cite{AvilaJitomirskayaSadel_2013} where the respective result is proven in general for all (possibly singular) analytic cocycles.

To see that $\epsilon \mapsto L(E; \epsilon)$ is even, we use that $(\alpha, A^E)$ is {\em{measurably}} conjugate to the analytic cocycle $(\alpha, \widetilde{A}^E)$ where
\begin{equation} \label{eq_altcocycle}
\widetilde{A}^E(\theta) := \begin{pmatrix} E - v(z) & - \widetilde{c}(\theta - \alpha) c(\theta - \alpha) \\ 1 & 0 \end{pmatrix} ~\mbox{.}
\end{equation}
Here, the measurable conjugacy is given by
\begin{equation} \label{eq_conjaltcocycl}
M(\theta + \alpha)^{-1}(\theta) \widetilde{A}^E(\theta) M(\theta) = A^E(\theta) ~\mbox{, } M(\theta) = \begin{pmatrix} 1 & 0 \\ 0 & c(\theta - \alpha)^{-1} \end{pmatrix} ~\mbox{.}
\end{equation}
We mention that the conjugacy in (\ref{eq_conjaltcocycl}) played an important role in \cite{Marx_2014}.

The crucial observation for our purposes is that $\widetilde{A}^E$ is {\em{real-symmetric}} and analytic, whence, using the reflection principle, $L(\alpha, \widetilde{A}^E_\epsilon)$ is even in $\epsilon$. Since measurable conjugacies preserve the Lyapunov exponent, we conclude that $L(\alpha, A^E_\epsilon)$, and hence $L(E; \epsilon)$, is an even function in $\epsilon$.

Naturally, evenness and convexity of $\epsilon \mapsto L(E; \epsilon)$ necessitates that it monotonically increases on the non-negative real axis. In particular, $L(E)=L(E; 0) \geq 0$, implies that $L(E; \epsilon) \geq 0$ for all $\epsilon$. In summary, we conclude that $L(E; \epsilon)$ is a non-negative piece-wise linear and convex function in $\epsilon$, as claimed.

Finally, it was proven in \cite{Marx_2014} that for every (possibly singular) quasi-periodic Jacobi operator, $E \not \in \Sigma$ if and only if $(\alpha, A^E)$ induces a {\em{dominated splitting}}. We recall that an analytic cocycle $(\alpha, D)$ is said to induce a dominated splitting if there exists a continuous (in $\theta$), nontrivial splitting of $\mathbb{C}^2=E_\theta^{(1)}\oplus E_\theta^{(2)}$ and $N\in\mathbb{N}$ such that for $1 \leq j \leq 2$ and each $\theta \in \mathbb{T}$, one has $D^{(N)}(\theta; \alpha) E_\theta^{(j)}\subseteq E_{\theta+N\alpha}^{(j)}$ and $\frac{\Vert D^{(N)}(\theta; \alpha) v_1\Vert}{\Vert v_1\Vert}>\frac{\Vert D^{(N)}(\theta; \alpha)v_2\Vert}{\Vert v_2\Vert}$, for all $v_j\in E_\theta^{(j)}\setminus\{0\}$. Here, as earlier, $D^{(N)}(\theta ; \alpha) = \prod_{j = N-1}^{0} D(\theta + j \alpha)$ denotes the iterates of the cocycle on the fibers.

Moreover, it is a consequence of \cite{AvilaJitomirskayaSadel_2013} (see Theorem 1.2, therein) that $(\alpha, A^E)$ induces a dominated splitting if and only if $L(E) > 0$ and the acceleration is locally zero in a neighborhood of $\epsilon = 0$. 

Thus, combining these two dynamical results, we conclude that for every $E \in \mathbb{R}$ with $L(E) > 0$, $E \in \Sigma$ if and only if $\omega(E; 0) > 0$, or equivalently, $\epsilon \mapsto \omega(E; \epsilon)$ has a jump-discontinuity at $\epsilon = 0$.
\end{proof}

\section{Proof of Proposition \ref{prop_det}} \label{app_aubry}

For every  $x \in \mathbb{T}_0(\sigma(\lambda))$, (\ref{eq_semiconj}) yields
\begin{equation} \label{eq_1}
\abs{\det M_\theta(x)} \abs{c(x - \alpha)} = \abs{\det M_\theta(x + \alpha)} \abs{c(x)} ~\mbox{,}
\end{equation}
which by ergodicity of irrational rotations already implies $\abs{\det M_\theta(x)} \abs{c(x - \alpha)} = b$ a.e. for some $b \geq 0$. Since $c(x) \neq 0$ on $\mathbb{T}_0(\sigma(\lambda))$, we conclude $b >0$ if and only if $\det M_\theta(x) \neq 0$ a.e. We mention that by (\ref{eq_1})  the set $\{x \in \mathbb{T}_0(\sigma(\lambda)): \det M_\theta(x) = 0\}$ is invariant under rotations whence it can only be of $\mu$-measure zero or one.

Seeking a contradiction, suppose that $\det M_\theta(x) = 0$ a.e., then there exists $\phi(x)$ such that for a.e. $x$
\begin{equation} \label{eq_2}
\begin{pmatrix} u(x) \\ \mathrm{e}^{-2\pi i \theta} u(x-\alpha) \end{pmatrix} = \phi(x) \begin{pmatrix} u(-x) \\  \mathrm{e}^{2 \pi i \theta} u(-(x-\alpha))  \end{pmatrix} ~\mbox{.}
\end{equation}
In particular, $\phi(x) = \frac{u(x)}{u(-x)} \in \overline{\mathbb{C}}$ is a non-identically vanishing, measurable function on $\mathbb{T}$.
(\ref{eq_2}) implies
\begin{equation} \label{eq_3}
\phi(x + \alpha) = \mathrm{e}^{- 4 \pi i \theta} \phi(x)  ~\mbox{, a.e.}
\end{equation}
By ergodicity, $\abs{\phi(x)} = b^\prime$ for some $b^\prime \neq 0$, in particular, $\phi \in L^1(\mathbb{T})$. 

Writing $\phi(x) = \sum_{n \in \mathbb{Z}} \hat{\phi}_n \mathrm{e}^{2 \pi i n x}$, we conclude from (\ref{eq_3}) 
\begin{equation} \label{eq_4}
\hat{\phi}_n \left( \mathrm{e}^{2 \pi i n \alpha + 4 \pi i \theta} - 1 \right) = 0 ~\mbox{,} ~\forall n \in \mathbb{Z} ~\mbox{.}
\end{equation}
Since by hypotheses, we excluded all $\theta$ which are $\alpha$-rational,  (\ref{eq_4}) implies $\phi \equiv 0$ - a contradiction. 

\section{Zero nearest neighbor coupling} \label{app_zeronn}

In this section we present the necessary adaptations for the case $\lambda_2 = 0$, i.e. $\lambda = (\lambda_1, 0, \lambda_3)$ and $\lambda_1 + \lambda_3 \geq 1$. In this situation, the duality map $\sigma$ as given in (\ref{eq_sigma}) needs to be redefined appropriately. 

To this end, let us assume, similarly to Sec. \ref{sec_Aubry}, that $(u_n)$ is an $\mathit{l}^2$-eigenvector of $H_{\theta; \lambda, \alpha}$. Denoting by $u(x)$ its Fourier transform, we compute:
\begin{equation}
u(x-\alpha) \mathrm{e}^{-2 \pi i \theta} \overline{c_{ \sigma(\lambda)}(x-\alpha)} + u(x+\alpha) \mathrm{e}^{2 \pi i \theta} c_{\sigma(\lambda)}(x) = E u(x) ~\mbox{,}
\end{equation}
where we redefine the duality map $\sigma$ according to
\begin{equation} \label{eq_dualityapp1}
\sigma \left(\lambda_1, 0, \lambda_3\right)  : = (\lambda_3, 1, \lambda_1) ~\mbox{.}
\end{equation}

In particular, (\ref{eq_dualityapp1}) implies that the formulation of Aubry-duality given in (\ref{eq_semiconj}) carries over when replacing $B_{\sigma(\lambda)}^E$  by 
\begin{equation} \label{eq_modcocl}
\widetilde{B}_{\sigma(\lambda)}^E(x) : = \frac{1}{c_{\sigma(\lambda)}} \begin{pmatrix} E & -\overline{c_{ \sigma(\lambda)}(x-\alpha)} \\ c_{\sigma(\lambda)}(x) & 0 \end{pmatrix} ~\mbox{.}
\end{equation}

Notice that the determinant of the cocyle is unaffected by the adaptations of this section, i.e.
\begin{equation}
\det \widetilde{B}_{ \sigma(\lambda)}^E(x) = \dfrac{\overline{c_{ \sigma(\lambda)}(x-\alpha)}}{c_{ \sigma(\lambda)}(x)} ~\mbox{,}
\end{equation}
whence Proposition \ref{prop_det} and thus its corollary, Proposition \ref{coro_bddsol}, carry over literally. By the same reasoning as in Sec. \ref{sec_selfdual}, letting (cf. (\ref{eq_psi}))
\begin{eqnarray}
\Psi_{\sigma(\lambda)}^{(n)}(x) & := &\mathrm{tr}\left\{d_{ \sigma(\lambda)}^{(n)}(x) \left(\widetilde{B}_{\sigma(\lambda);n}^E(x) - R_\theta^n\right)\right\}  \\
& = & \mathrm{tr}\left(d_{\sigma(\lambda)}^{(n)}(x) \widetilde{B}_{ \sigma(\lambda);n}^E(x) \right)- 2 d_{ \sigma(\lambda)}^{(n)}(x) \cos(2 \pi n \theta) ~\mbox{,}
\end{eqnarray}
we obtain by (\ref{eq_integral}), for $\lambda_1 + \lambda_3 \geq 1$ and all irrational $\alpha$,
\begin{equation}
\Vert \Psi_{\sigma(\lambda)}^{(m_l)} \Vert_{L^1(\mathbb{T})} \leq C_l \abs{\lambda_1 \vee \lambda_3}^{m_l} ~\mbox{, $C_{l} = o(1)$ , }
\end{equation}
where $(m_l):=(q_{n_l}) \cup (2 q_{n_l})$ and, as earlier, $(q_{n_l})$ is the subsequence of $(q_n)$ provided by Theorem \ref{prop_prozero}. We claim:
\begin{theorem} \label{thm_zerolambda2}
Let $\alpha$ be irrational and $\lambda=(\lambda_1, 0, \lambda_3)$ with $\lambda_1 + \lambda_3 \geq 1$. For a.e. $\theta$, (\ref{eq_19a}) holds. In particular, for all irrational $\alpha$, $H_{\theta; \lambda, \alpha}$ has empty point spectrum for a.e. $\theta \in \mathbb{T}$ which are non-$\alpha$-rational.
\end{theorem}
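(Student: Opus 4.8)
The plan is to mirror the structure of the proof of Theorem \ref{thm_point} in Section \ref{sec_proofofthmpoint}, but with the cocycle $B_{\sigma(\lambda)}^{E/\lambda_2}$ replaced by the modified cocycle $\widetilde B_{\sigma(\lambda)}^{E}$ of \eqref{eq_modcocl}. First I would set up the contradiction: assume $H_{\theta;\lambda,\alpha}$ has an eigenvalue $E$ for some non-$\alpha$-rational $\theta$. Since Proposition \ref{prop_det} and Proposition \ref{coro_bddsol} carry over verbatim (the determinant of the cocycle is unchanged), we obtain the $L^2$-conjugacy of $(\alpha,\widetilde B_{\sigma(\lambda)}^E)$ to the complex rotation $(\alpha,R_\theta)$. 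Using the functions $\Psi_{\sigma(\lambda)}^{(n)}$ defined in the excerpt, the Cauchy--Schwarz argument combined with Theorem \ref{prop_prozero} (applied to $d_{\sigma(\lambda)}^{(n)}$, noting $\sigma(\lambda)=(\lambda_3,1,\lambda_1)$ so $I(\sigma(\lambda))=\log|\lambda_1\vee\lambda_3|$ by \eqref{eq_integral} since $\lambda_1+\lambda_3\ge1$ puts $\sigma(\lambda)$ in $\mathrm{L}_{\mathrm{II}}$ with $\lambda_2=1$) yields the upper bound $\Vert\Psi_{\sigma(\lambda)}^{(m_l)}\Vert_{L^1(\mathbb T)}\le C_l|\lambda_1\vee\lambda_3|^{m_l}$ with $C_l=o(1)$, exactly as stated just before the theorem.

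The remaining task is the matching lower bound: I would compute the boundary Fourier coefficients $\widehat{\Psi_{\sigma(\lambda)}^{(m_l)}}(\pm m_l)$ and show they do not decay faster than $|\lambda_1\vee\lambda_3|^{m_l}$ for a.e. $\theta$. Here the structure is actually simpler than in the general self-dual case because $\lambda_2=0$ forces the diagonal entry in $\widetilde B_{\sigma(\lambda)}^E$ at the relevant slot to behave like in the $\gamma=0$ situation: since the nearest-neighbor term of the dual operator's transfer matrix has the form $\tfrac{1}{c_{\sigma(\lambda)}}\begin{pmatrix} E & -\overline{c_{\sigma(\lambda)}(x-\alpha)}\\ c_{\sigma(\lambda)}(x) & 0\end{pmatrix}$, the analogue of Lemma \ref{lem_1} gives a tridiagonal matrix $T_n$ whose only nonzero off-diagonal band structure is unchanged, but the constant $E$ in place of $E-v$ does not affect the boundary coefficient, which as in \eqref{eq_AMO_3} and \eqref{eq_9} reduces to $|\widehat{\Psi_{\sigma(\lambda)}^{(n)}}(\pm n)|= 2|\lambda_{1,3}|^n\,|{-\cos(2\pi n\theta)}+(\text{phase})|$ up to the contribution $\widehat{\phi_{\sigma(\lambda)}^{(n)}}(\pm n)$ which, since $\gamma=\lambda_1\lambda_3/\lambda_2^2$ is replaced by the corresponding product for the dual coupling $\sigma(\lambda)$, namely $\lambda_3\lambda_1/1=\lambda_1\lambda_3$ — wait, one must be careful: for $\widetilde B$ the relevant recursion constant is $\lambda_1\lambda_3$, so if $\lambda_1\lambda_3<\tfrac14$ one is in case (a), if $=\tfrac14$ case (b), if $>\tfrac14$ case (c). I would therefore simply invoke the computation already carried out in the proof of Proposition \ref{prop_lb}, Case II ($\lambda\in\mathrm{L}_{\mathrm{II}}$), applied with the dual coupling, which handles all three subcases and concludes \eqref{eq_19a} along the sequence $(m_l)=(q_{n_l})\cup(2q_{n_l})$ for a.e. $\theta$ (using Proposition \ref{lem_almostunique} where the expanding-map trick fails).

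I would then conclude exactly as in Section \ref{sec_proofofthmpoint}: the lower bound \eqref{eq_19a} for $\widehat{\Psi_{\sigma(\lambda)}^{(m_l)}}(m_l)$ forces $\limsup_l e^{-m_l I(\sigma(\lambda))}\Vert\Psi_{\sigma(\lambda)}^{(m_l)}\Vert_{L^1(\mathbb T)}\ge\limsup_l e^{-m_l I(\sigma(\lambda))}|\widehat{\Psi_{\sigma(\lambda)}^{(m_l)}}(m_l)|>0$, contradicting $\Vert\Psi_{\sigma(\lambda)}^{(m_l)}\Vert_{L^1(\mathbb T)}\le C_l|\lambda_1\vee\lambda_3|^{m_l}=C_l e^{m_l I(\sigma(\lambda))}$ with $C_l=o(1)$. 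This establishes empty point spectrum for all non-$\alpha$-rational $\theta$ off the exceptional zero-measure set, hence for a.e. $\theta$.

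The main obstacle I anticipate is bookkeeping rather than conceptual: one must verify carefully that the algebraic identities of Lemma \ref{lem_1}, the recursion \eqref{eq_7} for $t_n=\det T_n$, and the explicit boundary-coefficient formula \eqref{eq_8} go through with $E$ in place of $E-v(\theta)$ and with the dual coupling $\sigma(\lambda_1,0,\lambda_3)=(\lambda_3,1,\lambda_1)$, so that the effective parameter governing the three subcases is $\gamma=\lambda_1\lambda_3$ (equivalently $\lambda_1\lambda_3/\lambda_2^2$ with $\lambda_2=1$). One subtle point worth checking is that when $\lambda_1\lambda_3=\tfrac14$ with $\lambda_1=\lambda_3=\tfrac12$ — the case where the naive substitution $q_{n_l}\mapsto jq_{n_l}$ fails because the degree-$j$ expanding map fixes $0$ — one genuinely needs Proposition \ref{lem_almostunique}, which accounts for the a.e. (rather than everywhere) nature of the conclusion; but this is exactly the situation already treated in Case II(b) of Proposition \ref{prop_lb}, so no new argument is required. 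Once these verifications are in place the proof is a direct transcription.
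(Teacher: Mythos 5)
Your setup is correct and matches the paper: the redefined duality map $\sigma(\lambda_1,0,\lambda_3)=(\lambda_3,1,\lambda_1)$, the modified cocycle $\widetilde B_{\sigma(\lambda)}^E$ of \eqref{eq_modcocl}, the observation that Propositions~\ref{prop_det} and~\ref{coro_bddsol} carry over unchanged (same determinant), and the Cauchy--Schwarz/Theorem~\ref{prop_prozero} upper bound $\Vert\Psi_{\sigma(\lambda)}^{(m_l)}\Vert_{L^1(\mathbb T)}\le C_l|\lambda_1\vee\lambda_3|^{m_l}$ are all right. But the claim that the lower bound then follows by ``simply invoking Case~II of Proposition~\ref{prop_lb} with $\gamma=\lambda_1\lambda_3$'' is where the argument breaks.

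The reason is structural, not bookkeeping. In Lemma~\ref{lem_1}, the $1$'s on the diagonal of $T_n$ come from the degree-one part of $-v(\theta)=-2\cos(2\pi\theta)$. When $\lambda_2=0$, the cocycle $\widetilde B_{\sigma(\lambda)}^E$ has the constant $E$ --- not $E-v(\theta)$ --- in the upper-left slot, so the diagonal of the analogous tridiagonal matrix is identically zero. The determinant recursion consequently changes from the three-term relation $t_n=t_{n-1}-\gamma t_{n-2}$ of \eqref{eq_7} to the two-term relation $s_n=-\lambda_1\lambda_3\,s_{n-2}$, giving $s_{2k+1}=0$ and $s_{2k}=(-1)^k(\lambda_1\lambda_3)^k$. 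This bears no resemblance to \eqref{eq_8}: there are no characteristic roots $\lambda_\pm$, no discriminant to compare with $\tfrac14$, and the subcases (a)/(b)/(c) of Proposition~\ref{prop_lb} simply never arise. The correct case split is on the \emph{parity} of $q_{n_l}$ (and, for $\lambda_1=\lambda_3$, further on the parities of $q_{n_l}/2$ and $p_{n_l}+q_{n_l}/2$), leading to lower bounds of the form $|\cos(2\pi q_{n_l}\theta)|$ or $|\cos(2\pi q_{n_l}\theta)\pm1|$, with Proposition~\ref{lem_almostunique} invoked only in the latter, equal-coupling situation. You correctly anticipate where Proposition~\ref{lem_almostunique} is needed, but the recipe ``substitute $\gamma=\lambda_1\lambda_3$ into Case~II'' would produce the wrong Fourier coefficients and does not close the argument. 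A smaller slip: for $\lambda_1+\lambda_3>1$ the dual point $\sigma(\lambda)=(\lambda_3,1,\lambda_1)$ lies in region~III, not in $\mathrm{L}_{\mathrm{II}}$; the value $I(\sigma(\lambda))=\log|\lambda_1\vee\lambda_3|$ you quote is the region-III formula, so you reached the right number for the wrong stated reason.
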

\begin{remark}
As in the case $\lambda_2 \neq 0$,  (\ref{eq_19a}) holds for {\em{all}} non-$\alpha$-rational $\theta$ if $\lambda_1 \neq \lambda_3$.
\end{remark}

\begin{proof}
We follow the line of argument presented in Sec. \ref{sec_selfdual}, in particular, without loss of generality we assume that $\lambda_1  \vee \lambda_3 = \lambda_1$. 

For $n \in \mathbb{N}$ one computes,
\begin{equation} \label{eq_zero_psi}
\widehat{\Psi_{\sigma(\lambda)}^{(n)}}(\pm n) = \widehat{\phi_{ \sigma(\lambda)}^{(n)}}(\pm n) - 2 \cos(2 \pi n \theta) \cdot \begin{cases}
\lambda_1^n \mathrm{e}^{\pi i \alpha n^2} & \mbox{, for } +n ~\mbox{,} \\  \lambda_3^n \mathrm{e}^{- \pi i \alpha n^2} & \mbox{, for } -n ~\mbox{,} \end{cases} 
\end{equation}
where $\phi_{\sigma(\lambda)}^{(n)}:= \mathrm{tr}\left(d_{\sigma(\lambda)}^{(n)} \widetilde{B}_{\sigma(\lambda);n}^{E} \right)$. 

The form of $\widetilde{B}_{\sigma(\lambda)}^E$ implies that $\phi_{\sigma(\lambda)}^{(n)}$ relates to cutoffs of the following Jacobi matrix 
\begin{equation}
\hat H_{x; \lambda, \alpha} = \begin{pmatrix}  0 & c_\lambda(x) & & \\ \overline{c_\lambda(x)}  & 0 & c_\lambda(x+\alpha) & \\ & \overline{c_\lambda(x+\alpha)} & 0 & c_\lambda(x+ 2\alpha) & \\ & & \ddots & \ddots & \ddots  \end{pmatrix} ~\mbox{,}
\end{equation}
therefore setting $Q_\lambda^{(n)}(E;x):=\det \left(E- \Pi_{[0,n-1]} H_{x; \lambda,\alpha} \Pi_{[0,n-1]}\right)$, $Q_\lambda^{(0)}(E; x):= 1$, $Q_\lambda^{(-1)}(E;x):=0$, we obtain, as in (\ref{eq_6}),
\begin{equation} \label{eq_zero_phi}
\widehat{\phi_{ \sigma(\lambda)}^{(n)}}(\pm n) = \widehat{Q_{ \sigma(\lambda)}^{(n)}}(\pm n) - \lambda_1 \lambda_3 \mathrm{e}^{\pm 2 \pi i \alpha(2n-3)} \widehat{Q_{ \sigma(\lambda)}^{(n-2)}}(\pm (n-2)) ~\mbox{.}
\end{equation}

Using an analogue of Lemma \ref{lem_1}, $\widehat{Q_{ \sigma(\lambda)}^{(n)}}(\pm n)$ is readily computed which gives
\begin{equation}
\widehat{Q_{\sigma(\lambda)}^{(n)}}(\pm n) = \mathrm{e}^{\pm \pi i \alpha n (n-1)} s_n ~\mbox{,}
\end{equation}
where 
\begin{equation}
s_n:=\det \begin{pmatrix} 0& \widetilde \lambda_1 & & & \\ \widetilde \lambda_3 & 0 & \widetilde \lambda_1 & & \\ & \widetilde \lambda_3 & 0& \widetilde \lambda_1 & \\ & & \ddots & \ddots & \ddots     \end{pmatrix} 
\end{equation}
and $\widetilde \lambda_1:= \lambda_1 \mathrm{e}^{\pi i \alpha}$, $\widetilde \lambda_3:= \lambda_3 \mathrm{e}^{- \pi i \alpha}$.

Expanding the determinant, $s_n$ is seen to satisfy the recursion relation
\begin{equation}
 s_n = - \lambda_1 \lambda_3 s_{n-2} ~\mbox{, } n \in \mathbb{N} ~\mbox{,}
\end{equation}
where we define $s_0:=1$ and $s_{-1}:=0$. Thus, we conclude 
\begin{equation}
s_{2n+1} = 0 ~\mbox{, } s_{2n} =(-1)^n (\lambda_1 \lambda_3)^n ~\mbox{, } n \in \mathbb{N} ~\mbox{.}
\end{equation}

Using (\ref{eq_zero_phi}), one obtains
\begin{equation}
\widehat{\phi_{ \sigma(\lambda)}^{(n)}}(\pm n)= \begin{cases} 
0 & \mbox{, if } n ~\mbox{odd ,} \\
2 \mathrm{e}^{\pm \pi i \alpha n (n-1)} (-1)^{n/2} \left( \lambda_1 \lambda_3 \right)^{n/2} & \mbox{, if } n ~\mbox{even ,}
\end{cases}
\end{equation}
which in turn yields
\begin{equation} \label{eq_zero_psi_1}
\mathrm{e}^{- i \pi \alpha n^2} \dfrac{\widehat{\Psi_{ \sigma(\lambda)}^{(n)}}(n)}{ 2 \lambda_1^n } = \begin{cases}
 -\cos(2 \pi n \theta) & \mbox{, if } n ~\mbox{odd ,} \\
 \mathrm{e}^{-i \pi \alpha n}(-1)^{n/2} \left(\dfrac{\lambda_3}{\lambda_1}\right)^{n/2} -   \cos(2 \pi n \theta) & \mbox{, if } n ~\mbox{even ,}
\end{cases}
\end{equation}
by (\ref{eq_zero_psi}), and similarly for $\widehat{\Psi_{ \sigma(\lambda)}^{(n)}}(-n)$.
As suggested by (\ref{eq_zero_psi_1}), we distinguish the cases $\lambda_1 \neq \lambda_3$ and $\lambda_1 = \lambda_3$.

If $\lambda_1 \neq \lambda_3$, 
\begin{equation} \label{eq_zero_psi_2}
\left\vert  \dfrac{\widehat{\Psi_{\sigma(\lambda)}^{(q_{n_l})}}(q_{n_l})}{ 2 \lambda_1^{q_{n_l}} }     \right\vert \gtrsim \vert \cos(2 \pi q_{n_l} \theta) \vert ~\mbox{,}
\end{equation}
which implies (\ref{eq_19a}) for all $\theta$; here, we use analogous arguments to those of the proof of Proposition \ref{prop_lb} (see Case I (a), therein).

To obtain (\ref{eq_19a}) for the case that $\lambda_1 = \lambda_3$, we first note that, possibly passing to an appropriate subsequence, one may assume the parity of $q_{n_l}$ to be {\em{constant}} in $l$. Employing (\ref{eq_zero_psi_1}), the situation when $q_{n_l}$ is odd for all $l$ reduces to a problem of the form (\ref{eq_zero_psi_2}), whence it suffices to consider $q_{n_l}$ {\em{even}} for all $l$.

Then,
\begin{equation}
\mathrm{e}^{-i \pi \alpha q_n}  (-1)^{q_n/2} \sim (-1)^{p_n + (q_n/2)} ~\mbox{,}
\end{equation}
in analogy to (\ref{eq_asyexprimp}). As before, without loss, one may also assume the parity of both $q_{n_l}/2$ and $(p_{n_l} + (q_{n_l}/2))$ to be {\em{constant}} in $l$.

By  (\ref{eq_zero_psi_1}), 
\begin{equation}
\left\vert  \dfrac{\widehat{\Psi_{\sigma(\lambda)}^{(q_{n_l})}}(q_{n_l})}{ 2 \lambda_1^{q_{n_l}} }  \right\vert \sim \left\vert \cos(2 \pi q_{n_l} \theta) \pm 1 \right\vert ~\mbox{,}
\end{equation}
where the $+$ ($-$) sign applies for, respectively, $(p_{n_l} + (q_{n_l}/2))$ odd (even).

In particular, (\ref{eq_19a}) follows using analogous arguments as in the proof of Proposition \ref{prop_lb} (see Case I (b), therein); as then, the origin of the ``$a.e.$'' statement in Theorem \ref{thm_zerolambda2} is application of Proposition \ref{lem_almostunique}.
\end{proof}

\bibliographystyle{amsplain}

\begin{thebibliography}{10}

\bibitem {Avila_2008} A. Avila, \textit{Absolutely continuous spectrum for the almost Mathieu operator with subcritical coupling}, 2008. Preprint available on arXiv:0810.2965v1
\bibitem {Avila_preprint_2008_2} A. Avila, \textit{On point spectrum with critical coupling}, not intended for publication. Available on https://webusers.imj-prg.fr/~artur.avila/scspectrum.pdf
\bibitem {AvilaJitomirskaya_2009} A. Avila,  S. Jitomirskaya, \textit{The ten martini problem}, Annals of Mathematics 170, 303-342 (2009).
\bibitem {AvilaJitomirskaya_2010} A. Avila, S. Jitomirskaya, \textit{Almost localization and almost reducibility}, Journal of the European Mathematical Society 12 , 93 -- 131 (2010).
\bibitem{global}A. Avila, \textit{Global theory of one-frequency Schr\"odinger operators}, Acta Math. 215, 1--54 (2015).
\bibitem{Avila_prep_ARC_1} A. Avila, \textit{Almost reducibility and absolute continuity I}, 2011. Preprint available on https://webusers.imj-prg.fr/~artur.avila/arac.pdf
\bibitem{Avila_prep_ARC_2} A. Avila, \textit{Almost reducibility and absolute continuity II}, in preparation.
\bibitem{AvilaFayadKrikorian_2011} A. Avila, B. Fayad, R. Krikorian, \textit{A KAM scheme for SL(2,$\mathbb{R}$) cocycles with Liouvillian frequencies}, Geometric and Functional Analysis 21, 1001 -- 1019 (2011).
\bibitem{AvilaKrikorian_2006} A. Avila, R. Krikorian, \textit{Reducibility and non-uniform hyperbolicity for quasiperiodic Schr\"odinger cocycles}, Annals of Mathematics 164, 911 -- 940 (2006).
\bibitem{ajdry} A. Avila, S. Jitomirskaya, \textit{Almost localization and almost reducibility}, Journal of the European Mathematical Society 12, 93 -- 131 (2010).
\bibitem{AvilaJitomirskayaSadel_2013}  A. Avila, S. Jitomirskaya, C. Sadel, \textit{Complex one-frequency cocycles}, JEMS 16 (9), 1915 -- 1935 (2014).
\bibitem{AvilaKrikorian_2013} A. Avila, R. Krikorian, \textit{Monotonic Cocycles}, Inventiones math. 202 (1), 271 -- 331 (2015).
\bibitem{as} J. Avron, B. Simon, \textit{ Singular continuous spectrum for a class of almost periodic Jacobi matrices.} Bull. Amer. Math. Soc. (N.S.) 6, no. 1, 81 -- 85 (1982). 

\bibitem{bel}Bellissard, J.: \textit{Almost periodicity in solid state physics and C*-algebras}. In: Berg, C., Flugede, B. (eds.) The Harald Bohr centenary. The Danish Royal Acad. Sci.42.3, 35 -- 75 (1989)
\bibitem{bdm}P. Borwein, E. Dobrowolski, M. J. Mossinghoff:\textit{ Lehmer’s problem for polynomials with odd coefficients} Annals of Math
166, 347--366 (2007)
\bibitem{bc} J. Bourgain, M.-C. Chang, \textit{On a paper by Erd\H{o}s and Szekeres}, 2015. Preprint available on arXiv:1509.08411 [math.NT].
\bibitem{bg} J. Bourgain, M. Goldstein, \textit{On Non-perturbative Localization with Quasi-Periodic Potential}, Annals of Mathematics 152, 835 -- 879 (2000).

\bibitem {ChangIkezawaKohmoto_1997} I. Chang, K. Ikezawa and M. Kohmoto, \textit{Multifractal properties of the wave functions of the square-lattice tight-binding model with next-nearest-neighbor hopping in a magnetic field}, Phys. Rev. B 55, 12 971 (1997).
\bibitem{choi}W. Chojnacki, \textit{A generalized spectral duality theorem}, Comm. Math. Phys. 143, 527-544 (1992).

\bibitem{Damanik_SimonFest_Kotani} D. Damanik, \textit{Lyapunov Exponent and spectral analysis of ergodic Schr\"odinger operators: A survey of Kotani theory and its applications}, Spectral Theory and Mathematical Physics: A Festschrift in Honor of Barry Simon's 60th Birthday, Proceedings of Symposia in Pure Mathematics 76 Part 2, American Mathematical Society, Providence, 2007.
\bibitem {DeiftSimon_1983} P. Deift and B. Simon, \textit{Almost periodic Schr\"odinger operators, III. The absolutely continuous spectrum in one dimension}, Commun. Math. Phys. 90, 389 -- 411 (1983).
\bibitem{del} F. Delyon,\textit{ Absence of localization for the almost Mathieu equation.} J. Phys. A 20, L21 -- L23 (1987).
\bibitem{DelyonSouillard_1983} F. Delyon, B. Souillard, \textit{The rotation number for finite difference operators and its properties}, Commun. Math. Phys. 89, 415 -- 426 (1983).
\bibitem{Dombrowsky_1978} J. Dombrowsky, \textit{Quasitriangular matrices}, Proc. Amer. Math. Soc. 69, 95 -- 96 (1978).
\bibitem{DreseHolthaus_1997} K. Drese and M. Holthaus, \textit{Phase diagram for a modified Harpers model}, Phys. Rev. B 55, R14693 -- R14696 (1997).
\bibitem{Duren_HpSpacesBook} Peter L. Duren,\textit{Theory of Hp spaces}, Pure and Applied Mathematics, Vol. 38, Academic Press, New York, 1970.


\bibitem{eli}L. H. Eliasson, \textit{Floquet Solutions for the 1-dimensional quasi-periodic Schr\"odinger equation}, Commun. Math. Phys. 146, 447 -- 482 (1982).
\bibitem{es}P. Erd\H{o}s, G. Szekeres, \textit{On the product $\prod_{k=1}^n(1-z^{a_k})$.}, Publ. de l'Institut math\'ematique, 1950.
 
 
\bibitem{foz} J. Fillman, D. Ong, Z. Zhang, \textit{Spectral characteristics of the unitary critical almost-Mathieu operator}. Commun. Math. Phys. (2016). doi:10.1007/s00220-016-2775-8
\bibitem{GongTong_2008} L. Gong, P. Tong,  \textit{Fidelity, fidelity susceptibility, and von Neumann entropy to characterize the phase diagram of an extended Harper model}, Phys. Rev. B 78, 115114 (2008).
\bibitem{Gordon_1976} A. ~Gordon, \textit{The point spectrum of the one-dimensional Schr\"odinger operator}, Uspehi Mat. Nauk 31, 257 -- 258 (1976).
\bibitem{gjls}A. ~Gordon, S. Jitomirskaya, Y. Last, B. Simon, \textit{Duality and singular continuous spectrum in the almost Mathieu equation} Acta Math 178, 169 --183 (1997).

\bibitem{hj} R. Han and S. Jitomirskaya, \textit{Full measure reducibility and localization for quasi-periodic Jacobi operators: a topological criterion.}, Preprint 2016.
\bibitem{RHan_IMRN_2017} R. Han, \textit{Absence of point-spectrum for the self-dual extended Harper?s model}, International Mathematics Research Notices (2017), to appear.
\bibitem{HanThoulessHiramotoKohmoto_1994} J. H. Han and D. J. Thouless, H. Hiramoto, M. Kohmoto, \textit{Critical and bicritical properties of Harper's equation with next-nearest neighbor coupling}, Phys. Rev. B 50, 11365 (1994).
\bibitem{HatsugaiKohmoto_1990} Y. Hatsugai and M. Kohmoto, \textit{Energy spectrum anti the quantum Hall effect on the square lattice with next-nearest-neighbor hopping}, Phys. Rev. B 42,  8282 (1990).
\bibitem{hel} B. Helffer, P. Kerdelhu{\'e}, J. Royo-Letelier, \textit{Chambers's Formula for the Graphene and the Hou Model with Kagome Periodicity and Applications}, Annales Henri Poincar{\'e}, 17, 795 -- 818 (2016).
\bibitem{Herman_1983} M. Herman, \textit{Une methode pour minorer les exposants des Lyapunov et quelques examples montrant le charact\`ere local d'un th\'eor\`eme d'Arnold et de Moser sur le tore de dimension 2}, Comment. Math. Helv. 58, 453 -- 562 (1983).


\bibitem{j} S. Ya. Jitomirskaya, \textit{Metal-insulator transition for the almost Mathieu operator}, Annals of Mathematics 150, 1159 -- 1175 (1999).
\bibitem {JitomirskayaKosloverSchulteis_2005} S. Jitomirskaya, D.A. Koslover and M.S. Schulteis, \textit{Localization for a Family of One-dimensional Quasi-periodic Operators of Magnetic Origin}, Ann. Henri Poincar\`{e} 6, 103 -- 124 (2005).
\bibitem{JitomirskayaMarx_2012}  S. Jitomirskaya, C. A. Marx, \textit{Analytic quasi-perodic cocycles with singularities and the Lyapunov Exponent of Extended Harper's Model}, Commun. Math. Phys. 316, 237 -- 267 (2012).
\bibitem{JitomirskayaMarx_2013_erratum} S. Jitomirskaya, C. A. Marx, \textit{Erratum to: Analytic quasi-perodic cocycles with singularities and the Lyapunov Exponent of Extended Harper's Model}, Commun. Math. Phys. 317, 269 -- 271 (2013).
\bibitem{JM} S. Jitomirskaya, C. A. Marx, \textit{Spectral theory for extended Harper's model}, Preprint available on www.math.uci.edu/mpuci/preprints.
\bibitem{JitomirskayaMarx_ETDS_2016_review} S. Jitomirskaya, C.A. Marx, \textit{Dynamics and spectral theory of quasi-periodic Schr\"odinger-type operators}, Ergodic Theory and Dynamical Systems, doi: 10.1017/etds.2016.16 . Preprint available on arXiv:1503.05740v2 [math-ph] 
\bibitem{js}S. Jitomirskaya, B. Simon, \textit{Operators with singular continuous spectrum. III. Almost periodic Schr\"odinger operators.} Comm. Math. Phys. 165, no. 1, 201 -- 205 (1994).
\bibitem{JohnsonMoser_1982} R. Johnson, J. Moser, \textit{The Rotation Number for Almost Periodic Potentials}, Commun. Math. Phys. 84, 403 -- 438 (1982).

\bibitem{Katznelson_book2004} Y. Katznelson, \textit{An Introduction to Harmonic Analysis}, 3rd edition, Cambridge University Press, Cambridge, 2004.
\bibitem{KetojaSatija_1997} J. A. Ketoja and I. I. Satija, \textit{The re-entrant phase diagram of the generalized Harper equation}, J. Phys.: Condens. Matter 9, 1123 --1132 (1997).
\bibitem{KetojaSatija_1995}  J. A. Ketoja and I. I. Satija, \textit{Self-Similarity and Localization}, Phys. Rev. Lett. 75, 2762 (1995).
\bibitem{KetojaSatija_1995_2}  J. A. Ketoja, I. I. Satija, and J. C. Chaves, \textit{Decimation studies of Bloch electrons in a magnetic field: Higher-order limit cycles underlying the phase diagram}, Phys. Rev. B 52, 3026 (1995).

\bibitem{last} Y. Last, \textit{Zero measure for the almost Mathieu operator.} Comm. Math. Phys. 164, 421-- 432 (1994).
\bibitem{LastSimon_1999} Y. Last, B. Simon, \textit{Eigenfunctions, transfer matrices, and absolutely continuous spectrum of one-dimensional Schr\"odinger operators}, Invent. Math. 135, no. 2, 329 -- 367 (1999).
\bibitem{phys2}N. Linden, J. Sharam, \textit{Inhomogeneous quantum walks}, Phys. Rev. A 80, 052327 (2009).
\bibitem{LiuGhoshChong_2015} F. Liu, S. Ghosh, and Y. D. Chong, \textit{Localization and adiabatic pumping in a generalized Aubry-Andr\'e-Harper model}, Phys. Rev. B 91, 014108 (2015).

\bibitem {Marx_thesis} C. A. Marx, \textit{Quasi-periodic Jacobi-cocycles: Dynamics, Continuity, and Applications to Extended Harper's Model}, PhD thesis, Irvine, CA (2012).
\bibitem {Marx_2014} C. A. Marx, \textit{Dominated splittings and the spectrum of almost periodic Jacobi operators}, Nonlinearity 27, 3059 -- 3072 (2014). 
\bibitem{shu} M. A. Shubin, \textit{Discrete magnetic Laplacian}, Communications in mathematical physics 164.2, 259 -- 275 (1994).

\bibitem{phys1} Y. Shikano, H. Katsura, \textit{Localization and fractality in inhomogeneous quantum walks with self-duality}, Phys. Rev. E 82, 031122 (2010).
\bibitem {Teschl_book_2000} G. Teschl, \textit{Jacobi Operators and Completely Integrable Nonlinear Lattices}, Mathematical Surveys and Monographs 72, Amer. Math. Soc., Providence (2000).
\bibitem {Thouless_1983} D. J. Thouless, \textit{Bandwidth for a quasiperiodic tight binding model}, Phys. Rev. B 28, 42724276 (1983).



\end{thebibliography}

\end{document}